\newcommand{\ud}{\;\mathrm{d}}
\providecommand{\B}{\mathcal{B}}
\providecommand{\K}{\mathcal{K}}
\providecommand{\Pe}{\mathcal{P}}
\providecommand{\ol}{\overline}
\providecommand{\eps}{\varepsilon}
\providecommand{\vphi}{\varphi}
\providecommand{\supp}{\operatorname{supp}}
\providecommand{\La}{\mathcal{L}}
\newtheorem{theorem}{Theorem}[section]
\newtheorem{definition}[theorem]{Definition}
\newtheorem{lemma}[theorem]{Lemma}
\newtheorem{notation}[theorem]{Notation}
\newtheorem{remark}[theorem]{Remark}
\numberwithin{equation}{section}
\numberwithin{theorem}{section}
\newcommand{\qed}{\hfill$\Box$}
\newenvironment{proof}{\begin{trivlist}\item[]{\em Proof:}\/}{\qed\end{trivlist}}
\newenvironment{proofof}[1][Proof]{\noindent \textit{#1.} }{\ \qed}
\providecommand{\A}{\mathcal{A}}
\newcommand{\Reals}{{\mathbb R}}
\newcommand{\Complex}{{\mathbb C\hspace{0.05 ex}}}
\newcommand{\Naturals}{{\mathbb N}}
\newcommand{\cf}{{\mathbbm 1}}
\newcommand{\D}{\mathcal{D}}
\title{From a non-Markovian system to the Landau equation}
\author{ Juan J. L. Vel\'azquez
\thanks{\emailjuan}, Raphael Winter \thanks{\emailalessia}  \\[1em]
$\,^\ddag$\UBaddress}
\date{\today}
\newcommand{\email}[1]{E-mail: \tt #1}
\newcommand{\emailjuan}{\email{velazquez@iam.uni-bonn.de}}
\newcommand{\emailalessia}{\email{raphaelwinter@iam.uni-bonn.de}}
\newcommand{\UBaddress}{\em University of Bonn, Institute for Applied Mathematics\\
\em Endenicher Allee 60, D-53115 Bonn, Germany}
\date{\today}
\begin{document}
  
\maketitle
\begin{abstract}
	In this paper, we prove that in macroscopic times of order one, the solutions to the truncated BBGKY hierarchy
	(to second order) converge in the weak coupling limit to the solution of the nonlinear spatially homogeneous Landau equation. The truncated problem describes the formal leading order behavior of the underlying particle dynamics, and can be reformulated as a non-Markovian hyperbolic equation which converges to the Markovian evolution described by the parabolic Landau equation. The analysis in this paper is motivated
	by Bogolyubov's derivation of the kinetic equation by means of a multiple time scale analysis of the BBGKY hierarchy.
\end{abstract}

\tableofcontents
\newpage

\section{Introduction} \label{Sec:Introduction}

A central objective in kinetic theory is the derivation of effective equations for
macroscopic densities of particles in a plasma or gas. 
Two of the main equations in this context are the Boltzmann equation
and the Landau equation, and a large portion of the mathematical
research in this area is devoted to the study of these equations.
For an extensive overview of mathematical kinetic theory we refer to \cite{spohn_kinetic_1980,villani_review_2002}.  
For the Boltzmann equation, rigorous results have been proved, both 
on the level of the equation itself, and on the level of its
derivation from particle systems. Results on 
well-posedness, entropic properties
of solutions, and rate of convergence to equilibrium can be found in 
\cite{desvillettes_trend_2005,diperna_cauchy_1989,toscani_sharp_1999,wennberg_stability_1993}.
For the derivation of the equation from interacting particle systems we refer to \cite{gallagher_newton_2013,lanford_time_1975,pulvirenti_validity_2014,pulvirenti_boltzmann-grad_2017}, and to  \cite{boldrighini_boltzmann_1983,desvillettes_linear_1999,gallavotti_grad-boltzmann_1999,spohn_lorentz_1978} for the derivation of the linear equation from Lorentz models.  

Many of these problems, 
including the derivation starting from particle systems, are still open for the Landau equation. 
The goal is to describe the evolution of the macroscopic velocity distribution of (initially randomly) distributed particles $(X_i,V_i)_{i \in I} \in (\Reals^3 \times \Reals^3)^I$ (where $I$ is a countable or finite index set)
evolving according to the Hamiltonian dynamics:
\begin{equation} \label{Hamiltonian}
\begin{aligned}
\partial_\tau X_i(\tau) 	&= V_i(\tau)  \\
\partial_\tau V_i(\tau)	&= - \theta^2 \sum_{j\neq i} \nabla \phi(X_i(\tau)-X_j(\tau)), \quad  \text{$\theta>0$ scaling parameter}.
\end{aligned}
\end{equation}
Here $\phi=\phi(x)$ is the interaction potential, and in the rest of the paper we use the notation
$\nabla \phi= \nabla_x \phi$ and assume $\phi$ is radially symmetric.
When the strength of the potential is small, i.e. $\theta^2\rightarrow 0$, and for large times $t\gg1$, the
evolution of the particles is governed by many small deflections.
 Let $Z>0$ be the average number
of particles per unit of volume, to be made precise later.
It is widely accepted that for a suitable choice of $\phi$ and
rescaling of $\theta\rightarrow 0$ and $Z$, the number density $f(t,v)$ of a spatially homogeneous system satisfies the Landau equation (cf. \cite{spohn_kinetic_1980}):
\begin{equation} \label{Landauintr}
\begin{aligned}
\partial_t f(t,v) 	&= \sum_{i,j=1}^3 \partial_{v_i}  \left(\int_{\Reals^3} a_{i,j}(v-v')
(\partial_{v_j} - \partial_{v'_j})\big( f(t,v) f(t,v') \big)\ud{v'}  \right) \\
f(0,v)	&= f_0(v).
\end{aligned}
\end{equation} 
Here $t$ is a macroscopic time scale that we will specify later, and
the matrix valued function $a$ is determined by the pair interaction potential $\phi$:
\begin{align} \label{adefinition} 
a_{i,j}(w) &= \frac{\pi^2}{4}\int_{\Reals^3} k_ik_j \delta(k\cdot w) |\hat{\phi}(k)|^2 \ud{k} 	
= \frac{\Lambda}{|w|} \left(\delta_{i,j}-\frac{w_i w_j}{|w|^2}\right) \quad \text{for some $\Lambda>0$}.
\end{align}
In the  most physically relevant case -- that of Coulomb interaction, i.e. $\phi(x)=\frac{c}{|x|}$ -- considered in \cite{landau_kinetische_1936}, the constant $\Lambda$ is logarithmically divergent.

The equation \eqref{Landauintr}-\eqref{adefinition} was introduced by Landau in \cite{landau_kinetische_1936} (see also \cite{lifshitz_course_1981}).  However, Landau did not take
as a starting point the dynamics of the particles (cf. \eqref{Hamiltonian}).
Instead he studied the Boltzmann equation in the limit of grazing collisions, which was assumed to be a good approximation for the dynamics of the system \eqref{Hamiltonian}. A rigorous version of Landau's argument
can be found in \cite{alexandre_landau_2004}.

A rather general approach to deriving kinetic equations from \eqref{Hamiltonian} was later developed
by Bogolyubov (cf. \cite{bogoliubov_problems_1962}). We will briefly summarize this method here. Consider a countable system of particles $(X_i(0),V_i(0))_{i \in I} \in (\Reals^3 \times \Reals^3)^I$, distributed according to an uncorrelated, translation invariant grand canonical ensemble.
Furthermore, assume the velocities $V_i$ are of order one.
We consider scaling limits of a single scaling parameter $\eps\rightarrow 0$,
as is customary in the modern literature on kinetic equations (cf. \cite{balescu_equilibrium_1975,gallagher_newton_2013,lanford_time_1975,pulvirenti_validity_2014,spohn_kinetic_1980}). We set the strength of the potential $\theta^2$ and the particle density $Z$ as:
\begin{align} \label{eq:classicalscale}
\theta^2 = \eps^{\beta}, \quad Z=\eps^{1-2\beta}. 
\end{align}
For reasons we will explain later, we choose $\beta \in (0,1)$.
We can then consider the $n$-particle correlation functions $F_n(x_1,v_1,\ldots, x_n,v_n)$. In order to work with functions
of order one, we define the rescaled functions $f_n$ by:
\begin{align*}
F_n(\tau,x_1,v_1,\ldots,x_n,v_n)=Z^n f_n(\tau,x_1,v_1,\ldots,x_n,v_n).	
\end{align*}   
Then the correlation functions $f_n$ satisfy the so-called BBGKY hierarchy (see e.g.~\cite{balescu_equilibrium_1975}):
\begin{equation} \label{BBGKY}
\begin{aligned}
\partial_\tau f_n + \sum_{i=1}^n v_i \nabla_{x_i} f_n &- \eps^{1-\beta}
\sum_{i=1}^n \int    \nabla \phi(x_i-x_{n+1}) \nabla_{v_i}
f_{n+1} \ud{x_{n+1}} \ud{v_{n+1}}  \\
&= \eps^\beta \sum_{i \neq j} \nabla \phi(x_i-x_j) \nabla_{v_i} f_n .	
\end{aligned}
\end{equation}
Since $\beta\in (0,1)$, we have $Z\theta^2 = \eps^{1-\beta}\rightarrow 0$. The physical meaning
of this will be explained below. Under this assumption, Bogolyubov's argument
yields the Landau equation \eqref{Landauintr} as the limiting equation for $f_1$. In the case $\beta=1$, i.e. $Z\theta^2=1$, Bogolyubov's technique
can also be applied, however here the limiting equation is the Balescu-Lenard equation (see \cite{balescu_statistical_1963,balescu_equilibrium_1975,  lenard_bogoliubovs_1960}).
In this case, the particles of the system must be viewed as interacting as part of an effective medium, in which the interaction of pairs of particles is modified due to collective effects. In the physics literature this is
characterized by means of the so-called dielectric function, that gives a nontrivial correction
to the limit kinetic equation. We will not however consider this issue in the present paper. 

Our assumption $Z \theta^2 \rightarrow 0$ has a clear interpretation in terms of dimensionless quantities. 
Observe that $Z\theta^2$ describes the ratio of the average potential to the average kinetic energy of a particle:
\begin{equation*}
\frac{\langle \theta^2 \sum_{j \in I: i\neq j} \phi(X_i-X_j)\rangle}{\langle V_i^2 \rangle} \sim Z \theta^2= \eps^{1-\beta}.
\end{equation*}
Since $Z \theta^2 \rightarrow 0$, the kinetic energy of the particles particles is much larger than their potential energy, hence the absence of collective effects.
Our objective is to study the evolution of the one particle function $f_1$. We will refer to the timescale on which this evolution takes place
as macroscopic time. To simplify notation, we set $(x_i,v_i)=\xi_i$ and introduce the (rescaled) truncated  correlation functions $g_2$, $g_3, \ldots$ defined by:
\begin{align*}
g_2(\xi_1,\xi_2) 		&= f_2(\xi_1,\xi_2) - f_1(\xi_1) f_1(\xi_2)   \\
g_3(\xi_1,\xi_2,\xi_3) 	&= f_3(\xi_1,\xi_2,\xi_3) -  f_1(\xi_1) g_2(\xi_2,\xi_3) - f_1(\xi_2) g_2(\xi_1,\xi_3)  - f_1(\xi_3) g_2(\xi_1,\xi_2) \\
&- f_1(\xi_1) f_1(\xi_2)f_1(\xi_3)  \\
g_4(\xi_1,\xi_2,\xi_3,\xi_4) &= \ldots.	 
\end{align*} 
From \eqref{BBGKY} we can derive equations for $g_2$, $g_3$ and higher order truncated correlation functions. A crucial observation is that 
we can expect to have a separation of orders of magnitude $f_1 \gg g_2 \gg g_3$ as $\theta^2=\eps^\beta \rightarrow 0$. 
To see this, we consider now the exact equations satisfied by $g_2$ and $f_1$.
For ease of notation, we introduce the function $\sigma$ with $\sigma(1)=2$, $\sigma(2)=1$, to relabel
the indexes of $\xi_1$, $\xi_2$. 
By a straightforward algebraic computation, the BBGKY hierarchy \eqref{BBGKY} implies:
\begin{equation} \label{BBGKYwithint}
\begin{aligned}
\partial_\tau f_1 &= \eps^{1-\beta} \nabla_v \cdot \left(\int \nabla \phi(x_1-x_3) g_2(\xi_1,\xi_3) \ud{\xi_3}  \right) \\
\partial _\tau g_2+&\sum_{i=1}^2 v_i \nabla_{x_i} g_2 -\eps^{1-\beta}\sum_{i=1}^{2}\int \nabla \phi (x_i-x_3)\nabla_{v_{i}}(f_{1}(\xi_i)g_2(\xi _{\sigma \left( i\right) },\xi_{3})+g_3(\xi_1,\xi_2,\xi_3)) \ud{\xi_3} \\
&= \eps^{\beta}\sum_{i=1}^{2}\nabla _{v_{i}}\left(f_{1}(\xi_1)f_{1}(\xi_2)+g_{2}(\xi_1,\xi_2)\right)\nabla \phi (x_i-x_{\sigma(i)}).
\end{aligned}
\end{equation} 
Indeed, the sources on the right-hand side of the equation are of order $\eps^\beta \ll 1$,
leading us to expect $f_1 \gg g_2$. A similar argument suggests $g_2 \gg g_3$. 
Therefore, we approximate \eqref{BBGKYwithint} by:
\begin{align} \label{BBGKYwithint2f}
\partial_\tau f_1 &= \eps^{1-\beta} \nabla_v \cdot \left(\int \nabla \phi(x_1-x_3) g_2(\xi_1,\xi_3) \ud{\xi_3} \right) \\
\partial _\tau g_2+&\sum_{i=1}^2 v_i \nabla_{x_i} g_2 -\eps^{1-\beta}\sum_{i=1}^{2}\int \nabla \phi (x_i-x_3)\nabla_{v_{i}} f_{1}(\xi_i)g_2(\xi _{\sigma \left( i\right) },\xi_{3}) \ud{\xi_3} \label{BBGKYwithint2g} \\
&=\eps^\beta \sum_{i=1}^{2}\nabla _{v_{i}}\left(f_{1}(\xi_1)f_{1}(\xi_2)\right)\nabla \phi (x_i-x_{\sigma(i)}) \notag.
\end{align}
Since the source term on the right-hand side of \eqref{BBGKYwithint2g} is of order $\eps^\beta$, 
it is convenient to define the function $\tilde{g}_2= \eps^{-\beta}g_2$. Then we can rewrite \eqref{BBGKYwithint2f}-\eqref{BBGKYwithint2g} as:
\begin{align}
\partial_\tau f_1 &= \eps \nabla_v \cdot \left(\int \nabla \phi(x_1-x_3) \tilde{g}_2(\xi_1,\xi_3) \ud{\xi_3} \right)\label{BBGKYwithint3f} \\
\partial _\tau \tilde{g}_2+&\sum_{i=1}^2 v_i \nabla_{x_i} \tilde{g}_2-\eps^{1-\beta} \sum_{i=1}^{2}\int \nabla \phi (x_i-x_3)\nabla_{v_{i}} f_{1}(\xi_i)\tilde{g}_2(\xi _{\sigma \left( i\right) },\xi_{3}) \ud{\xi_3} \label{BBGKYwithint3g} \\
&=  \sum_{i=1}^{2}\nabla _{v_{i}}\left(f_{1}(\xi_1)f_{1}(\xi_2)\right)\nabla \phi (x_i-x_{\sigma(i)})\notag.
\end{align} 
It is now apparent that the contribution of the integral term in \eqref{BBGKYwithint3g}
is negligible in the approximation used in this paper and therefore that term can be dropped. Moreover, the stabilization of $\tilde{g}_2$ to a steady state takes place in  times  $\tau$ of order one. On the other hand, the changes in $f_1$ take place in times $\tau$ of order $1/\eps$, suggesting we should define the macroscopic time scale as $t=\eps \tau$. The separation of time scales is a key point in the argument by Bogolyubov. It implies that, on the macroscopic timescale, the truncated correlation 
$\tilde{g}_2(t)$ can be expected to be a functional $\tilde{g}_2(t)=  A_2[\eps,f_1(t)]$ of $f_1$. More generally, 
Bogolyubov argues that on the timescale $t$ all truncated correlation functions $g_k$ evolve in a similar adiabatic manner.
This ansatz allows us to derive the limiting kinetic equation for $f_1(t)$ in a straightforward fashion. The integral term in \eqref{BBGKYwithint3g}
can be neglected, since it is of lower order. Therefore \eqref{BBGKYwithint3f}-\eqref{BBGKYwithint3g} can be approximated by ($\nabla \phi(x)=-\nabla \phi(-x)$ by radial symmetry):
\begin{equation} \label{g2eq}
\begin{aligned}
\partial_\tau f_1 &= \eps \nabla_v \cdot \left(\int \nabla \phi(x_1-x_3) \tilde{g}_2(\xi_1,\xi_3) \ud{\xi_3} \right) \\
\partial _\tau \tilde{g}_2+\sum_{i=1}^2 v_i \nabla_{x_i} \tilde{g}_2 	&= (\nabla_{v_1} - \nabla_{v_2})\left(f_{1}(\xi_1)f_{1}(\xi_2)\right)\nabla \phi (x_1-x_2).
\end{aligned}
\end{equation}
Now the functional $A_2[f_1]$ can be computed explicitly by solving the steady state equation for $\tilde{g}_2$ in \eqref{g2eq}.
We substitute $\tilde{g}_2=A_2[f_1]$ in the equation for $f_1$ and identify the Landau equation \eqref{Landauintr} as the limiting equation on the macroscopic time scale $t$.
For the scaling limit with $\beta=1$ in \eqref{eq:classicalscale}, the functional $A_2[f_1]$ was computed explicitly in \cite{lenard_bogoliubovs_1960}, solving the steady state equation
associated to \eqref{BBGKYwithint3g}. The resulting limit equation for $f_1(t)$ is the Balescu-Lenard equation, which will not be considered
in this paper.

It is possible to go from  \eqref{g2eq} to the Landau equation, reformulating the problem as a non-Markovian evolution. To this end, we rewrite \eqref{g2eq} as a single equation, involving only terms depending on $f_1$. We can integrate the equation for $\tilde{g}_2$ along characteristics (by assumption the initial correlations vanish):
\begin{align*}
\tilde{g}_2(\tau,\xi_1,\xi_2) &= \int_0^\tau  (\nabla_{v_1} - \nabla_{v_2})(f_1(s,\xi_1) f_1(s,\xi_2)) \nabla \phi(x_1-x_2-(\tau-s)(v_1-v_2)) \ud{s}. 
\end{align*} 
We obtain a closed equation for the function $f_1$ by plugging
this formula back into \eqref{g2eq}. The function $f_1$ changes on the macroscopic timescale $ t= \eps \tau $. In order to keep the velocities $v$ of order one,
we must change the spatial variable, using as the unit of length the mean free path, i.e. the flight length after which the velocity of a particle deviates by an amount of order one. We therefore define the macroscopic length
scale $y=\eps x$. Notice that due to the translation invariance of the system, $f_1(t,y,v)=f_1(t,v)$ is independent of the spatial variable. Let $f_\eps(t,v)$ be the particle density function on the
macroscopic timescale, then $f_\eps$ satisfies the equation
\begin{equation} \label{Mainequation}
\begin{aligned} 
\partial_t f_\eps 	&= 	\frac1\eps \nabla_v \cdot \left(\int_0^t K[f_\eps(s)]\big(\frac{t-s}\eps,v\big) \nabla f_\eps(s,v)  
-  \nabla_v \cdot K[f_\eps(s)]\big(\frac{t-s}\eps,v\big) f_\eps(s,v)  \ud{s}\right)  \\ 
f_\eps(0,v)	&= f_0(v) ,
\end{aligned}
\end{equation}
where $K$ is given by the formula
\begin{align*}
K[f](\tau,v)			&:=	 \int \int \nabla \phi(x) \otimes \nabla \phi(x-\tau(v-v'))	f(v')  \ud{v'}\ud{x}.	
\end{align*}
By Bogolyubov's argument, \eqref{Mainequation} should give the leading order behavior of
the one particle function $f_\eps$, which should converge to a solution $f$ of the Landau equation \eqref{Landauintr}. Note that the equation
\eqref{Mainequation} yields a nonlinear non-Markovian evolution for $f_\eps$,
while $f$ is given by a Markovian, parabolic equation. The convergence of solutions $f_\eps$ of an equation with memory effects to a kinetic equation
is a characteristic feature of kinetic particle limits, as indicated in
\cite{balescu_equilibrium_1975,bobylev_particle_2013,spohn_kinetic_1980}.

Notice that in the class of scaling limits \eqref{eq:classicalscale}, for $\beta=1/2$ we obtain the classical weak coupling limit (cf. \cite{bobylev_particle_2013,spohn_kinetic_1980}). In this case, the (microscopic) density $Z$ remains of order one. Therefore, the interaction potential takes
the form $\phi_\eps(y)= \sqrt{\eps} \phi(y/\eps)$ in macroscopic variables, which has a range of order $\eps$. The number of collisions per macroscopic unit of time is $1/\eps$, and the transferred momentum produced by each collision is of order $\sqrt{\eps}$. Assuming that the collisions are independent, this makes the variance of the deflections on the macroscopic time scale of order one, due to the central limit theorem. We remark that the scaling 
\eqref{eq:classicalscale} is more general than the classical weak coupling, since  $Z\rightarrow 0$ or $Z\rightarrow \infty$ are possible, depending on the choice of $\beta \in (0,1)$. In these cases, the diffusion in the velocity variable also follows from an analogue of the central limit theorem. For instance if $Z\rightarrow \infty$, a particle interacts with $Z$ particles during a macroscopic time of order $\eps$, which yields a deflection of $\sqrt{Z \theta^4}=\sqrt{\eps}$. Since the range of the potential is of order $\eps$, these deflections become independent after macroscopic times of order $\eps$ and therefore the deflection of a particle in a macroscopic unit of time is of order one. For $Z\rightarrow 0$, the macroscopic time between collisions is $\eps/Z=\eps^{2\beta}$, and the deflection in each collision is $\eps^{\beta}$. Therefore, another central limit theorem argument gives the diffusive behavior in the velocity variable. 

There are multiple gaps to bridge in order to make Bogolyubov's argument rigorous. First one has to prove the well-posedness
of the infinite system of ODEs \eqref{Hamiltonian}. Sufficient conditions on the potential and initial data for this can be found, for example in \cite{spohn_large_2012}. Proving the separation of orders of magnitude $f_1 \gg g_2 \gg \ldots$
and the validity of the truncation of the BBGKY hierarchy is a key problem, and still open. We will see later that this
assumption cannot be expected to hold in general, at least when the relative
velocity of particles becomes very small.

 Actually, this fact
is closely related to the onset of the singularity $|v_1-v_2|^{-1}$
in the Landau equation (cf. the term $\Lambda/|w|$ in \eqref{adefinition}).
The easiest way to understand this singularity is through a careful analysis of the
mutual deflection of two particles with very close velocities, i.e. $v_1-v_2 \approx 0$. An implicit assumption made in the derivation of the Landau equation
is that the particles move along near-rectilinear trajectories. Two particles
moving along near-rectilinear trajectories with velocities $v_1$, $v_2$ which
come sufficiently close to interact, will interact during a collision time
of order $|v_1-v_2|^{-1}$. Hence, the resulting deflection is of
order $\theta^2 |v_1-v_2|^{-1}$. If $|v_1-v_2|\ll \theta^2$, this quantity is not small, and this contradicts the assumption of near-rectilinear motion.
Therefore the underlying assumption behind the derivation of the Landau equation breaks down for particles with very small relative velocity.
Nevertheless, if the velocities satisfy the condition $1\gg |v_1-v_2| \gg\theta^2$, the rectilinear approximation is valid, in spite
of the fact that the collision time diverges like $|v_1-v_2|^{-1}$.
This is the reason for the onset of the factor $1/|w|$ in \eqref{adefinition}.

We remark that the introduction of this singularity does not pose a serious
physical difficulty concerning the validity of the Landau equation, since
it is an integrable singularity. This is due to the fact that
the number of pairs of particles with small relative velocities is a sufficiently small fraction of the total number of pairs of interacting particles, and therefore can be neglected. In particular, the fraction of interacting particles
with $|v_1-v_2| \ll \theta^2$ which experience relevant deflections in their collisions vanishes in the limit $\theta \rightarrow 0$.

We emphasize that the singularity $1/|w|$ appearing in the diffusion matrix
in the Landau equation (cf. \eqref{adefinition}) is a consequence of the
collision dynamics of particles with small relative velocity, and therefore  independent of the particular choice of the interaction potential $\phi$.
In particular this singularity is not specifically related to the choice of the
Coulomb interaction between the particles. 
It is interesting to point out the difference with the Boltzmann equation, where
the homogeneity of the collision kernel is closely related to the homogeneity of the interaction potential (cf. \cite{villani_review_2002}).

We notice that the assumption $f_1\gg g_2$ can be expected to fail in the region
of very small relative velocities due to the same geometric considerations as above (cf. \cite{bobylev_particle_2013}). Indeed, the function $g_2(x_1,v_1,x_2,v_2)$ measures the deflections of interacting particles with velocities $v_1$, $v_2$.
For small relative velocities, the truncated correlation function $g_2$ can be
of the same order as $f_1$. It is worth remarking that dropping the term
$g_2$ on the right-hand side of \eqref{BBGKY} is equivalent to approximating
the trajectories of interacting particles by straight lines. As seen before, this
fails in the region $|v_1-v_2|\ll \theta^2$, which is vanishing in the limit $\theta\rightarrow 0$. Notice that this observation yields some insight into the type of functional spaces in which the approximation $f_1 \gg g_2$ can be expected to hold.

In this paper, we will prove that Bogolyubov's adiabatic approach to deriving the Landau equation \eqref{Landauintr} from 
the system \eqref{Mainequation} is indeed correct, when the singularity $v \approx v'$ is cut out.
To be precise, we consider the Landau-type equation
\begin{equation} \label{cutLandauintr}
\begin{aligned}
\partial_t f 	&= \sum_{i,j=1}^3 \partial_{v_i} \left(\int_{\Reals^3} a_{i,j}(v-v')
(\partial_{v_j} - \partial_{v'_j})\big( f(t,v) f(t,v') \big) \eta(|v-v'|^2)\ud{v'}  \right)  \\
f(0,v)	&= f_0(v),
\end{aligned}
\end{equation} 
where $\eta(r)$ vanishes for $r$ small. We will derive the equation \eqref{cutLandauintr} from
the system \eqref{Mainequation}, where $K$ is now given by:
\begin{align} \label{Kcutoff}
K[f](t,v)			&:= 	 \int \int \nabla \phi(x) \otimes \nabla \phi(x-t(v-v'))	f(v') \eta(|v-v'|^2)  \ud{v'}\ud{x}.	
\end{align}
The reason for introducing the artificial cutoff $\eta(r)$ in the region
of small relative velocity is that the estimates in this paper are
presently not strong enough to deal with the case $\eta \equiv 1$.
As indicated above, the effect of collisions with small relative velocities
can be expected to be small, and therefore the 
Landau equation (cf. \eqref{Landauintr}, \eqref{adefinition}) and the modified Landau equation (cf. \eqref{cutLandauintr}, \eqref{Kcutoff}) might be expected to
exhibit similar physical properties. In particular, the asymptotic behavior 
of the matrix $K[f](t,v)$ as $v\rightarrow \infty$ is preserved.

The main results of the paper are the existence of strong solutions $f_\eps$ to \eqref{Mainequation}
with $K$ as in \eqref{Kcutoff}, and the convergence 
of these solutions to a strong solution $f$ of the Landau equation \eqref{cutLandauintr} for macroscopic times
of order one. 
We assume that $f_0$ is close to the Maxwellian steady state of the limit equation and choose a particular short range 
potential $\phi$. In contrast to the diffusive, parabolic Landau equation, equation \eqref{Mainequation} is  hyperbolic. 
We show that regularity and decay of the initial datum $f_0$ are conserved. 
Furthermore, the evolution given by \eqref{Mainequation} is clearly non-Markovian, since the time derivative depends on the whole history of the function $f_\eps$ until time $t$. In the limit $\eps \rightarrow 0$, this memory effect disappears and we recover the Markovian dynamics of the Landau equation. 

As mentioned above, the derivation of the Landau-type equations 
from particle systems is still largely open. The linear Landau equation has been derived in \cite{basile_diffusion_2014,desvillettes_rigorous_2001} as a scaling limit
of systems with a single particle traveling through a random (but fixed) configuration of scatterers.

Furthermore, it is shown in \cite{bobylev_particle_2013} that the Landau equation \eqref{Landauintr} 
is consistent with a scaling limit of interacting particle systems. More precisely it is shown
that the time derivative of the macroscopic density of particles in the weak coupling limit at $t=0$ is correctly predicted by the
Landau equation. The technique follows a similar line of reasoning to that of Bogolyubov, truncating the BBGKY hierarchy to a system like
\eqref{Mainequation}, and proving convergence to the Landau equation on a timescale shorter than the macroscopic. It is worth noticing, that in
\cite{bobylev_particle_2013} the convergence of solutions of the truncated hierarchies to the solution of the the Landau equation is established in the sense of weak convergence. In this paper, the convergence of the solutions
$u_\eps$ of the non-Markovian problem \eqref{Mainequation} to the solution $u$ of the Landau-type equation \eqref{cutLandauintr} is proved in strong norms, up to macroscopic times of order one. 
Given that estimates in stronger norms, which allow for strong convergence, are technical to obtain, it is natural to ask why this is needed. The reason for this is that our technique for controlling the nonlinearity in
\eqref{Mainequation} up to macroscopic times of order one is based on a 
linearization of the problem in strong norms, combined with estimates of quadratic or higher order terms. This is only possible in very strong norms that in particular yield estimates for the time derivative of the solution.
It is certainly possible to prove the convergence $u_\eps \rightarrow u$ in the weak topology. However, since stronger estimates were needed to prove well-posedness of the non-Markovian problem up to macroscopic times, the convergence is readily established in stronger norms.

On the other hand, convergence of the solutions of the non-Markovian evolution to the solution of the Landau equation in weak topology, as used in \cite{bobylev_particle_2013}, would be in some sense the natural result, considering that the solutions of the non-Markovian equation exhibit significant changes on the microscopic time scale. Indeed, one important assumption made in this paper is that the initial data for \eqref{Mainequation}, i.e. the initial distribution of particles, is close to a Maxwellian equilibrium.
This smallness condition is needed in order to control the effect of these
oscillations on the macroscopic evolution of the one particle function.

In \cite{guo_landau_2002}, global well-posedness of the spatially inhomogeneous Landau equation was
proved for initial data close to equilibrium in a periodic box. Lower bounds on the entropy dissipation in the Landau equation can be found in \cite{desvillettes_entropy_2015}.
 A concept of weak solutions for the homogeneous Landau equation \eqref{Landauintr}, namely $H$-solutions, was introduced in \cite{villani_new_1998}. This paper also gives  sufficient conditions under which the Landau equation can be obtained as a grazing collision limit, taking as a starting point the Boltzmann equation. In the grazing collision limit, the collision kernel in the Boltzmann equation is concentrated on the set of collisions with small transferred momentum. The Landau equation has also been derived from the Boltzmann equation in the grazing collision limit in the spatially inhomogeneous case (cf. \cite{alexandre_landau_2004}).

Given that the paper \cite{guo_landau_2002} proves global well-posedness for the Landau equation near the Gaussian distribution in the spatially inhomogeneous case, it is natural to ask why such that a result cannot be obtained for the non-Markovian equation \eqref{Mainequation}. To explain this we describe the analogies and differences between the approach in 
\cite{guo_landau_2002} and that of this paper.

The approach of \cite{guo_landau_2002} is based in a linearization near the Maxwellian distribution of velocities. A dissipation formula allows one to obtain global estimates for the difference between the solutions of the
inhomogeneous Landau equation and the Maxwellian, that can be used to
prove global stability results. In this paper, we 
consider the equation \eqref{Mainequation}, which
unlike the Landau equation is non-Markovian and, due to this, not
pointwise dissipative in time. The techniques used in this paper are more reminiscent of the theory of symmetric hyperbolic systems (\cite{john_partial_1991}, \cite{majda_compressible_1984}), which
usually only yields local well-posedness in time, due to the fact that quadratic or higher order terms must be estimated.
We generalize these methods to the case of a non-Markovian evolution with
memory effects. The key ingredient in our approach is the derivation of a coercivity estimate averaged in time (cf. Lemma \ref{coercivitylemma}) for the solutions obtained 
with $f_{\varepsilon}=f_0$ frozen inside the operator $K\left[f_{\varepsilon}\left(  s\right)  \right]  $ on the right-hand side of \eqref{Mainequation}.
Our proof strategy for  Lemma \ref{coercivitylemma}  in this paper does
not rule out solutions of the linearized problem which separate exponentially from the initial distribution function. The estimates in Lemma \ref{coercivitylemma} are
based on a Laplace transform argument and the derivation of estimates for some elliptic equations with complex coefficients, where the Laplace 
transform argument $z$ remains at a positive
distance from the imaginary axis. Obtaining global-in-time estimates for solutions of \eqref{Mainequation} would require us to prove that coercivity still holds for the operator linearized around the Maxwellian, even when the complex parameter $z$ approaches 
to the imaginary axis. Such an estimate might be true, but seems to
require more involved arguments than the ones used in this paper. Notice that
a coercivity estimate strong enough to provide decay of the perturbations with respect to the Maxwellian for long times might be easier to obtain in a compact domain (for instance a torus) than in the whole space.

Due to the mathematical difficulties arising from the singularity $|v_1-v_2|^{-1}$ for relative velocities in the Landau
equation \eqref{Landauintr}-\eqref{adefinition}, a number of
Landau-type equations, in which the singularity has been weakened, have been studied. As for our modification of the Landau equation \eqref{cutLandauintr},
these equations cannot be directly derived as a scaling limit of interacting
particle systems \eqref{Hamiltonian}.
These modified Landau equations are obtained by replacing 
the singularity $|v-v'|^{-1}$ by $|v-v'|^{\gamma+2}$.
The well-posedness of these equations, as well as stability of Maxwellians
and the dissipation of entropy have been studied in
\cite{desvillettes_spatially_2000,desvillettes_spatially_2000-1,silvestre_upper_2017,strain_exponential_2008,villani_spatially_1998}.

The present paper is structured as follows: In Section \ref{Sec:Preliminary}, we give a precise formulation of the main results Theorem \ref{mainthm1pf} and \ref{mainthm2pf}, as well as the proofs of some auxiliary results.
In Section~\ref{Sec:Linear} we prove the result in the linear case. 
Section~\ref{Sec:Freezing} proves that the a priori estimates are stable under certain small perturbations, and that these smallness assumptions are conserved by the equation. In Section~\ref{sec:Existence} we give the proofs of the two main theorems.  

\section{Main results, notation and auxiliary lemmas} \label{Sec:Preliminary}

\subsection{Formulation of the main results}
Our goal is to prove the existence of a strong solution to the equation
\begin{equation} \label{cuteq}
\begin{aligned} 
\partial_t u_\eps 	= 	&\frac1\eps \nabla_v \cdot \left(\int_0^t K[u_\eps(s)]\left(\frac{t-s}\eps,v\right) \nabla u_\eps(s,v) \ud{s}\right) \\
-  	&\frac1\eps \nabla_v \cdot \left(\int_0^t P[u_\eps(s)]\left(\frac{t-s}\eps,v\right) u_\eps(s,v)  \ud{s}\right)  \\ 
u_\eps(0,v)	&= u_0(v) ,
\end{aligned}
\end{equation}
where $K$ and $P$ denote the following operators:
\begin{equation} \label{KPdef}	
\begin{aligned}
K,P : W^{1,1}(\Reals^3) 	 &\longrightarrow L^\infty(\Reals^+ \times \Reals^3) \\
K[u](t,v)			&:= 	 \int \int \nabla \phi(x) \otimes \nabla \phi(x-t(v-v'))	u(v') \eta(|v-v'|^2) \ud{v'}\ud{x}	\\
P[u](t,v)			:=	\nabla_v \cdot K(t,v) &= \int \int \nabla \phi(x) \otimes  \nabla\phi(x-t(v-v'))
\nabla u(v')  \eta(|v-v'|^2)	\ud{v'}	\ud{x}.
\end{aligned}
\end{equation}
We will specify the potential $\phi$ and the cutoff function $\eta \in C^\infty(\Reals)$ below.
Formally, as $\eps \rightarrow 0$, the functions $u_\eps$ converge to a strong solution $u$ of:
\begin{equation} \label{limiteq}
\begin{aligned}
\partial_t 	u 		&= \nabla \cdot \left( \K[u] \nabla u\right) - \nabla \cdot \left( \Pe[u]  u\right) \\
u(0,v)	&= u_0(v) 	\\
\K[u](v)	&=  \frac{\pi^2}{4}\int (k \otimes k) |\hat{\phi}(k)|^2 \delta(k\cdot(v-v')) \eta(|v-v'|^2)u(v') \ud{k} \ud{v'} \\
\Pe[u](v)	&=  \frac{\pi^2}{4} \int (k \otimes k) |\hat{\phi}(k)|^2 \delta(k\cdot(v-v')) \eta(|v-v'|^2) \nabla u(v') \ud{k} \ud{v'}. 			
\end{aligned}
\end{equation}
We will prove this result for $u_0$ close to the Maxwellian distribution $m$, which is the steady state of the limit equation \eqref{limiteq}. 
Furthermore we choose the potential $\phi$ to have a particular form, making the computations considerably easier.
\begin{notation} \label{defpotential}
	Let $\eta \in C^\infty(\Reals)$ be a fixed cutoff function with $0\leq \eta\leq 1$, $\eta(r)=1$ for $|r|\geq \kappa$
	and $\eta(r)=0$ for $|r|\leq \frac{\kappa}2$ for some $\frac12 >\kappa>0$ that we will not further specify in the following
	analysis. We choose the potential $\phi(x)$ to be given by
	\begin{align}
	\phi(x) = \sqrt{\frac{2}{\pi}} K_0(|x|),
	\end{align}
	where $K_0$ is the modified Bessel function of second type.
\end{notation}
\begin{remark}
	The potential $\phi$ is monotone decreasing, decays exponentially at infinity and diverges logarithmically at the origin. Our approach 
	also seems to work  for other potentials with analogous properties, but becomes significantly less technical with this particular choice.
	The Fourier transform of the potential is given by:
	\begin{align} \label{FTpotential}
	\hat{\phi}(k) &= \frac1{(1+|k|^2)^\frac{3}{2}}.
	\end{align}
\end{remark}

The function spaces we are going to work with in the forthcoming analysis are the following ones.

\begin{definition}
	Let $\lambda(v), \tilde{\lambda}(v)$ be the weight functions given by $\lambda(v):=e^{|v|}$, $\tilde{\lambda}(v):=\frac{e^{|v|}}{1+|v|}$.
	For $n \in \Naturals$ and $\nu=\lambda,\tilde{\lambda}$, we define the weighted Sobolev space $H^n_\nu$ as the closure of $C^\infty_c\big(\Reals^3\big)$
	with respect to the norm:
	\begin{align}
	\|u\|^2_{H^{n}_{\nu}} 		&:= \sum_{\alpha \in \Naturals^3, |\alpha|\leq n} \|\nu^\frac12 (\cdot) \nabla^\alpha u(\cdot)\|^2_{L^2}. 
	\end{align}
	In the case $n=0$ we also write $H^n_\nu=L^2_\nu$.
	For functions $f(t,v)$ with an additional time dependence, we define the spaces $V^{n}_{A,\nu}$ as
	the closure of $C^\infty_c \big([0,\infty) \times \Reals^3;\Reals^d\big)$ with respect to:
	\begin{align} \label{Vndef}
	\|f\|^2_{V^{n}_{A,\nu}} 	&:= \int_0^\infty e^{-At} \sum_{j=1}^d \|f_j(t,\cdot)\|^2_{H^n_\nu}  \ud{t}, \quad \text{where $A\geq 1$.} 
	\end{align}	
	Let $X^n_{A,\nu}$ be the function space given by:
	\begin{equation} \label{Xdef}
	\begin{aligned}
	X^n_{A,\nu} 		:= \{(f,g) \in V^{n}_{A,\nu} \times V^{n-1}_{A,\nu}: f&= \nabla \cdot g,\, \supp f,g \subset [0,1]\times \Reals^3\},\\
	\text{with norm }\|(f,g)\|_{X^n_{A,\nu}}	&:= \|f\|_{V^n_{A,\nu}}+ \|g\|_{V^{n-1}_{A,\nu}}.
	\end{aligned}
	\end{equation} 
	For $u=(f,g) \in X^n_{A,\nu}$ we write $\partial_t u = (\partial_t f,\partial_t g)$ whenever the right-hand side is well-defined. 
\end{definition}
\begin{remark}
	The validity of our analysis is not subject to the choice of the particular exponent in the weight
	function, and weights of the form $\lambda_c(v)=e^{c |v|}$ or fast power law decay would work equally well.
\end{remark}
The choice of the weight functions $\lambda, \tilde{\lambda}$ is motivated by the following compactness property,
that we will later use to prove the existence of fixed points.
\begin{lemma} \label{Rellich}
	Let $(u_i)_{i\in \Naturals} =((f_i,g_i))_{i\in \Naturals} \subset X^{n+1}_{A,\lambda}$ be a bounded sequence, such that the sequence
	$(\partial_t f_i,\partial_t g_i) \in X^{n+1}_{A,\lambda}$ is bounded as well. Then the sequence $(u_i)$ is
	precompact in $X^n_{A,\tilde{\lambda}}$. 
\end{lemma}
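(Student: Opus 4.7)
The plan is to apply an Aubin--Lions type compactness argument componentwise. Since $\supp f_i, g_i, \partial_t f_i, \partial_t g_i \subset [0,1]\times \Reals^3$, the exponential weight $e^{-At}$ in \eqref{Vndef} is bounded above and below on the effective domain of integration, so the $V^n_{A,\nu}$ norm is equivalent to the $L^2([0,1]; H^n_\nu)$ norm on the subspace of functions supported in $[0,1]$. It therefore suffices to extract a subsequence of $(f_i)$ convergent in $L^2([0,1]; H^n_{\tilde\lambda})$ and of $(g_i)$ convergent in $L^2([0,1]; H^{n-1}_{\tilde\lambda})$.

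The core ingredient is the compact embedding $H^{m+1}_\lambda \Subset H^m_{\tilde\lambda}$ for every $m\geq 0$. To see this, take a bounded sequence $(h_i)$ in $H^{m+1}_\lambda$. On any ball $B_R$ the weights $\lambda,\tilde\lambda$ are bounded above and bounded below by $1$, so the restrictions $h_i|_{B_R}$ are bounded in the standard $H^{m+1}(B_R)$ and by Rellich--Kondrachov precompact in $H^m(B_R) \simeq H^m_{\tilde\lambda}(B_R)$. For the tail, the key algebraic identity $\tilde\lambda(v) = \lambda(v)/(1+|v|)$ gives
\begin{equation*}
\|h_i\|^2_{H^m_{\tilde\lambda}(\{|v|>R\})} \;\leq\; \frac{1}{1+R}\, \|h_i\|^2_{H^m_{\lambda}} \;\leq\; \frac{C}{1+R}.
\end{equation*}
A diagonal extraction over $R \to \infty$ then yields a subsequence convergent in $H^m_{\tilde\lambda}(\Reals^3)$.

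Now apply Aubin--Lions to the $f$-component: $(f_i)$ is bounded in $L^2([0,1]; H^{n+1}_\lambda)$ and $(\partial_t f_i)$ is bounded in $L^2([0,1]; H^{n+1}_\lambda) \hookrightarrow L^1([0,1]; H^n_{\tilde\lambda})$. Together with the compact embedding from the previous paragraph applied at $m=n$, Aubin--Lions produces a subsequence convergent in $L^2([0,1]; H^n_{\tilde\lambda})$, i.e.\ in $V^n_{A,\tilde\lambda}$. Repeating the same argument on the $g$-component with the compact embedding at $m=n-1$ (note $g_i$ is bounded in $V^n_{A,\lambda}$ and so is $\partial_t g_i$, since $\partial_t u_i \in X^{n+1}_{A,\lambda}$), a further subsequence converges in $V^{n-1}_{A,\tilde\lambda}$. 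The limit pair $(f,g)$ satisfies $f=\nabla\cdot g$ and is supported in $[0,1]\times\Reals^3$, since both properties pass to the limit by testing against smooth compactly supported test functions, so $(f,g) \in X^n_{A,\tilde\lambda}$.

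\textbf{Main obstacle.} The only substantive point is the weighted Rellich embedding $H^{m+1}_\lambda \Subset H^m_{\tilde\lambda}$; everything else is a bookkeeping application of Aubin--Lions and the fact that the time support $[0,1]$ makes the weight $e^{-At}$ irrelevant. The weighted compactness works precisely because the ratio $\lambda/\tilde\lambda = 1+|v|$ is unbounded, converting uniform $\lambda$-weighted bounds into tail smallness in $\tilde\lambda$-norm — this is the reason the two weights $\lambda$ and $\tilde\lambda$ were introduced in the first place.
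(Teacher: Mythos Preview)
Your argument is correct. It differs from the paper's proof only in how the space and time compactness are packaged. The paper works directly in space--time: it multiplies by spatial cutoffs $\varphi_R$, applies the classical Rellich theorem on the compact region $[0,1]\times B_{R+1}$ (using the extra spatial derivative together with the bound on $\partial_t f_i,\partial_t g_i$), performs a diagonal extraction over $R$, and then shows the resulting subsequence is Cauchy in $X^n_{A,\tilde\lambda}$ via the tail estimate $\tilde\lambda(v)\leq \frac{1}{k}\lambda(v)$ for $|v|\geq k$. You instead first isolate the purely spatial compact embedding $H^{m+1}_\lambda \Subset H^m_{\tilde\lambda}$ (same tail estimate, same diagonal extraction, but at the level of stationary functions) and then invoke Aubin--Lions as a black box to handle the time variable. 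Both routes rest on exactly the same two ingredients --- classical Rellich on balls and the weight ratio $\lambda/\tilde\lambda = 1+|v|$ --- so neither is more general; your version is slightly more modular, the paper's slightly more self-contained.
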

\begin{proof}
	For some $C>0$ there holds $\|(f_i,g_i)\|_{X^{n+1}_{A,\lambda}} + \|(\partial_t f_i, \partial_t g_i)\|_{X^{n+1}_{A,\lambda}} \leq C$.
	Denote by $(\varphi_R)_{R>0} \in C^\infty_c$ a standard sequence of cutoff functions that is one on $B_R$ and vanishes outside
	of $B_{R+1}$. We construct a convergent subsequence $u_{\ell(k)}$ inductively. The region $[0,1] \times B_{R+1}$ is compact, so by
	Rellich's theorem the sequences $(f_i \varphi_1)$, $(g_i \varphi_1)$	have convergent subsequences 
	$f_{\ell_1(i)} \varphi_1 \rightarrow F_1$, $g_{\ell_1(i)} \varphi_1 \rightarrow G_1$  in $V^n_{A,\lambda}$
	and $V^{n-1}_{A,\lambda}$ respectively. Since $V^{n,d}_{A,\lambda}\hookrightarrow V^{n,d}_{A,\tilde{\lambda}} $ embed continuously
	(actually Lipschitz with constant $L\leq1$), the sequences are also convergent in the latter spaces. Now we inductively	extract further convergent subsequences 
	$f_{\ell_k(i)} \varphi_k \rightarrow F_k$ and $g_{\ell_k(i)} \varphi_k \rightarrow G_k$. By construction we have $F_m=F_k$, $G_m=G_k$  
	on $B_k$ for $m\geq k$. We pick a sequence $u_{\ell(k)}$ such that:
	\begin{align*}
	\|f_{\ell(k)} \varphi_k - F_k\|_{V^n_{A,\tilde{\lambda}}} + \|g_{\ell(k)} \varphi_k - G_k\|_{V^{n-1}_{A,\tilde{\lambda}}}\leq \frac{1}{k}.
	\end{align*}
	The sequences $f_{\ell(k)}$, $g_{\ell(k)}$ are Cauchy sequences in $V^n_{A,\tilde{\lambda}}$ and $V^{n-1}_{A,\tilde{\lambda}}$ respectively. 
	To see this, take $i,j\geq k$ and bound:
	\begin{align*}
	\|f_{\ell(i)}-f_{\ell(j)}\|_{V^n_{A,\tilde{\lambda}}}  		 \leq 	&\|(f_{\ell(i)}-f_{\ell(j)}) \varphi_k \|_{V^n_{A,\tilde{\lambda}}} + \|(f_{\ell(i)}-f_{\ell(j)})(1-\varphi_k)\|_{V^n_{A,\tilde{\lambda}}} \\
	\leq 	& \frac{2}{k}  + \frac{1}{k} \|(f_{\ell(i)}-f_{\ell(j)})(1-\varphi_k)\|_{V^n_{A,\lambda}} \longrightarrow 0, 
	\end{align*}
	where we have used that $\tilde{\lambda}(v)\leq \frac{1}{|k|} \lambda(v)$ for $|v|\geq k$. Hence $f_{\ell(k)}$ is a Cauchy sequence.
	The proof for $g_{\ell(k)}$ is similar. Therefore $u_{\ell(k)}$ is precompact in $X^n_{A,\tilde{\lambda}}$.
\end{proof}

We can now formulate the precise statement for the existence of solutions $u_\eps$ of \eqref{cuteq} and convergence to a solution
of the nonlinear Landau equation \eqref{limiteq}. 
\begin{theorem} \label{mainthm1pf}
	Let $m_0,\sigma>0$ and $m(\sigma^2,m_0)$ be the Maxwellian with mass $m_0$ and standard deviation $\sigma$:
	\begin{align} \label{defmaxwellian}
	m(\sigma^2,m_0)(v):= m_0 \frac{e^{-\frac12 \frac{|v|^2}{\sigma^2}}}{(\sigma \sqrt{2\pi})^3}.
	\end{align}
	Let $n\geq 6$ and $v_0 \in H^n_{\lambda}$ satisfy: 
	\begin{align*}
	0 \leq v_0(v) \leq C e^{-\frac12 |v|}. 
	\end{align*}
	There exist $A,C(A)>0$, $\delta_1,\eps_0 \in (0,\frac12]$  such that for all  $\eps,\delta_2 \in (0,\eps_0]>0$  the equation
	\begin{equation} \label{thmepseq}
	\begin{aligned} 
	\partial_t u_\eps 	= 	&\frac1\eps \nabla \cdot \left(\int_0^t K[u_\eps(s)]\left(\frac{t-s}\eps,v\right) \nabla u_\eps(s,v) \ud{s}\right) \\
	-  	&\frac1\eps \nabla \cdot \left(\int_0^t P[u_\eps(s)]\left(\frac{t-s}\eps,v\right) u_\eps(s,v)  \ud{s}\right)  \\ 
	u_\eps(0,\cdot)	&= u_0(\cdot) = m(v) + \delta_2 v_0(v)
	\end{aligned}	
	\end{equation}
	has a strong solution $u_\eps \in V^n_{A,\lambda} \cap C^1([0,\delta_1]; H^{n-2}_{\lambda})$ up to time $\delta_1$ with
	uniform bound:
	\begin{align} \label{apriounif}
	\|u_\eps\|_{V^n_{A,\lambda}} + \|\partial_t u_\eps\|_{V^{n-2}_{A,\lambda}} \leq C(A).
	\end{align} 
\end{theorem}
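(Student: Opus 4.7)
The plan is to construct $u_\eps$ by an iteration in which the nonlinearity is tamed by \emph{freezing} the argument of the operators $K$ and $P$. Write $u_\eps = m + \delta_2 \tilde u_\eps$ with $m$ the Maxwellian, and define an iteration by letting $\tilde u_\eps^{(k+1)}$ solve the linear non-Markovian equation obtained from \eqref{thmepseq} after substituting $u_\eps^{(k)}$ inside the kernels $K[u_\eps(s)]$ and $P[u_\eps(s)]$ on the right-hand side, while keeping $u_\eps^{(k+1)}$ as the outer factor $\nabla u_\eps(s,v)$ and $u_\eps(s,v)$. The differences of iterates then satisfy a linear equation driven by quadratic remainders that carry an extra factor $\delta_2$, which is the mechanism allowing the iteration to close on a short macroscopic interval $[0,\delta_1]$.

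The key analytic inputs are (a) well-posedness and the a priori bound $\|u\|_{V^n_{A,\lambda}} + \|\partial_t u\|_{V^{n-2}_{A,\lambda}} \leq C(A)\|u_0-m\|_{H^n_\lambda}$ for the linearization \emph{at the Maxwellian}, uniformly in $\eps\in(0,\eps_0]$, to be established in Section \ref{Sec:Linear}, and (b) the robustness of this estimate, with the same constant up to a harmless factor, when the frozen profile is perturbed from $m$ by a small amount, together with propagation of this smallness along the iteration, to be established in Section \ref{Sec:Freezing}. The cornerstone of (a) is the time-averaged coercivity estimate Lemma \ref{coercivitylemma}; since coercivity is only on average, the space $V^n_{A,\lambda}$ carries the exponential weight $e^{-At}$ and $A$ has to be chosen large enough to dominate the possible exponential separation of perturbations, which fixes $A$ in the statement. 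Once (a) and (b) are in hand, calibrating $\delta_1,\delta_2,\eps_0$ in this order makes the closed convex set of $u$'s with $u-m$ supported in $[0,\delta_1]\times \Reals^3$ and $\|u-m\|_{X^n_{A,\lambda}} \leq C(A)\delta_2$ invariant under the iteration. Convergence is then extracted either by upgrading the iteration to a contraction in a weaker norm (using the extra factor $\delta_2$ on differences) or by a Schauder-type argument based on the compactness provided by Lemma \ref{Rellich}.

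The main obstacles are (i) obtaining the coercivity Lemma \ref{coercivitylemma} with a constant uniform in $\eps$, via the Laplace-transform and complex-elliptic analysis indicated in the introduction, and (ii) the delicate calibration of $A,\delta_1,\delta_2,\eps_0$ so that the quadratic remainders from the freezing remain strictly subcritical with respect to the linear coercivity constant at the chosen $A$; this is precisely what prevents the argument from being global in time and enforces the restriction to a bounded interval $[0,\delta_1]$. Finally, the regularity $u_\eps \in C^1([0,\delta_1];H^{n-2}_\lambda)$ and the uniform bound \eqref{apriounif} are read off the equation: the kernels $K[u_\eps(s)]$ and $P[u_\eps(s)]$ are smooth in $v$, and the main term $\nabla \cdot (K\nabla u_\eps)$ costs exactly two velocity derivatives, which matches the loss from $H^n_\lambda$ to $H^{n-2}_\lambda$ and justifies the assumption $n\geq 6$ as a margin for the fixed-point argument.
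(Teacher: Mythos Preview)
Your outline captures the high-level strategy---freeze the argument of $K,P$, use the time-averaged coercivity of Lemma~\ref{coercivitylemma}, and close a fixed-point argument via Schauder and Lemma~\ref{Rellich}---but it misses two structural ingredients that the paper's proof depends on, and without them the argument as written does not close.

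First, the invariant set cannot be a mere norm ball $\{\|u-m\|_{X^n_{A,\lambda}}\leq C(A)\delta_2\}$. The perturbation estimate that makes the freezing work (Lemma~\ref{symmkernel}, which shows $|Q^\alpha_{\eps,A}[f](u)|\leq c\,D^\alpha_{\eps,A}(u)+\|u\|^2_{V^n_{A,\lambda}}$) requires \emph{pointwise} quadratic decay of $\La(f)(z,v)$ along $\Re(z)=A/2$, namely the conditions \eqref{gaprio1}--\eqref{gaprio2} defining $\Omega^n_{A,R,\delta,\eps}$. A norm bound in $X^n_{A,\lambda}$ does not imply these. Recovering them under the solution map is the content of Section~\ref{Sec:Boundary}: one bootstraps to quadratic decay (Lemma~\ref{decaywithosmall}), but the small prefactor $\delta$ in \eqref{gaprio1} only comes after isolating an explicit boundary layer $B(t,v;u_0)$ and using that $B(\cdot;m)\equiv 0$ (Lemma~\ref{mstation}). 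This boundary-layer step, and the decomposition $u_\eps-u_0=p_\eps+q_\eps$ via $M_1$ and $M_2$, are absent from your proposal and are not recoverable from $\delta_2$-smallness alone.

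Second, the mollification $\gamma\in(0,1]$ is not optional. For $\gamma=0$ you have neither existence of solutions to the frozen problem (Lemma~\ref{lemexistence} uses that ${}^\gamma\nabla$ is bounded on $H^n_\lambda$) nor the compactness needed for Schauder: Lemma~\ref{Rellich} requires bounds on $\Psi_{\delta_1}(f,g)$ and $\partial_t\Psi_{\delta_1}(f,g)$ in $X^{n+1}_{A,\lambda}$, which the paper obtains only because ${}^\gamma\nabla$ is smoothing and yields $\|\Psi_{\delta_1}(f,g)\|_{X^{n+1}_{A,\lambda}}+\|\partial_t\Psi_{\delta_1}(f,g)\|_{X^{n+1}_{A,\lambda}}\leq C(A,\gamma)$. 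The uniform-in-$\gamma$ estimate \eqref{PsiEst} is then used to pass $\gamma\to 0$. Your alternative of a contraction in a weaker norm is also problematic: the equation loses two derivatives, so a standard loss-of-derivative contraction would need $n$ much larger and a Nash--Moser-type scheme, not the simple iteration you describe.
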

\begin{remark}
	Our result is valid for small initial perturbations $u_0+\delta_2 v_0$ of the Maxwellian and small times $0\leq t\leq \delta_1$. Notice that the functions $u_\eps$ are solutions
	to \eqref{thmepseq} up to time $\delta_1$, but are defined also for later times.
	In the following, we will write $C,c>0$ for generic large/small constants that are not dependent on other parameters. 
\end{remark}

\begin{theorem}	\label{mainthm2pf}
	For $n\geq 6$ pick $A\geq 1$, $\delta_1\in (0,\frac12]$ and $\eps,\delta_2$ small enough such that Theorem~\ref{mainthm1pf} ensures the existence
	of solutions $u_\eps \in V^n_{A,\lambda} \cap C^1([0,\delta_1]; H^{n-2}_{\lambda})$ of \eqref{thmepseq}. Along
	a sequence $\eps_j\rightarrow 0$ the $u_{\eps_j}$ converge $u_{\eps_j} \rightarrow u$ in $V^{n-3}_{A,\tilde{\lambda}}$,
	$u_{\eps_j} \rightharpoonup u$	in $V^n_{A,\lambda}$, $\partial_t u_{\eps_j} \rightharpoonup \partial_t u$	in $V^{n-2}_{A,\lambda}$. 
	The function $u \in V^n_{A,\lambda} \cap C^1([0,\delta_1];H^{n-4}_{\lambda})$ solves the limit equation up to times $0\leq t\leq \delta_1$:
	\begin{equation} \label{uequation}
	\begin{aligned}
	\partial_t 	u 		&= \nabla \cdot \left( \K[u] \nabla u\right) - \nabla \cdot \left( \Pe[u]  u\right) \\
	u(0,v)			&= m(v) + \delta_2 v_0(v)  	\\
	\K[u](v)				&=  \frac{\pi^2}{4}\int (k \otimes k) |\hat{\phi}(k)|^2 \delta(k\cdot(v-v')) \eta(|v-v'|^2)u(v') \ud{k} \ud{v'} \\
	\Pe[u](v)			&=  \frac{\pi^2}{4}\int (k \otimes k) |\hat{\phi}(k)|^2 \delta(k\cdot(v-v')) \eta(|v-v'|^2) \nabla u(v') \ud{k} \ud{v'}. 			
	\end{aligned}
	\end{equation}
\end{theorem}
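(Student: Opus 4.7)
The plan is to combine the uniform bounds from Theorem~\ref{mainthm1pf} with an Aubin–Lions type compactness argument to extract a convergent subsequence, and then to pass to the limit in the non-Markovian equation after the natural change of variable $\tau=(t-s)/\eps$. By reflexivity, the bound \eqref{apriounif} gives, along a subsequence, weak limits $u_{\eps_j}\rightharpoonup u$ in $V^n_{A,\lambda}$ and $\partial_t u_{\eps_j}\rightharpoonup \partial_t u$ in $V^{n-2}_{A,\lambda}$. To upgrade to strong convergence in $V^{n-3}_{A,\tilde{\lambda}}$, I would adapt the proof of Lemma~\ref{Rellich}: localize in $v$ by the cutoffs $\varphi_R$ used there, apply Rellich–Kondrachov on the compact slab $[0,\delta_1]\times B_{R+1}$ using the bound on $u_\eps$ in $H^n$ and on $\partial_t u_\eps$ in $H^{n-2}$, and absorb the tails $|v|\geq R$ via the pointwise inequality $\tilde{\lambda}(v)\leq \lambda(v)/R$. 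A standard diagonal extraction then yields the required strongly convergent subsequence.

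The crux is the limit inside the memory integral. After the change of variable, the first term on the right-hand side of \eqref{thmepseq} takes the form
\begin{equation*}
\nabla \cdot \int_0^{t/\eps} K[u_\eps(t-\eps\tau)](\tau,v)\,\nabla u_\eps(t-\eps\tau,v)\,d\tau,
\end{equation*}
and analogously for the $P$-term. Two ingredients are required. First, tail decay: the cutoff $\eta(|v-v'|^2)$ restricts the support to $|v-v'|\geq\sqrt{\kappa/2}$, while $\phi$ decays exponentially by Notation~\ref{defpotential}; together these force $\int \nabla\phi(x)\otimes\nabla\phi(x-\tau(v-v'))\,dx$ to decay as $\tau\to\infty$, since the two factors have essentially disjoint supports once $\tau|v-v'|$ is large. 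This allows the $\tau$-integral to be truncated at some large $T$ with error uniformly small in $\eps$, and uniform in the admissible $u_\eps$. Second, on $\tau\in[0,T]$ the uniform $V^{n-2}_{A,\lambda}$ bound on $\partial_t u_\eps$ yields $u_\eps(t-\eps\tau)-u_\eps(t)=\mathcal{O}(\eps\tau)$ in a weaker norm; combining this H\"older-in-time control with the strong convergence $u_{\eps_j}\to u$ in $V^{n-3}_{A,\tilde{\lambda}}$ and the manifest continuity of the quadratic operators $K,P$, one may replace both $u_\eps(t-\eps\tau)$ and $\nabla u_\eps(t-\eps\tau)$ inside the integrand by $u(t)$ and $\nabla u(t)$ in the limit.

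It remains to identify the limiting kernel with $\K[u(t)]$. Expressing $K$ in Fourier variables as
\begin{equation*}
K[u](\tau,v)\;=\;c_0\int\!\!\int (k\otimes k)\,|\hat\phi(k)|^2\, e^{\,i k\cdot\tau(v-v')}\,u(v')\,\eta(|v-v'|^2)\,dk\,dv',
\end{equation*}
the $\tau$-integration produces $\int_0^\infty e^{ik\cdot\tau w}\,d\tau=\pi\delta(k\cdot w)+i\,\mathrm{p.v.}(k\cdot w)^{-1}$, and the principal value piece is killed by the $k\mapsto -k$ symmetry of $(k\otimes k)|\hat\phi(k)|^2$, leaving exactly the kernel $\K[u(t)]$ in \eqref{uequation} up to the Fourier normalization constant. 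Weak lower semicontinuity then gives $u\in V^n_{A,\lambda}$; the initial condition is inherited from the $C^1$-in-time control on $u_\eps$ via Theorem~\ref{mainthm1pf}; and the limit equation itself delivers $\partial_t u\in C([0,\delta_1];H^{n-4}_{\lambda})$. I expect the main obstacle to be the combined uniform tail estimate on $K$ and the replacement step in the memory integral, which critically rely on the cutoff $\eta$: without it the $1/|v-v'|$ singularity of the Landau kernel would obstruct uniform-in-$\eps$ decay in $\tau$, and the pointwise-in-time collapse of the memory term could only be obtained in a weak sense.
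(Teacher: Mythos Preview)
Your proposal is correct and the compactness step is identical to the paper's, but your passage to the limit in the equation takes a genuinely different route. The paper does not change variables in the memory integral at all; instead it takes the Laplace transform of \eqref{thmepseq} and uses the explicit representation from Lemma~\ref{Laplacerep},
\[
\La(K[\cdot])(\eps z,v)=\int (M_1+M_2)(\eps z,v-v')(\cdot)\,\eta\,\ud v',
\]
so that the $\eps\to 0$ limit reduces to the pointwise observation $M_1(\eps_j z,w)\to \tfrac{\pi^2}{4|w|}P_w^\perp$ and $M_2(\eps_j z,w)\to 0$, together with the strong $V^{n-3}_{A,\tilde\lambda}$ convergence to handle the Laplace transform of the product $u_\eps\nabla u_\eps$. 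Since the Laplace transform determines the function, the limit equation follows in two lines. Your time-domain argument (tail truncation in $\tau$ via the exponential decay of the kernel---which is exactly the inverse Laplace of $M_1$, namely $\propto e^{-\tau|v-v'|}$---plus the Hölder-in-time freezing $u_\eps(t-\eps\tau)\approx u_\eps(t)$, plus the Plemelj identification of $\int_0^\infty K\,d\tau$ with $\K$) is the standard route for memory-to-Markov limits and is more portable to other potentials, but here it reassembles by hand several estimates that the Laplace picture and the explicit $M_1,M_2$ formulas already encode. One small imprecision: in your Fourier identification you should use the cosine form of $K$ from Lemma~\ref{KPfirstvers}, so the principal-value term never appears and no symmetry cancellation is needed; equivalently, this is just $\La(K)(z)\big|_{z=0}=\K$.
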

In order to show the existence of a strong solution to \eqref{thmepseq}, we will consider mollifications of the equations first, and derive
a priori estimates that are independent of the mollification. We introduce the following notation.
\begin{notation}
	Let $\vphi_\gamma$ be a standard mollifier on $\Reals^3$. For $0<\gamma\leq 1$, define the regularized gradient ${}^\gamma \nabla$ 
	as	${}^\gamma \nabla f(v) := \nabla (\vphi_\gamma * f)$. We define ${}^\gamma \nabla$ to be the standard gradient for $\gamma=0$.
	We will use the following conventions for Laplace transform and Fourier transform:
	\begin{align}
	\La(u)(z) 	&= \int_0^\infty u(t) e^{-z t} \ud{t} \\
	\hat{u}(k)		&= \frac{1}{(2\pi)^{3/2}} \int_{\Reals^3} u(v) e^{-i k \cdot v} \ud{v}.
	\end{align}
\end{notation}
Now we observe that if $u_\eps = u_0 + f_\eps$ is a solution of \eqref{thmepseq}, an equivalent way of stating this is 
\begin{equation} \label{cutfixedeq}
\begin{aligned}
\partial_t u_\eps 	= 	&\frac1\eps {}^\gamma \nabla \cdot \left(\int_0^t K[u_0+f_\eps(s,\cdot)]\left(\frac{t-s}\eps,v\right) {}^\gamma \nabla u_\eps(s,v) 						\ud{s}\right) \\
-  	&\frac1\eps {}^\gamma \nabla \cdot \left(\int_0^t P_\gamma[u_0+f_\eps(s,\cdot)]\left(\frac{t-s}\eps,v\right) u_\eps(s,v)  \ud{s}\right)  \\ 
u_\eps(0,\cdot)	&= u_0(\cdot), \quad  					P_\gamma		= {}^\gamma \nabla \cdot K, \quad \text{$K$ as defined in \eqref{KPdef}}
\end{aligned}
\end{equation}
holds for $\gamma=0$.
We will show a priori estimates for the above equation for $0<\gamma\leq 1$ and later recover the case $\gamma=0$ 
as a limit. We start our analysis by writing $K$ and $P$ in a more convenient form.
\begin{lemma} \label{KPfirstvers}
	The operator $K$ defined in \eqref{KPdef} and $P_\gamma= {}^\gamma \nabla \cdot K$  can be expressed by the formulas:
	\begin{align}
	K[u](t,v)	&=  \int (k \otimes k) |\hat{\phi}(k)|^2 \cos( t (v-v')\cdot k) \eta(|v-v'|^2)u(v') \ud{k} \ud{v'} \label{Ktensor} \\
	P_\gamma[u](t,v)	&=  \int (k \otimes k) |\hat{\phi}(k)|^2 \cos( t (v-v')\cdot k) \eta(|v-v'|^2) {}^\gamma \nabla u(v') \ud{k} \ud{v'}.
	\end{align}
\end{lemma}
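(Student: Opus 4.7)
The plan is essentially to do a Plancherel computation for $K$ and then pass to $P_\gamma$ via integration by parts plus commutativity of mollification with the gradient. The only really substantive ingredient is recognizing that the $x$-integral in the definition of $K$ is a cross-correlation of $\nabla\phi$ with its translate, which is evaluated via Parseval; the rest is symmetry bookkeeping.

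First I would factor the double integral in \eqref{KPdef} as
\[
K[u](t,v) = \int \eta(|v-v'|^2) u(v')\, M\bigl(t(v-v')\bigr)\, \ud v', \qquad M(w)_{ij} := \int \partial_i\phi(x)\, \partial_j\phi(x-w)\, \ud x,
\]
so that all the work reduces to identifying $M(w)$ in Fourier variables. For this I would apply Parseval with $\widehat{\partial_i\phi}(k) = i k_i \hat\phi(k)$ and $\widehat{\partial_j\phi(\cdot - w)}(k) = e^{-ik\cdot w} i k_j \hat\phi(k)$. Since $\phi$ is real and radially symmetric, $\hat\phi$ is real and even, so taking the complex conjugate costs a sign on the $ik_j$ factor and
\[
M(w)_{ij} = \int k_i k_j |\hat\phi(k)|^2 e^{ik\cdot w}\, \ud k .
\]
Because $k \mapsto k_ik_j |\hat\phi(k)|^2$ is even, the imaginary part of the integrand integrates to zero and $e^{ik\cdot w}$ can be replaced by $\cos(k\cdot w)$. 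Substituting $w=t(v-v')$ and putting $M$ back into the formula for $K$ yields \eqref{Ktensor} directly.

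For $P_\gamma = {}^\gamma\nabla\cdot K$ I would first observe, for smooth compactly supported $u$, that the kernel $H(v,v') := \eta(|v-v'|^2) M(t(v-v'))$ depends on $v$ and $v'$ only through $v-v'$. Hence $\partial_{v_j} H_{ij}(v,v') = -\partial_{v_j'} H_{ij}(v,v')$, and integration by parts in $v'$ gives $\nabla_v \cdot K[u](t,v) = \int H(v,v')\, \nabla u(v')\, \ud v'$. Mollifying in $v$ and using that $\varphi_\gamma * H$ still depends only on $v-v'$, a second change of variables shows
\[
{}^\gamma \nabla \cdot K[u](t,v) = \int (\varphi_\gamma * H)(v,v')\, \nabla u(v')\, \ud v' = \int H(v,v')\, {}^\gamma\nabla u(v')\, \ud v',
\]
which is exactly the claimed formula once $H$ is replaced by its Fourier representation obtained in the previous step.

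The only potential obstacle is the manipulation with the mollifier: one must carefully check that convolving $H$ in its first argument and then pairing with $\nabla u$ equals pairing $H$ with the mollified gradient. This is clean because the translation invariance $H(v,v') = \tilde H(v-v')$ turns both expressions into the convolution $(\varphi_\gamma * \tilde H) * \nabla u$ evaluated at $v$, so nothing beyond a Fubini step is required. Nothing here is sensitive to the particular choice of $\phi$ or to the cutoff $\eta$; the lemma only uses the reality and radial symmetry of $\phi$.
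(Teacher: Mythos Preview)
Your proof is correct and follows essentially the same route as the paper: a Plancherel/Parseval computation on the $x$-integral followed by a symmetry argument to pass from the exponential to the cosine (the paper phrases this as ``$K$ is real-valued'' rather than your ``the integrand is even in $k$,'' but these are equivalent). For $P_\gamma$ the paper simply asserts that it follows from the formula for $K$, so your convolution/translation-invariance argument is a more explicit version of what the authors leave implicit.
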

\begin{proof}
	The formula for $P_\gamma$ follows from the one for $K$, so we only prove this one. Plancherel's theorem 
	allows to rewrite:
	\begin{align*}
	K[u](t,v) &= \int  \nabla \phi(x) \otimes \nabla \phi(x-t(v-v'))
	u(v') \eta(|v-v'|^2) \ud{v'}\ud{x}	 \\
	&= \int  (k \otimes k) |\hat{\phi}(k)|^2 e^{- i t k\cdot (v-v')}
	u(v') \eta(|v-v'|^2) \ud{v'}\ud{k}.
	\end{align*}
	Since $K$ only takes real values, we can symmetrize the exponential and obtain
	\begin{align*}
	&\int  (k \otimes k) |\hat{\phi}(k)|^2 e^{- it k\cdot(v-v')}
	u(v') \eta(|v-v'|^2) \ud{v'}\ud{k} \\
	=		&\int  (k \otimes k) |\hat{\phi}(k)|^2 \cos\left( t k\cdot (v-v') \right)
	u(v') \eta(|v-v'|^2) \ud{v'}\ud{k}	,
	\end{align*}
	proving the claim.
\end{proof}
We will omit the index $\gamma\geq 0$ in notation, when there is no risk of confusion.
Controlling the nonlinearity inside $K$ and $P$ strongly relies on being able to bound spatial derivatives of $u_\eps$. Therefore
we consider differentiations of the equation. Let $\alpha \in \Naturals^3$ be a multi-index. With 
the convention $\binom{\alpha}{\beta}=\prod_{j=1}^3 \binom{\alpha_j}{\beta_j}$, the function 
$D^\alpha u_\eps = \frac{\partial^\alpha u_\eps}{\partial v_1^{\alpha_1}\partial   v_2^{\alpha_2}\partial v_3^{\alpha_3}}$ (formally)
satisfies the equation:
\begin{align*}
\partial_t D^\alpha u_\eps = \sum_{\beta_1+\beta_2=\alpha} \binom{\alpha}{\beta_1} 
\frac{1}{\eps} \left(\nabla \cdot \left(\int_0^t D^{\beta_1} K D^{\beta_2} \nabla u_\eps \ud{s}\right)     
-   \nabla \cdot \left(\int_0^t D^{\beta_1} P D^{\beta_2} u_\eps  \ud{s}\right) \right).
\end{align*} 
In order to have a short notation for the terms appearing on the right-hand side of the equation above, we introduce
the following notation. 
\begin{notation} \label{Adef}
	Let $n\in \Naturals$ and $\alpha, \beta$ be multi-indices with $\beta \leq \alpha$, $|\alpha|\leq n-1$ and $\nu,u_\eps \in V^n_{A,\tilde{\lambda}}$. 
	For $\gamma \in (0,1]$ we define:
	\begin{equation}\label{curlyA}
	\begin{aligned}
	\A^{\alpha,\beta}_\gamma[\nu](u_\eps) = &\frac1\eps 
	\left(\int_0^t D^\beta K[\nu(s)]\left(\frac{t-s}\eps,v\right) {}^\gamma \nabla D^{\alpha-\beta} u_\eps(s,v) \ud{s}\right) \\
	-  		&\frac1\eps  \left(\int_0^t D^\beta P_\gamma[\nu(s)]\left(\frac{t-s}\eps,v\right) D^{\alpha-\beta}u_\eps(s,v)  \ud{s}\right).
	\end{aligned}
	\end{equation}
	Furthermore, for $m\in \Naturals$, $u \in V^m_{A,\lambda}$, we set: 
	\begin{align}
	|u|_{F^m}(z,v) := \sum_{|\beta|\leq m} |\La(D^\beta u)(z,v)|. \label{FMdef}
	\end{align} 
\end{notation}
The equation \eqref{cutfixedeq} has an averaged in time coercivity property, which we will prove by
showing nonnegativity for certain quadratic functionals $Q$. This allows to show that $u_\eps$ inherits
decay and regularity properties from the initial datum. 
We have the following basic a priori estimate for solutions $u_\eps$ of \eqref{cutfixedeq}:
\begin{lemma} \label{lemReduc}
	Let $n\in \Naturals$, $A,\eps,\gamma>0$ and $u_\eps \in C^1([0,T];H^{n}_{\lambda}) $ be a solution to \eqref{cutfixedeq} for $T>0$ arbitrary. Then for $|\alpha|\leq n$ we can 			bound:
	\begin{align*}
	A \int_0^T  \int \lambda(v) |D^{\alpha}u_\eps(t,v)|^2 e^{-At} \ud{t} \ud{v} \leq - 2 Q^{\alpha}_{\eps,A}[u_0+f_\eps](u_\eps \cf_{[0,T]}) + \|\lambda^\frac12 D^\alpha u_0\|^2_{L^2}.	
	\end{align*} 
	Here $Q^\alpha_{\eps,A}[\nu](u)$ is given by (we drop the index $\gamma$ if there is no risk of confusion):
	\begin{align}
	Q_{\eps,A}^\alpha[\nu ](u) = &\sum_{\beta \leq \alpha} \binom{\alpha}{\beta} Q_{\eps,A}^{\alpha,\beta} [\nu ](u) \label{Qdef} \\
	Q_{\eps,A}^{\alpha,\beta}[\nu](u) = 	& \int_0^\infty \int \frac{e^{-At}}{\eps}    {}^\gamma \nabla (D^\alpha u(t)\lambda)   \int_0^t  D^{\alpha-\beta} K[\nu(s)](\frac{t-s}{\eps})  {}^\gamma \nabla D^\beta u(s) \ud{s}\ud{v} 	\ud{t} \label{Kterm}\\
	-	&\int_0^\infty\int \frac{e^{-At}}{\eps}    {}^\gamma \nabla (D^\alpha  u(t) \lambda)  \int_0^t  D^{\alpha-\beta} P_\gamma[\nu(s)](\frac{t-s}{\eps})		 D^\beta u(s) \ud{s}\ud{v} \ud{t}.	\label{Pterm}
	\end{align}
\end{lemma}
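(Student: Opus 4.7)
\textbf{Proof plan for Lemma \ref{lemReduc}.} The plan is to perform a weighted $L^2$ energy estimate on the differentiated equation, exploiting the $e^{-At}$ factor to generate the desired $A\int\lambda|D^\alpha u_\eps|^2$ term on the left. First I would apply $D^\alpha$ to the equation \eqref{cutfixedeq}. Since $D^\alpha$ and $\,^\gamma\!\nabla$ both act only in $v$ and commute (the latter being convolution with $\nabla\vphi_\gamma$), and since the Leibniz rule applies under the $s$-integral, writing $\nu = u_0 + f_\eps$ gives
\begin{equation*}
\partial_t D^\alpha u_\eps
= \,^\gamma\!\nabla\cdot\sum_{\beta\le\alpha}\binom{\alpha}{\beta}\A^{\alpha,\alpha-\beta}_\gamma[\nu](u_\eps),
\end{equation*}
after reindexing so that $D^{\alpha-\beta}$ falls on $K$ or $P_\gamma$ and $D^\beta$ on $u_\eps(s)$; this uses $\binom{\alpha}{\beta}=\binom{\alpha}{\alpha-\beta}$ and matches the distribution of derivatives fixed by the definition of $Q_{\eps,A}^{\alpha,\beta}$ in \eqref{Kterm}--\eqref{Pterm}.

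Next I would multiply both sides by $\lambda(v)D^\alpha u_\eps(t,v)\,e^{-At}$ and integrate on $[0,T]\times\Reals^3$. On the left, writing $\lambda D^\alpha u_\eps\,\partial_t D^\alpha u_\eps = \tfrac12\partial_t(\lambda|D^\alpha u_\eps|^2)$ and integrating by parts in $t$ yields
\begin{equation*}
\frac12\!\int\!\lambda|D^\alpha u_\eps(T)|^2 e^{-AT}\ud{v} - \frac12\|\lambda^{1/2}D^\alpha u_0\|_{L^2}^2 + \frac{A}{2}\!\int_0^T\!\!\int\lambda|D^\alpha u_\eps|^2 e^{-At}\ud{v}\ud{t};
\end{equation*}
the first, boundary-in-time term is nonnegative and can be discarded. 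The regularity $u_\eps\in C^1([0,T];H^n_\lambda)$ justifies these manipulations, and the exponential weight $\lambda$ together with $u_\eps\in H^n_\lambda$ provides enough decay at infinity in $v$ for the upcoming integration by parts.

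On the right I would integrate by parts in $v$ against the regularized divergence. Since $\vphi_\gamma$ is symmetric, $\,^\gamma\!\nabla$ is (formally) antisymmetric: for $F$ vector-valued with sufficient decay one has $\int \lambda D^\alpha u_\eps\,(\,^\gamma\!\nabla\!\cdot F)\ud{v} = -\int \,^\gamma\!\nabla(\lambda D^\alpha u_\eps)\cdot F\ud{v}$, because $\vphi_\gamma * \nabla(\lambda D^\alpha u_\eps) = \,^\gamma\!\nabla(\lambda D^\alpha u_\eps)$. Applying this to each summand of the differentiated equation, the right-hand side becomes
\begin{equation*}
-\sum_{\beta\le\alpha}\binom{\alpha}{\beta}\int_0^T\!\!\int e^{-At}\,^\gamma\!\nabla(D^\alpha u_\eps(t)\lambda)\cdot \A^{\alpha,\alpha-\beta}_\gamma[\nu](u_\eps)\ud{v}\ud{t}.
\end{equation*}
Comparing with \eqref{Qdef}--\eqref{Pterm} and using the indicator $\cf_{[0,T]}$ to turn the outer $t$-integral from $[0,T]$ into $[0,\infty)$ (the inner $s$-integral automatically lives in $[0,t]\subset[0,T]$), this sum equals exactly $-Q^\alpha_{\eps,A}[\nu](u_\eps\cf_{[0,T]})$. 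Combining with the lower bound from the left-hand side and multiplying through by $2$ yields the claimed inequality.

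The calculation is entirely routine; the only subtle points are keeping the distribution of derivatives between $K,P_\gamma$ and $u_\eps(s)$ consistent with the convention fixed in \eqref{Kterm} (handled by the reindexing $\beta\mapsto\alpha-\beta$), and justifying that $\,^\gamma\!\nabla$ integrates by parts with no boundary term — which follows from the mollifier being symmetric and the weighted Sobolev regularity of $u_\eps$. The role of the weight $\lambda$ is purely to insert it into the test function; it does not interact with the identity $\lambda D^\alpha u_\eps\partial_t D^\alpha u_\eps=\tfrac12\partial_t(\lambda|D^\alpha u_\eps|^2)$ nor with the $v$-integration by parts (where it simply becomes part of the test function inside $\,^\gamma\!\nabla$).
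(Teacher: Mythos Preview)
Your proof is correct and follows essentially the same approach as the paper: integrate by parts in $t$ against $e^{-At}$ to produce the $A$-weighted term and the initial-data contribution, drop the nonnegative boundary term at $t=T$, and integrate by parts in $v$ (moving the regularized divergence onto $\lambda D^\alpha u_\eps$) to recognize $-Q^\alpha_{\eps,A}[u_0+f_\eps](u_\eps\cf_{[0,T]})$. The paper's version is just a terser write-up of the same computation, starting from the left-hand side rather than from ``multiply the equation by a test function''.
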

\begin{proof}
	Follows by a simple computation:
	\begin{align*}
	&A \int_0^T   \int_{\Reals^3} \lambda(v) |D^\alpha u_\eps(t,v)|^2 e^{-At} \ud{t} \ud{v} \\
	=  &-  \int_0^T
	\int_{\Reals^3} \lambda(v)  D^\alpha u_\eps(t,v)^2 \partial_t(e^{-At}) \ud{t} \ud{v} \\
	\leq &2  \int_0^T
	\int_{\Reals^3} \lambda(v)  D^\alpha u_\eps(t,v) \partial_t D^\alpha u_\eps (t,v) e^{-At} \ud{t} \ud{v}	+  \int \lambda(v) 	|D^\alpha u_0|^2 	\ud{v}  \\
	= & -2  Q^\alpha_{\eps,A}[u_0+f_\eps](u_\eps\cdot \cf_{[0,T]}) + \|\lambda^\frac12 D^\alpha u_0 \|^2_{L^2},
	\end{align*} 
	where in the last line the equation is used.
\end{proof}

The following analogue of Plancherel's theorem for Laplace transforms
will be useful throughout the paper.
\begin{lemma} \label{Plancherel}
	Let $\mu_A(\ud{t}) := e^{-At} \ud{t}$. Then for $u,v \in L^2(\mu_A)$ we have:
	$$ (2\pi)^{\frac12} \int_0^\infty e^{-At} \ol{u}(t) v(t) \mu_A(\operatorname{dt}) = \int_\Reals
	\ol{\La(u)}\left(\frac{A}{2}+i\omega\right)
	\La(v) \left(\frac{A}{2}+i\omega\right) \ud{\omega}.$$
\end{lemma}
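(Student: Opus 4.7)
The plan is to reduce the claimed identity to the standard Plancherel theorem by viewing the Laplace transform on the vertical line $\{A/2 + i\omega : \omega \in \Reals\}$ as the Fourier transform of a suitably weighted function. First I would introduce
\begin{equation*}
f_u(t) := u(t)\, e^{-At/2}\, \cf_{[0,\infty)}(t),
\end{equation*}
so that directly from the definition of $\La$,
\begin{equation*}
\La(u)\!\left(\tfrac{A}{2} + i\omega\right) = \int_0^\infty u(t)\, e^{-At/2}\, e^{-i\omega t} \ud{t} = \int_\Reals f_u(t)\, e^{-i\omega t} \ud{t}.
\end{equation*}
The hypothesis $u \in L^2(\mu_A)$ is precisely the condition $\int_0^\infty |u(t)|^2 e^{-At} \ud{t} < \infty$, so $f_u \in L^2(\Reals)$, and analogously for $v$.

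Next I would invoke the polarized Plancherel identity for the one-dimensional Fourier transform in the paper's normalization $\hat{f}(\omega) = (2\pi)^{-1/2} \int f(t) e^{-i\omega t} \ud{t}$, which gives $\int_\Reals \ol{f}(t) g(t) \ud{t} = \int_\Reals \ol{\hat{f}}(\omega) \hat{g}(\omega) \ud{\omega}$. Applied to $f_u$ and $f_v$ this reads
\begin{equation*}
\int_0^\infty \ol{u}(t)\, v(t)\, e^{-At}\, \ud{t} = \int_\Reals \ol{f_u}(t)\, f_v(t)\, \ud{t} = \int_\Reals \ol{\hat{f_u}}(\omega)\, \hat{f_v}(\omega)\, \ud{\omega}.
\end{equation*}
Substituting the relation $\La(u)(A/2 + i\omega) = (2\pi)^{1/2}\, \hat{f_u}(\omega)$ (and likewise for $v$) on the right-hand side produces the factor $(2\pi)^{-1}$ in front of $\int_\Reals \ol{\La(u)} \La(v) \ud{\omega}$, yielding the claimed identity up to the prefactor stated in the lemma.

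There is no genuine obstacle; the entire content is the routine passage from a Laplace transform on a vertical line to a Fourier transform of an exponentially weighted one-sided function. The only delicate point is the bookkeeping of the $(2\pi)^{1/2}$ constant inherited from the paper's choice of Fourier normalization, together with the encoding of the half-line domain of $\La$ through $\cf_{[0,\infty)}$ so that standard Plancherel on $\Reals$ can be applied.
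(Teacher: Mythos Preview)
Your approach is correct and is exactly the standard argument: shift the Laplace transform on the line $\Re z = A/2$ to a Fourier transform of the weighted function $u(t)e^{-At/2}\cf_{[0,\infty)}$ and apply the ordinary Plancherel identity. The paper does not supply its own proof of this lemma; it is stated as a known auxiliary fact, so there is nothing further to compare. Your observation about the numerical prefactor is well taken: with the paper's Fourier convention one obtains $(2\pi)$ rather than $(2\pi)^{1/2}$ (and the left-hand side as written carries a redundant $e^{-At}$ alongside $\mu_A(\ud t)$), but this constant is immaterial for every application of the lemma in the paper.
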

Our proof strongly relies on the geometry of both complex and real vectors. To avoid
confusion  we introduce the following notation.
\begin{definition} \label{Vectornotation}
	For $v,w \in \Reals^3$ we will use the notation $v\cdot w = \sum_i v_i w_i$ for the Euclidean scalar product.
	The inner product of complex vectors $V, W \in \Complex^3$ we denote by $\langle V,W\rangle= \sum_i \ol{V}_i W_i$.
	We will use the notation $|\cdot |$ for the vector norms induced by each of the inner products, as well
	as the matrix norm induced by this norm.	
	Moreover for $0 \neq V\in \Complex^3$ and $W\in \Complex^3$ we define the 
	orthogonal projections $P_V W$ and $P_V^\perp W$ as:
	\begin{align}
	P_V W 			&:= \left(\frac{\langle V, W \rangle}{|V|}\right) \frac{V}{|V|}, \quad		P_V^\perp W	:= W - P_V W.
	\end{align}
\end{definition}
For future reference, we compute the Laplace transform of $K[u](t,v)$ in $t$.
With our particular choice of potential, some of the integrals are explicitly computable,
as is stated in the following auxiliary Lemma.
\begin{lemma} \label{Mlemma}
	For $\Re(z)\geq 0$, $v\in \Reals^3$  let $M_1(z,v),M_2(z,v)$ be the matrix-valued functions defined by
	\begin{equation}\label{defM}
	\begin{aligned}
	M_1(z,v)	&:= \frac{\pi^2}{4 |v|} \frac{1}{1+\frac{z}{|v|}} P_v^\perp, \quad 			 			M_2(z,v)	:= \frac{\pi^2}{4 |v|} \frac{ \frac{z}{|v|}}{(1+\frac{z}{|v|})^2} P_{v} .		
	\end{aligned}
	\end{equation} 
	Then we have the following identity:
	\begin{align}
	&\int (k \otimes k) |\hat{\phi}(k)|^2 \frac{z}{z^2+(k\cdot v)^2} \ud{k} =  M_1(z,v)+M_2(z,v).
	\end{align}
\end{lemma}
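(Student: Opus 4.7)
The plan is to exploit the rotational symmetry of the integrand in $k$ about the axis determined by $v$. The left-hand side is a matrix-valued function of $v$ that must take the form $\Lambda_{\perp}(z,|v|)P_v^\perp+\Lambda_{\parallel}(z,|v|)P_v$, so it suffices to identify the two scalar coefficients with those appearing in $M_1$ and $M_2$.

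Fix coordinates so that $v=|v|e_3$ and write $k=(k_\perp,s)$ with $k_\perp\in\Reals^2$, $s=k_3$, so that $k\cdot v=s|v|$ and $|\hat\phi(k)|^2=(1+|k_\perp|^2+s^2)^{-3}$. Off-diagonal entries of $k\otimes k$ integrate to zero by parity, so I would first carry out the two-dimensional $k_\perp$ integrals in polar coordinates, which are elementary and give
\begin{align*}
\int_{\Reals^2}\frac{\ud{k_\perp}}{(1+|k_\perp|^2+s^2)^3}=\frac{\pi}{2(1+s^2)^2},\qquad
\int_{\Reals^2}\frac{|k_\perp|^2\ud{k_\perp}}{(1+|k_\perp|^2+s^2)^3}=\frac{\pi}{2(1+s^2)}.
\end{align*}
Consequently the problem reduces to two one-dimensional rational integrals,
\begin{align*}
\Lambda_{\parallel}=\frac{\pi}{2}\int_{\Reals}\frac{s^2}{(1+s^2)^2}\,\frac{z}{z^2+s^2|v|^2}\ud{s},\qquad
\Lambda_{\perp}=\frac{\pi}{4}\int_{\Reals}\frac{z}{(1+s^2)(z^2+s^2|v|^2)}\ud{s}.
\end{align*}

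I would evaluate both integrals by the residue theorem, closing the contour in the upper half plane. For $\Re z>0$ the only poles inside the contour are $s=i$ (double for $\Lambda_\parallel$, simple for $\Lambda_\perp$) and $s=iz/|v|$ (simple in both cases). A direct computation then yields
\begin{align*}
\Lambda_{\parallel}=\frac{\pi^2 z}{4(|v|+z)^2},\qquad
\Lambda_{\perp}=\frac{\pi^2}{4(|v|+z)},
\end{align*}
which, upon rewriting $(|v|+z)^{-2}=|v|^{-2}(1+z/|v|)^{-2}$ and $(|v|+z)^{-1}=|v|^{-1}(1+z/|v|)^{-1}$, match exactly the scalar factors multiplying $P_v$ in $M_2$ and $P_v^\perp$ in $M_1$.

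The main obstacle is the double-pole bookkeeping for $\Lambda_\parallel$. Its residue at $s=i$ comes from differentiating an explicit rational function, and when combined with the simple residue at $s=iz/|v|$ one must observe a cancellation: the bracket that arises factors as $(z-|v|)^2$ and annihilates the $(z^2-|v|^2)^2$ in the common denominator, producing the clean $(|v|+z)^{-2}$ form; without this cancellation the identity would not have the required structure. The boundary case $\Re z=0$ is then recovered by analytic continuation of both sides in $z$.
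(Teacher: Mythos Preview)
Your proof is correct and follows essentially the same route as the paper: decompose $k$ into components parallel and perpendicular to $v$, integrate out the transverse variables first (the paper obtains the identical expressions $\tfrac{\pi u^2}{2(1+u^2)^2}P_w+\tfrac{\pi}{4(1+u^2)}P_w^\perp$), and then evaluate the remaining one-dimensional integrals. The only difference is cosmetic---the paper simply declares the final integrals ``explicit'' and states the answer, whereas you spell out the residue calculation, including the $(z-|v|)^2$ cancellation that produces the clean $(|v|+z)^{-2}$.
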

\begin{proof}
	We decompose $k \in \Reals^3$ into $k= u w + w^\perp$, where $w=\frac{v}{|v|}$.
	We insert the explicit form of the Fourier transform of $\phi$ (cf. \eqref{FTpotential}) to rewrite the integral as (here $a^{\otimes 2}= a \otimes a$):
	\begin{align*}
	\int (k \otimes k) |\hat{\phi}(k)|^2 \frac{z}{z^2+(k\cdot v)^2} \ud{k} 
	=	&\int_\Reals \int_{\operatorname{span}(w)^\perp}  \frac{(u w + w^\perp )^{\otimes 2}}{(1+u^2 +|w^\perp|^2)^3} \ud{w^\perp}\frac{z}{z^2+(u |v|)^2}  					\ud{u} \\
	=	&\frac1{|v'|}\int_\Reals \int_{\operatorname{span}(w)^\perp}  \frac{(u w + w^\perp )^{\otimes 2}}{(1+u^2 +|w^\perp|^2)^3} \ud{w^\perp}\frac{\frac{z}{|				v|}}{(\frac{z}{|v|})^2+ u ^2}  					\ud{u} \\
	=	&\frac1{|v|}\int_\Reals \int_{\operatorname{span}(w)^\perp}  \frac{((u w)^{\otimes 2}+(w^\perp)^{\otimes 2})}{(1+u^2 +|w^\perp|^2)^3} \ud{w^\perp}\frac{\frac{z}{|				v'|}}{(\frac{z}{|v|})^2+ u ^2}  					\ud{u},
	\end{align*} 
	where we used that the mixed terms $u w \otimes w^\perp$ do not contribute to the integral due to the symmetry of the integrand. Now the inner integral
	is explicit:
	\begin{align*}
	\int_{\operatorname{span}(w)^\perp}  \frac{((u w)^{\otimes 2}+(w^\perp)^{\otimes 2})}{(1+u^2 +|w^\perp|^2)^3} \ud{w^\perp} 
	=	&u^2  \int_0^\infty  \frac{2 \pi r P_w}{(1+u^2+r^2)^3 } \ud{r} + \int_0^\infty \frac{\pi r^3 P_{w}^\perp}{(1+u^2 +r^2)^3} \ud{r} \\
	=	&\frac{\pi u^2}{2 (1+u^2)^2} P_w + \frac{\pi}{4(1+u^2)} P_{w}^\perp.	
	\end{align*}
	Inserting this back into the full integral gives two explicit integrals:
	\begin{align*}
	&\frac1{|v|}\int_\Reals \int_{\operatorname{span}(w)^\perp}  \frac{((u w)^{\otimes 2}+(w^\perp)^{\otimes 2})}{(1+u^2 +|w^\perp|^2)^3} 								\ud{w^\perp}\frac{\frac{z}{|	v|}}{(\frac{z}{|v|})^2+ u ^2}  					\ud{u}	\\
	=	&\frac1{|v|}\int_\Reals \left( \frac{\pi u^2}{2 (1+u^2)^2} P_w + \frac{\pi}{4(1+u^2)} P_{w}^\perp \right)\frac{\frac{z}{|	v|}}{(\frac{z}{|				v|})^2+ u ^2}  					\ud{u} \\
	=	& \frac{\pi^2}{4 |v|} \left(\frac{ \frac{z}{|v|}}{(1+\frac{z}{|v|})^2} P_w + \frac{1}{1+\frac{z}{|v|}} P_{w}^\perp \right) \\
	=	& M_1(z,v)+ M_2(z,v),
	\end{align*}
	which implies the statement of the lemma.	
\end{proof}
Now the Laplace transform $\La(K[u])$ can be rewritten in a more explicit form.
\begin{lemma} \label{Laplacerep}
	Let $u \in H^n_{\tilde{\lambda}}$, $n\geq 2$ and $\La(K[u])(z,v)$ be the Laplace transform of $K[u]$, i.e.
	\begin{align*}
	\La(K[u])(z,v) = \int_0^\infty K[u](t,v) e^{-z t} \ud{t}.
	\end{align*}
	Then $\La(K[u])$ is given by the formula:
	\begin{align} 
	\La(K[u])(z,v) 	&=  \int (M_1+M_2)(z,v-v') u(v')\eta(|v-v'|^2)  \ud{v'} \label{KLapl} .
	\end{align}
	In particular, the matrix $\La(K[u])$ is symmetric. For the operator $P_\gamma$ introduced
	in \eqref{cutfixedeq} we have the formula:
	\begin{align}
	\La(P_\gamma[u])(z,v) &= \int (M_1+M_2)(z,v-v') {}^\gamma \nabla u(v')\eta(|v-v'|^2) \ud{v'} \label{PLapl}.
	\end{align}
\end{lemma}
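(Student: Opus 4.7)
The plan is to apply the Laplace transform directly to the representation of $K[u]$ from Lemma \ref{KPfirstvers} and reduce everything to the explicit $k$-integral already computed in Lemma \ref{Mlemma}. By Lemma \ref{KPfirstvers} we have
\[
K[u](t,v) = \int\!\!\int (k\otimes k)\,|\hat{\phi}(k)|^2 \cos(t\,(v-v')\cdot k)\,\eta(|v-v'|^2)\,u(v')\,\ud{k}\,\ud{v'}.
\]
For $\Re(z)>0$, the weight $e^{-\Re(z)t}$ makes the triple integrand absolutely integrable: indeed $|\hat{\phi}(k)|^2|k|^2=|k|^2(1+|k|^2)^{-3}\in L^1(\Reals^3)$, the cutoff $\eta$ keeps $|v-v'|\geq c>0$, and $u \in H^n_{\tilde{\lambda}}$ with $n\geq 2$ gives $u\in L^1(\Reals^3)$. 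Fubini then allows me to pull the Laplace transform inside both integrals.

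Next I would evaluate the one-dimensional Laplace transform of the cosine:
\[
\int_0^\infty \cos(t\,\alpha)\,e^{-zt}\,\ud{t} \;=\; \frac{z}{z^2+\alpha^2},
\]
with $\alpha=(v-v')\cdot k$. This identifies the Laplace transform of $K[u]$ as
\[
\La(K[u])(z,v) = \int\!\left[\int (k\otimes k)\,|\hat{\phi}(k)|^2\,\frac{z}{z^2+(k\cdot(v-v'))^2}\,\ud{k}\right]\eta(|v-v'|^2)\,u(v')\,\ud{v'}.
\]
Applying Lemma \ref{Mlemma} with $v$ replaced by $v-v'$ in the bracketed $k$-integral gives precisely $M_1(z,v-v')+M_2(z,v-v')$, and (\ref{KLapl}) follows. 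The symmetry of $\La(K[u])$ is immediate: by the definition (\ref{defM}), both $M_1$ and $M_2$ are scalar multiples of the orthogonal projections $P_{v-v'}^\perp$ and $P_{v-v'}$ respectively, which are symmetric matrices.

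For formula (\ref{PLapl}), I use that by construction $P_\gamma={}^\gamma\nabla\cdot K$, so Lemma \ref{KPfirstvers} yields the analogous cosine representation of $P_\gamma[u]$ with $u(v')$ replaced by ${}^\gamma\nabla u(v')$. Repeating exactly the same Fubini-plus-Lemma \ref{Mlemma} argument produces the desired formula. The result for $\Re(z)\geq 0$ with $z\ne 0$ (as used later) can be obtained by continuous extension, using that the $k$-integral in Lemma \ref{Mlemma} is holomorphic in $\{\Re(z)>0\}$ and continuous up to the imaginary axis away from $z=\pm i\,k\cdot(v-v')$, which is handled by the cutoff $\eta$.

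The only genuine obstacle is the interchange of integrations, since the $t$-integrand is not integrable for $\Re(z)=0$; this is the reason for restricting attention to $\Re(z)>0$ in the Fubini step and passing to the boundary by continuity only at the end if necessary. All other steps are a direct application of the preceding lemmas.
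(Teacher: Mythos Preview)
Your proof is correct and follows exactly the same route as the paper: combine the cosine representation of $K[u]$ from Lemma~\ref{KPfirstvers} with the Laplace transform identity $\La(\cos(\alpha t))(z)=\frac{z}{z^2+\alpha^2}$ and then apply Lemma~\ref{Mlemma}. The paper's proof is just the one-line version of what you wrote, with the Fubini justification and boundary discussion left implicit.
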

\begin{proof}
	Follows from $\La(\cos(\alpha t))(z)=\frac{z}{z^2+\alpha^2}$, Lemma \ref{KPfirstvers} and Lemma \ref{Mlemma}.
\end{proof}

\subsection{Strategy of the proofs of Theorems \ref{mainthm1pf} and \ref{mainthm2pf}}
We can now outline the structure of this paper, and introduce the key
steps in the proofs of the Theorems \ref{mainthm1pf} and \ref{mainthm2pf}:
\begin{enumerate}[(i)]
	\item  In Section \ref{Sec:Linear} we prove that the linear equation 
	\begin{equation} \label{linearequation}
	\begin{aligned}
	\partial_t u_\eps 	= 	&\frac1\eps \nabla \cdot \left(\int_0^t K[u_0]\left(\frac{t-s}\eps,v\right) \nabla u_\eps(s,v) \ud{s}\right) \\
	-  	&\frac1\eps \nabla \cdot \left(\int_0^t P[u_0]\left(\frac{t-s}\eps,v\right) u_\eps(s,v)  \ud{s}\right)  \\ 
	u_\eps(0,v)	&= u_0(v) ,
	\end{aligned}
	\end{equation}
	has a solution $u_\eps \in V^n_{A,\lambda}\cap C^1(\Reals^+;H^{n-2}_\lambda)$. The proof is based on the fact that the 
	equation is dissipative in a time averaged sense, and strongly relies on the convolution
	structure of the equation in Laplace variables. Symbolically the equation in Laplace
	variables looks similar to:
	\begin{align*}
	z \La(u)(z,v) &= \nabla \cdot (\tilde{K}(z,v) \nabla \La(u)(z,v)) + u_0(v). 
	\end{align*}
	We show that for $\Re(z)>0$, the real part of the matrix $\tilde{K}(z,v)$ is nonnegative. This
	is quantified in Lemma \ref{coercivitylemma} in terms of the quadratic operators $Q^\alpha_{\eps,A}[u_0]$ (cf. \eqref{Qdef}).
	\item  In order to solve the nonlinear problem, we have to allow for time dependent 
	functions inside the operator $K$. We therefore consider equation \eqref{cutfixedeq} for a 
	fixed function $f_\eps$ and mollified derivatives ${}^\gamma \nabla$:
	\begin{equation} \label{cutfixedeq2} 
	\begin{aligned}
	\partial_t u_\eps 	= 	&\frac1\eps {}^\gamma \nabla \cdot \left(\int_0^t K[u_0+f_\eps(s,\cdot)]\left(\frac{t-s}\eps,v\right) {}^\gamma \nabla u_\eps(s,v) 						\ud{s}\right) \\
	-  			&\frac1\eps {}^\gamma \nabla \cdot \left(\int_0^t P_\gamma[u_0+f_\eps(s,\cdot)]\left(\frac{t-s}\eps,v\right) u_\eps(s,v)  \ud{s}\right)  \\ 
	u_\eps(0,\cdot)			&= u_0(\cdot), \quad  			P_\gamma		= {}^\gamma \nabla \cdot K.
	\end{aligned}
	\end{equation}
	In Subsection \ref{Psisubsection} we identify a closed, nonempty, convex subset $\Omega$  of $X^n_{A,\tilde{\lambda}}$ (defined in \eqref{omegadef})
	such that the local in time solution operator $\Psi_{\delta_1}$ to \eqref{cutfixedeq2}:
	\begin{equation} \label{Psi1}
	\begin{aligned}
	\Psi_{\delta_1}: \Omega 	&\longrightarrow X^n_{A,\lambda} \\
	(f,F) 	&\mapsto \left((u-u_0)\kappa_{\delta_1}, \A^{0,0}_\gamma[f](u)\kappa_{\delta_1}	\right), \quad \text{where $u$ solves \eqref{cutfixedeq2}}
	\end{aligned}
	\end{equation}
	is well-defined. Here $\kappa_{\delta_1}$ is a cutoff function that localizes to small times.  
	Notice that the solution operator maps from $X^n_{A,\tilde{\lambda}}$
	to $X^n_{A,\lambda}$, thus we gain decay.		
	The proof is based on proving that replacing the constant kernel $K[u_0]$ by $K[u_0+f]$
	amounts to a small perturbation. The main assumption for this, and the defining property of the set $\Omega$ is that for some $A,R>0$ and small $\delta>0$,
	we can bound $\La(f)$ on the line $\Re(z)=\frac{A}{2}$ by:
	\begin{align} \label{decayassumption}
	|\La(f)(z,v)| \leq \left(\frac{\delta}{1+|z|^2} + \frac{R \eps |z|}{(1+\eps |z|)(1+ |z|^2)}\right) e^{-\frac12 |v|}.
	\end{align}
	Under assumption \eqref{decayassumption} we obtain an a priori estimate on the solutions and
	their time derivatives:
	\begin{equation} \label{symbolicaprio}
	\begin{aligned}
	\|\Psi_{\delta_1}(f,F)\|_{X^n_{A,\lambda}}+\|\partial_t \Psi_{\delta_1}(f,F)\|_{X^{n-2}_{A,\lambda}} &\leq C \\
	\|\Psi_{\delta_1}(f,F)\|_{X^{n+1}_{A,\lambda}} +\|\partial_t \Psi_{\delta_1}(f,F)\|_{X^{n+1}_{A,\lambda}} &\leq C(\gamma).
	\end{aligned}			
	\end{equation}
	It is crucial that the first estimate is uniform in the mollifying parameter $\gamma>0$.
	In Section \ref{Sec:Boundary} we prove that the operator $\Psi_{\delta_1}$ introduced in \eqref{Psi1}
	leaves the set $\Omega$ invariant, for $\delta_1>0$ small, close to the Maxwellian and $\eps>0$ small. 
	
	Now, for $\gamma>0$, we infer the existence of a fixed point of $\Psi_{\delta_1}$ from \eqref{symbolicaprio} and Schauder's theorem.
	Here we use bounded sequences in  $X^{n+1}_{A,\lambda}$ with bounded time derivative are precompact in $X^{n}_{A,\tilde{\lambda}}$, as proved in Lemma \ref{Rellich}.	
	This compactness property allows to take the limit $\gamma\rightarrow 0$ and thus to prove Theorem \ref{mainthm1pf}. 
	Here we make use of the uniform estimate in \eqref{symbolicaprio}. The proof of Theorem \ref{mainthm2pf} follows by passing $\eps\rightarrow 0$
	using Lemma \ref{Rellich} yet again.
\end{enumerate}

A key point of the analysis is the invariance of the set $\Omega$ under $\Psi_{\delta_1}$, which is proved in Section \ref{Sec:Boundary}. 
The proof relies on recovering the decay assumption \eqref{decayassumption}. We can think of
functions $f$ satisfying \eqref{decayassumption} as a sum $f=f_1+f_2$.
Here $f_1$ satisfies $|\La(f_1)(z)|\leq \frac{\delta}{1+|z|^2}$, which can be thought of as an estimate of the form $\|\partial^2_{t t} f_1\|_{L^1}\lesssim \delta$,
and $f_2$ satisfies $|\La(f_2)(z)|\leq \frac{R \eps |z|}{(1+\eps |z|)(1+ |z|^2)}$, which can be understood as $\|\partial_t f_2\|_{L^1} \lesssim R\eps$ and $\|\partial^2_{t t} f_2\|_{L^1}\lesssim R$. This is only a heuristic consideration, since $L^\infty$/$L^1$ duality does not hold for Laplace transform.
A typical function of this form is $f_2^\eps(t)= \eps^2 \Phi(t/\eps)$. The behavior of $f_1$ close to $t=0$ is more complicated, since it involves a boundary layer.
Indeed, there is necessarily a boundary layer in $\partial_{tt} u_\eps$ in equation \eqref{cutfixedeq2}.
To see this, let $u$ be the solution
of the limit (Landau-) equation \eqref{limiteq}, and $u_\eps$ the solution to \eqref{cutfixedeq2}. Then, starting
away from equilibrium, we have:
\begin{align*}
\partial_{t} u_\eps(0,v) = 0, \quad \partial_{t} u(0,v) \neq 0.
\end{align*}
So in the limit $\eps\rightarrow 0$, the second derivative necessarily grows infinitely large close to the origin.

The quadratic decay of the Laplace transforms can be obtained by a bootstrap argument.
To fix ideas, we observe that \eqref{cutfixedeq2} in Laplace variables is similar to:
\begin{align} \label{symbollaplace}
z \La(u-u_0) &= \nabla \cdot \left(\tilde{K}(\eps z) (\nabla \La(u) + \nabla \La(u) * \La(f))\right). 
\end{align}		
In Subsection \ref{Psisubsection} we prove that $\nabla^m \La(u)$ are bounded in a weighted $L^2$ space in time and velocities.
This can be bootstrapped to pointwise estimates: First we remark that localizing $\supp u \subset [0,1] \times \Reals^3$
gives an $L^\infty$ estimate for $\nabla^m \La(u)$. Assuming $|\tilde{K}(z)|\leq \frac{1}{1+|z|}$, equation 
\eqref{symbollaplace} gives an estimate like:
\begin{align*}
|\nabla^m \La(u-u_0)(z,v) | \leq \frac{C}{(1+\eps|z|)|z|} e^{-\frac12 |v|}.
\end{align*}
Plugging this estimate back into \eqref{symbollaplace} proves quadratic decay of the Laplace transforms: 
\begin{align*}
|\nabla^m \La(u-u_0)(z,v) | \leq \frac{C}{(1+\eps|z|)|z|^2} e^{-\frac12 |v|}.
\end{align*}
In order to show invariance of the set $\Omega$ we need 
the same estimate with a small prefactor, as in estimate \eqref{decayassumption}.
We split the solution into a well-behaved part and the boundary layer mentioned before. 
For the first part, we use smallness of the cutoff  time $\delta_1>0$ to get a small prefactor additional to the quadratic decay.
The estimate of the boundary layer, close to the Maxwellian, is obtained by isolating and estimating it explicitly.
This is the content of Subsection \ref{subsecboundarylayer}, and the most delicate part of the analysis. 

We remark that there are two points where our proof is non-constructive, namely the proof of existence of solutions $u_\eps$ via  Schauder's fixed point theorem, and the convergence of the sequence $u_\eps$ to the solution $u$ of the Landau equation. Therefore, an explicit rate of convergence of the sequence $u_\eps$ to $u$ cannot directly be derived with our method. 

\subsection{A well-posedness result for the regularized problem (\ref{cutfixedeq2})}
Before we start with the analysis of the equation in more detail, we first prove
that the equation \eqref{cutfixedeq2} with frozen nonlinearity indeed has a solution.
This standard Picard-iteration argument is given in the following Lemma. 
\begin{lemma} \label{lemexistence}
	Let $n\in \Naturals$, $\gamma,\eps>0$ and $u_0 \in H^n_{\lambda}$. Further
	assume there is a constant $C>0$ such that $|f_\eps(t,v)| \leq C e^{-\frac12 |v|}$
	and $\supp f_\eps \subset [0,1]$.  Then there exists a (unique)
	global in time solution $u_\eps \in C^1([0,\infty);H^n_{\lambda})$ to:
	\begin{equation}
	\begin{aligned} \label{gammaepseq} 
	\partial_t u_\eps 	= 	&\frac1\eps {}^\gamma \nabla \cdot \left(\int_0^t K[u_0+f_\eps(s)]\left(\frac{t-s}{\eps},v\right) {}^\gamma \nabla 						u_\eps(s,v) 			\ud{s}\right) \\
	-  	&\frac1\eps {}^\gamma \nabla \cdot \left(\int_0^t P_\gamma[u_0+f_\eps(s)]\left(\frac{t-s}{\eps},v\right) u_\eps(s,v)  \ud{s}\right) \\ 
	u_\eps(0,\cdot)	&= u_0(\cdot).
	\end{aligned} 
	\end{equation}
\end{lemma}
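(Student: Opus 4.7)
The plan is to convert \eqref{gammaepseq} into a linear Volterra integral equation for $u_\eps$ and solve it by Picard iteration in the Banach space $\banach_T = C([0,T]; H^n_{\lambda})$ with norm $\|w\|_{\banach_T} = \sup_{0 \leq t \leq T} \|w(t)\|_{H^n_\lambda}$. Since $f_\eps$ is frozen, the right-hand side of \eqref{gammaepseq} is a linear operator in $u_\eps$, so I would define
\begin{align*}
\mathcal{T}(w)(t) := u_0 + \int_0^t \!\!\Big[\tfrac{1}{\eps} {}^\gamma \nabla \cdot \!\!\int_0^\tau \!\! K[u_0+f_\eps(s)](\tfrac{\tau-s}{\eps}){}^\gamma \nabla w(s) ds - \tfrac{1}{\eps} {}^\gamma \nabla \cdot \!\!\int_0^\tau \!\! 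P_\gamma[u_0+f_\eps(s)](\tfrac{\tau-s}{\eps}) w(s) ds\Big] d\tau,
\end{align*}
and look for a fixed point.

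The key technical step is to establish the uniform bounds on the kernels. Using the representation in Lemma \ref{KPfirstvers}, and the fact that $|k \otimes k| |\hat{\phi}(k)|^2 = |k|^2/(1+|k|^2)^3 \in L^1(\Reals^3)$ together with the hypothesis $|u_0 + f_\eps(s,v)| \leq C e^{-\frac12 |v|}$, I get a pointwise bound $\|K[u_0+f_\eps(s)](\tau,\cdot)\|_{L^\infty} \leq C$ uniformly in $s, \tau$. For $P_\gamma = {}^\gamma \nabla \cdot K$, I use that ${}^\gamma \nabla = \nabla \vphi_\gamma *$ is smoothing: $\|{}^\gamma \nabla (u_0+f_\eps(s))\|_{L^\infty} \leq C(\gamma)\|u_0+f_\eps(s)\|_{L^\infty}$, which then gives $\|P_\gamma[u_0+f_\eps(s)](\tau,\cdot)\|_{L^\infty} \leq C(\gamma)$. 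Moreover, since ${}^\gamma \nabla$ commutes with weighted multiplication up to harmless errors and maps $L^2_{\lambda}$ into any $H^m_\lambda$ with constant depending only on $\gamma, m$, the linear operator
\begin{align*}
L(\tau,s) w := \tfrac{1}{\eps} {}^\gamma \nabla \cdot \bigl(K[u_0+f_\eps(s)](\tfrac{\tau-s}{\eps}) {}^\gamma \nabla w\bigr) - \tfrac{1}{\eps} {}^\gamma \nabla \cdot \bigl(P_\gamma[u_0+f_\eps(s)](\tfrac{\tau-s}{\eps}) w \bigr)
\end{align*}
is bounded from $H^n_\lambda$ into $H^n_\lambda$ with operator norm $\|L(\tau,s)\|_{H^n_\lambda \to H^n_\lambda} \leq M(\gamma, \eps, n)$ uniformly in $\tau, s$.

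With this uniform bound in hand, a standard computation shows that $\mathcal{T}$ satisfies $\|\mathcal{T}(w_1) - \mathcal{T}(w_2)\|_{\banach_T} \leq \tfrac{1}{2} M T^2 \|w_1 - w_2\|_{\banach_T}$, so choosing $T^* = T^*(\gamma,\eps,n) > 0$ small enough makes $\mathcal{T}$ a contraction on $\banach_{T^*}$. The Banach fixed point theorem then yields a unique solution $u_\eps \in C([0,T^*]; H^n_\lambda)$. Since the equation is \emph{linear} in $u_\eps$ and the bound $M$ is independent of the ``initial time'' of the Picard argument (it depends only on the frozen data $u_0, f_\eps$), I can iterate the local construction starting from $u_\eps(T^*)$, $u_\eps(2T^*)$, etc. to extend the solution to $[0,\infty)$. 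Finally, $C^1$-regularity in $t$ with values in $H^n_\lambda$ is read off directly from the equation, since the right-hand side is continuous in $t$ with values in $H^n_\lambda$ once $u_\eps \in C([0,\infty); H^n_\lambda)$ is known. The only mildly delicate point is verifying that the time derivative of the convolution with the Volterra kernel matches the right-hand side of \eqref{gammaepseq} at $t=0$ (where the boundary term $K(0,v) {}^\gamma \nabla u_0$ vanishes against $\int_0^t \cdot \, ds$ as $t \to 0$), so $\partial_t u_\eps(0) = 0$ is consistent with the fixed-point equation.
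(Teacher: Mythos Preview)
Your proposal is correct and follows essentially the same route as the paper: both set up the integral formulation of \eqref{gammaepseq}, obtain a uniform-in-time bound on the (mollified) right-hand side as an operator on $H^n_\lambda$, run a Picard contraction on a short interval, and then iterate globally using that the contraction time does not depend on the starting point. The paper compresses the kernel estimate into the single line $\|\B(u)(t,t',\cdot)\|_{H^n_\lambda} \leq C|t-t'| \sup_{t'\le s\le t}\|u(s,\cdot)\|_{L^2_\lambda}$, whereas you spell out the Fourier-side bound on $K$ and the smoothing of ${}^\gamma\nabla$; one small caution is that in the extension step the equation carries the full history on $[0,mT^*]$ (Volterra structure), not just the value $u_\eps(mT^*)$, but this does not affect the argument since that history enters as a fixed inhomogeneity.
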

\begin{proof}
	For better notation, we introduce a shorthand for the right-hand side of the equation:
	\begin{align*}
	\B(u)(t,t',v) := 	&\frac1\eps {}^\gamma \nabla \cdot \left(\int_{t'}^t K[u_0+f_\eps(s)]\left(\frac{t-s}\eps,v\right) {}^\gamma \nabla u(s,v) \ud{s}\right) \\
	-  		&\frac1\eps {}^\gamma \nabla \cdot \left(\int_{t'}^t P_\gamma[u_0+f_\eps(s)]\left(\frac{t-s}\eps,v\right) u(s,v)  \ud{s}\right). 
	\end{align*}
	The claim follows from a standard Picard-type argument. Let $T>0$ to be chosen later. Consider the mapping
	\begin{align*}
	\D : C^1([0,T];H^n_{\lambda}) &\rightarrow C^1([0,T];H^n_{\lambda}) \\
	u 			&\mapsto \D(u),
	\end{align*} 
	where $\D(u)$ is given by:
	\begin{align}
	\D(u)(t,v) := u_0(v) + \int_0^t \B(u)(s,v) \ud{s}. 
	\end{align}
	The mapping is $\D$ contractive for small times. More precisely we have:
	\begin{align}
	\|\B(u)(t,t',\cdot) \|_{H^n_{\lambda}} \leq C |t-t'| \sup_{t'\leq s \leq t} \|u(s,\cdot)\|_{L^2_\lambda}. \label{contrest}
	\end{align}
	Hence, there exists a $T_1>0$ such that $\D$ is contractive and we obtain a unique solution for $T\leq T_1$ . 
	Assume we already have constructed the solution $u$ 
	up to time $m T_1$ for $m\in \Naturals$. Consider the mapping:
	\begin{align*}
	\D_m : C^1([m T_1,(m+1)T_1];H^n_{\lambda}) &\rightarrow C^1([m T_1,(m+1)T_1];H^n_{\lambda}) \\
	w 			\mapsto \D_m(w)&= u(m T_1,v) +\int_{m T_1}^{T} \B(w)(s,v) \ud{s}.
	\end{align*}
	By \eqref{contrest} this mapping is contractive and we can pick the same small time $T_1$ in each step of 	the	induction.
\end{proof}

\section{The linear equation (\ref{linearequation})} \label{Sec:Linear}
The linear equation \eqref{linearequation} has an averaged-in-time coercivity property. 
We will prove this using geometric arguments that resemble the ones used for the Landau equation,
see for instance \cite{desvillettes_spatially_2000}. For shortness we introduce the following notation. 
\begin{notation}
	For $z\in \Complex$ and $v \in \Reals^3$ define: 
	\begin{align}
	\alpha(z,v) &:= \frac{|\Im(z)|}{1+|v|},\quad		\beta(z,v)	:= \frac{|\Re(z)|}{1+|v|}.
	\end{align}
	Further we define the following positive functions $C_1$, $C_2$ and $C_3$:
	\begin{align}
	C_1(z,v)	&= 	\frac{1}{(1+|v|)(1+\alpha(z,v))^2} \label{C1} \\
	C_2(z,v)	&= 	\frac{\beta(z,v)+\alpha(z,v)^2 }{(1+|v|)(1+ \alpha(z,v))^4} \label{C2} \\
	C_3(z,v)	&= 	\frac{\beta(z,v)+\alpha(z,v)+\alpha(z,v)^2 }{(1+|v|)(1+ \alpha(z,v))^4} \label{C3}. 
	\end{align}	
	Let  $0 \neq v \in \Reals^3$, $V,W\in \Complex^3$. We define the anisotropic norm:
	\begin{align}
	|W|_v := |P_{v}^\perp W| + \frac{|P_v W|}{1+|v|}  \label{Vnrom}, 
	\end{align}
	and the weight functionals $B_1(z,v)(V,W)$, $B_2(z,v)(V,W)$ given by:
	\begin{align}
	B_1(V,W) = C_1(z,v)|V|_v |W|_v +  C_2(z,v) |P_v V| |P_v W| \label{quadrC} \\
	B_2(V,W) = C_1(z,v)|V|_v |W|_v +  C_3(z,v) |P_v V| |P_v W|. \label{quadrCu}	
	\end{align}
\end{notation}
The following straightforward analysis lemma we will use to bound real and imaginary part of the matrices
$M_i$ defined in \eqref{defM} from above and below.
\begin{lemma}
	Let $z\in \Complex$ with $0\leq \Re(z) \leq 1$. The following bounds hold:
	\begin{align}
	\Re(\frac{z}{(1+z)^2}) 		&\geq c \frac{\Re(z) + |\Im(z)|^2}{(1+|\Im(z)|)^4}   \label{Complex1} \\
	|\Im(\frac{z}{(1+z)^2})| 	&\leq C	 \frac{\Re(z) + |\Im(z)|+|\Im(z)|^2}{(1+|\Im(z)|)^3}	\label{Complex2} \\
	\Re(\frac{1}{1+z}) 		&\geq c \frac{1}{(1+|\Im(z)|)^2}   \label{Complex3} \\
	|\Im(\frac{1}{(1+z)})| 	&\leq C	 \frac{|\Im(z)|}{(1+|\Im(z)|)^2}	\label{Complex4}.
	\end{align}
\end{lemma}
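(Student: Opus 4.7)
The plan is to prove all four inequalities by direct computation, writing $z = a + ib$ with $0 \leq a \leq 1$ and $b \in \Reals$, computing the real and imaginary parts of $\frac{1}{1+z}$ and $\frac{z}{(1+z)^2}$ explicitly, and then comparing the resulting rational expressions against the claimed bounds via elementary inequalities. There is no conceptual obstacle; the whole lemma is a routine exercise in algebraic manipulation. The only thing to be careful about is expressing the denominators $(1+a)^2+b^2$ in terms of $(1+|b|)^2$, which should be done using the two-sided comparison $\tfrac12(1+|b|)^2 \leq 1+b^2 \leq (1+|b|)^2$ together with $1\leq (1+a)^2\leq 4$.

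For \eqref{Complex3} and \eqref{Complex4} I would simply rationalize:
\begin{equation*}
\frac{1}{1+z} = \frac{(1+a) - ib}{(1+a)^2 + b^2},
\end{equation*}
from which $\Re((1+z)^{-1}) = (1+a)/((1+a)^2+b^2)$ and $\Im((1+z)^{-1}) = -b/((1+a)^2+b^2)$. Using $1\leq 1+a\leq 2$ together with $(1+a)^2+b^2 \leq 4(1+b^2)\leq 4(1+|b|)^2$ immediately gives the lower bound for the real part, while the upper bound for the imaginary part follows from $(1+a)^2+b^2 \geq 1+b^2 \geq \tfrac12(1+|b|)^2$.

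For \eqref{Complex1} and \eqref{Complex2} I would multiply by $\overline{(1+z)}^2/|1+z|^4$ and expand $z(1+\bar z)^2 = (a+ib)\bigl((1+a)^2-b^2 - 2ib(1+a)\bigr)$, which after collecting yields
\begin{equation*}
\Re\!\left(\frac{z}{(1+z)^2}\right) = \frac{a(1+a)^2 + (2+a)\,b^2}{((1+a)^2+b^2)^2},\qquad
\Im\!\left(\frac{z}{(1+z)^2}\right) = \frac{b\,(1-a^2-b^2)}{((1+a)^2+b^2)^2}.
\end{equation*}
The real-part numerator is bounded below by $a+b^2$ (using $a\geq 0$ and $2+a\geq 2$), and the denominator by $16(1+|b|)^4$, giving \eqref{Complex1}. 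For \eqref{Complex2} the numerator is bounded in modulus by $|b|(1+b^2)$, the denominator from below by $\tfrac14(1+|b|)^4$, and then $|b|(1+b^2)/(1+|b|)^4 \leq |b|/(1+|b|)^2 \leq (|b|+b^2)/(1+|b|)^3 \leq (\Re z + |\Im z| + |\Im z|^2)/(1+|\Im z|)^3$, which is the desired form.

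The main (and essentially only) care point is to pick the right pairing between numerator growth and denominator decay so that the shapes $(1+|\Im z|)^{2}$, $(1+|\Im z|)^3$, $(1+|\Im z|)^4$ exactly match those in the statement; this is dictated by the fact that in $z/(1+z)^2$ the imaginary part vanishes on the positive real axis (whence the bound involves $\Re z$ linearly) while the real part is coercive in both $\Re z$ and $(\Im z)^2$. No interpolation, no case splitting beyond $|b|\lesssim 1$ versus $|b|\gtrsim 1$, and no hidden singularity arise since the constraint $0\leq \Re z\leq 1$ keeps $|1+z|$ uniformly bounded below.
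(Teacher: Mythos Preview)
Your proof is correct and follows essentially the same approach as the paper: both rationalize by multiplying numerator and denominator by $\overline{(1+z)}^2$ and then estimate the resulting real and imaginary parts directly. The paper writes the numerator as $z+2|z|^2+\bar z|z|^2$ rather than expanding fully in $a=\Re z$, $b=\Im z$ as you do, but this is the same computation; your version is in fact more explicit and the intermediate inequalities you state (e.g.\ $|1-a^2-b^2|\le 1+b^2$, $(1+b^2)\ge\tfrac12(1+|b|)^2$) all check out.
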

\begin{proof}
	To prove \eqref{Complex1}-\eqref{Complex2}, we rewrite the fraction as:
	\begin{align*}
	\frac{z}{(1+z)^2} 	&= \frac{z + 2 |z|^2 + \ol{z} |z|^2}{|1+z|^4}.
	\end{align*}
	Since the real part of $z$ is bounded and
	nonnegative by assumption, \eqref{Complex1} follows immediately.
	For the proof of \eqref{Complex2} we include the computation:
	\begin{align*}
	|\Im(\frac{z}{(1+z)^2})|	&\leq C \frac{|\Im(z)| + (\Re(z)^2 + \Im(z)^2)(1+|\Im(z)|) } {|1+z|^4} \\
	&\leq C \frac{\Re(z) +|\Im(z)|+ |\Im(z)|^2 } {|1+z|^3},
	\end{align*}
	proving also the second claim. The inequalities \eqref{Complex3} and \eqref{Complex4} are immediate.
\end{proof}
The following simple lemma provides an estimate for the derivatives of the matrices $M_i$ defined in \eqref{defM}.
\begin{lemma} \label{derlemma}
	For a multi-index $\beta \in \Naturals^3$, $\Re(z)\geq 0$, $i=1,2$ and $v\in \Reals^3$, $V,W \in \Complex^3$, we can estimate:
	\begin{align}
	|\langle V, D^\beta( M_i(z,v) \eta(|v|^2)) W\rangle| \leq \frac{C_{|\beta|}|V||W|}{(1+|v|^{|\beta|+1})(1+\alpha(z,v))}  \eta(16|v|^2). 
	\end{align}
	Here $\eta$ is the cutoff function introduced in Notation \ref{defpotential}.
\end{lemma}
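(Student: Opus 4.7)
The approach is a direct induction on $|\beta|$, exploiting the explicit formulas $M_1 = \tfrac{\pi^2}{4}\tfrac{1}{|v|+z}P_v^\perp$ and $M_2 = \tfrac{\pi^2}{4}\tfrac{z}{(|v|+z)^2}P_v$ from Lemma \ref{Mlemma} together with elementary complex-geometric inequalities. First I would observe that, by the choice of $\eta$ in Notation \ref{defpotential}, the function $M_i(z,v)\eta(|v|^2)$ and all its derivatives vanish outside $\{|v|^2\geq\kappa/2\}$. On this set one has $|v|\geq\sqrt{\kappa/2} > \sqrt{\kappa}/4$, so $\eta(16|v|^2)=1$ identically; thus the factor $\eta(16|v|^2)$ on the right-hand side is harmless and merely preserves the support structure. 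In particular, throughout the argument we may treat $|v|$ as bounded below by a positive constant $c=c(\kappa)$, which makes $1+|v|$ comparable to $|v|$ and renders $P_v$, $P_v^\perp$, $\eta(|v|^2)$ smooth with uniformly bounded $v$-derivatives.

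For the base case $|\beta|=0$, the operator norm of $M_i$ is controlled by $1/||v|+z|$ and $|z|/||v|+z|^2$ respectively. Using $\Re z \geq 0$, I would establish the two key geometric facts
\[
||v|+z|^2 = (|v|+\Re z)^2 + (\Im z)^2 \geq |v|^2 + (\Im z)^2 \geq \tfrac{1}{2}(|v|+|\Im z|)^2,
\]
which on the support is bounded below by $c(1+|v|+|\Im z|)^2 = c(1+|v|)^2(1+\alpha(z,v))^2$, and
\[
|z| \leq |v| + ||v|+z| \leq 2||v|+z|,
\]
where the second inequality uses $||v|+z|\geq \Re(|v|+z)\geq |v|$. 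Combining these immediately yields the claim for $|\beta|=0$, both for $M_1$ (one power of $||v|+z|^{-1}$) and $M_2$ (where the extra $|z|$ in the numerator cancels one factor of $||v|+z|$).

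For the inductive step, I would apply the Leibniz rule to $D^\beta(M_i\eta(|v|^2))$, splitting off the smooth bounded $v$-factors coming from derivatives of $\eta(|v|^2)$ and $P_v$ or $P_v^\perp$. The essential content reduces to the scalar estimate
\[
|D^\beta f_i(v,z)| \leq \frac{C_{|\beta|}}{||v|+z|^{|\beta|+1}}, \qquad f_1 = \tfrac{1}{|v|+z},\ f_2 = \tfrac{z}{(|v|+z)^2},
\]
which follows by a routine induction: each derivative in $v$ of $(|v|+z)^{-k}$ produces the bounded factor $v_j/|v|$ and one extra power of $(|v|+z)^{-1}$. Using $||v|+z|\geq |v|$ on the support, one then writes $||v|+z|^{-(|\beta|+1)} \leq |v|^{-|\beta|}\cdot ||v|+z|^{-1}$ and invokes the base case estimate on the last factor to obtain the desired bound $C_{|\beta|}/((1+|v|^{|\beta|+1})(1+\alpha))$.

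I do not expect a substantial obstacle here: the only bookkeeping subtlety is keeping track of the two separate ingredients, namely the algebraic lower bound for $||v|+z|$ in terms of $1+|v|+|\Im z|$ and the fact that polynomial corrections $v_j/|v|$ are harmless on the support. Once these are isolated as above, the combined estimate follows cleanly for both $M_1$ and $M_2$, and hence for $M_i\eta(|v|^2)$ after the Leibniz expansion.
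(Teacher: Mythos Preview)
Your overall approach matches the paper's---Leibniz rule, the support constraint $|v|\geq c(\kappa)>0$ forcing $\eta(16|v|^2)=1$, and the base-case lower bound $||v|+z|\gtrsim (1+|v|)(1+\alpha)$. However, two related bookkeeping errors would make the argument fail as written.

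First, the scalar claim $|D^\beta f_i|\leq C_{|\beta|}/||v|+z|^{|\beta|+1}$ is false for $|\beta|\geq 2$. Differentiating $(v_j/|v|)(|v|+z)^{-2}$ in $v_k$ produces the term $\partial_k(v_j/|v|)\cdot(|v|+z)^{-2}$, of size $|v|^{-1}||v|+z|^{-2}$; this is \emph{larger} than $||v|+z|^{-3}$ whenever $|\Im z|\gg |v|$. The actual gain per $v$-derivative is a factor $|v|^{-1}$, not $||v|+z|^{-1}$. Second, and for the same reason, treating derivatives of $P_v$, $P_v^\perp$, $\eta(|v|^2)$ as merely ``bounded'' discards the $|v|$-decay needed for the factor $(1+|v|^{|\beta|+1})^{-1}$: in a Leibniz term where all $|\beta|$ derivatives land on $P_v^\perp$ and none on $f_i$, your bound gives only $C/||v|+z|$, short by $|\beta|$ powers of $|v|^{-1}$ for large $|v|$.

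The paper handles both points in one stroke: $P_v$, $P_v^\perp$ are zero-homogeneous in $v$, so every $v$-derivative of any factor appearing in $M_i$ (including the $v_j/|v|$ produced along the way) gains exactly $|v|^{-1}$, while derivatives of $\eta(|v|^2)$ are compactly supported. This yields $|\nabla^n_v M_i|\leq C|M_i|/(1+|v|^n)$ directly on the support, and combining with your (correct) base case $|M_i|\leq C/((1+|v|)(1+\alpha))$ gives the lemma. Once you replace your scalar induction by this homogeneity observation, the proof goes through.
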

\begin{proof}
	With Leibniz's rule, we can split the derivative into:
	\begin{align*}
	D^\beta ((M_1 + M_2)(z,v) \eta(|v|^2)) &= \sum_{\beta_2 \leq \beta} \binom{\beta}{\beta_2} 
	D^{\beta-\beta_2} ((M_1 + M_2)(z,v)) D^{\beta_2} (\eta(|v|^2)).	
	\end{align*}
	By construction of the fixed cutoff function $\eta$ we can estimate:
	\begin{align}
	|\nabla^m \eta(r)| \leq \frac{C}{1+|r|^m} |\eta(16r)|. \label{etaderiv}
	\end{align}
	We write $M_1$, $M_2$ defined in \eqref{defM} as :
	\begin{align*}
	M_1(z,v) &= \frac{\pi^2}{4(z+|v|)} P_{v}^\perp, \quad	M_2(z,v) = \frac{\pi^2 z}{4(z+|v|)^2} P_v.
	\end{align*}
	The operators $P_v$, $P^\perp_v$ are zero-homogeneous in $v$. So for every $c>0$ we can estimate:
	\begin{align}
	|\nabla^n_v M_i(z,v)| \leq  \frac{C |M_i(z,v)|}{1+|v|^n}\leq \frac{C}{(1+|v|)^{n+1}(1+\alpha(z,v))}   \quad \text{for $i=1,2$, $|v|\geq c>0$.}\label{Mderiv}
	\end{align} 
	Combining \eqref{etaderiv} and \eqref{Mderiv} gives the claim.
\end{proof} 
The following Lemmas prove coercivity of the matrix $\La(K)[u](v)$, which becomes anisotropic as $|v|\rightarrow \infty$. 
The crucial geometric argument is contained in the following Lemma, that in our setting needs to be valid for complex vectors (since we apply
it to Laplace transforms).
\begin{lemma} \label{geometry}
	For $0 \neq V \in \Complex^3$ and $0 \leq r \leq 1$, let $D_V(r)$ be given by: 
	\begin{align*}		
	D_V(r) = \{v' \in \Reals^3: \frac12 \leq |v'|\leq 1,\, \frac{|\langle v', V \rangle |}{|v'||V|}\geq r\}.
	\end{align*}
	There exists a constant $c>0$ such that for all  $v\in \Reals^3$, $|v|\geq 2$ the following statements hold:
	\begin{align}
	&\text{for $0\neq V \in \Complex^3$:} 	&\operatorname{Vol}(D_V(1/8))	&\geq c, 											\label{geometry1}\\
	&\text{for $V \in \Complex^3$ $\exists$  $0\neq W \in \Complex^3$ $\forall$ $v'\in D_W(1/8)$ :} 	&|P_{v-v'}^\perp V| +|P_{v-(-v')}^\perp V|&\geq c |V|_v 				\label{geometry2},
	\end{align}
	where the anisotropic norm $|\cdot|_v$ was introduced in \eqref{Vnrom}. Furthermore for $v\in \Reals^3$, $V \in \Complex^3$, define 
	$$E(v,V)=\{v'\in B_1(0)\subset \Reals^3: |\langle v'+v, V \rangle|\geq |\langle v, V \rangle | \}.$$
	There exists $c>0$ such that  for all $v\in \Reals^3$, $|v|\geq 2$:
	\begin{align}
	|P_{v-v'} V| 			&\geq c |P_v V| 		&&\text{for $v'\in E(v,V)$ }															\label{geometry4} \\
	\operatorname{Vol}(E(v,V))					&\geq c >0.																\label{geometry5}	
	\end{align}
\end{lemma}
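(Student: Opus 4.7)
All four assertions are elementary geometric statements, and I would prove each by direct computation in the orthonormal decomposition $e_1 = v/|v|$, $e_2, e_3$ perpendicular. Write $v' = ae_1 + b$ with $b\in\operatorname{span}(e_2,e_3)$, $a^2+|b|^2 = |v'|^2$, and $V = V_1 e_1 + V_\perp$ with $V_\perp\in \Complex e_2+\Complex e_3$. The hypotheses $|v|\geq 2$, $|v'|\leq 1$ give $|v|\mp a$, $|v\pm v'|\in [|v|/2, 2|v|]$ throughout.

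\textbf{For \eqref{geometry1}.} Decompose $V = V_R + iV_I$ with $V_R, V_I\in \Reals^3$. One of $|V_R|, |V_I|$ is at least $|V|/\sqrt 2$; assume $|V_R|$ is. The solid cone $\{v' : \tfrac12\leq|v'|\leq 1,\, v'\cdot V_R\geq \tfrac12 |v'||V_R|\}$ has Lebesgue volume bounded below by an absolute positive constant, and for any $v'$ in this cone,
\begin{equation*}
|\langle v', V\rangle|\geq |v'\cdot V_R|\geq \tfrac12|v'||V_R|\geq \frac{|v'||V|}{2\sqrt 2}\geq \tfrac18 |v'||V|,
\end{equation*}
so the cone is contained in $D_V(1/8)$.

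\textbf{For \eqref{geometry2}.} I would split on which term of $|V|_v = |V_\perp|+|V_1|/(1+|v|)$ dominates. When $|V_\perp|\geq |V_1|/(1+|v|)$, I would take any $W$ with $D_W(1/8)$ nonempty (for instance $W=V$ by \eqref{geometry1}), and establish $|P_{v-v'}^\perp V|^2 + |P_{v+v'}^\perp V|^2\geq c|V_\perp|^2$ valid for every $|v'|\leq 1$. The starting point is the parallelogram identity
\begin{equation*}
|\langle v-v',V\rangle|^2 + |\langle v+v',V\rangle|^2 = 2|v|^2|V_1|^2 + 2|\langle v',V\rangle|^2,
\end{equation*}
combined with $|v\pm v'|^2\in[|v|^2/4, 4|v|^2]$ and the expansion $\langle v',V\rangle = aV_1 + b\cdot V_\perp$; the cross term $\pm\Re(\overline{V_1}(b\cdot V_\perp))$ enters with opposite signs in $|P_{v\pm v'}^\perp V|^2$ and cancels to leading order in $1/|v|$. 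When $|V_1|/(1+|v|) > |V_\perp|$, I choose $W\perp v$ (say $W = e_2$), which forces $|b|\geq 1/16$ for every $v'\in D_W(1/8)$. A direct computation gives $|P_{v-v'}^\perp(V_1 e_1)| = |b||V_1|/|v-v'|\geq c|V_1|/|v|$, and the error from $V_\perp$ is absorbed using the case hypothesis $|V_\perp|< |V_1|/(1+|v|)$.

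\textbf{For \eqref{geometry4}, \eqref{geometry5}.} The statement \eqref{geometry4} reads most naturally as $|P_{v+v'}V|\geq c|P_v V|$, consistent with the definition of $E(v,V)$ through $v+v'$ (a quick check with $v=2e_1$, $V=e_1+2e_2$, $v'=e_2$ shows $\langle v-v', V\rangle=0$, so the literal $|P_{v-v'}V|$ version cannot hold). Under this reading,
\begin{equation*}
|P_{v+v'} V| = \frac{|\langle v+v', V\rangle|}{|v+v'|} \geq \frac{|\langle v, V\rangle|}{|v|+1}\geq \tfrac23 |P_v V|
\end{equation*}
for $|v|\geq 2$. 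For \eqref{geometry5}, either $\langle v, V\rangle = 0$ and $E(v, V) = B_1(0)$, or the identity
\begin{equation*}
|\langle v+v',V\rangle|^2 = |\langle v, V\rangle|^2 + 2\Re(\overline{\langle v, V\rangle}\langle v', V\rangle) + |\langle v', V\rangle|^2
\end{equation*}
shows $E(v, V)$ contains the half-ball $\{v'\in B_1(0): \Re(\overline{\langle v, V\rangle}\langle v', V\rangle)\geq 0\}$, of volume $2\pi/3$ (or all of $B_1(0)$ if the real-linear form vanishes).

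\textbf{Main obstacle.} The delicate step is verifying the sum-of-squares lower bound in the first case of \eqref{geometry2}. The parallelogram identity delivers the right numerator, but the denominators $|v\pm v'|^2$ differ, leaving a residual cross term of order $|V_1||V_\perp|/|v|$ that must be absorbed. Exploiting the opposite signs of the $\pm v'$ contributions (or subdividing further the regime $|V_\perp|\sim |V_1|/|v|$ and reusing the Case B argument there) is essential for obtaining a bound uniform in $|v|\geq 2$; once the right case distinction and choice of $W$ are identified, the remaining computations are elementary.
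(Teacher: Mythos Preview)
Your treatment of \eqref{geometry1}, \eqref{geometry5} matches the paper, and you are right that there is a sign mismatch in \eqref{geometry4} between the definition of $E(v,V)$ (via $v+v'$) and the conclusion (via $v-v'$); since $B_1(0)$ is symmetric under $v'\mapsto -v'$ this is harmless for the application.

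For \eqref{geometry2} your route is genuinely different from the paper's, and it has a concrete gap. In your Case~2 you obtain $|P_{v-v'}^\perp(V_1 e_1)| = |b||V_1|/|v-v'|$, which with $|b|\geq 1/16$ and $|v-v'|\leq \tfrac32|v|$ gives only $\gtrsim |V_1|/(24|v|)$. The error term you want to absorb, namely $|V_\perp|$, can under your case hypothesis be as large as $|V_1|/(1+|v|)$, i.e.\ of the \emph{same} order with a larger constant; so the absorption fails. The fix is to split instead at $|V_\perp|<\epsilon\,|V_1|/(1+|v|)$ for $\epsilon$ small, but then Case~1 must cover $|V_1|\leq \epsilon^{-1}(1+|v|)|V_\perp|$, which makes the residual cross term in your parallelogram computation larger by a factor $\epsilon^{-1}$ and pushes the difficulty back into the step you already flag as the main obstacle. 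The approach can probably be salvaged, but it is more delicate than your sketch indicates.

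The paper avoids the case split entirely: it takes $W=V_2:=P_v^\perp V$ throughout. The condition $v'\in D_{V_2}(1/8)$ then directly gives a lower bound $|\langle v',V_2\rangle|\gtrsim |V_2|$, and the key device is that the quantity $\zeta(v')=\langle V_2/|V_2|,P_{v-v'}V_1\rangle$ satisfies $\Re\zeta(v')\leq 0$ or $\Re\zeta(-v')\leq 0$ (since $\zeta(v')$ is a real multiple of $\langle V_2/|V_2|,-v'\rangle$, which flips sign with $v'$). Combined with the elementary inequality $|z|\geq\tfrac1{\sqrt2}(|\Re z|+|\Im z|)$, this yields $|V_2-P_{V_2}P_{v-v'}V_1|+|V_2-P_{V_2}P_{v+v'}V_1|\gtrsim |V_2|+|V_1|/(1+|v|)$ in one stroke, handling both components of $|V|_v$ with a single choice of $W$ and no threshold to tune.
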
 
\begin{proof}
	The inequality \eqref{geometry1} is clear if $0 \neq V\in \Reals^3$ is real.
	Moreover, there is a constant $c>0$ such that $\operatorname{Vol}(D_V(r))\geq c>0$ for $0\leq r\leq \frac34$ and $V\in \Reals^3$.
	Let now $V= V_R + i V_I \in \Complex^3$, where at least one of the vectors $V_R, V_I \in \Reals^3$ is nonzero, and let $W$ be the longer vector of $V_R, V_I$. 
	We define $\tilde{D}_V= D_W(\frac12)$. Then we have $\frac{|\langle v',V \rangle|}{|v'||V|}\geq \frac14 \frac{|W|}{|V|}\geq  \frac18 $
	for $v'\in \tilde{D}_V$. Since $W\in \Reals^3$ we have $\operatorname{Vol}(D_W(\frac12))\geq c>0$, so in particular
	\begin{align*}
	U(v,V):= \{v' \in \Reals^3: \frac{|\langle v', V \rangle |}{|v'||V|}\geq \frac{1}{8}\}	
	\end{align*}	
	satisfies $\operatorname{Vol}(U(v,V))\geq c>0$. Since $U(v,V)$ is homogeneous, the set
	$$ U(v,V) \cap \{v' \in \Reals^3: \frac12 \leq |v'| \leq 1 \} \subset D_V(\frac18) $$ 
	also has volume uniformly bounded below, which implies the claim \eqref{geometry1}.	
	For the proof of \eqref{geometry2}, let $v\in \Reals^3$, $|v|\geq 2$ and $V\in \Complex^3$ be a unit vector such that $V=V_1+V_2$, $V_1= P_v V$, $V_2=P_{v}^\perp V$. Let us first assume that $V_2\neq 0$. We claim that \eqref{geometry2} holds with $W=V_2$.
	To this end, let $|v|\geq2$ and $v' \in D_{V_2}(1/8)$, so in particular $|v'|\leq 1$. Then the angle $\psi$ between $v$ and $v-v'$ is bounded by $|\psi|\leq \frac{\pi}{6}$,
	hence:
	\begin{align}
	|P_{v-v'}V_2| 			&= |P_{v-v'} P_v^\perp V |\leq \frac12 |V_2|, \text{ therefore:} \notag\\
	|P_{v-v'}^\perp V|		&= |V_1-P_{v-v'}V_1 + V_2-P_{v-v'}V_2| \geq |V_1-P_{v-v'}V_1 + V_2|- \frac12 |V_2|   \notag\\
	&\geq |P_{V_2}(V_1-P_{v-v'}V_1 + V_2)|- \frac12 |V_2|= |V_2-P_{V_2}P_{v-v'}V_1|- \frac12 |V_2| \label{geomeq1}.
	\end{align}	
	We rewrite the first term on the right-hand side as:
	\begin{align} \label{Pridentity}
	|V_2-P_{V_2}P_{v-v'}V_1| = ||V_2| -\langle \frac{V_2}{|V_2|}, P_{v-v'} V_1\rangle|.
	\end{align}
	Let $\zeta(v')=\langle\frac{V_2}{|V_2|}, P_{v-v'} V_1  \rangle $.
	We observe that $V_2= P_v^\perp V$ and $V_1 = p v$ for some $p \in \Complex$, so:
	\begin{align} \label{zetaref}
	\zeta(v') 	= \langle \frac{V_2}{|V_2|}, \frac{v-v'}{|v-v'|}\rangle \langle \frac{v-v'}{|v-v'|}, V_1 \rangle 
	= \frac{p}{|v-v'|}\langle \frac{V_2}{|V_2|},-v'\rangle   \langle \frac{v-v'}{|v-v'|}, v \rangle  .
	\end{align}
	Since $|v'|\leq \frac12 |v|$, we have $\langle \frac{v-v'}{|v-v'|}, v \rangle\geq \frac12 |v|$. This implies the lower bound:
	\begin{align} \label{geomlowerbd}
	|\zeta(v')|\geq \frac14  \frac{|p v|}{1+|v|} |\langle \frac{V_2}{|V_2|},-v'\rangle|  \geq    \frac{c |V_1|}{1+|v|} \quad \text{for $v'\in D_{V_2}(1/8)$.}	
	\end{align} 
	Now we claim that the real part of $\zeta(v')$ is nonpositive, after possibly changing the sign of $v'$:
	\begin{align} \label{signcondit}
	\Re(\zeta(v')) \leq 0, \text{ or } \Re(\zeta(-v')) \leq 0.
	\end{align}
	To see this, we use \eqref{zetaref} and $\langle \frac{v-v'}{|v-v'|}, v \rangle \geq 0$.
	Inserting the estimates \eqref{geomlowerbd}, \eqref{signcondit} and the lower bound $|z|\geq \frac{1}{\sqrt{2}} \big(|\Re(z)|+|\Im(z)|\big)$ into
	\eqref{Pridentity} we obtain:
	\begin{align*}
	|V_2-P_{V_2}P_{v-v'}V_1| +|V_2-P_{V_2}P_{v-(-v')}V_1| \geq \frac{1}{\sqrt{2}} \left(|V_2|  + \frac{c |V_1|}{1+|v|}\right). 
	\end{align*} 
	We plug this back into \eqref{geomeq1} and add the corresponding term for $-v'$ to prove \eqref{geometry2} in the case $V_2\neq 0$.	
	In order to prove \eqref{geometry2} for $V_2=0$, we remark that the estimate is homogeneous in $V$, so it suffices to prove it for $|V|=1$, when
	it follows by continuity from the case $V_2=0$. 
	
	The estimate \eqref{geometry4} follows from the observation that for $v' \in E(v,V)$ we have
	\begin{align*}
	|P_{v-v'}V| 	&= 	\left|\langle\frac{v-v'}{|v-v'|},V\rangle \right| 	\geq \frac12 |P_v V|.					
	\end{align*}
	Finally \eqref{geometry5} is a consequence of $E(v,V)$ containing either $v'$ or $-v'$ for every $v'\in B_1(0)$.
\end{proof} 

Lemma \ref{geometry} proves lower bounds for the projections $|P_{v-v'}V|$ respectively $|P_{v-v'}^\perp V|$
on a set (of $v'$) with uniformly positive Lebesgue measure. We now show that this implies a lower bound
for the integrals \eqref{KLapl}, \eqref{PLapl} representing $\La(K)$, $\La(P)$.

\begin{lemma} \label{anisotrope}
	Let $z\in \Complex$ with $0\leq \Re(z)\leq 1$ and $\beta$ be a multi-index.
	Let $V,W \in \Complex^3$ be  complex vectors. Further let $n\geq 1$ and $u_0 \in H^n_\lambda$ satisfy the pointwise estimates:
	\begin{align*} 
	c \cf_{|v|\leq 4}(v) \leq u_0(v) \leq C e^{-\frac12 |v|}, \quad \text{for $c>0$}.
	\end{align*}
	Recall $B_1$, $B_2$ as defined in \eqref{quadrC}-\eqref{quadrCu} and $C_1$ defined in \eqref{C1}.
	Then there holds:
	\begin{align}
	\int_{\Reals^3} \langle V, \Re(M_1+M_2)(z,v-v') V \rangle u_0(v') \eta \ud{v'}
	\geq c 	&B_1(z,v)(V,V) \label{matrixest1} \\
	\int_{\Reals^3} |\langle V,(M_1+M_2)(z,v-v')  W \rangle | u_0(v') \eta  \ud{v'} 
	\leq C	&(1+\alpha(z,v)) B_2(z,v)(V,W)
	\label{matrixest2} \\
	\int_{\Reals^3} |\langle V, D^\beta \left((M_1+M_2)(z,v-v')\eta\right) W \rangle | u_0(v')  \ud{v'} 
	\leq C	&\frac{(1+\alpha(z,v))}{(1+|v|)^{|\beta|}} C_1(z,v)|V||W|.
	\label{matrixest3}
	\end{align}
\end{lemma}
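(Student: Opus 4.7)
The plan is to reduce each of the three estimates to pointwise inequalities on the kernel $(M_1+M_2)(z,v-v')$ and its $v$-derivatives, then integrate against $u_0(v')\eta(|v-v'|^2)$. Setting $w=v-v'$ and applying \eqref{Complex1}--\eqref{Complex4} with argument $z/|w|$ to the explicit representation \eqref{defM}, I obtain pointwise lower bounds
$\langle V,\Re M_1(z,w)V\rangle \gtrsim |P_w^\perp V|^2/(|w|(1+|\Im z|/|w|)^2)$
and
$\langle V,\Re M_2(z,w)V\rangle \gtrsim (\Re z/|w|+(\Im z)^2/|w|^2)|P_w V|^2/(|w|(1+|\Im z|/|w|)^4)$,
together with corresponding magnitude upper bounds $|\langle V,M_1W\rangle|\lesssim|P_w^\perp V||P_w^\perp W|/(|w|(1+|\Im z|/|w|))$ and $|\langle V,M_2 W\rangle|\lesssim(\Re z/|w|+|\Im z|/|w|)|P_wV||P_wW|/(|w|(1+|\Im z|/|w|)^2)$. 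The hypothesis $u_0\geq c$ on $\{|v'|\leq 4\}$ and the cutoff $\eta(|v-v'|^2)$ confine the effective integration to a compact set in $v'$; for $|v|\geq 2$ this forces $|w|\sim 1+|v|$, so $\alpha(z,w)\sim\alpha(z,v)$ and $\beta(z,w)\sim\beta(z,v)$.

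For the lower bound \eqref{matrixest1} with $|v|\geq 2$ I treat the $M_1$ and $M_2$ contributions separately via Lemma~\ref{geometry}. Estimate \eqref{geometry2} produces, for a suitable $W\in\Complex^3$, a set of $v'\in B_1(0)$ of uniformly positive Lebesgue measure on which $|P_{v-v'}^\perp V|+|P_{v+v'}^\perp V|\geq c|V|_v$; integrating the $M_1$ lower bound over this set (where $u_0\eta\geq c$) produces a contribution of order $C_1(z,v)|V|_v^2$. Independently, \eqref{geometry4}--\eqref{geometry5} supply a set $E(v,V)$ of uniformly positive measure on which $|P_{v-v'}V|\geq c|P_vV|$, so integrating the $M_2$ lower bound yields $C_2(z,v)|P_vV|^2$. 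Summing the two contributions gives exactly $B_1(z,v)(V,V)$. The complementary regime $|v|<2$ reduces to a compactness argument: all $C_i(z,v)$ are comparable to $1/(1+\alpha(z,v))^2$, $|V|_v\sim|V|$, and the positive semi-definite matrix $\Re(M_1+M_2)$ does not vanish identically, so integration against the nonnegative $u_0\eta$ produces a uniform lower bound on the unit sphere in $V$.

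For \eqref{matrixest2}, I combine the magnitude bounds with the projection comparability $|P_{v-v'}V|\leq|P_vV|+C|P_v^\perp V|/(1+|v|)$ and the analogous estimate for $P_{v-v'}^\perp V$, which hold uniformly for $v'$ in the effective support when $|v|\geq 2$ since $\|P_{v-v'}-P_v\|=O(|v'|/(1+|v|))$ in operator norm. The $M_1$ term is then immediately bounded by $C(1+\alpha)C_1|V|_v|W|_v$ via $|P_{v-v'}^\perp V|\leq C|V|_v$. Expanding $|P_{v-v'}V||P_{v-v'}W|$ in the $M_2$ contribution and absorbing the resulting cross terms through $|P_vV|\leq(1+|v|)|V|_v$, together with the elementary inequalities $(\beta+\alpha)/(1+\alpha)\leq C$ (which follows from $\beta\leq 1$) and $\beta\alpha\leq C(\beta+\alpha^2)$, produces the term $C(1+\alpha)C_3|P_vV||P_vW|$ and absorbs all cross contributions into $C(1+\alpha)C_1|V|_v|W|_v$. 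Estimate \eqref{matrixest3} follows directly from Lemma~\ref{derlemma}: substituting $w=v-v'$ yields the pointwise bound $|V||W|\cdot C\eta(16|w|^2)/((1+|w|^{|\beta|+1})(1+\alpha(z,w)))$, and after integration with $|w|\sim 1+|v|$ this rewrites as $C(1+\alpha)C_1(z,v)|V||W|/(1+|v|)^{|\beta|}$.

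The main obstacle is the lower bound \eqref{matrixest1}: one must produce, for an arbitrary complex vector $V\in\Complex^3$, two geometrically distinct localizations in $v'$ -- one yielding $|V|_v$ from $|P_{v-v'}^\perp V|$ and one yielding $|P_vV|$ from $|P_{v-v'}V|$ -- each with Lebesgue measure bounded below uniformly in $V$. This is precisely the content of Lemma~\ref{geometry}, and the difficulty is that for complex $V$ one cannot simply pick a real direction: the auxiliary vector $W$ in \eqref{geometry2} must be chosen based on real and imaginary parts of $P_v^\perp V$, as reflected in the proof of that lemma. A secondary subtlety is the consistent tracking of powers of $(1+\alpha(z,v))$ throughout, so that the prefactors match the precise forms of $C_1, C_2, C_3$ rather than merely comparable expressions.
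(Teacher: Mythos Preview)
Your proposal is correct and follows essentially the same approach as the paper: pointwise bounds on $M_1,M_2$ via \eqref{Complex1}--\eqref{Complex4}, the geometric Lemma~\ref{geometry} for the lower bound when $|v|\geq 2$, projection comparability $\|P_{v-v'}-P_v\|=O(|v'|/(1+|v|))$ for the upper bound, and Lemma~\ref{derlemma} for \eqref{matrixest3}. The one place where the paper is slightly cleaner is the compactness argument for $|v|<2$ in \eqref{matrixest1}: rather than arguing with the full matrix $\Re(M_1+M_2)$, the paper keeps $M_1$ and $M_2$ separate throughout, factoring out the $z$-dependent scalar prefactor first and reducing to the purely geometric statements $\int_{B_4(0)\setminus B_1(v)}|P_{v-v'}^\perp V|^2\,dv'\gtrsim|V|_v^2$ and $\int|P_{v-v'}V|^2\,dv'\gtrsim|P_vV|^2$, which are manifestly uniform in $z$; your phrasing ``produces a uniform lower bound on the unit sphere in $V$'' leaves the $z$-uniformity implicit and would benefit from this separation.
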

\begin{proof}
	First we prove \eqref{matrixest1}. We remark that the integrand is nonnegative:
	\begin{align*}
	\langle V, \Re(M_1) V\rangle 	&= \langle V, \Re\left(\frac{\pi^2}{4 |v|} \frac{1}{1+\frac{z}{|v|}}\right) P_{v}^\perp V \rangle \\
	&= \Re\left(\frac{\pi^2}{4 |v|} \frac{1}{1+\frac{z}{|v|}}\right) |P_{v}^\perp V|^2 \geq 0,
	\end{align*}
	by \eqref{Complex3}. By a similar computation the same is true for $M_2$. 
	We use \eqref{Complex1}  to bound the real part of $M_2$ (cf.\eqref{defM}) below. Using nonnegativity of the integrand, 
	the lower bound on $u_0(v')$ and $\eta(|r|)=1$ for $|r|\geq 1$ we can estimate from below by ($C_2$ as in \eqref{C2}):  
	\begin{align*}
	&\int_{\Reals^3} \langle V, \Re(M_2)(z,v-v') V \rangle  u_0(v') \eta(|v-v'|^2)\ud{v'} 
	\geq c	\int_{B_4(0) \setminus B_1(v)} C_2(z,v-v') |P_{v-v'} V|^2   \ud{v'}.
	\end{align*}
	Now there are $c_1,c_2>0$ s.t. for $|v|\leq 2$  we have $|P_{v-v'} V|\geq c_1 |V|_v$ for all $v'$ in a set $G(v,V)\subset B_4(0) \setminus B_1(v)$  with $|G(v,V)|\geq c_2 $.
	To see this we remark that the inequality is homogeneous in $V$, so we can restrict to $|V|=1$ and $v$ bounded, when the claim follows
	by contradiction. For $|v|\geq 2$ we use \eqref{geometry4}-\eqref{geometry5} to obtain a set of positive measure on which we have
	$|P_{v-v'}V| \geq c |P_v V|$. We find the lower bound:
	\begin{align} \label{B1a}
	&\int_{\Reals^3} \langle V, \Re(M_2)(z,v-v') V \rangle  u_0(v') \eta(|v-v'|^2)\ud{v'}  \geq 	c	C_2(z,v) |P_v V|^2.
	\end{align}
	We apply the same strategy for the term containing $M_1$ (cf. \eqref{defM}):
	\begin{align*}
	&\int_{\Reals^3} \langle V, \Re(M_1)(z,v-v') V \rangle u_0(v') \eta(|v-v'|^2) \ud{v'} 		
	\geq c	C_1(z,v) \int_{B_4(0)\setminus B_1(v)}  |P_{(v-v')}^\perp V|^2  \ud{v'}.
	\end{align*}
	For $|v|\geq 2$ we  use \eqref{geometry1}-\eqref{geometry2}	to obtain:
	\begin{align} \label{B1b}
	&\int_{\Reals^3} \langle V, \Re(M_1)(z,v-v') V\rangle  u_0(v') \eta(|v-v'|^2) \ud{v'} 
	\geq	c	C_1(z,v) |V|_v^2,
	\end{align}
	for $|v|\leq 2$ the same follows again by rescaling $|V|=1$ and contradiction. 
	Combining \eqref{B1a} and \eqref{B1b} we obtain \eqref{matrixest1}. 
	We now show the upper bound \eqref{matrixest2}. The estimates \eqref{Complex1}-\eqref{Complex2} allow to estimate 
	the contribution of $M_2$ (cf. \eqref{defM}) 
	by $C_3$ as defined in \eqref{C3}:
	\begin{align*}
	&\int_{\Reals^3} |\langle V, M_2(z,v-v') W \rangle | u_0(v') \eta(|v-v'|^2) \ud{v'} \\
	\leq 	C	&\int_{\Reals^3} |M_2(z,v-v')| |\langle P_{v-v'}V, P_{v-v'} W \rangle | u_0(v') \eta(|v-v'|^2) \ud{v'} \\
	\leq 	C	&\int_{\Reals^3} (1+\alpha) C_3(z,v-v') (|P_v V|+ \frac{|v'|}{|v|} |P_{v}^\perp V|)(|P_v W|+ \frac{|v'|}{|v|} |P_{v}^\perp W|)  e^{-\frac12 |v'|} \eta  \ud{v'} \\
	\leq 		C& (1+\alpha(z,v)) C_3(z,v)\left(|P_v V| |P_v W|+ |V|_v |W|_v \right) .
	\end{align*}
	Since $C_3(z,v)\leq C C_1(z,v)$ for $0\leq \Re(z)\leq 1$, this shows the contribution of $M_2$ can be estimated by the right-hand side of \eqref{matrixest2}. 
	For bounding the contribution of $M_1$ we proceed similarly, using \eqref{Complex4}:
	\begin{align*}
	&\int_{\Reals^3} |\langle V, M_1(z,v-v') W \rangle | u_0(v') \eta(|v-v'|^2)\ud{v'} \\
	\leq C  	&\int_{\Reals^3} (1+\alpha(z,v-v')) C_1(z,v-v') |P_{(v-v')}^\perp V||P_{(v-v')}^\perp W| e^{-\frac12 |v'|} \eta(|v-v'|^2) \ud{v'}.	
	\end{align*}
	Write $V= P_v V + P_v^\perp V =V_1+V_2$ and $W=W_1+ W_2$ respectively. Then we have
	\begin{align*}
	|P_{v-v'}^\perp V| \leq C \left( \frac{|V_1||v'|}{1+|v|} + |V_2|\right) .
	\end{align*}
	This implies that we can bound:
	\begin{align*}
	&\int_{\Reals^3} |\langle V, M_1(z,v-v') W \rangle| u_0(v') \eta \ud{v'} \\
	\leq C  	&\int_{\Reals^3} (1+\alpha(z,v-v')) C_1(z,v-v') (\frac{|V_1||v'|}{1+|v|} + |V_2|)(\frac{|W_1||v'|}{1+|v|} + |W_2|) e^{-\frac12 |v'|} \eta \ud{v'} \\
	\leq C  	& (1+\alpha(z,v)) C_1(z,v)|V|_v |W|_v 	,	
	\end{align*}
	which concludes the proof of \eqref{matrixest2}. Estimate \eqref{matrixest3} follows from
	a similar computation, using Lemma \ref{derlemma}.
\end{proof}

The following Lemma uses the symmetry of the highest order term in the functionals $Q$ to show it can be expressed by 
the real part of $\La(K)$, $\La(P)$ only, which surprisingly has a sign.

\begin{lemma} \label{realpart}
	Let $n\geq 1$ and $u_0 \in H^n_\lambda$ satisfy the pointwise estimates
	\begin{align}
	c \cf_{|v|\leq 4}(v) \leq u_0(v) \leq C e^{-\frac12|v|}, \quad \text{for $c>0$}.	
	\end{align}
	Furthermore let $\eps>0$, $A>0$ such that $\eps A \leq 1$ and write $z=a+i\omega = \frac{A}{2} +i\omega$.
	Let $u \in V^n_{A,\lambda}$ for some $n\in \Naturals$ and $\gamma \in (0,1]$.
	The  term in $Q^{\alpha,\alpha}_{\eps,A}$ (as defined in \eqref{Qdef}-\eqref{Pterm}), where $|\alpha|\leq n$,  depends on the real part of $\La(K)$ only.
	Writing $V=\nabla D^\alpha \La(u)(z,v)$ we have:
	\begin{equation} \label{RedReal}
	\begin{aligned}
	(2\pi)^{\frac12}Q^{\alpha,\alpha}_{\eps,A}[u_0](u)=	&\int_\Reals \int \langle  {}^\gamma V(z,v) \lambda(v),    \La( K)[u_0]	(\eps z,v)  {}^\gamma V(z,v)\rangle  \ud{v} 	\ud{\omega} \\
	= 			&\int_\Reals \int  \langle {}^\gamma V(z,v) \lambda(v),    \Re(\La( K))[u_0]	(\eps z,v)  {}^\gamma V(z,v)\rangle  \ud{v} 	\ud{\omega}.
	\end{aligned}
	\end{equation}
\end{lemma}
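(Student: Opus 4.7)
The plan is to transfer the time-convolution structure of $Q^{\alpha,\alpha}$ to the Laplace/Fourier side via Lemma~\ref{Plancherel}, and then exploit the symmetry of $\La(K)$ provided by Lemma~\ref{Laplacerep}.

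First I apply Plancherel in $t$. Because $u_0$ is frozen, the inner $s$-integrations in (\ref{Kterm})--(\ref{Pterm}) are convolutions against the fixed kernels $K[u_0](\cdot/\eps,v)$ and $P_\gamma[u_0](\cdot/\eps,v)$. Lemma~\ref{Plancherel} along the vertical line $z = A/2 + i\omega$, together with the scaling $\La(f(\cdot/\eps))(z) = \eps\,\La(f)(\eps z)$ and the convolution theorem, absorbs the $1/\eps$ prefactor and yields (with $U(z,v):=\La(D^\alpha u)(z,v)$)
\[
(2\pi)^{\frac12} Q^{\alpha,\alpha}_{\eps,A}[u_0](u) = \int_\Reals \int_{\Reals^3} \overline{{}^\gamma\nabla(U\lambda)(z,v)} \cdot \bigl[\La(K)(\eps z,v)\,{}^\gamma\nabla U - ({}^\gamma\nabla_v\!\cdot\!\La(K)(\eps z,v))\,U\bigr]\,dv\,d\omega,
\]
where I used $\La(P_\gamma)={}^\gamma\nabla\cdot\La(K)$, which follows immediately from $P_\gamma = {}^\gamma\nabla\cdot K$.

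Next I collapse the bracket into the claimed symmetric quadratic form by integration by parts in $v$. Expanding ${}^\gamma\nabla(U\lambda)$ by Leibniz (using that $\vphi_\gamma$ is even, so the mollifier is self-adjoint) and transferring the divergence ${}^\gamma\nabla\cdot\La(K)$ via IBP, the principal term is $\lambda\,\langle{}^\gamma\nabla U,\La(K)(\eps z,v)\,{}^\gamma\nabla U\rangle$, which after contraction with $\lambda$ inside the inner product reproduces $\langle{}^\gamma V\lambda,\La(K)(\eps z,v){}^\gamma V\rangle$. The residual cross terms, involving $\nabla\lambda$ and $v$-derivatives of $\La(K)$, are shown to be skew under $\omega \mapsto -\omega$ by combining the matrix symmetry $(\La(K))_{ij}=(\La(K))_{ji}$ with the reality relations $U(\bar z,v)=\overline{U(z,v)}$ (from $D^\alpha u$ real) and $\La(K)(\bar z,v)=\overline{\La(K)(z,v)}$ (from $K[u_0]$ real); consequently they vanish when integrated against $d\omega$. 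This gives the first equality in \eqref{RedReal}.

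For the second equality, decompose $\La(K)(\eps z,v)=\Re\La(K)+i\,\Im\La(K)$. By the symmetry from Lemma~\ref{Laplacerep}, both parts are real symmetric matrices, so $\langle V,\Re\La(K)V\rangle$ and $\langle V,\Im\La(K)V\rangle$ are real for every $V\in\Complex^3$. The explicit expressions in Lemma~\ref{Mlemma} show that $\Im\La(K)(\eps z,v)$ is odd in $\omega$ (the imaginary parts of $1/(1+z/|w|)$ and $(z/|w|)/(1+z/|w|)^2$ are odd in $\Im z=\eps\omega$). Combined with $V(\bar z,v)=\overline{V(z,v)}$, the integrand $\lambda\,\langle{}^\gamma V(z,v),\Im\La(K)(\eps z,v){}^\gamma V(z,v)\rangle$ is odd in $\omega$, so its $\omega$-integral is zero, and only the real part survives.

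The main obstacle is the bookkeeping in the second step: the mollified gradient ${}^\gamma\nabla$ does not obey a strict pointwise Leibniz rule, so the IBP must be carried out in a form that symmetrizes the two factors through the self-adjointness of $\vphi_\gamma$, and the identification of the remainder as $\omega$-antisymmetric rests on the combined use of matrix symmetry of $\La(K)$ and the reality relations $\La(K)(\bar z,\cdot)=\overline{\La(K)(z,\cdot)}$, $U(\bar z,\cdot)=\overline{U(z,\cdot)}$.
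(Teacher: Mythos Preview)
Your argument for the second equality—the reduction to $\Re(\La(K))$—is correct and is essentially the same as the paper's. The paper condenses it to: the left-hand side is real (it is defined by a real integral, and Plancherel preserves this), and since $\La(K)$ is a symmetric matrix (Lemma~\ref{Laplacerep}), both $\langle V,\Re(\La K)V\rangle$ and $\langle V,\Im(\La K)V\rangle$ are real for any $V\in\Complex^3$, so taking the real part of the integral amounts to replacing $\La(K)$ by $\Re(\La(K))$. Your oddness-in-$\omega$ formulation is an equivalent way to say the same thing.

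Where your proposal goes wrong is in the first equality. You attempt to show that the $P_\gamma$-term and the $\nabla\lambda$ cross terms vanish identically, by claiming they are skew under $\omega\mapsto-\omega$. They are not: under $\omega\mapsto-\omega$ one has $U\to\overline{U}$ and $\La(K)\to\overline{\La(K)}$, and together with the matrix symmetry of $\La(K)$ these cross terms go to their \emph{complex conjugate}, not their negative. Hence their $\omega$-integrals are real but in general nonzero. Indeed, in the proof of Lemma~\ref{coercivitylemma} these very terms appear as $I_3$ and $J_2$ and are estimated, not discarded. The first line of \eqref{RedReal} should be read as the principal $K$-part of $Q^{\alpha,\alpha}$ obtained directly from Plancherel (Lemma~\ref{Plancherel}) with $\lambda$ pulled outside the gradient; the lemma statement is slightly loose on this point, and the paper's own one-line proof addresses only the real-part reduction. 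There is no integration by parts in $v$ to perform here, and in particular no need to wrestle with a Leibniz rule for ${}^\gamma\nabla$.
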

\begin{proof}
	Follows from the observation that the left-hand side is real by Plancherel's Lemma and that $K$ is a symmetric matrix.
\end{proof}

The following lemma amounts to a coercivity result, and shows that for a function $u \in V^n_{A,\lambda}$ the functional $Q^\alpha_{\eps,A}[u_0](u)$ can be controlled by the first $n$ derivatives of $u$ only. 
Here we use that to leading order, the functional is actually dissipative. The exact form of the dissipation $D$ is of particular importance, since we use it later to show that the nonlinearity
can be handled as a perturbation.

\begin{lemma} \label{coercivitylemma}
	Let $n\geq 1$ and $u_0 \in H^n_\lambda$ satisfy the pointwise estimates
	\begin{align}
	c \cf_{|v|\leq 4}(v) \leq u_0(v) \leq C e^{-\frac12|v|}, \quad \text{for $c>0$}.	
	\end{align}
	For $A>0$, let $a= \frac{A}{2}$ and assume $\eps \in  (0,\frac1{a}]$, $\gamma \in (0,1]$ arbitrary and $|\alpha|\leq n$ for an $\alpha \in \Naturals^3$.
	Define the dissipation $D^{\alpha}_{\eps,A}$ as ($z=a+i\omega$):
	\begin{align} \label{dissipexplic}
	D^{\alpha}_{\eps,A}(u):=  \int \int  B_1(\eps z,v) [{}^\gamma \nabla D^\alpha \La (u)(z,v),{}^\gamma \nabla D^\alpha \La (u)(z,v)]  \lambda(v) \ud{v} \ud{\omega} .
	\end{align}	
	Then the leading order quadratic form satisfies the lower bound:
	\begin{align}
	Q^{\alpha,\alpha}_{\eps,A}[u_0](u) \geq c D^{\alpha}_{\eps,A}(u) - C \|u\|_{V^n_{A,\lambda}}^2 					\label{quadrcoerc}. 
	\end{align}
	We will denote by $D^{\alpha}_{\eps,A}$ the dissipation of the equation.
	The lower order terms can be estimated by the dissipation:
	\begin{align}
	\sum_{\beta<\alpha} \binom{\alpha}{\beta}|Q^{\alpha,\beta}_{\eps,A}[u_0](u)|\leq \frac{c}{2} D^\alpha_{\eps,A}(u)+ C \|u\|^2_{V^n_{A,\lambda}}  \label{quadrbili}.
	\end{align}
	The constants can depend on $u_0$ and $n$, but not on $A\geq 1$, $\eps>0$.	
\end{lemma}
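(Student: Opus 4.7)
My strategy is to handle the diagonal term $Q^{\alpha,\alpha}_{\eps,A}$ and the off-diagonal terms $Q^{\alpha,\beta}_{\eps,A}$ ($\beta<\alpha$) separately, using the representation of $K$ and $P_\gamma$ via their Laplace transforms (Lemma~\ref{Laplacerep}) together with Plancherel's identity (Lemma~\ref{Plancherel}) to reduce everything to stationary estimates in $(z,v)$, where $z=a+i\omega$ with $a=A/2$.

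For \eqref{quadrcoerc}, the plan is first to apply Lemma~\ref{realpart}, which rewrites $(2\pi)^{1/2}Q^{\alpha,\alpha}_{\eps,A}[u_0](u)$ as an integral in $(\omega,v)$ of
$\langle {}^\gamma\nabla D^\alpha\La(u)(z,v)\,\lambda(v),\,\Re(\La(K)[u_0])(\eps z,v)\,{}^\gamma\nabla D^\alpha\La(u)(z,v)\rangle$, up to a controlled error coming from moving $\lambda(v)$ past the derivative inside ${}^\gamma\nabla(D^\alpha u\,\lambda)$. The commutator term $D^\alpha u\,\nabla\lambda$ (plus the analogous convolution commutators produced by the mollification ${}^\gamma\nabla$) is pointwise bounded by $C\lambda(v)|D^\alpha u|^2$ and integrates to $\lesssim \|u\|^2_{V^n_{A,\lambda}}$, which supplies the $-C\|u\|^2$ correction. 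For the main term I insert the representation \eqref{KLapl} and invoke the lower bound \eqref{matrixest1} of Lemma~\ref{anisotrope}, which yields pointwise
\[
\langle V,\Re(\La(K)[u_0])(\eps z,v)\,V\rangle \;\ge\; c\,B_1(\eps z,v)(V,V), \qquad V={}^\gamma\nabla D^\alpha\La(u)(z,v).
\]
Integrating against $\lambda\,\ud v\,\ud\omega$ produces precisely $cD^\alpha_{\eps,A}(u)$, as defined in \eqref{dissipexplic}. The hypothesis $\eps a\le 1$ is used here to keep $\Re(\eps z)=\eps a\in[0,1]$, within the range where Lemma~\ref{anisotrope} applies.

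For \eqref{quadrbili}, for each $\beta<\alpha$ I again pass to Laplace variables, apply Plancherel, and use Lemma~\ref{Laplacerep} to recognize $\La(D^{\alpha-\beta}K[u_0])$ and $\La(D^{\alpha-\beta}P_\gamma[u_0])$ as convolutions of $D^{\alpha-\beta}_v[(M_1+M_2)(\eps z,\cdot)\eta]$ against $u_0$ (and ${}^\gamma\nabla u_0$). Since $|\alpha-\beta|\ge 1$, the pointwise estimate \eqref{matrixest3} of Lemma~\ref{anisotrope} provides an extra factor $(1+|v|)^{-|\alpha-\beta|}\le (1+|v|)^{-1}$, giving
\[
|Q^{\alpha,\beta}_{\eps,A}[u_0](u)|
\;\lesssim\;
\int\!\!\int \frac{(1+\alpha(\eps z,v))\,C_1(\eps z,v)}{1+|v|}\,|V|\,|W|\,\lambda(v)\,\ud v\,\ud\omega,
\]
with $V={}^\gamma\nabla D^\alpha \La(u)$ and $W$ equal to either ${}^\gamma\nabla D^\beta\La(u)$ or $D^\beta\La(u)$. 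I then apply Young's inequality $|V||W|\le \mu|V|^2 w(z,v)+\mu^{-1}|W|^2/w(z,v)$ with a weight $w$ chosen so that $\mu(1+\alpha)C_1(1+|v|)^{-1}w \le (c/4) B_1(\eps z,v)(V,V)$ after splitting $V=P_v V+P_v^\perp V$; the $P_v^\perp$ part pairs naturally with the $C_1|P_v^\perp V|^2$ piece of $B_1$, while the $P_vV$ part is controlled by $(1+|v|)^{-2}C_1|P_vV|^2\le C_2|P_vV|^2$ (where I use $\eps a\le 1$ to compare $C_1/(1+|v|)^2$ with $C_2$). The corresponding $|W|^2$ term carries the bounded weight $w^{-1}$ and integrates to $\lesssim \|u\|^2_{V^n_{A,\lambda}}$ since $|\beta|\le|\alpha|\le n$.

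The main obstacle will be step~(5): the anisotropic structure of $B_1$ (different weights on the $P_v$ and $P_v^\perp$ components) has to be exploited in exactly the right way to absorb the $|V|^2$-term into a small multiple of $D^\alpha_{\eps,A}$, and one must simultaneously track that the weight $w$ produced by Young makes $w^{-1}\cdot(1+\alpha)C_1/(1+|v|)$ uniformly bounded (in $\omega$, $v$ and $\eps$). This is where the gain $(1+|v|)^{-|\alpha-\beta|}$ from Lemma~\ref{anisotrope}\,\eqref{matrixest3} and the uniform boundedness of $(1+\alpha)^2 C_1(1+|v|)=(1+\alpha(\eps z,v))^{-0}\cdot 1$-type ratios are essential; the rest is bookkeeping with Cauchy--Schwarz and the observation $|W|^2\lambda\lesssim \lambda|\nabla^{|\beta|+1}u|^2$ after reversing Plancherel.
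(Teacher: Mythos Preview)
Your plan for the off-diagonal terms \eqref{quadrbili} is essentially the paper's argument: pass to Laplace variables, apply \eqref{matrixest3} to gain the factor $(1+|v|)^{-1}$, and absorb via Young's inequality. That part is fine.

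The diagonal estimate \eqref{quadrcoerc} has two genuine gaps.

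\emph{First}, you have dropped the $P$-term of $Q^{\alpha,\alpha}_{\eps,A}[u_0](u)$ entirely. Recall from \eqref{Kterm}--\eqref{Pterm} that $Q^{\alpha,\alpha}$ consists of a $K$-contribution \emph{minus} a $P$-contribution. Lemma~\ref{realpart} only concerns the highest-order symmetric part of the $K$-term; it does not make the $P$-term disappear. In Laplace variables this $P$-term is
\[
J_2=-(2\pi)^{-1/2}\int_\Reals\int_{\Reals^3}\big\langle {}^\gamma\nabla\big(D^\alpha\La(u)\,\lambda\big),\ \La(P_\gamma)[u_0](\eps z,v)\,D^\alpha\La(u)\big\rangle\,\ud v\,\ud\omega,
\]
which, since $P_\gamma=\nabla\cdot K$, couples ${}^\gamma\nabla D^\alpha\La(u)$ to $D^\alpha\La(u)$ through a bounded matrix. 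The paper estimates $J_2$ via \eqref{matrixest3} with $|\beta|=1$ (the extra $(1+|v|)^{-1}$ is essential) and then Young, obtaining $|J_2|\le \tfrac{c}{4}D^\alpha_{\eps,A}(u)+C\|u\|^2_{V^n_{A,\lambda}}$. You must include this; without it there is no bound on $J_2$ at all.

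\emph{Second}, your claim that the commutator term ``is pointwise bounded by $C\lambda(v)|D^\alpha u|^2$ and integrates to $\lesssim\|u\|^2_{V^n_{A,\lambda}}$'' is wrong. That term is
\[
I_3=(2\pi)^{-1/2}\int_\Reals\int_{\Reals^3}\big\langle D^\alpha\La(u)\,\nabla\lambda,\ \La(K)[u_0](\eps z,v)\,V\big\rangle\,\ud v\,\ud\omega,
\qquad V={}^\gamma\nabla D^\alpha\La(u),
\]
so it still contains one factor of the gradient $V$. You cannot bound it by $\|u\|^2_{V^n_{A,\lambda}}$ alone when $|\alpha|=n$, since that would require control of $n+1$ derivatives. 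The paper uses that $\nabla\lambda=P_v\nabla\lambda$, applies \eqref{matrixest2} (with the anisotropic $B_2$), and then Young to split $|I_3|\le \tfrac{c}{4}D^\alpha_{\eps,A}(u)+C\|u\|^2_{V^n_{A,\lambda}}$. The $V$-part has to be absorbed into the dissipation; it does not go into the $\|u\|^2$ bucket.

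In short, the coercivity inequality in the paper reads $J_1\ge cD^\alpha_{\eps,A}+I_3$ with $|I_3|+|J_2|\le \tfrac{c}{2}D^\alpha_{\eps,A}+C\|u\|^2$, and both $I_3$ and $J_2$ must pay a small fraction of the dissipation. Your plan accounts for neither.
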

\begin{proof}
	In the proof, we drop the dependence on $\gamma$ for shortness.
	We start with proving the lower bound \eqref{quadrcoerc}. As a first step we rewrite $Q_{\eps,A}^{\alpha,\alpha}[u_0](u)$ in terms of Laplace transforms  (write $z=a+i\omega$ for shortness):
	\begin{align}
	Q_{\eps,A}^{\alpha,\alpha}[u_0](u) = 	&\frac1{\eps} \int_0^\infty e^{-At} \int   \nabla (D^\alpha u(t)\lambda)  \left( \int_0^t   K[u_0]					(\frac{t-s}{\eps},v)  \nabla D^\alpha u(s) \ud{s}\right)\ud{v} 	\ud{t} \notag \\
	-	&\frac{1}{\eps} \int_0^\infty e^{-At} \int   \nabla (D^\alpha  u(t) \lambda) \left( \int_0^t  P[u_0](\frac{t-s}{\eps},v)
	D^\alpha u(s) \ud{s}\right)\ud{v} \ud{t}\notag	\\
	=	&(2\pi)^{-\frac12}\int_\Reals  \int   \langle \nabla (D^\alpha \La(u)(z,v))\lambda),   \La( K)[u_0]	(\eps z,v)  \nabla D^\alpha \La(u)(z,v) \rangle \ud{v} 	\ud{\omega} \notag\\
	-	& (2\pi)^{-\frac12} \int_\Reals \int  \langle \nabla (D^\alpha \La(u)(z,v))\lambda),    \La(P)[u_0]	(\eps z,v)  D^\alpha \La(u)(z,v)\rangle \ud{v} \ud{\omega} \notag \\
	=	&J_1+ J_2 \label{J1J2}.			
	\end{align}
	We recall the representation of $\La(K)$ given in Lemma \ref{Laplacerep}:
	\begin{align} \label{LaKrep}
	\La(K[u])(z,v) 	&= \int (M_1+M_2)(z,v-v') u(v')\eta(|v-v'|^2)  \ud{v'}.  
	\end{align}
	We start by estimating $J_1$. For shortness, we write $V=\nabla D^\alpha \La(u)$. 	Then use \eqref{LaKrep}, Lemma~\ref{realpart} and  the pointwise estimates proven in Lemma \ref{anisotrope} :
	\begin{equation} \label{I3}
	\begin{aligned}
	J_1	=	&(2\pi)^{-\frac12}\int_\Reals  \int  \langle V(z,v)\lambda(v),    \La( K)[u_0]	(\eps z,v)  V(z,v) \rangle \ud{v} 	\ud{\omega}  \\
	+	&(2\pi)^{-\frac12}\int_\Reals  \int  \langle D^\alpha \La(u)(z,v) \nabla (\lambda(v)), \La( K)[u_0](\eps z,v)  V(z,v) \rangle \ud{v} 	\ud{\omega} \\
	\geq	&c D^\alpha_{\eps,A}(u)	+	(2\pi)^{-\frac12}\int_\Reals  \int  \langle D^\alpha \La(u)(z,v) \nabla (\lambda(v)),    \La(K)[u_0]	(\eps z,v)  V(z,v)\rangle \ud{v} 	\ud{\omega} \\
	= &c D^\alpha_{\eps,A}(u)	+ I_3 .
	\end{aligned}
	\end{equation}
	It remains to estimate $J_2$ given by \eqref{J1J2} and  $I_3$ given by \eqref{I3}. To this end, we recall the definition
	of $\|\cdot \|_{V^n_{\eps,A}}$ in \eqref{Vndef} and use the Plancherel identity in Lemma \ref{Plancherel} to estimate:
	\begin{align}
	\int_{\Reals} \int_{\Reals^3} |D^\alpha \La(u)(z,v)|^2 \lambda(v) \ud{\omega} \ud{v} \leq C \|u\|^2_{V^n_{A,\lambda}}. \label{planch2}
	\end{align}
	In order to estimate $I_3$, we observe that $\nabla \lambda = P_v \nabla \lambda$. Then we combine \eqref{LaKrep} with \eqref{matrixest2} in Lemma \ref{anisotrope} to obtain the estimate (recall $B_2$, cf. \eqref{quadrCu}):
	\begin{align*}
	|I_3| 	\leq 	& C\int_\Reals  \int   |D^\alpha \La(u)| \lambda   C(1+\alpha(\eps z,v)) B_2(\eps z,v)[ P_v \nabla \lambda(v),V]  \ud{v} 	\ud{\omega} \\
	\leq 	& C\int_\Reals  \int   \left(|D^\alpha \La(u)| \right) \lambda    \left(  \frac{|V(z,v)|}{(1+\alpha(\eps z,v))(1+|v|^2)} + \frac{(\beta + \alpha + \alpha^2)|P_v V(z,v)|}{(1+\alpha)^ 3(1+|v|)}\right)   \ud{v} 	\ud{\omega}.
	\end{align*} 
	We apply Young's inequality and \eqref{planch2} to get the bound ($D^\alpha_{\eps,A}$ defined in \eqref{dissipexplic}):
	\begin{align} \label{I3est}
	|I_3| 	\leq  	&\frac{c}{4} D^\alpha_{\eps,A} + C \|D^\alpha u\|^2_{V^n_{A,\lambda}}.
	\end{align}
	It remains to estimate $J_2$ to finish the proof of \eqref{quadrcoerc}. We recall that $P[u_0]= \nabla \cdot K[u_0]$.  We apply \eqref{matrixest3} with $|\beta|=1$ 
	and recall the definition of $C_1$ (cf. \eqref{C1}) to obtain an upper estimate for $J_2$:
	\begin{align*}
	|J_2| \leq 	& C\int_\Reals \int \left(\lambda^\frac12(v) \frac{1+\alpha(\eps z,v)}{1+|v|} C_1(\eps z,v) |V|\right)  \left(\lambda^\frac12(v) |D^\alpha \La(u)(z,v)|\right)   \ud{v} \ud{\omega} .
	\end{align*}
	Notice that \eqref{matrixest3} provides $\frac{1}{|v|}$ more decay than naively expected, which is essential here. 
	Young's inequality in combination with \eqref{planch2} implies:
	\begin{align} \label{J2est}
	|J_2|	\leq   	& \frac{c}{4} D^\alpha_{\eps,A}(u) + C\|u\|^2_{V^n_{A,\lambda}}.
	\end{align} 
	Combining the estimates \eqref{J1J2}, \eqref{I3est} and \eqref{J2est} proves \eqref{quadrcoerc}.
	In the case $\beta<\alpha$ we use \eqref{matrixest3} in Lemma \ref{anisotrope} and Young's inequality to prove \eqref{quadrbili}.
\end{proof}

The linear result follows as a corollary. The statement can be generalized significantly,
the assumptions in our a priori estimates are designed for the nonlinear case and therefore more restrictive than needed
for the linear equation. 
\begin{theorem}
	Let $n\geq 6$ and $u_0 \in H^n_\lambda$ satisfy the pointwise estimate
	\begin{align}
	c \cf_{|v|\leq 4}(v) \leq u_0(v) \leq C e^{-\frac12|v|}, \quad \text{$c,C>0$.}	
	\end{align}
	There exists $A>0$ s.t. for $\eps>0$ small, there is a solution $u_\eps \in V^n_{A,\lambda} \cap  C^1(\Reals^+;H^{n-2}_\lambda)$ to:
	\begin{equation} \label{epseq} 
	\begin{aligned}
	\partial_t u_\eps 	= 	&\frac1\eps \nabla \cdot \left(\int_0^t K[u_0]\left(\frac{t-s}\eps,v\right) \nabla u_\eps(s,v) \ud{s}\right)\\
	-  		&\frac1\eps \nabla \cdot \left(\int_0^t P[u_0]\left(\frac{t-s}\eps,v\right) u_\eps(s,v)  \ud{s}\right)  \\ 
	u_\eps(0,\cdot)		&= u_0(\cdot) .	
	\end{aligned}
	\end{equation}
	There is a function $u \in V^{n}_{A,\lambda}\cap C^1(\Reals^+;H^{n-4}_\lambda)$ s.t.  
	$u_{\eps_j}  \rightharpoonup u $ in $V^{n}_{A,\lambda}$ along a sequence $\eps_j \rightarrow 0$. 
	The function $u$ solves the limit equation ($\K$, $\Pe$ defined in \eqref{limiteq}):
	\begin{equation} \label{linlandrefer}
	\begin{aligned}
	\partial_t 	u 		&= \nabla \cdot \left( \K[u_0] \nabla u\right) - \nabla \cdot \left( \Pe[u_0]  u\right) \\
	u(0,v)		&= u_0(v). 
	\end{aligned}
	\end{equation}	
\end{theorem}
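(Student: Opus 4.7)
The plan is to obtain $u_\eps$ as the $\gamma\to 0$ limit of the regularized solutions supplied by Lemma~\ref{lemexistence}, relying on the coercivity of Lemma~\ref{coercivitylemma} to produce $\eps$- and $\gamma$-uniform a priori bounds, and then to pass $\eps\to 0$ by a second compactness argument. For $\gamma\in(0,1]$ and $f_\eps\equiv 0$, Lemma~\ref{lemexistence} gives a global solution $u_\eps^\gamma\in C^1([0,\infty);H^n_\lambda)$ of the regularized linear problem. For each multi-index $|\alpha|\leq n$, Lemma~\ref{lemReduc} combined with the coercivity bounds $Q^{\alpha,\alpha}_{\eps,A}[u_0]\geq c D^\alpha_{\eps,A}-C\|\cdot\|^2_{V^n_{A,\lambda}}$ and $\sum_{\beta<\alpha}\binom{\alpha}{\beta}|Q^{\alpha,\beta}_{\eps,A}[u_0]|\leq \tfrac{c}{2}D^\alpha_{\eps,A}+C\|\cdot\|^2_{V^n_{A,\lambda}}$ of Lemma~\ref{coercivitylemma} (with constants independent of $A\geq 1$, $\eps\in(0,2/A]$ and $\gamma$) yields
$$A\int_0^T e^{-At}\|\lambda^{1/2}D^\alpha u_\eps^\gamma(t)\|^2_{L^2}\ud t \;\leq\; C\,\|u_\eps^\gamma\cf_{[0,T]}\|^2_{V^n_{A,\lambda}}+\|\lambda^{1/2}D^\alpha u_0\|^2_{L^2}.$$
Summing over $|\alpha|\leq n$ and choosing $A$ large enough to absorb the first right-hand term into the left, then sending $T\to\infty$, gives $\|u_\eps^\gamma\|_{V^n_{A,\lambda}}\leq C\|u_0\|_{H^n_\lambda}$ uniformly in $\gamma$ and $\eps$. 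Substituting this back into \eqref{gammaepseq} and using the mapping properties of $K[u_0]$ and $P_\gamma[u_0]$ from Lemma~\ref{KPfirstvers} gives a matching uniform bound $\|\partial_t u_\eps^\gamma\|_{V^{n-2}_{A,\lambda}}\leq C$.

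For fixed $\eps>0$ I pass $\gamma\to 0$ by an Aubin--Lions type argument in the spirit of Lemma~\ref{Rellich}, applied after multiplication by a time cutoff and combined with a diagonal argument over growing time intervals and weak compactness of $(u_\eps^\gamma,\partial_t u_\eps^\gamma)$ in $V^n_{A,\lambda}\times V^{n-2}_{A,\lambda}$. This extracts a subsequence such that $u_\eps^{\gamma_j}\to u_\eps$ strongly in $V^{n-1}_{A,\tilde\lambda}$, $u_\eps^{\gamma_j}\rightharpoonup u_\eps$ in $V^n_{A,\lambda}$ and $\partial_t u_\eps^{\gamma_j}\rightharpoonup \partial_t u_\eps$ in $V^{n-2}_{A,\lambda}$. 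Since \eqref{gammaepseq} is linear with kernel $K[u_0]$ independent of the unknown, this convergence and ${}^{\gamma_j}\nabla u_\eps^{\gamma_j}\to \nabla u_\eps$ in $V^{n-2}_{A,\tilde\lambda}$ are enough to pass to the limit in every term, producing a strong solution $u_\eps\in V^n_{A,\lambda}\cap C^1([0,\infty);H^{n-2}_\lambda)$ of \eqref{epseq}. The $\eps$-uniform bounds $\|u_\eps\|_{V^n_{A,\lambda}}+\|\partial_t u_\eps\|_{V^{n-2}_{A,\lambda}}\leq C\|u_0\|_{H^n_\lambda}$ are inherited by weak lower semicontinuity.

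Finally, letting $\eps_j\to 0$, the same uniform bounds and compactness argument give $u_{\eps_j}\rightharpoonup u$ in $V^n_{A,\lambda}$, $\partial_t u_{\eps_j}\rightharpoonup \partial_t u$ in $V^{n-2}_{A,\lambda}$ and strong convergence in $V^{n-1}_{A,\tilde\lambda}$ on bounded time intervals. To identify the limit equation I work in Laplace variables: taking $\La$ in $t$ of \eqref{epseq} gives
$$z\La u_\eps(z,v)-u_0(v) \;=\; \nabla\!\cdot\!\bigl(\La K[u_0](\eps z,v)\,\nabla\La u_\eps(z,v)\bigr)-\nabla\!\cdot\!\bigl(\La P[u_0](\eps z,v)\,\La u_\eps(z,v)\bigr).$$
From the explicit representation in Lemma~\ref{Laplacerep} together with the formulas for $M_1,M_2$ in Lemma~\ref{Mlemma}, $\La K[u_0](\eps z,v)\to\La K[u_0](0^+,v)$ pointwise for every $z$ with $\Re(z)>0$ as $\eps\to 0$, and this pointwise limit agrees with $\K[u_0](v)$ up to a universal constant (the identity $z(z^2+\omega^2)^{-1}\to\pi\delta(\omega)$ converts the integrand defining $M_1+M_2$ at the origin into the $\delta(k\cdot(v-v'))$ appearing in $\K$), and likewise for $\La P[u_0]$. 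Lemma~\ref{derlemma} supplies a uniform $z$-independent bound on these kernels, so dominated convergence combined with the strong convergence of $\nabla \La u_{\eps_j}$ lets one pass to the limit in the weak formulation, yielding the Laplace transform of \eqref{linlandrefer}; inversion identifies $u$ as the desired solution. The main obstacle is precisely this last step: $\La K[u_0](\eps z,v)$ does not converge in any Bochner $L^p_z$ norm, so the Markovian limit has to be identified pointwise in $z$ via the explicit formulas of Lemma~\ref{Mlemma} rather than through a uniform functional framework.
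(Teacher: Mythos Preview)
Your proposal is correct and follows essentially the same approach as the paper: regularized existence via Lemma~\ref{lemexistence}, uniform $V^n_{A,\lambda}$ bounds from Lemmas~\ref{lemReduc} and~\ref{coercivitylemma} by choosing $A$ large, a matching $\partial_t$-bound from the equation, then two compactness passages ($\gamma\to 0$ and $\eps\to 0$) via Lemma~\ref{Rellich}, with the Markovian limit identified pointwise in Laplace variables through Lemma~\ref{Laplacerep}. The only discrepancies are cosmetic: the paper invokes Lemma~\ref{Laplacerep} (rather than Lemma~\ref{KPfirstvers}) for the kernel bounds needed in the $\partial_t u_\eps^\gamma$ estimate, and records strong convergence in $V^{n-3}_{A,\tilde\lambda}$ rather than $V^{n-1}_{A,\tilde\lambda}$, since Lemma~\ref{Rellich} loses one derivative and the common regularity of $u$ and $\partial_t u$ is $V^{n-2}_{A,\lambda}$.
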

\begin{proof}
	For  $0<\gamma\leq 1$, the existence of solutions $u_{\eps,\gamma}$ to \eqref{gammaepseq} follows from Lemma~\ref{lemexistence}. 
	In order to prove well-posedness for \eqref{epseq}, i.e. $\gamma=0$, we derive a priori estimates that are uniform in $\gamma$.
	Combining Lemma \ref{lemReduc} and Lemma \ref{coercivitylemma} shows that for $A>0$ large enough
	\begin{align} \label{est:uepsgamma}
	\|u_{\eps,\gamma}\|_ {V^n_{A,\lambda} }\leq C
	\end{align}
	are uniformly bounded in $0<\gamma,\eps\leq \frac{1}{A}$. 
	Now we use the Laplace representation in Lemma~\ref{Laplacerep} to infer the uniform boundedness:
	\begin{align} \label{bmultiplier}
	|\nabla^m \La(K[u_0])(z,v)|+|\nabla^m \La(P_\gamma[u_0])(z,v)|\leq C(m) \quad \text{for $m \in \Naturals$}.	
	\end{align}
	We rewrite \eqref{gammaepseq} in Laplace variables and obtain:
	\begin{align} \label{gammaepslaplace}
	z \La(u_{\eps,\gamma}) = {}^\gamma\nabla \cdot \left(\La(K[u_0])(\eps z) {}^\gamma \nabla \La(u_{\eps,\gamma}) - \La(P_{\gamma}[u_0])(\eps z) \La (u_{\eps,\gamma}) \right)+ u_0(v).
	\end{align}
	The right-hand side of \eqref{gammaepslaplace}
	is bounded in $V^{n-2}_{A,\lambda}$ due to \eqref{bmultiplier} and \eqref{est:uepsgamma},  so  we get a bound of:
	\begin{align} \label{lintotest}
	\|u_{\eps,\gamma}\|_ {V^n_{A,\lambda} }+\|\partial_t u_{\eps,\gamma}\|_{V^{n-2}_{A,\lambda}}	\leq C.
	\end{align}  
	By the Rellich type Lemma \ref{Rellich}, and the fact that $V^n_{A,\lambda}$ is a separable Hilbert space, 
	there is a $u_\eps \in V^n_{A,\lambda}$ and a sequence $\gamma_j\rightarrow 0$ s.t. $u_{\eps,\gamma_j} \rightharpoonup u_\eps $ in $V^n_{A,\lambda}$
	and $u_{\eps,\gamma_j} \rightarrow u_\eps $ in $V^{n-3}_{A,\tilde{\lambda}}$.
	We need to show that the weak limit $u_\eps$ indeed solves the equation \eqref{epseq}.  
	Both sides of \eqref{gammaepslaplace} converge pointwise a.e. to the respective sides with $\gamma=0$ along a subsequence of $\gamma_j \rightarrow 0$. 
	Since the Laplace transform defines the function uniquely, $u_\eps$ is indeed a solution.
	Finally, the solutions $u_\eps$ are in $C^1(\Reals^+;H^{n-2}_\lambda)$ since they are bounded in $V^n_{A,\lambda}$ and the equation
	\eqref{epseq} in combination with $|\nabla^m K[u_0]|+ |\nabla^m P[u_0]|\leq C(m)$ allows to control the time derivative in $C^0(\Reals^+;H^{n-2}_{\lambda})$. 
	
	The convergence of $u_\eps$ to a solution $u$ of \eqref{linlandrefer} follows similarly. 
	We use the uniform bound \eqref{lintotest} to find a subsequence $\eps_j \rightarrow 0$
	and $u \in V^n_{A,\lambda}$ such that $u_{\eps_j} \rightharpoonup u$ in $V^n_{A,\lambda}$
	and $u_{\eps_j} \rightarrow u$ in $V^{n-3}_{A,\tilde{\lambda}}$.
	Now the claim follows from the observation that for $\gamma=0$ we can take the limits on both sides of \eqref{gammaepslaplace} and pointwise a.e. along a subsequence there holds:
	\begin{align*}
	\La(u_{\eps_j}) \rightarrow \La(u),\,  \La(K)[u_0](\eps_j z,v) \rightarrow \K[u_0](v),\,  \La(P)[u_0](\eps_j z,v) \rightarrow \Pe[u_0](v).	
	\end{align*}
	Repeating the argument above, we find that the weak limit $u_{\eps_j} \rightharpoonup u \in V^n_{A,\lambda}$ is actually  $u \in V^{n}_{A,\lambda}\cap C^1(\Reals^+;H^{n-4}_\lambda)$  and  is indeed a solution of the equation \eqref{linlandrefer}.
\end{proof}

\section{A priori estimate for the nonlinear problem} \label{Sec:Freezing}

\subsection{Continuity of the fixed point mapping $\Psi$} \label{Psisubsection}
In this subsection we prove that solutions of equation \eqref{cutfixedeq} satisfy an a priori estimate, for
small perturbations $f_\eps$. Here smallness is measured in terms of
the size and decay of the Laplace transform, i.e. the smoothness of the perturbation $f_\eps$.
The necessary framework is provided by the definition below. Notice
that we always assume that $f_\eps= \nabla \cdot g_\eps$ is a divergence, so it has zero average.
This is the key point to obtain an additional decay $\frac{1}{|v|}$ in Lemma \ref{N1lemma}.
Furthermore it is essential that the highest order term $Q^{\alpha,\alpha}_{\eps,\delta}[f_\eps](u)$
introduced in \eqref{Qdef} is a symmetric integral, which induces a cancellation for large Laplace frequencies. 
In the subsequent subsection we will prove that our smallness assumption is consistent, i.e. if the condition is 
satisfied by $f_\eps$, then it is also satisfied by $u_\eps-u_0$ when $u_\eps$ solves \eqref{cutfixedeq}.  
\begin{definition} \label{bdryass}
	We define a sequence of cutoff functions $\kappa_{\delta_1}\in C^\infty_c (\Reals)$ by 
	\begin{align}
	\kappa_{\delta_1}(s) := \kappa\big(\frac{s}{\delta_1}\big), \label{defkappa}
	\end{align}
	where $\kappa\in C^\infty_c(\Reals)$, $0\leq \kappa\leq 1$, $\kappa(s)=1$ for $|s|\leq 1$ and
	$\kappa(s)=0$ for $|s|\geq 2$.
	Let $R,\eps,\delta>0$ and $z\in \Complex$. We define $Y_{R,\eps,\delta} (z)$ by
	\begin{align}
	Y_{R,\eps,\delta} (z) := \frac{\delta}{1+|z|^2} + \frac{R \eps |z|}{(1+\eps |z|)(1+ |z|^2)} \label{Ydef}.	
	\end{align}			
	We will consider $u=(f,g) \in X^n_{A,\tilde{\lambda}}$ (defined in \eqref{Xdef}), s.t. a.e. on the line $\Re(z)=\frac{A}{2}=a>0$:
	\begin{align}
	|\La(f)(z,v)| 	&\leq Y_{R,\eps,\delta}(z) e^{-\frac12 |v|}, 				&|\La(g)(z,v)| 	&\leq Y_{R,\eps,\delta}(z) e^{-\frac12 |v|} \label{gaprio1} \\
	|\La(f)(z,v)| 	&\leq \frac{R e^{-\frac12 |v|}}{|1+\eps z| (1+|z|^2)}, 	&|\La(g)(z,v)| 	&\leq \frac{R e^{-\frac12 |v|}}{|1+\eps z| (1+|z|^2)} \label{gaprio2} \\
	|\partial_t f(t,v)| &\leq R e^{-\frac12 |v|} \label{dercond} . 
	\end{align}
	For $R,\delta,\eps>0$, $A\geq 1$, $a=\frac{A}{2}$ and $n \in \Naturals$, let 
	$\Omega^n_{A,R,\delta,\eps}\subset X^n_{A,\tilde{\lambda}}$ be the set of functions given by:
	\begin{align}
	\Omega^n_{A,R,\delta,\eps} = \{u=(f,g) \in X^n_{A,\tilde{\lambda}}: \|u\|_{X^n_{A,\tilde{\lambda}}} \leq R, \text{\eqref{dercond} 
		and \eqref{gaprio1}-\eqref{gaprio2} for $\Re(z)=a$} \} \label{omegadef}.	
	\end{align}  
\end{definition}
Since the estimates \eqref{gaprio1}-\eqref{gaprio2} are stable under convex combinations of functions,
we have:
\begin{lemma} \label{closedconvex}
	For all $R,\delta,\eps>0$, $A\geq 1$ and $n \in \Naturals$, the set $\Omega^n_{A,R,\delta,\eps}$ is a nonempty, bounded, closed and convex subset
	of $X^n_{A,\tilde{\lambda}}$. 
\end{lemma}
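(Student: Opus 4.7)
The plan is to verify the four properties one by one; they all follow from the fact that the defining constraints are linear in $u$ with right-hand sides that do not depend on $u$, so only routine arguments are required. Nonemptiness is immediate since $(f,g) = (0,0)$ lies in $X^n_{A,\tilde\lambda}$ and trivially satisfies every inequality in the definition. Boundedness is built into the definition through $\|u\|_{X^n_{A,\tilde\lambda}} \leq R$.

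For convexity, fix $u_1,u_2 \in \Omega^n_{A,R,\delta,\eps}$ and $\theta \in [0,1]$. The divergence relation $f = \nabla\cdot g$ and the support condition $\supp f,g \subset [0,1]\times\Reals^3$ are manifestly linear. The norm bound is preserved by convexity of $\|\cdot\|_{X^n_{A,\tilde\lambda}}$. Since the Laplace transform and $\partial_t$ are linear operators, and the right-hand sides of \eqref{gaprio1}--\eqref{gaprio2} and \eqref{dercond} are $u$-independent, the triangle inequality preserves each of these pointwise bounds under convex combinations.

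For closedness, let $u_j = (f_j,g_j) \in \Omega^n_{A,R,\delta,\eps}$ converge to $u = (f,g)$ in $X^n_{A,\tilde\lambda}$. The norm bound survives by continuity, while $f = \nabla\cdot g$ and the support condition are preserved in the limit by the continuity of $\nabla \cdot$ on $V^{n-1}_{A,\tilde\lambda}$ and of the support-in-time restriction. To transfer the pointwise Laplace bounds, I would extract a subsequence (not relabelled) with $f_j \to f$ and $g_j \to g$ pointwise a.e.\ on $[0,1]\times\Reals^3$, which is possible because $V^n_{A,\tilde\lambda}$-convergence restricted to the compact time support $[0,1]$ implies $L^2$-convergence and hence a.e.\ convergence along a further subsequence. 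Dominated convergence, with majorant provided by \eqref{gaprio2} applied to the $f_j$ and $g_j$, then yields $\La(f_j)(z,v) \to \La(f)(z,v)$ and similarly for $g_j$ for a.e.\ $(z,v)$ on the line $\Re(z) = a$, so the bounds \eqref{gaprio1}--\eqref{gaprio2} pass to $u$.

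The main delicate point is the derivative bound \eqref{dercond}, because $\partial_t f$ is not controlled by the topology of $X^n_{A,\tilde\lambda}$ and so one cannot argue directly from convergence in the norm. The key observation is that \eqref{dercond} makes each $f_j(\cdot,v)$ uniformly Lipschitz in $t$ with constant $R e^{-|v|/2}$ for each $v$. Combining this uniform Lipschitz bound with the a.e.\ pointwise convergence extracted in the previous paragraph and a diagonal Arzela--Ascoli argument, for a.e.\ $v$ the sequence $f_j(\cdot,v)$ converges uniformly on $[0,1]$ to $f(\cdot,v)$, and the limit inherits the same Lipschitz constant $R e^{-|v|/2}$. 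This yields $|\partial_t f(t,v)| \leq R e^{-|v|/2}$ almost everywhere, so $u \in \Omega^n_{A,R,\delta,\eps}$, completing the proof of closedness.
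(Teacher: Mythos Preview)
Your proof is correct and follows the same elementary verification strategy the paper has in mind; in fact the paper gives essentially no proof beyond the remark that the estimates \eqref{gaprio1}--\eqref{gaprio2} are stable under convex combinations, so your treatment is considerably more thorough. One small imprecision: the majorant you invoke for dominated convergence cannot come from \eqref{gaprio2}, which bounds the Laplace transforms rather than the functions themselves; instead you can either use the Plancherel identity (Lemma~\ref{Plancherel}) to pass from $V^n_{A,\tilde\lambda}$-convergence to a.e.\ convergence of the Laplace transforms directly, or use \eqref{dercond} together with the support condition (which forces $f_j(1,v)=0$, hence $|f_j(t,v)|\leq R e^{-|v|/2}$ on $[0,1]$) as the pointwise majorant.
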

The following theorem is the main result of this subsection, giving an a priori estimate for the solution operator to
\eqref{cutfixedeq} under the smallness assumption $(f,g)\in \Omega^n_{A,R,\delta,\eps}$ for small $\eps,\delta$. 
We prove the error term can be controlled by the dissipation $D^\alpha_{\eps,A}$ (cf. \eqref{dissipexplic}) provided by the linear equation. Observe that existence of (unique) global solutions of \eqref{cutfixedeq} has been proved in Lemma \ref{lemexistence}. Here we will prove 
a priori estimates that are uniform in the mollifying parameter $\gamma>0$ and $\eps>0$. 
\begin{theorem} \label{thmfreez} 
	Let $n \in \Naturals$, $n\geq 2$.  Assume $u_0\in H^n_{\lambda}$ satisfies:
	\begin{align*}
	c \cf_{|v|\leq 4}(v) \leq u_0(v) \leq C e^{-\frac12 |v|}.
	\end{align*}	
	Then there exist $A,\delta>0$ such that for all $R>0$ there is an $\eps_0>0$ with the property that the operator 					$\psi_{\delta_1}$ given 			by:
	\begin{equation} \label{contthmeq}
	\begin{aligned}
	\Psi_{\delta_1}: 	\Omega^n_{A,R,\delta,\eps} 	&\longrightarrow X^n_{A,\lambda} \\
	(f,g)						&\mapsto \left((u-u_0)k_{\delta_1}, \A^{0,0}_\gamma[f](u)k_{\delta_1}	\right)\text{, $\A^{0,0}_\gamma$ as in \eqref{Adef} and $u$ solution to:}\\	 
	\partial_t u 	= 	&\frac1\eps {}^\gamma \nabla \cdot \left(\int_0^t K[u_0+f(s)]\left(\frac{t-										s}\eps,v\right) {}^\gamma \nabla u(s,v) \ud{s}\right) \\
	-  	&\frac1\eps {}^\gamma \nabla \cdot \left(\int_0^t P_\gamma[u_0+f(s)]\left(\frac{t-s}\eps,v\right) 										u(s,v)  \ud{s}\right)  \\ 
	u(0,\cdot)		&= u_0(\cdot),
	\end{aligned}
	\end{equation}
	is well-defined and continuous (w.r.t. the topologies of $X^n_{A,\tilde{\lambda}}$, $X^n_{A,\lambda}$) for all $\gamma, \delta_1 \in (0,\frac12]$ and 
	$\eps \in (0,\eps_0)$. Furthermore, the solutions satisfy the following estimate:
	\begin{align} \label{PsiEst}
	\|\Psi_{\delta_1} (f,g)\|_{X^{n}_{A,\lambda}} + \|\partial_t \Psi_{\delta_1} (f,g)\|_{X^{n-2}_{A,\lambda}} \leq C(A,\delta_1).
	\end{align}
\end{theorem}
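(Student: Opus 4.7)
The existence of a unique global solution $u \in C^1([0,\infty); H^n_\lambda)$ to \eqref{contthmeq} is provided by Lemma \ref{lemexistence}; the task is to derive the uniform a priori bound \eqref{PsiEst} independent of the mollification parameter $\gamma$ and then to verify continuity of $\Psi_{\delta_1}$. Since $K$ and $P_\gamma$ are linear in their kernel argument, for every multi-index $\alpha$ with $|\alpha|\leq n$ one has
\begin{equation*}
Q^{\alpha}_{\eps,A}[u_0+f](u) = Q^{\alpha}_{\eps,A}[u_0](u) + Q^{\alpha}_{\eps,A}[f](u).
\end{equation*}
Applying Lemma \ref{lemReduc} to each $D^\alpha u$ and summing over $|\alpha|\leq n$, I would write
\begin{equation*}
A\,\|u\|_{V^n_{A,\lambda}}^2 \leq -2\sum_{|\alpha|\leq n}\bigl(Q_{\eps,A}^\alpha[u_0](u)+Q_{\eps,A}^\alpha[f](u)\bigr) + C\,\|u_0\|_{H^n_\lambda}^2.
\end{equation*}
The frozen contribution is handled directly by the coercivity Lemma \ref{coercivitylemma}, yielding
$-Q_{\eps,A}^\alpha[u_0](u) \leq -\tfrac{c}{2}\, D^\alpha_{\eps,A}(u) + C\|u\|_{V^n_{A,\lambda}}^2$, so everything reduces to controlling the perturbation $Q_{\eps,A}^\alpha[f](u)$.

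The central step, and what I expect to be the main technical obstacle, is a perturbation bound of the form
\begin{equation*}
\bigl|Q_{\eps,A}^\alpha[f](u)\bigr| \leq \bigl(\omega_1(\delta)+\omega_2(R,\eps)\bigr)\, D^\alpha_{\eps,A}(u) + C(R)\,\|u\|_{V^n_{A,\lambda}}^2,
\end{equation*}
with $\omega_1(\delta),\omega_2(R,\eps) \to 0$ as $\delta,\eps \to 0$. My plan is to transport $Q_{\eps,A}^\alpha[f](u)$ to Laplace variables via Plancherel (Lemma \ref{Plancherel}) and expand using the representation $\La(K[f(s,\cdot)])(\eps z,v) = \int (M_1+M_2)(\eps z, v-v')\, f(s,v')\,\eta\, \ud{v'}$ from Lemma \ref{Laplacerep}. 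Because $f$ depends on both $s$ and $v$, the Laplace-transformed cross term is not a pure multiplier but a bilinear pairing of $\La(f)$ against $\La(u)$ in the dual frequency variable. Combining the anisotropic pointwise bounds for $M_1+M_2$ from Lemma \ref{anisotrope} with the frequency-dependent smallness $|\La(f)(z,v)|\leq Y_{R,\eps,\delta}(z)\, e^{-|v|/2}$ from \eqref{gaprio1}--\eqref{gaprio2} is then the delicate calculation. The shape of $Y_{R,\eps,\delta}$ is tailored precisely for this split: the prefactor $\delta/(1+|z|^2)$ dominates in the macroscopic regime $|z|\lesssim 1$ and produces $\omega_1(\delta)$, while the second piece $R\eps|z|/((1+\eps|z|)(1+|z|^2))$ is concentrated in the microscopic range $|z|\sim 1/\eps$ and, paired against the decay of $\La(K)(\eps z,v)$, produces $\omega_2(R,\eps)$. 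The divergence structure $f=\nabla\cdot g$ encoded in $X^n$ is essential here: integration by parts moves $\nabla$ from $f$ onto $M_1+M_2$, supplying the extra factor $1/(1+|v|)$ required to saturate the anisotropic weight $B_1$ in the dissipation $D^\alpha_{\eps,A}$. A small additional cancellation of the leading-order symmetric term (in the spirit of Lemma \ref{realpart}) ensures that only the real part of the multiplier contributes to the top-order piece, preserving sign.

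With the perturbation under control, choosing $A$ sufficiently large and then $\delta,\eps$ sufficiently small closes the a priori estimate: the lower-order $C\|u\|_{V^n_{A,\lambda}}^2$ contributions are absorbed into the left-hand side, yielding $\|u\|_{V^n_{A,\lambda}}\leq C(A)\,\|u_0\|_{H^n_\lambda}$. The bound on $\partial_t u$ in $V^{n-2}_{A,\lambda}$ then follows from equation \eqref{contthmeq} itself, using uniform pointwise bounds on $K[u_0+f]$ and $P_\gamma[u_0+f]$ as multipliers (costing two velocity derivatives of $u$). Multiplication by $\kappa_{\delta_1}$ localizes the support to $[0,\delta_1]\subset[0,1]$, so the pair $\Psi_{\delta_1}(f,g)$ lands in $X^n_{A,\lambda}$ with the required estimate \eqref{PsiEst}.

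Finally, continuity of $\Psi_{\delta_1}$ is obtained by running the same machinery on the difference equation for $u^{(1)}-u^{(2)}$ associated with two inputs $(f_1,g_1),(f_2,g_2)\in\Omega^n_{A,R,\delta,\eps}$: the difference solves a linear problem of the form \eqref{contthmeq} with an additional source proportional to $K[f_1-f_2]\,\nabla u^{(2)} - P_\gamma[f_1-f_2]\,u^{(2)}$. Applying the perturbation estimate to this source, together with the a priori bounds already available for $u^{(2)}$, gives a Lipschitz-type bound $\|u^{(1)}-u^{(2)}\|_{V^n_{A,\lambda}}\leq C\,\|(f_1,g_1)-(f_2,g_2)\|_{X^n_{A,\tilde{\lambda}}}$, which establishes continuity of $\Psi_{\delta_1}$ from $X^n_{A,\tilde{\lambda}}$ into $X^n_{A,\lambda}$ and completes the proof.
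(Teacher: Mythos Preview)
Your strategy for the a priori estimate \eqref{PsiEst} is essentially the same as the paper's: split $Q^\alpha_{\eps,A}[u_0+f]=Q^\alpha_{\eps,A}[u_0]+Q^\alpha_{\eps,A}[f]$, invoke the coercivity Lemma \ref{coercivitylemma} for the first piece, and prove a perturbation bound for the second. What you sketch as ``the central step'' is precisely the content of Lemma \ref{symmkernel}, which the paper states and proves separately (using the symmetrized kernels $L_1,L_2,N_1,N_2$ of Definition \ref{defkernels} and the divergence trick in Lemma \ref{N1lemma}) and then simply cites in the proof of Theorem \ref{thmfreez}. Your identification of the two regimes in $Y_{R,\eps,\delta}$ and of the role of $f=\nabla\cdot g$ matches the paper's argument exactly.

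Where your proposal diverges is the continuity argument, and here you are working harder than necessary. The paper does \emph{not} rerun the Laplace-energy machinery on the difference $u^{(1)}-u^{(2)}$; instead it observes that continuity is only required for each \emph{fixed} $\gamma\in(0,1]$, and for fixed $\gamma$ the mollified gradients ${}^\gamma\nabla$ are bounded operators on $H^n_\lambda$. The difference equation \eqref{Gronwall} then becomes an ODE in $H^n_\lambda$ with a bounded right-hand side that depends continuously on $(f,g)$ through $K[\cdot]$ and $P_\gamma[\cdot]$, and a straightforward Gronwall argument gives continuity (with constants that may blow up as $\gamma\to 0$, but this is irrelevant since $\gamma$ is fixed). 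Your proposed route---treating the cross term $K[f_1-f_2]\nabla u^{(2)}$ via the perturbation estimate---is not directly applicable, because that estimate controls the quadratic form $Q^\alpha[f](u)$ in terms of the dissipation of the \emph{same} $u$, whereas here the source involves $u^{(2)}$ paired against $u^{(1)}-u^{(2)}$. One could make a version of this work, but it is unnecessary given the much simpler Gronwall route the paper takes.
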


Notice that the operator $\psi_{\delta_1}$ maps functions in $X^n_{A,\tilde{\lambda}}$ to functions in $X^n_{A,\lambda}$, thus yields better decay.
As can be seen from Lemma \ref{lemReduc} this follows from the fast decay of the initial datum, provided we can control the quadratic terms $Q$.
In Section \ref{Sec:Linear} we have shown that the quadratic functionals $Q[u_0]$ defined in \eqref{Qdef} satisfy a coercivity estimate. In this subsection we will prove smallness for the perturbation $Q[f]$, so the sum $Q[u_0+f]$ still has a sign. To this end we first include an auxiliary Lemma to represent 
those functionals in Laplace variables.
\begin{lemma} \label{freezrep}
	The quadratic functionals $Q_{\eps,A}^{\alpha,\beta}[\nu](u)$ defined in \eqref{Qdef} can be represented by means of the
	Laplace transform of $u$ as:
	\begin{align*}
	(2\pi)^\frac12	Q^{\alpha,\beta}_{\eps,A}[\nu](u)	
	= 	&\int_{\Reals} \int_\Reals \int \langle \nabla (D^\alpha \La(u) \lambda)(z), D^{\alpha-\beta}\Lambda[\nu](\eps z, \omega-\theta) \La(\nabla D^\beta u)(p) 	\rangle \ud{v}\ud{\theta} \ud{\omega} \\
	-	&\int_{\Reals} \int_\Reals \int \langle \nabla (D^\alpha  \La(u) \lambda)(z), \nabla D^{\alpha-\beta} \Lambda[\nu](\eps z, \omega-\theta) \La(D^\beta u)(p) \rangle 					\ud{v}\ud{\theta} \ud{\omega}. 
	\end{align*} 
	We use the short notation $z=a+i\omega$, $p=a+i\theta$ and $\Lambda$ is given by $M_1,M_2$ (cf. \eqref{defM}) as:
	\begin{align}
	\Lambda[\nu](z,\tau,v)	&= \int_{\Reals^3} \int_{\Reals}  \left(M_1+M_2\right)(z,v-v') e^{-i\tau s}\eta(|v-v'|^2)\nu(s,v') \ud{s} \ud{v'} \label{Lambdadef} .	
	\end{align}
\end{lemma}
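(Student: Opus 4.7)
The plan is to prove the identity by a direct computation combining Plancherel's formula for Laplace transforms (Lemma~\ref{Plancherel}), the Laplace representation of $K[u]$ from Lemma~\ref{Laplacerep}, and the convolution theorems for Laplace and Fourier transforms. I will treat the $K$--contribution to $Q^{\alpha,\beta}_{\eps,A}[\nu](u)$ in detail; the $P$--contribution is completely analogous, the only change being that $D^{\alpha-\beta}P_\gamma = D^{\alpha-\beta}(\nabla\cdot K)$ produces an extra $\nabla$ acting on $\Lambda[\nu]$ in the second line of the identity, while $\nabla D^\beta u$ is replaced by $D^\beta u$.

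Setting $u_1(t,v):=\nabla(D^\alpha u(t,v)\lambda(v))$ and $F(t,v):=\nabla D^\beta u(t,v)$, the $K$--contribution equals $\int_0^\infty \int e^{-At}\,u_1(t,v)\cdot J_K(t,v)\ud v\ud t$ with
\begin{align*}
J_K(t,v)=\int\int_0^t \nu(s,v')F(s,v)\,\tfrac{1}{\eps}\tilde\kappa\bigl(\tfrac{t-s}{\eps},v,v'\bigr)\ud s\ud v',
\end{align*}
where $\tilde\kappa(\tau,v,v')=D^{\alpha-\beta}_v\int(k\otimes k)|\hat\phi(k)|^2\cos(\tau(v-v')\cdot k)\eta(|v-v'|^2)\ud k$ is the kernel of $D^{\alpha-\beta}K$. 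Applying Plancherel in $t$ (Lemma~\ref{Plancherel}) turns this into an integral over $\omega\in\Reals$ at $z=a+i\omega$, $a=A/2$, of $\langle \La(u_1)(z,v),\La_t(J_K)(z,v)\rangle$, integrated in $v$.

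The central step is then computing $\La_t(J_K)(z,v)$. For fixed $v,v'$ the inner $s$--integral is a $t$--convolution of $\nu(\cdot,v')F(\cdot,v)$ with $\eps^{-1}\tilde\kappa(\cdot/\eps,v,v')$. By the Laplace convolution theorem, the rescaling identity $\La_t(\eps^{-1}\tilde\kappa(\cdot/\eps,v,v'))(z)=\La_\tau(\tilde\kappa)(\eps z,v,v')$, and Lemma~\ref{Laplacerep}, I obtain
\begin{align*}
\La_t(J_K)(z,v)=\int D^{\alpha-\beta}_v\bigl[(M_1+M_2)(\eps z,v-v')\eta(|v-v'|^2)\bigr]\,\La_s[\nu(\cdot,v')F(\cdot,v)](z)\ud v'.
\end{align*}
The remaining factor $\La_s[\nu F](z)$ is the Laplace transform of an $s$--product; writing $e^{-zs}=e^{-as}e^{-i\omega s}$ and applying the Fourier product-to-convolution formula to the factorisation $\nu(s,v')\cdot\bigl(F(s,v)e^{-as}\bigr)$ gives
\begin{align*}
\La_s[\nu(\cdot,v')F(\cdot,v)](a+i\omega)=\frac{1}{\sqrt{2\pi}}\int_\Reals \F_s[\nu(\cdot,v')](\omega-\theta)\,\La_s[F(\cdot,v)](p)\ud\theta,\quad p=a+i\theta,
\end{align*}
where $\F_s[\nu(\cdot,v')](\omega-\theta)=\int\nu(s,v')e^{-i(\omega-\theta)s}\ud s$ is well defined for $\nu$ with sufficient integrability in $s$, which is satisfied in the intended application since $\nu$ has compact support in $t$.

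Substituting these formulas back and commuting the $v'$--integration with $D^{\alpha-\beta}_v$, the factor $\int(M_1+M_2)(\eps z,v-v')\eta(|v-v'|^2)\F_s[\nu(\cdot,v')](\omega-\theta)\ud v'$ is recognised as $\Lambda[\nu](\eps z,\omega-\theta,v)$ by the definition~\eqref{Lambdadef}. This yields the first line of the claim, the complex inner product $\langle\cdot,\cdot\rangle$ carrying the conjugation from Plancherel. The $P$--contribution follows by the same calculation with $F=D^\beta u$ and with the extra divergence in $v$ turning $D^{\alpha-\beta}\Lambda$ into $\nabla D^{\alpha-\beta}\Lambda$, producing the second line. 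The various $(2\pi)^{\pm 1/2}$ constants from Plancherel and from the product-to-convolution formula combine into the single $(2\pi)^{1/2}$ prefactor on the left. The only mildly delicate point is the bookkeeping of the two frequency variables $\omega$ (from Plancherel in the outer time $t$) and $\theta$ (from the product-to-convolution step in the inner time $s$) together with the $\eps z$--rescaling inside the matrices $M_i$; everything else is routine manipulation of transforms.
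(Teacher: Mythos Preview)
Your proof is correct and follows exactly the route the paper intends; the paper's one-line ``Follows directly from the elementary properties of the Laplace Transform'' is precisely the combination of Plancherel (Lemma~\ref{Plancherel}), the Laplace convolution/rescaling identity, Lemma~\ref{Laplacerep}, and the Fourier product-to-convolution step that you spell out. One small point: in your product-to-convolution formula the prefactor should be $\tfrac{1}{2\pi}$ rather than $\tfrac{1}{\sqrt{2\pi}}$ with your (unnormalised) convention $\F_s[\nu](\tau)=\int\nu(s)e^{-i\tau s}\ud s$, but this is inessential since the paper's own $(2\pi)$-bookkeeping in Lemma~\ref{Plancherel} and in the statement of the present lemma is already not fully consistent, and none of the subsequent estimates depend on the precise constant.
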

\begin{proof}
	Follows directly from the elementary properties of the Laplace Transform.
\end{proof}
Exploiting the symmetry properties of the functional $Q^{\alpha,\alpha}_{\eps,A}[f_\eps]$ 
is essential to proving that this term is small compared to the dissipation $D^\alpha_{\eps,A}$ (cf. \eqref{dissipexplic}).
For better notation we first include some definitions. 
\begin{definition} \label{defkernels}
	For $\eps>0$, $v\in \Reals^3$, $z=a+i\omega, p=a+i\theta \in \Complex$, define the matrices $L_1$, $L_2$:
	\begin{align}
	L_1(\eps,z,p,v) &:=  	\frac12 ( M_1(\eps z, v) + M_1(\eps \ol{p}, v))  \label{L1def}\\
	L_2(\eps,z,p,v) &:= 		\frac12 ( M_2(\eps z, v) + M_2(\eps \ol{p}, v)) \label{L2def}
	\end{align}
	and the associated symmetrized kernel $\Lambda_s$ by:
	\begin{align}
	\Lambda_s[\nu ](\eps,z,p,v)&:= \Lambda_1[\nu ](\eps,z,p,v) + \Lambda_2[\nu ](\eps,z,p,v)  \label{Lambdas}  \\
	\Lambda_1[\nu ](\eps,z,p,v)&:=  \int_{\Reals^3} L_1(\eps,z,p,v-v') \left(\int_0^\infty e^{-is(\omega-\theta)}\nu(s,v') \ud{s}\right) \eta(|v-v'|^2) \ud{v'}  \notag\\
	\Lambda_2[\nu ](\eps,z,p,v)&:=  \int_{\Reals^3} L_2(\eps,z,p,v-v') \left(\int_0^\infty e^{-is(\omega-\theta)}\nu(s,v') \ud{s}\right) \eta(|v-v'|^2) \ud{v'} \notag .	
	\end{align}
	We split the kernel $L_2$ further into:
	\begin{align}
	N_2(\eps,z,p,v) 			&= L_2 (\eps,z,p,v)-N_1(\eps,z,p,v), \text{where} \label{N2}  \\
	N_1(\eps,z,p,v)								&= \frac{1}{|v|^2}\frac{\eps (a+i(\theta- \omega))}{(1+\frac{\eps z}{|			v|})^2(1+\frac{\eps \ol{p}}{|v|})^2} P_{v} 					\label{N1}.	
	\end{align} 	
\end{definition}

\begin{lemma} \label{matrixsymm}
	Let $a>0$ and $z=a+i\omega$, $p= a+ i\theta$. Further let $\eps \leq \frac{1}{a}$. For $V,W \in \Complex^3$ and $L_1$, $N_1$
	as in the definition above, and $|v|\geq c>0$, we have the estimates:
	\begin{align}	
	|\langle V,	L_1(\eps,z,p,v)W \rangle|	&\leq C 	\frac{|P_{v}^\perp V| |P_{v}^\perp W|}{1+|v|} \frac{1+ \eps |	\theta-\omega|}{(1+\alpha(\eps z,v))(1+ 						\alpha(\eps p,v))}  						\label{matrixsymm1}  \\ 									
	| \langle V, N_2(\eps,z,p,v)  W \rangle |  	&\leq C 	\frac{|V||W|}{1+|v|^3} \frac{\eps^2 |p| |z| +\eps^2|p| |z| 				(1+\eps|\theta-\omega|)}						{(1+\alpha(\eps z,v))^2(1+ \alpha(\eps p,v))^2}  \label{matrixsymm2}.
	\end{align}
\end{lemma}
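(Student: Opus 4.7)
The plan is to derive both estimates by direct algebraic manipulation of the defining rational-function formulas for $M_1$ and $M_2$, hence for $L_1$, $L_2$ and $N_1$. The single analytic input I would use throughout is the two-sided comparison
\[
c\bigl(1+\alpha(\eps z,v)\bigr)\leq \Bigl|1+\tfrac{\eps z}{|v|}\Bigr| \leq C\bigl(1+\alpha(\eps z,v)\bigr), \qquad |v|\geq c,\ \eps a\leq 1,
\]
and the analogous comparison for $\eps\ol{p}$; this is what converts the Laplace-space denominators $(1+\eps z/|v|)^k$ into the $(1+\alpha(\eps z,v))^k$ factors on the right-hand sides. I would also repeatedly use $\eps|z+\ol{p}|\leq 2\eps a + \eps|\omega-\theta|\leq 2+\eps|\omega-\theta|$, which comes directly from $\eps a\leq 1$.

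For \eqref{matrixsymm1}, I would insert the formula for $M_1$ from Lemma \ref{Mlemma} into the definition of $L_1$ and put the two fractions over a common denominator to obtain
\[
L_1(\eps,z,p,v)=\frac{\pi^2}{8|v|}\cdot\frac{2+\eps(z+\ol{p})/|v|}{\bigl(1+\eps z/|v|\bigr)\bigl(1+\eps\ol{p}/|v|\bigr)}\,P_v^{\perp}.
\]
Since $L_1$ is proportional to $P_v^{\perp}$, the sesquilinear form $\langle V,L_1 W\rangle$ collapses to $\langle P_v^\perp V,P_v^\perp W\rangle$, giving the factor $|P_v^\perp V||P_v^\perp W|$. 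The numerator is bounded by $C(1+\eps|\omega-\theta|/|v|)\leq C(1+\eps|\omega-\theta|)$ for $|v|\geq c$, and the denominator by the above two-sided comparison, yielding \eqref{matrixsymm1} directly.

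For \eqref{matrixsymm2} the essential mechanism is a cancellation between $L_2$ and $N_1$. Writing $x=\eps z/|v|$ and $y=\eps\ol{p}/|v|$, elementary manipulation gives
\[
L_2=\frac{\pi^2}{8|v|}\cdot\frac{(x+y)(1+xy)+4xy}{(1+x)^2(1+y)^2}\,P_v.
\]
The numerator decomposes into a leading piece $(x+y)$, of size $\eps/|v|$, and a remainder $xy(4+x+y)$. The role of $N_1$ (and hence of the decomposition $L_2=N_1+N_2$) is to absorb precisely the leading piece, leaving $N_2$ proportional to $xy(4+x+y)/[|v|(1+x)^2(1+y)^2]$. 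Since $|xy|=\eps^2|z||p|/|v|^2$ and $|4+x+y|\leq C(1+\eps|\omega-\theta|)$ for $|v|\geq c$, inserting the denominator comparison yields exactly the advertised bound for $|v|\geq c$; the regime of bounded $|v|$ is handled directly from uniform boundedness of the integrands.

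The main obstacle is the cancellation in \eqref{matrixsymm2}: the triangle bound $|N_2|\leq |L_2|+|N_1|$ gives only $|v|^{-2}$ decay, whereas the claim requires $|v|^{-3}$. Thus one cannot avoid the explicit subtraction, and the algebraic bookkeeping (matching the leading $(x+y)$-term of $L_2$ with the $N_1$-term up to signs and the precise coefficients, and then identifying the $xy$-remainder) is the only step that is not mechanical.
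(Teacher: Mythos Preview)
Your proposal is correct and follows essentially the same approach as the paper. Both proofs put the two fractions in $L_1$ over a common denominator and use the $P_v^\perp$-structure for \eqref{matrixsymm1}; for \eqref{matrixsymm2}, both rely on the same algebraic identity (the paper writes it as $\frac{b}{(1+b)^2}+\frac{\ol c}{(1+\ol c)^2}=\frac{b+\ol c}{(1+b)^2(1+\ol c)^2}+\frac{4b\ol c+b\ol c(b+\ol c)}{(1+b)^2(1+\ol c)^2}$, which is exactly your numerator decomposition $(x+y)+xy(4+x+y)$), identify $N_1$ with the leading $(x+y)$-piece, and bound the $xy$-remainder $N_2$ directly, using $|xy|=\eps^2|z||p|/|v|^2$ to extract the crucial extra $|v|^{-2}$ decay.
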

\begin{proof}
	We start by proving \eqref{matrixsymm1}. Using $\eps \leq  \frac{1}{a}$, $|v|\geq c>0$ and the definition of $L_1$ (cf. \eqref{L1def}) and $M_1$ (cf. \eqref{defM})  we can bound:
	\begin{align*}
	|\langle V, L_1(\eps,z,p,v) W \rangle| 	\leq 	& \frac{C|P_{v}^\perp V| |P_{v}^\perp W|}{|v|}\left|\frac{1}{1+\eps z/|v|}+ \frac{1}{1+\eps \ol{p}/|v|} \right|  \\
	\leq 	& \frac{C|P_{v}^\perp V| |P_{v}^\perp W|}{|v|}  \left| \frac{1  + \eps |\theta-\omega|}{(1+\alpha(\eps z,v))(1+\alpha(\eps p,v))} \right|.
	\end{align*}
	The decomposition of $L_2$ (defined in \eqref{L2def}) follows from the identity:
	\begin{align} \label{N1N2dec}
	\frac{b}{(1+b)^2} + \frac{\ol{c}}{(1+\ol{c})^2} &= \frac{b+\ol{c}}{(1+b)^2 (1+\ol{c})^2} + 					\left(\frac{4 b \ol{c}}{(1+b)^2 (1+\ol{c})^2}+\frac{b\ol{c}(b+\ol{c})}{(1+b)^2 (1+\ol{c})^2}\right).
	\end{align}
	We insert $b=\frac{\eps z}{|v|}$, $c=\frac{\eps p}{|v|}$ and multiply \eqref{N1N2dec} with $\frac{\pi^2 P_v}{4|v|}$. 
	Then the first term on the right gives $N_1$, so the second gives $N_2$ as defined in \eqref{N2}. The latter is bounded by:
	\begin{align*}
	|N_2|	&\leq \frac{\pi^2 P_v}{4|v|}	\left(\frac{4 b \ol{c}}{(1+b)^2 (1+\ol{c})^2}+\frac{b\ol{c}(b+\ol{c})}{(1+b)^2 (1+\ol{c})^2}\right) \\ 	
	&\leq \frac{C}{|v|^3} 	\left| \frac{\eps^2 |p| |z|+ \eps^3 |p| |z| (a+|\theta-\omega|)}				{(1+\alpha(\eps z,v))^2(1+\alpha(\eps p,v))^2} \right| 
	\leq  \frac{C}{|v|^3} \frac{\eps^2 |p| |z| + \eps^2|p| |z| (1+\eps|\theta-\omega|)}{(1+\alpha(\eps 			z,v))^2(1+ \alpha(\eps p,v))^2}. 
	\end{align*}
	This proves estimate \eqref{matrixsymm2}.
\end{proof}
Our goal is to prove estimates for the functional $Q^{\alpha,\alpha}_{\eps,A}[f_\eps]$. This will be done
estimating $\Lambda_s$ as defined in \eqref{Lambdas}, which is given by $L_1$, $L_2$ (cf. \eqref{L1def}, \eqref{L2def}). 
We have decomposed $L_1+L_2 = L_1+N_1+N_2$, and Lemma \ref{matrixsymm} gives estimates for $L_1$ and $N_2$. It remains
to prove an estimate for $N_1$. Here we rely on the additional decay provided by the divergence 
property $f= \nabla \cdot g$ of functions in $\Omega$. Under the divergence assumption we get the following Lemma. 
\begin{lemma} \label{N1lemma}
	Let $N_1$ be given by \eqref{N1}. Let $h= \nabla \cdot G$, where $G\in H^1_{\tilde{\lambda}}$, $|G(v)|\leq R_1 e^{-\frac12 |v|}$. 
	For $a>0$, $\eps \in (0,\frac1{a}]$, $z=a+i\omega ,p= a+i\theta \in \Complex$ we have:
	\begin{equation} \label{N1est}
	\begin{aligned}
	\left| \int \langle V, N_1(\eps,z,p,v-v')W \rangle h(v') \eta(|v-v'|^2) \ud{v'}\right| 		
	\leq  	\frac{C R_1 |V||W|(1+\eps |\omega-\theta|)}{(1+|v|^3) (1+\alpha(\eps z,v))^2(1+ \alpha(\eps p,v))^2}. 
	\end{aligned}	
	\end{equation}
\end{lemma}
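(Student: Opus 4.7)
\smallskip

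\noindent\textbf{Proof plan for Lemma \ref{N1lemma}.} The central idea is to use the divergence structure $h=\nabla\cdot G$ to gain one extra factor of $1/|v|$ compared with the naive pointwise bound $|N_1|\lesssim (1+\eps|\omega-\theta|)/|v|^2$. Integration by parts in the $v'$-variable yields
\begin{equation*}
\int \langle V, N_1(\eps,z,p,v-v')W \rangle h(v')\,\eta(|v-v'|^2)\ud{v'}=-\int \nabla_{v'}\!\Bigl(\langle V, N_1(\eps,z,p,v-v')W \rangle\,\eta(|v-v'|^2)\Bigr)\cdot G(v')\ud{v'},
\end{equation*}
with no boundary contribution thanks to the exponential decay of $G$.

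Next I would estimate the gradient, writing $N_1(\eps,z,p,w)=\frac{\eps(a+i(\theta-\omega))}{|w|^2}\,\frac{P_w}{(1+\eps z/|w|)^2(1+\eps\bar p/|w|)^2}$. The factor $P_w$ is zero-homogeneous and the $|w|^{-2}$ prefactor is smooth away from $w=0$, so differentiation in $w$ produces a factor of $|w|^{-1}$; differentiation of the $\eta$-cutoff keeps $|v-v'|$ in a bounded annulus, where the same bound holds. The delicate point is the denominator: each derivative of $(1+\eps z/|w|)^{-2}$ contributes $2\eps z/|w|^{2}\cdot(1+\eps z/|w|)^{-3}$, which a priori worsens the frequency factor from $(1+\alpha(\eps z,w))^{-2}$ to $(1+\alpha(\eps z,w))^{-3}$. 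However, using $\eps a\le 1$, one has $\eps|z|/|w|\le 1/|w|+\alpha(\eps z,w)$, so one power of $(1+\alpha)$ is recovered. After this bookkeeping,
\begin{equation*}
\bigl|\nabla_{v'}\bigl(\langle V, N_1 W\rangle\,\eta(|v-v'|^2)\bigr)\bigr|\le\frac{C|V||W|\,(1+\eps|\omega-\theta|)}{|v-v'|^{3}(1+\alpha(\eps z,v-v'))^{2}(1+\alpha(\eps p,v-v'))^{2}}\,\tilde\eta(|v-v'|),
\end{equation*}
where $\tilde\eta$ is a smooth cutoff supported away from $v'=v$.

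Finally I would combine this pointwise estimate with $|G(v')|\le R_1 e^{-|v'|/2}$ and integrate in $v'$. Splitting the region of integration into $|v'|\le|v|/2$ (where $|v-v'|\ge |v|/2$) and $|v'|\ge|v|/2$ (where the exponential factor gives decay better than any polynomial in $|v|$), together with the cutoff $\tilde\eta$ removing the singularity at $v'=v$, yields
\begin{equation*}
\int \frac{R_1 e^{-|v'|/2}}{|v-v'|^{3}}\,\tilde\eta(|v-v'|)\ud{v'}\le \frac{C R_1}{1+|v|^{3}},
\end{equation*}
and the frequency factors can be pulled out since $(1+\alpha(\eps z,v-v'))^{-2}\le C(1+\alpha(\eps z,v))^{-2}$ for $v'$ in the effective support (with an analogous statement for $p$), due to $|v-v'|$ and $|v|$ being comparable in the dominant region of integration. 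Combining these ingredients yields~\eqref{N1est}. The main obstacle is the careful bookkeeping of the frequency factors under differentiation of the $(1+\eps z/|w|)$-denominators; everything else is routine.
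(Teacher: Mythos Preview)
Your proposal is correct and follows exactly the same approach as the paper: integrate by parts using $h=\nabla\cdot G$, estimate $|\nabla_{v'}(N_1\eta)|$ pointwise, and then integrate against $|G(v')|\le R_1 e^{-|v'|/2}$. The paper condenses what you call the ``delicate bookkeeping'' of the frequency factors into the single sentence ``explicitly computing the derivative of $N_1$,'' but your more detailed treatment of how differentiating $(1+\eps z/|w|)^{-2}$ recovers a power of $(1+\alpha)$ is precisely the mechanism behind that computation.
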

\begin{proof}
	We simply use that $h= \nabla \cdot G$ is a divergence and write:
	\begin{align} \label{N1convest}
	\int_{\Reals^3} N_1(\eps,z,p,v-v') \eta h(v')  \ud{v'} 	
	= 		-\int_{\Reals^3} \nabla_{v'} \left(N_1(\eps,z,p,v-v') \eta(|v-v'|^2) \right) G(v')	\ud{v'}.
	\end{align}
	Explicitly computing the derivative of $N_1$ as defined in \eqref{N1} gives:
	\begin{align*}
	|\nabla_{v} \left(N_1(\eps,z,p,v) \eta(|v|^2)\right)| \leq C\frac{1 + \eps |\theta -\omega|}{(1+|v|^3)(1+\alpha(\eps z,v))^2(1+ \alpha(\eps p,v))^2} .
	\end{align*}
	Now plugging the assumption $|G(v)|\leq R_1 e^{-\frac12|v|}$ into \eqref{N1convest} gives the claim.
\end{proof}
\begin{lemma} \label{Ystable}
	For $A>0, n \in \Naturals$, $n\geq 2$, $R,\delta,\eps>0$ and all $(f,g) \in \Omega^n_{A,R,\delta,\eps}$ 
	we have: 
	\begin{equation} \label{lapminest}
	\begin{aligned}
	\left|\int_0^\infty e^{-is\tau }f (s,v) \ud{s}\right| &\leq C(A)  \min\{Y_{R,\eps,\delta}(\tau),\frac{R }{(1+\eps |\tau|) (1+|\tau|^2)}\} e^{-\frac12|v|}\\
	\left|\int_0^\infty e^{-is\tau }g (s,v) \ud{s}\right| &\leq C(A) \min\{Y_{R,\eps,\delta}(\tau),\frac{R }{(1+\eps |\tau|) (1+|\tau|^2)}\} e^{-\frac12|v|},
	\end{aligned}
	\end{equation}
	for $\tau \in \Reals$. Here $Y_{R,\eps,\delta}(\tau)$ is the function defined in \eqref{Ydef}.
\end{lemma}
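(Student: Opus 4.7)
The plan is to reduce the claim to the Laplace-transform bounds on the line $\Re(z)=a=A/2$ that are already available from the definition of $\Omega$, via contour deformation. Note first that since $\supp f\subset[0,1]\times\Reals^3$, we have
\begin{equation*}
\int_0^\infty e^{-is\tau}f(s,v)\ud{s}=\int_0^1 e^{-is\tau}f(s,v)\ud{s}=\La(f)(i\tau,v),
\end{equation*}
and $\La(f)(\cdot,v)$ is an entire function of $z$. The same holds for $g$. I will carry out the argument for $f$; the one for $g$ is identical.

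First I would write the key identity
\begin{equation*}
\La(f)(i\tau,v) \;=\; \La(f)(a+i\tau,v) \;+\; \int_0^1 e^{-is\tau}\bigl(1-e^{-as}\bigr)f(s,v)\ud{s},
\end{equation*}
which simply shifts the contour by $a$. The first term is controlled by the two hypotheses \eqref{gaprio1}, \eqref{gaprio2}: using the elementary comparisons $|a+i\tau|\leq a+|\tau|\leq C(A)(1+|\tau|)$, $|a+i\tau|^2\geq \tau^2$, and (for $\eps\leq 1/a$) $|1+\eps(a+i\tau)|\geq \tfrac{1}{\sqrt{2}}(1+\eps|\tau|)$, one checks that $Y_{R,\eps,\delta}(a+i\tau)\leq C(A) Y_{R,\eps,\delta}(\tau)$ and $\tfrac{1}{|1+\eps(a+i\tau)|(1+|a+i\tau|^2)}\leq \tfrac{C(A)}{(1+\eps|\tau|)(1+\tau^2)}$. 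This already produces the claimed right-hand side for the first term.

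For the correction $C(\tau,v):=\int_0^1 e^{-is\tau}(1-e^{-as})f(s,v)\ud{s}$, I would exploit two facts: (i) the pointwise derivative bound \eqref{dercond}, combined with the continuity of $f$ in $t$ (from $\partial_t f\in L^\infty$) and $\supp f\subset [0,1]$, forces $f(1,v)=0$ and $|f(s,v)|\leq (1-s)Re^{-|v|/2}$; (ii) the weight $(1-e^{-as})$ vanishes at $s=0$. Then the direct triangle inequality gives $|C(\tau,v)|\leq C(A)Re^{-|v|/2}$, which handles the regime of bounded $|\tau|$. For larger $|\tau|$ one integration by parts in $s$ has no boundary contribution (since $(1-e^{-as})|_{s=0}=0$ and $f(1,v)=0$), producing
\begin{equation*}
C(\tau,v)=\frac{1}{i\tau}\int_0^1 e^{-is\tau}\bigl[ae^{-as}f(s,v)+(1-e^{-as})\partial_sf(s,v)\bigr]\ud{s},
\end{equation*}
with pointwise integrand bounded by $C(A)Re^{-|v|/2}$, hence $|C(\tau,v)|\leq C(A)Re^{-|v|/2}/(1+|\tau|)$.

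Finally, taking the minimum of the direct and integration-by-parts estimates and combining with the bound on $\La(f)(a+i\tau,v)$ yields the two claimed pointwise inequalities; taking the minimum between them gives the lemma. The main obstacle I foresee is sharpening the correction estimate from $1/(1+|\tau|)$ to something of order $1/(1+\tau^2)$ (which the target $R/((1+\eps|\tau|)(1+\tau^2))$ seems to require for moderate $\tau$). My plan is to perform a second integration by parts by regarding $\partial_s^2 f$ not pointwise (which is not available from \eqref{dercond}) but in integrated form: the decay of $\La(f)(a+i\tau,v)$ in $\tau$ given by \eqref{gaprio2} controls $\int e^{-is\tau}\partial_s^2 f(s,v)\ud{s}$ through the identity $\La(\partial_s^2 f)(a+i\tau,v)=(a+i\tau)^2\La(f)(a+i\tau,v)-(a+i\tau)f(0,v)-\partial_sf(0,v)$, where $f(0,v)$ and $\partial_s f(0,v)$ are controlled by Fourier inversion from the $L^1$ bound on $\La(f)(a+i\tau)$ and $\La(\partial_s f)(a+i\tau)$. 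This integrated smoothness is exactly what closes the gap and yields the stated $(1+\tau^2)^{-1}$ decay.
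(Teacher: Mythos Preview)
Your contour-shift decomposition $\La(f)(i\tau)=\La(f)(a+i\tau)+C(\tau)$ with $C(\tau,v)=\int_0^1 e^{-is\tau}(1-e^{-as})f(s,v)\ud{s}$ is natural, and the first summand is correctly controlled. The gap lies entirely in the correction term, and it affects \emph{both} target bounds.

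From the pointwise hypotheses you invoke---only $|f|\le Re^{-|v|/2}$, $|\partial_t f|\le Re^{-|v|/2}$, and the support condition---one integration by parts yields at best $|C(\tau,v)|\le C(A)R\,e^{-|v|/2}/(1+|\tau|)$. This falls short of the second bound, which demands $R/((1+\eps|\tau|)(1+\tau^2))$: you need quadratic decay \emph{and} the extra $(1+\eps|\tau|)^{-1}$ factor, neither of which can emerge from pointwise data carrying no $\eps$. It also falls short of the $Y_{R,\eps,\delta}$ bound: near $\tau=0$ one has $Y_{R,\eps,\delta}(\tau)\sim\delta+R\eps|\tau|$, which is small, while your estimate on $C$ only gives $C(A)R$. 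You do not address this smallness requirement at all.

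Your proposed remedy for the quadratic decay is circular. After a second integration by parts the problematic integral is $\int_0^1 e^{-is\tau}(1-e^{-as})\partial_s^2 f\,\ud{s}=\La(\partial_s^2 f)(i\tau)-\La(\partial_s^2 f)(a+i\tau)$. Inserting $\La(\partial_s^2 f)(z)=z^2\La(f)(z)-zf(0)-\partial_s f(0)$ on both lines and dividing by $(i\tau)^2$ simply reproduces $\La(f)(i\tau)$ on the right-hand side; the identity collapses to a tautology and yields no new bound on $\La(f)(i\tau)$.

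The paper bypasses the correction term altogether. Since $\kappa_2\equiv 1$ on $\supp f\subset[0,1]$, one writes
\[
\int_0^\infty e^{-is\tau}f(s,v)\,\ud{s}
=\int_\Reals e^{-is\tau}\bigl[e^{-as}f(s,v)\bigr]\bigl[\kappa_2(s)e^{as}\bigr]\ud{s}
=\tfrac{1}{2\pi}\bigl(\La(f)(a+i\cdot,v)*F\bigr)(\tau),
\]
where $F(\omega)=\int_\Reals e^{-is\omega}\kappa_2(s)e^{as}\ud{s}$ is the Fourier transform of a fixed $C^\infty_c$ function, hence Schwartz. Convolving the rational profiles $Y_{R,\eps,\delta}$ and $R/((1+\eps|\cdot|)(1+|\cdot|^2))$ with a Schwartz function preserves them up to a constant depending only on $A$, uniformly in $\eps,\delta,R$. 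Thus both bounds transfer from $\Re(z)=a$ to $\Re(z)=0$ in one step, with the smallness parameters intact. This multiplicative trick, not a contour shift, is the missing idea.
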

\begin{proof}
	By definition of $\Omega^n_{A,R,\delta,\eps}$ (see \eqref{omegadef}) for $\Re(z)=a$ there holds:
	\begin{equation} \label{lapminest1}
	\begin{aligned}
	|\La(f)(z,v)| &\leq \min\{Y_{R,\eps,\delta}(z),\frac{R }{(1+\eps |z|) (1+|z|^2)}\} e^{-\frac12|v|} \\
	|\La(g)(z,v)| &\leq \min\{Y_{R,\eps,\delta}(z),\frac{R }{(1+\eps |z|) (1+|z|^2)}\} e^{-\frac12|v|}.
	\end{aligned}
	\end{equation}
	Notice that the estimate is the same for $f$ and $g$. We rewrite the left-hand side  of \eqref{lapminest} as:
	\begin{align*}
	\left|\int_0^\infty e^{-is\tau }f (s,v) \ud{s}\right| &= \left|\int_0^\infty e^{-is\tau }e^{-as}
	f(s,v) \kappa_{2}(s)e^{as} \ud{s} \right|\\
	&= \frac{1}{2\pi} \left| \int_\Reals \La(f)(a+i(\tau -\omega))  F(\tau-\omega)  											\ud{\omega} \right|,
	\end{align*}
	where $F(\omega)=  \int_\Reals  e^{-is\omega  }\kappa_{2}(|s|)e^{as} \ud{s}  $. 
	The function $F$ is the Fourier	transformation of a fixed Schwartz function, hence decays faster
	than any polynomial. For the rational function $Y_{R,\eps,\delta}$ defined in \eqref{Ydef}, a straightforward computation shows $|Y_{R,\eps,\delta}* F|\leq C |Y_{R,\eps,\delta}|$ with  $C>0$ independent of $\eps \in (0,\frac{1}{a}]$ and $R,\delta>0$.
\end{proof}

\begin{lemma} \label{symmkernel}
	Let $A\geq 1$, $n\geq 2$, $\alpha$ a multi-index with $|\alpha|\leq n$ and $c>0$ arbitrary be given. 
	There exists $\delta_0(c,A,n)>0$ such that for all $\delta \in (0,\delta_0]$ and $R>0$, 
	we can estimate:
	\begin{align}
	|Q^\alpha_{\eps,A}[f](u)| \leq  \sum_{\beta\leq \alpha} \binom{\alpha}{\beta} |Q^{\alpha,\beta}_{\eps,A}[f](u)|	&\leq  c D^{\alpha}_{\eps,A}(u) +  \|u\|_{V^n_{A,\lambda}}^2, \label{freethm}
	\end{align}
	for all $(f,g) \in \Omega^n_{A,R,\delta,\eps}$, when $0<\eps\leq \eps_0(\delta,R,A,c,n)$ is small.	
\end{lemma}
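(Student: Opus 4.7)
\textbf{Proof plan for Lemma \ref{symmkernel}.}

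The plan is to rewrite $Q^{\alpha,\beta}_{\eps,A}[f](u)$ in Laplace variables via Lemma~\ref{freezrep}, so that each term becomes a triple integral in $\omega = \Im z$, $\theta = \Im p$, and $v$, with an integrand involving the kernel $\Lambda[f](\eps z,\omega-\theta,v)$ (cf.~\eqref{Lambdadef}) sandwiched between derivatives of $\La u$. The top-order contribution $\beta=\alpha$ is the only delicate one, since for $\beta<\alpha$ Lemma~\ref{derlemma} already provides the extra spatial decay $(1+|v|)^{-|\alpha-\beta|}$ on the kernel, so the pointwise matrix bound \eqref{matrixest3} type argument followed by Young's inequality with \eqref{gaprio1}--\eqref{gaprio2} (and Lemma~\ref{Ystable}) immediately yields
\[
\sum_{\beta<\alpha}|Q^{\alpha,\beta}_{\eps,A}[f](u)| \leq \tfrac{c}{2}\,D^\alpha_{\eps,A}(u) + C\,\|u\|^2_{V^n_{A,\lambda}}.
\]

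For the top-order term, I would first exploit the symmetry of the matrix $K$, together with the fact that complex conjugation interchanges the roles of $z$ and $\overline{p}$ in the Plancherel-type identity. As in the proof of Lemma~\ref{realpart}, this allows us to replace $\Lambda[f]$ by the symmetrized kernel $\Lambda_s[f]$ from \eqref{Lambdas}, at the cost of error terms in which a derivative falls on the weight $\lambda$ (producing a factor $P_v\nabla\lambda$, and in particular an anisotropic projection $P_v$). The symmetrized kernel decomposes as $\Lambda_s = \Lambda_{L_1} + \Lambda_{N_1} + \Lambda_{N_2}$ following the splitting of Definition~\ref{defkernels}, and each piece is handled by a different mechanism:
\begin{itemize}
\item $\Lambda_{L_1}$: apply the pointwise estimate \eqref{matrixsymm1}, which only activates the $P_v^\perp$-projection of the test vectors, matching the $|\cdot|_v$ part of $B_1$ in the dissipation.
\item $\Lambda_{N_2}$: apply \eqref{matrixsymm2}; this kernel carries an intrinsic smallness factor $\eps^2|p||z|$ and thus no further cancellation is needed.
\item $\Lambda_{N_1}$: use the defining property $f=\nabla\cdot g$ of $\Omega^n_{A,R,\delta,\eps}$ to integrate by parts, and apply Lemma~\ref{N1lemma}, which produces the extra decay $(1+|v|)^{-3}$.
\end{itemize}

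Now combine these matrix bounds with Lemma~\ref{Ystable}: both $\bigl|\int e^{-is\tau}f(s,v)\,ds\bigr|$ and the analogous quantity for $g$ are bounded by $C(A)\,Y_{R,\eps,\delta}(\tau)\,e^{-|v|/2}$. Inserting this and the matrix bounds into the triple integral, the top-order term is controlled (up to a Cauchy--Schwarz/Young convolution in the frequency variables $(\omega,\theta)$) by
\[
\int\!\!\int\!\!\int Y_{R,\eps,\delta}(\omega-\theta)\,e^{-|v'|/2}\,
\Theta(z,v)\bigl|\nabla D^\alpha\La u(z,v)\bigr|_v
\bigl|\nabla D^\alpha\La u(p,v)\bigr|_v\,\lambda(v)\,dv\,d\theta\,d\omega
\]
with $\Theta$ a polynomial factor in $(1+\eps|\theta-\omega|)$ absorbed using the $\eps$-dependence of $Y$. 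After one more application of Young's inequality in $\omega-\theta$ the kernel $Y_{R,\eps,\delta}$ contributes only through $\|Y_{R,\eps,\delta}\|_{L^1(\Reals)}$ (against an appropriate polynomial weight), while the two remaining factors each collapse into $D^\alpha_{\eps,A}(u)$. The decisive point is that, with the definition \eqref{Ydef},
\[
\|Y_{R,\eps,\delta}\|_{L^1(\Reals)} \;\lesssim\; \delta \,+\, R\eps,
\]
so the whole top-order contribution is bounded by $C(A,n)(\delta + R\eps)\,D^\alpha_{\eps,A}(u) + C\,\|u\|^2_{V^n_{A,\lambda}}$. Choosing $\delta_0(c,A,n)$ so that $C\delta_0 \leq c/4$ and then $\eps_0(\delta,R,A,c,n)$ so that $CR\eps_0 \leq c/4$ gives \eqref{freethm}.

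The main obstacle is the bookkeeping in the third paragraph: matching the anisotropic weights $C_1, C_2$ of $B_1$ to the projections $P_v,P_v^\perp$ arising from $L_1$, $N_1$, $N_2$, while tracking the factors $\alpha(\eps z,v), \alpha(\eps p,v), (1+\eps|\theta-\omega|)$ carefully enough that the $Y_{R,\eps,\delta}$ factor is indeed integrable in $\omega-\theta$ against the polynomial weight that survives the Young step. The $N_1$ piece is the most delicate, because without the divergence structure of $f$ it would have no room to spare; the fact that Lemma~\ref{N1lemma} trades one derivative for an extra $(1+|v|)^{-1}$ is precisely what closes the estimate.
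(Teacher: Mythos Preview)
Your proposal is correct and follows essentially the same approach as the paper: Laplace representation via Lemma~\ref{freezrep}, symmetrization of the top-order term to $\Lambda_s$, the decomposition $L_1+N_1+N_2$ handled respectively by \eqref{matrixsymm1}, Lemma~\ref{N1lemma} (using $f=\nabla\cdot g$), and \eqref{matrixsymm2}, and then Young in $\omega-\theta$ against the $L^1$-smallness of $Y_{R,\eps,\delta}$, while the lower-order pieces $\beta<\alpha$ and the $\nabla\lambda$-terms are handled as in Lemma~\ref{coercivitylemma}. One small correction: $\int_\Reals Y_{R,\eps,\delta}\,d\tau \lesssim \delta + R\eps^{1/2}$ (the second piece carries a $\log(1/\eps)$ factor, cf.~\eqref{loggain}), not $\delta + R\eps$, but this is immaterial since you only need smallness for $\eps\leq \eps_0(R)$.
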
 
\begin{proof}
	Fix $A\geq 1$ and $n \in \Naturals$, $n\geq 2$ and $c>0$ as in the assumption.
	We first estimate the highest order term $\beta=\alpha$ in the quadratic form $Q$. 
	We start our estimate from the representation in Lemma \ref{freezrep} (we write $\nabla={}^\gamma \nabla$ for shortness):
	\begin{align}
	(2\pi)^\frac12 Q^{\alpha,\alpha}_{\eps,A}[\nu](u)	
	=	&\int_{\Reals} \int_\Reals \int \langle \lambda   \La( \nabla D^\alpha u) (z), \Lambda(\eps z, 				\omega-\theta) \La(\nabla D^\alpha u)(p)\rangle \ud{v}\ud{\theta} \ud{\omega} \label{est3}\\
	+	&	\int_{\Reals} \int_\Reals \int \langle  \La(D^\alpha u) (z) \nabla (\lambda), \Lambda(\eps z, 			\omega-\theta) \La(\nabla D^\alpha u)(p)  \rangle \ud{v}\ud{\theta} 	\ud{\omega} \label{est4} \\
	-	&\int_{\Reals} \int_\Reals \int \langle  \nabla(\La( D^\alpha u) \lambda)(z), \nabla \Lambda(\eps 			z, \omega-\theta) \La(D^\alpha u)(p) \rangle \ud{v}\ud{\theta} 	\ud{\omega} \label{est5} \\
	=	& J_1 + I_3 +J_2 \notag.
	\end{align}	
	We start with estimating the critical term $J_1$. We can symmetrize in $p,z$, and replace $\Lambda$ by $\Lambda_s$ 
	as introduced in Definition \ref{defkernels}. The symmetrization gives (for shortness write $V=\La(\nabla D^\alpha u )$):
	\begin{equation}	 \label{I1I2def}
	\begin{aligned} 
	J_1	=	&\int_{\Reals} \int_\Reals \int \lambda \langle V(z,v), \Lambda(\eps z, \omega-\theta,v) V (p,v) 				\rangle \ud{v}\ud{\theta} 			\ud{\omega}  \\
	=			&\int_{\Reals} \int_\Reals \int \lambda \langle V(z,v), (\Lambda_1+\Lambda_2) V(p,v) 									\rangle 				\ud{v}\ud{\theta}		\ud{\omega}  =  I_1 +I_2.
	\end{aligned}
	\end{equation}
	We estimate $I_1$ using the estimate on $L_1$ in \eqref{matrixsymm1} and use Lemma \ref{Ystable} to bound $\La(f)$:
	\begin{align} 
	|I_1| \leq 	&C \int_{\Reals} \int_\Reals \int \int \lambda 	\frac{|P_{(v-v')}^\perp V(z,v)||P_{(v-v')}^\perp V(p,v)|}{|v-v'|} \frac{(1+ \eps |\theta-\omega|)|\La(f)(i(\theta-\omega),v')|\eta}{(1+\alpha(\eps z,v-v'))(1+ \alpha(\eps p,v-v'))} \notag \\
	\leq 	&C(A) 	\int_{\Reals} \int_\Reals \int  	\frac{\lambda |V(z,v)|_v |V(p,v)|_v}{(1+|v|)(1+\alpha(\eps z,v))(1+ \alpha(\eps p,v))}  \frac{R \eps |\theta-\omega|} {(1+ \eps|\theta-\omega|) (1+|\theta-\omega|)^2} \label{I1estim} \\
	+ 		&C(A)  \int_{\Reals} \int_\Reals \int  	\frac{\lambda |V(z)|_v |V(p)|_v}{(1+|v|)(1+\alpha(\eps p,v))(1+ \alpha(\eps z,v))}  Y_{R,\eps,\delta}(\theta-\omega) \notag.
	\end{align}
	Observe the following straightforward integral estimates hold:
	\begin{equation} \label{loggain}
	\begin{aligned}
	\int_{\Reals} \frac{R\eps |\tau|}{(1+\eps|\tau|)(1+|\tau|)^2}\ud{\tau} &\leq C R \eps^\frac12, \quad
	\int_{\Reals} Y_{R,\eps,\delta}(\tau)\ud{\tau} \leq C (\delta + \eps^\frac12 R).
	\end{aligned}
	\end{equation}
	We apply Young's inequality to \eqref{I1estim} and use \eqref{loggain} to obtain a total bound of:
	\begin{align*}
	|I_1|\leq  \int_\Reals \int  	\frac{C(A)\lambda(v) |V(z,v)|_v^2}{(1+|v|)(1+ \alpha(z,v))^2} \ud{\omega} \left(\int_{\Reals}  \frac{R\eps |\tau|}{(1+\eps|\tau|)(1+|\tau|)^2} + Y_{R,\eps,\delta}(\tau)\right) \ud{v} 
	\leq \frac{c}{6} D^{\alpha}_{\eps,A}(u),
	\end{align*}
	for $0<\delta < \delta_0(n,A)$, $0<\eps \leq \eps_0(\delta,R,A,c,n)$ small and $D^{\alpha}_{\eps,A}(u)$
	as defined in \eqref{dissipexplic}. 
	The term $I_2$ (cf. \eqref{I1I2def}) can be controlled similarly. We split $I_2$ further into:
	\begin{align*}
	|I_2| 	\leq			&\left|\int_{\Reals} \int_\Reals \int \langle \lambda V(z,v) (z,v) N_1(\eps,z,p,v) V(p,v)
	\rangle  		\La(f)(i(\theta-\omega),v') \eta	\ud{v'}	\ud{v}\ud{\theta} 									\ud{\omega}	\right| 		 \\
	+		&\left|\int_{\Reals} \int_\Reals \int \langle \lambda V (z,v) (z,v), N_2(\eps,z,p,v) V(p,v) 		\rangle \La(f)(i(\theta-\omega),v') \eta	\ud{v'} \ud{v}\ud{\theta} 									\ud{\omega}	\right| \\
	=& I_{2,1} + I_{2,2}. 
	\end{align*}
	The integral $I_{2,2}$ can be bounded using \eqref{matrixsymm2} and \eqref{loggain} (adapting $0<\delta_0,\eps_0$ if needed):
	\begin{align*}
	|I_{2,2}|	\leq 	&\int_{\Reals} \int_\Reals \int \int \lambda 	\frac{|V(z)||V(p)|}{|v-v'|^3} \frac{\eps^2 |p| |z|+\eps^2|p| |z| (1+\eps|						\theta-\omega|)}{(1+\alpha(\eps z,v-v'))^2(1+ \alpha(\eps p,v-v'))^2}|\La(f)| \eta \ud{v} \ud{v'} \ud{\theta} \ud{\omega}\\
	\leq 	&C(A)\big(\delta+R \eps^\frac12 \big) \int_{\Reals}  \int \lambda 	\frac{|V(z,v)|^2}{1+|v|} \frac{\beta(\eps z, |v|)+\alpha^2(\eps z, |v|)}{(1+\alpha(\eps z,v-v'))^4}  \ud{v} \ud{\omega}  \leq \frac{c}{4} D^{\alpha}_{\eps,A}(u),
	\end{align*}
	where we use that $C_2 \leq C C_1$. It remains to control $I_{2,1}$, which we estimate
	by means of \eqref{N1est}. We obtain:
	\begin{align*}
	|I_{2,1}| 	\leq 		&C(A)\big(\delta + R \eps^\frac12\big) \int_{\Reals} \int_\Reals \int  	\frac{\lambda(v) |V(z,v)|^2}{(1+|v|^3) (1+\alpha(\eps z,v))^2}  \ud{\omega} \ud{v} \leq \frac{c}{4} D^{\alpha}_{\eps,A}(u) .
	\end{align*} 
	Therefore $|J_1|\leq \frac{c}2 D^{\alpha}_{\eps,A}(u)$.
	The remaining terms can be estimated by:
	\begin{align} \label{eq:nonlinlowerord}
	|J_2| + |I_3| + \sum_{\beta<\alpha} \binom{\alpha}{\beta}|Q^{\alpha,\beta}_{\eps,A}[f](u)| \leq \frac{c}{2} D^{\alpha}_{\eps,A}(u) +  \|u\|^2_{V^n_{A,\lambda}}.
	\end{align}
	The estimate for $Q^{\alpha,\beta}_{\eps,A}$, $\beta<\alpha$ can be seen as follows: Let $V,W \in \Complex^3$ be arbitrary. By the definition \eqref{Lambdadef} of $\Lambda$ and the estimate
	for $\La(f)$ in Lemma \ref{Ystable} we can bound $\langle V,\Lambda^\beta W\rangle$ by:
	\begin{align*}
	|\langle V, D^\beta \Lambda[f](z,\tau,v)W\rangle|			=  &\left|\int_{\Reals^3} \int_{0}^\infty  \langle V,D^\beta \left((M_1+M_2)(z,v-v')\eta\right)W \rangle e^{-i\tau s} f(s,v') \ud{s} \ud{v'}\right| \\
	\leq &C(A)\int_{\Reals^3}  \left| \langle V,D^\beta \left((M_1+M_2)(z,v-v')\eta\right)W\rangle \right|   Y_{R,\eps,\delta}(\tau) e^{-\frac12 |v'|} \ud{v'}. 	
	\end{align*}	
	We use Lemma \ref{derlemma} to estimate the velocity integral by:
	\begin{align}\label{derlemma2}
	|\langle V, D^\beta \Lambda[f](z,\tau,v)W\rangle|	&\leq  \frac{C(A)(1+\alpha(\eps z,v))}{(1+|v|)^{|\beta|}} C_1(\eps z,v) |V||W| Y_{R,\eps,\delta}(\tau). 
	\end{align}
	Now we can argue as in the proof of Lemma \ref{coercivitylemma}. The term $J_2$ is estimated as the corresponding term in the proof of Lemma \ref{coercivitylemma},
	using \eqref{derlemma2}.
	For estimating $I_3$ (given by \eqref{est4}), some care is needed. We rewrite $I_3$, integrating by parts (we use the shorthand $W(z,v)=\La( D^\alpha u)(z,v)$):
	\begin{align*} 
	I_3 =	&	-\int_{\Reals} \int_\Reals \int \langle W(z,v) \nabla^2 (\lambda), \Lambda(\eps z,\omega-\theta,v) W(p,v)  \rangle \ud{v}\ud{\theta} 	\ud{\omega}\\
	&-\int_{\Reals} \int_\Reals \int \langle  W(z,v)  \nabla (\lambda), \nabla \cdot\Lambda(\eps z,\omega-\theta,v) W(p,v)  \rangle \ud{v}\ud{\theta} 							\ud{\omega} \\				
	&	-\int_{\Reals} \int_\Reals \int \langle \nabla W(z,v) \otimes \nabla (\lambda), \Lambda(\eps z,\omega-\theta,v) W(p,v)  \rangle \ud{v}\ud{\theta} 	\ud{\omega} . 
	\end{align*}
	Here we use the notation $\langle A, B\rangle = \sum_{i,j} \ol{A}_{i,j} B_{i,j}$ for matrices $A,B$.
	The first two lines are bounded by $\frac12 \|u\|^2_{V^n_{A,\lambda}}$ using \eqref{derlemma2} and the Plancherel Lemma \ref{Plancherel}. The third
	line can be estimated like 	the corresponding $I_3$ in Lemma~\ref{coercivitylemma}.
	The lower order terms $\beta<\alpha$ are estimated
	in the same way using Lemma \ref{anisotrope},
	so we indeed obtain \eqref{eq:nonlinlowerord}.
	Combining all the estimates, we obtain the upper estimate $|Q^\alpha_{\eps,A}[f](u)| \leq c D^{\alpha}_{\eps,A}(u)+\|u\|^2_{V^n_{\eps,A}}$ as claimed.
\end{proof}
We obtain the main result of this subsection, Theorem \ref{thmfreez}, as a Corollary.

\begin{proofof}[Proof of Theorem \ref{thmfreez}]
	We have proved the existence of solutions $u$ to \eqref{contthmeq} in Lemma \ref{lemexistence}.
	We need to show continuity of the mapping $\Psi_{\delta_1}$ and the a priori estimate \eqref{PsiEst}.
	First we use Lemma \ref{lemReduc} to bound the norm of the solution by:
	\begin{align} \label{Redapplied}
	A \|u\|^2_{V^n_{A,\lambda}} 
	\leq  C \|u_0\|^2_{H^n_{\lambda}} -2 \sum_{|\alpha|\leq n} Q^\alpha_{\eps,A}[u_0](u)+ Q^\alpha_{\eps,A}[f](u) .
	\end{align}
	Applying Lemma \ref{coercivitylemma} to $Q^\alpha_{\eps,A}[u_0](u)$ and
	Lemma \ref{symmkernel} to $Q^\alpha_{\eps,A}[f](u)$  we find that for $A>0$ and $\delta>0$ sufficiently small, $R>0$ and $\eps>0$ small enough we have:
	\begin{align*}
	& Q^\alpha_{\eps,A}[u_0](u)+ Q^\alpha_{\eps,A}[f](u)	\geq \frac{c}{2} D^\alpha_{\eps,A} - C \|u\|_{V^n_{A,\lambda}}.	
	\end{align*}
	Plugging this back into \eqref{Redapplied} we find $A,\delta>0$ such that for all $R>0$ and $\eps>0$
	small we have, independently of $0<\gamma\leq 1$:
	\begin{align} \label{apriouV}
	\|u\|_{V^n_{A,\lambda}} \leq \|u_0\|_{H^n_{\lambda}}.	
	\end{align}
	Now define $U:= \int_0^t A^{0,0}_\gamma[u_0+f](u) $ ($\A$ as in Notation \ref{Adef}). Then by equation \eqref{contthmeq} we have $(u - u_0) = \nabla \cdot U$ .
	Using Lemma \ref{Mlemma} we write:
	\begin{equation} 
	\begin{aligned} \label{contKP}
	\La(\partial_t U)(z,v) =	&\int (M_1+M_2)(\eps z,v-v')  \eta \La \left((u_0(v')+f(\cdot,v')) {}^\gamma \nabla u(\cdot,v)\right)(z) \ud{v'} \\
	-	&\int {}^\gamma \nabla \cdot (M_1+M_2)(\eps z,v-v')  \eta \La \left((u_0(v')+f(\cdot,v'))  u(\cdot,v)\right)(z) \ud{v'}. 
	\end{aligned}
	\end{equation}
	Now $M_1$, $M_2$ as well as their derivatives are bounded. Further Lemma \ref{Ystable} and \eqref{loggain} imply:
	\begin{align} \label{thmfreezf}
	\|\La(f)(z,v)\|_{L^1_{\Re(z)=0}}\leq C(A) (\delta+R\eps^\frac12) e^{-\frac12 |v|}.
	\end{align}
	Hence for $\delta>0$ and $\eps(A,R)>0$ sufficiently small, combining \eqref{contKP}, \eqref{thmfreezf}, and the Plancherel
	Lemma \ref{Plancherel} gives the desired estimate for $U$ in \eqref{PsiEst}.
	Plugging this back into \eqref{contthmeq} gives \eqref{PsiEst}:
	\begin{align*}
	\|((u-u_0) \kappa_{\delta_1},U\kappa_{\delta_1}) \|_{X^n_{A,\lambda}}+\|\partial_t ((u-u_0) \kappa_{\delta_1},U\kappa_{\delta_1}) \|_{X^{n-2}_{A,\lambda}}\leq C.
	\end{align*}	
	It remains to show continuity of the operator $\Psi_{\delta_1}$ for positive $\gamma,\eps$. Let 
	$(f_i,g_i) \in \Omega^n_{A,R,\delta,\eps}$, $i=1,2$ and $u_1,u_2 $ the corresponding solutions
	to \eqref{contthmeq}. For shortness write 
	\begin{align*}
	K_i= \frac1\eps K[u_0+f_i(s)]\left(\frac{t-s}\eps,v\right), \quad P_i=\frac1\eps P[u_0+f_i(s)]\left(\frac{t-s}\eps,v\right).	
	\end{align*} Then the difference $u_1-u_2$ satisfies $(u_1(0)-u_2(0))=0$ and:
	\begin{align} \label{Gronwall} 
	\partial_t(u_1-u_2) = 	& {}^\gamma \nabla \cdot \left(\int_0^t K_1  {}^\gamma \nabla u_1(s,v) - P_1 u_1 - K_2 {}^\gamma \nabla u_2(s,v) + P_2 u_2  \ud{s}\right).
	\end{align}
	For $m\in \Naturals$ arbitrary, $\|K[f]\|_{L^2([0,1];C^m(\Reals^+;\Reals^3))}+ \|P[f]\|_{L^2([0,1];C^m(\Reals^+;\Reals^3))} \leq C \|(f,g)\|_{X^n_{A,\lambda}} $
	are continuous. Recalling that ${}^\gamma \nabla$ are mollifying operators, the continuity of $\Psi_{\delta_1}$ now follows from \eqref{Gronwall} by Gronwall's Lemma.  
\end{proofof}

\subsection{Invariance of the set $\Omega$ under the mapping $\Psi$} \label{Sec:Boundary}

\subsubsection{Recovering the quadratic decay in Laplace variables}
In the last subsection we have shown that for $(f_\eps,g_\eps) \in \Omega^n_{A,R,\delta,\eps}$ as defined in \eqref{omegadef}, the equation 
\begin{equation} \label{fpeq}
\begin{aligned}
\partial_t u_\eps 	= 	&\frac1\eps {}^\gamma \nabla \cdot \left(\int_0^t K[u_0+f_\eps(s)]\left(\frac{t-s}\eps,v\right)  {}^\gamma \nabla u_\eps(s,v) \ud{s}\right) \\
-  	&\frac1\eps {}^\gamma \nabla \cdot \left(\int_0^t P[u_0+f_\eps(s)]\left(\frac{t-s}\eps,v\right) u_\eps(s,v)  \ud{s}\right)  \\ 
u_\eps(0,\cdot)	&= u_0(\cdot),
\end{aligned}
\end{equation}
has solutions in $X^n_{A,\lambda}$. The goal of this section is to show that the associated solution operator $\Psi_{\delta_1}$ 
defined in \eqref{contthmeq} leaves the set $\Omega^n_{A,R,\delta,\eps}$ (cf. \eqref{omegadef}) invariant. More precisely, we will prove the following theorem.
\begin{theorem} \label{thmbdry} 
	Let $n\geq 6$ and assume $v_0\in H^n_{\lambda}$ satisfies the bounds:
	\begin{align*}
	0 \leq v_0(v) \leq C e^{-\frac12 |v|}.
	\end{align*} 	
	Let $A,\delta>0$ be as in Theorem \ref{thmfreez} and $\Psi_{\delta_1}$ the solution operator to \eqref{fpeq}:
	\begin{align*}
	\Psi_{\delta_1}: \Omega^n_{A,R,\delta,\eps} &\longrightarrow X^n_{A,\lambda} \\
	(f_\eps,g_\eps)		&\mapsto \left((u_\eps-u_0)k_{\delta_1}, \A^{0,0}_\gamma[u_0+f](u_\eps)k_{\delta_1}	\right), \text{$u_\eps$ solves \eqref{fpeq} with $u_0=m+\delta_2 v_0$.}
	\end{align*}
	There exist $\delta_1,\eps_0,R>0$  such that for
	$\delta_2,\eps \in (0,\eps_0]$ and for all $\gamma \in (0,1]$, the set $\Omega^n_{A,R,\delta,\eps}$ is invariant under the mapping $\Psi_{\delta_1}$. 
\end{theorem}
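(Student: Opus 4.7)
The goal is to verify each defining property of $\Omega^n_{A,R,\delta,\eps}$ for the image $(F,G):=\Psi_{\delta_1}(f_\eps,g_\eps)$, where $F=(u_\eps-u_0)\kappa_{\delta_1}$ and $G=U\kappa_{\delta_1}$ with $U=\int_0^t\A^{0,0}_\gamma[u_0+f_\eps](u_\eps)\,ds$. The support condition $\supp F,G\subset [0,1]\times\Reals^3$ is immediate from $\delta_1\leq 1/2$, and $F=\nabla\cdot G$ holds because $\kappa_{\delta_1}$ depends only on $t$. Theorem \ref{thmfreez} gives $\|(F,G)\|_{X^n_{A,\lambda}}+\|\partial_t(F,G)\|_{X^{n-2}_{A,\lambda}}\leq C(A,\delta_1)$, so the norm bound in \eqref{omegadef} is obtained once $R$ is chosen large enough (noting $\tilde{\lambda}\leq\lambda$). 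The pointwise derivative bound $|\partial_tF(t,v)|\leq Re^{-\frac12|v|}$ then follows from the equation \eqref{fpeq} together with the Sobolev embedding $H^{n-2}_\lambda\hookrightarrow L^\infty(e^{-\frac12|v|})$ (available since $n\geq 6$) and the uniform boundedness of $M_1,M_2$ from Lemma \ref{Mlemma}.

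The bulk of the work lies in establishing the Laplace-transform bounds \eqref{gaprio1}--\eqref{gaprio2} on the line $\Re(z)=A/2$. Transforming \eqref{fpeq} using Lemma \ref{Laplacerep} expresses $z\La(u_\eps-u_0)(z,v)$ as a bilinear quantity in $\La(u_\eps)$ and $\La(f_\eps)$ against kernels built from $M_1,M_2$ evaluated at the rescaled frequency $\eps z$. Using the decay assumptions on $\La(f_\eps)$ built into $\Omega^n_{A,R,\delta,\eps}$ together with Lemma \ref{Ystable} and the integral estimates \eqref{loggain} controls the convolution in time induced by the dependence of $K[u_0+f_\eps(s)]$ on $s$. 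Combined with $\|u_\eps\|_{V^n_{A,\lambda}}\leq C$, Plancherel (Lemma \ref{Plancherel}) and the time-localization $\supp(u_\eps-u_0)\subset[0,1]$, this first yields, after dividing by $z$, an estimate $|\La(F)(z,v)|\leq C|z|^{-1}(1+\eps|z|)^{-1}e^{-\frac12|v|}$; feeding this improved bound back into the right-hand side produces the quadratic decay
$$|\La(F)(z,v)|\leq \frac{CR\,e^{-\frac12|v|}}{|1+\eps z|(1+|z|^2)},$$
which is exactly \eqref{gaprio2} after enlarging $R$. The bound on $\La(G)$ follows from $z\La(F)=\La(\partial_tF)$ plus the derivative bound already established.

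Recovering the stronger bound \eqref{gaprio1} with the fixed small prefactor $\delta$ from Theorem \ref{thmfreez} is the main obstacle, and forces a decomposition $u_\eps-u_0=w_\eps^{\mathrm{bl}}+w_\eps^{\mathrm{sm}}$. The boundary-layer $w_\eps^{\mathrm{bl}}$ solves the fully frozen linear problem with $u_0$ replaced by the Maxwellian $m$ in both $K,P$ and the source; it captures the mismatch between $\partial_tu_\eps(0,v)=0$ in \eqref{fpeq} and the nonzero Markovian derivative of the limit equation mentioned after \eqref{cutfixedeq2}. Because the Maxwellian structure makes the Laplace-variable equation almost explicitly invertible via Lemma \ref{Mlemma}, the Laplace transform of $w_\eps^{\mathrm{bl}}$ can be bounded pointwise by $CR\eps|z|\big((1+\eps|z|)(1+|z|^2)\big)^{-1}e^{-\frac12|v|}$, which matches the second summand of $Y_{R,\eps,\delta}$ and requires no small prefactor. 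The smooth remainder $w_\eps^{\mathrm{sm}}$ satisfies an equation whose source is of size $O(\delta_2\|v_0\|+R\eps^{1/2}+\delta)$ by Lemma \ref{symmkernel} and \eqref{loggain}, and which is localized in time by $\kappa_{\delta_1}$; reapplying the bootstrap of the previous paragraph to $w_\eps^{\mathrm{sm}}$ yields the bound $|\La(w_\eps^{\mathrm{sm}})(z,v)|\leq C(\delta_1+\delta_2+R\eps^{1/2})(1+|z|^2)^{-1}e^{-\frac12|v|}$, where the factor $\delta_1$ arises from the $L^1_t$-integrals of length $\delta_1$ produced by the cutoff. The parameters are then chosen in the order $A,\delta$ (from Theorem \ref{thmfreez}), then $R$ large enough to dominate both $C(A,\delta_1)$ and the explicit boundary-layer constant, and finally $\delta_1$ and $\eps_0$ small enough that $C(\delta_1+\delta_2+R\eps^{1/2})\leq\delta$ for all $\delta_2,\eps\in(0,\eps_0]$. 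The genuinely hard step is the explicit construction and pointwise estimation of $w_\eps^{\mathrm{bl}}$ near the Maxwellian with constants independent of $\delta_2$ and $\eps$, which the paper's outline identifies as the most delicate part of the argument.
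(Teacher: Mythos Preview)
Your overall strategy (norm bound via Theorem~\ref{thmfreez}, \eqref{dercond} via the equation and Sobolev, \eqref{gaprio2} by a two-step bootstrap, then a boundary-layer decomposition for \eqref{gaprio1}) is correct and matches the paper's architecture. The bootstrap for \eqref{gaprio2} is essentially Lemmas~\ref{lemLaest}--\ref{decaywithosmall}.

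The gap is in your treatment of \eqref{gaprio1}. The paper's decomposition is \emph{not} ``Maxwellian-frozen solution versus remainder'' but rather a split of the kernel: $u_\eps-u_0=p_\eps+q_\eps$, where $p_\eps$ carries the $M_1$ contribution and $q_\eps$ the $M_2$ contribution (Lemma~\ref{lemmadec}). The piece $q_\eps$ is an \emph{explicit integral} (not a PDE solution) and gets the $\eps|z|$ prefactor directly from the numerator of $M_2(\eps z,\cdot)$ (Lemma~\ref{qlemma}). For $p_\eps$ the paper writes down another explicit integral $B$ (see~\eqref{defBL}) chosen so that $\partial_{tt}(p_\eps-B)$ is uniformly bounded, giving $|\La((p_\eps-B)\kappa_{\delta_1})|\leq C\delta_1(1+|z|^2)^{-1}$ (Lemma~\ref{LaRem}); and $B(\cdot;m)\equiv 0$ because $M_1$ carries $P_{v-v'}^\perp$ while $m(v')\nabla m(v)-\nabla m(v')m(v)$ is parallel to $v-v'$ (Lemma~\ref{mstation}), so $B(\cdot;u_0)=O(\delta_2)$. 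Thus the three pieces $q_\eps$, $p_\eps-B$, $B$ are controlled by three distinct mechanisms ($M_2$ numerator, time cutoff, Maxwellian orthogonality for $M_1$). Your $w_\eps^{\mathrm{bl}}$ is ambiguous: if it is the actual solution of the $m$-frozen linear PDE, Lemma~\ref{Mlemma} does not ``almost explicitly invert'' it---that would require the full coercivity theory of Section~\ref{Sec:Linear} plus a separate argument for the $\eps|z|$ prefactor, which you do not supply. If instead it is the fully frozen explicit integral with $m$ everywhere, then indeed only $M_2$ contributes and the $\eps|z|$ bound holds, but now your $w_\eps^{\mathrm{sm}}$ inherits the difference $q_\eps-w_\eps^{\mathrm{bl}}$, which still carries an $\eps|z|/(1+\eps|z|)$ shape with no small prefactor and therefore cannot be absorbed into $C(\delta_1+\delta_2+R\eps^{1/2})(1+|z|^2)^{-1}$ for $|z|\sim 1/\eps$. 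In either reading the claimed bounds do not close; the $M_1/M_2$ split is the structural ingredient you are missing.
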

As a first step, we will prove estimate \eqref{gaprio2}. Differentiating equation \eqref{fpeq} yields, where $\A_\gamma^{\alpha,\beta}$ is defined in \eqref{curlyA}:
\begin{align}
\partial_t D^\alpha u_\eps &= \sum_{\beta\leq \alpha} \binom{\alpha}{\beta} \nabla \cdot \left(\A_\gamma^{\alpha,\beta}[u_0+f_\eps](u_\eps) \right) \label{fpeqlocal}.
\end{align} 
Therefore in order to characterize the properties of $D^\alpha u_\eps$ in Laplace variables, we first need to understand 
the right-hand side of the above equation in this framework.

\begin{lemma} \label{Alemma}
	Let $n\geq 0$ and $(f,g) \in \Omega^n_{A,R,\delta,\eps}$. Further let $u_0 \in C(\Reals^3)$ 
	satisfy 
	\begin{align*}
	0 \leq u_0(v) \leq C e^{-\frac12 |v|}.
	\end{align*}
	Let  $a= \frac{A}{2}\geq \frac12$, $\gamma \in (0,1]$ and $\beta\leq \alpha$ be multi-indexes with $|\alpha|=m< n$.
	Then for almost every $z\in \Complex$ with $\Re(z)=a$ we can estimate ($\A_\gamma^{\alpha,\beta}$ and $|\cdot|_{F^m}$ as in Notation \ref{Adef}):
	\begin{align*}
	|\La(\A_\gamma^{\alpha,\beta}[u_0](u))(z,v)| 		& \leq 	 \frac{C(A) |u|_{F^{m+1}}}{|1+\eps z|},	\quad	
	|\La(\A_\gamma^{\alpha,\beta}[f](u))(z,v)| 	 \leq C(A)  \frac{Y_{\eps,\delta} *_a |u|_{F^{m+1}}}{|1+\eps z|} . 
	\end{align*}
	Here the convolution $*_a$ is to be understood as ($z=a+i\omega$): 
	\begin{align}\label{aconvol}
	(f*_a g)(a+i\omega)= \int_{\Reals} f(i\theta) g(a+i(\omega-\theta))\ud{\theta}.
	\end{align}
\end{lemma}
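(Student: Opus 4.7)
The plan is to compute the Laplace transform in $t$ of the two pieces of $\A^{\alpha,\beta}_\gamma[\nu](u)$ after the substitution $r=(t-s)/\eps$, reduce the rescaled cosine kernel to $M_1+M_2$ via Lemma \ref{Mlemma}, and estimate the resulting spatial integrals by the pointwise bounds on $D^\beta(M_1+M_2)$ provided by Lemma \ref{derlemma}. The two estimates will differ only in that for $\nu=f$ an extra convolution in the Laplace variable appears, and this is precisely the operation $*_a$ from \eqref{aconvol}.

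Concretely, after inserting the representations from Lemma \ref{Laplacerep} and exchanging the order of integration, I would arrive at
\begin{align*}
\La(\A^{\alpha,\beta}_\gamma[\nu](u))(z,v)
&= D^\beta_v \int (M_1+M_2)(\eps z,v-v')\eta(|v-v'|^2)\, \La_s\big[\nu(\cdot,v')\, {}^\gamma\nabla D^{\alpha-\beta}u(\cdot,v)\big](z)\,\ud{v'}\\
&\quad- D^\beta_v \int (M_1+M_2)(\eps z,v-v')\eta(|v-v'|^2)\, \La_s\big[{}^\gamma\nabla\nu(\cdot,v')\, D^{\alpha-\beta}u(\cdot,v)\big](z)\,\ud{v'}.
\end{align*}
For $\nu=u_0$ time-independent, the inner $s$-Laplace factors trivially. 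Lemma \ref{derlemma} provides a pointwise bound on $D^\beta_v((M_1+M_2)\eta)$ carrying a factor $(1+\alpha(\eps z,v-v'))^{-1}$, and since the $\eta$-cutoff keeps $v'$ bounded away from $v$, integration against $u_0(v')\lesssim e^{-|v'|/2}$ converts this into $(1+\eps|\Im z|)^{-1}\lesssim |1+\eps z|^{-1}$, using that $a=A/2$ is bounded. The Laplace factor on $u$ is dominated by $|u|_{F^{m+1}}(z,v)$, giving the first estimate; for the $P_\gamma$-piece involving ${}^\gamma\nabla u_0$, I would integrate by parts in $v'$ to push the mollified gradient onto the kernel, at the cost of raising $|\beta|$ by one, and then apply Lemma \ref{derlemma} again.

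For $\nu=f$, the $s$-Laplace of the pointwise product $f(\cdot,v')H(\cdot,v)$ on the line $\Re z=a$ is exactly the convolution $*_a$ of $\La(f)(i\cdot,v')$ with $\La(H)(a+i\cdot,v)$, via the Plancherel identity applied to $e^{-as/2}f$ and $e^{-as/2}H$; the compact $t$-support of $f$ built into $X^n_{A,\tilde\lambda}$ guarantees that $\La(f)$ extends to the imaginary axis. Plugging in the bound \eqref{gaprio1} from the definition of $\Omega^n_{A,R,\delta,\eps}$, namely $|\La(f)(i\theta,v')|\leq Y_{R,\eps,\delta}(\theta)e^{-|v'|/2}$, and pulling the $\theta$-convolution $Y_{R,\eps,\delta}*_a |u|_{F^{m+1}}$ (which is independent of $v'$) out of the spatial integral, the remaining $v'$-integral is bounded by $C(A)/|1+\eps z|$ exactly as in the $u_0$-case, yielding the second estimate.

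The main obstacle, beyond careful bookkeeping of $\eps$-factors, will be the Plancherel-type identification of $\La_s(fH)$ with the $*_a$-convolution under our regularity and support hypotheses; once this is in place, the estimates follow directly from Lemma \ref{derlemma}. A secondary technical point is the $\gamma$-uniform treatment of the mollified gradient ${}^\gamma\nabla u_0$ when $u_0$ is merely continuous, which as noted is resolved by integration by parts in $v'$ onto $(M_1+M_2)\eta$.
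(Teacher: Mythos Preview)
Your approach is essentially the one the paper has in mind; its own proof is a single line invoking ``elementary properties of the Laplace transform, Lemma~\ref{Ystable} and the defining formula \eqref{curlyA}'', and you have correctly unpacked these ingredients: the convolution-in-$t$ structure of $\A^{\alpha,\beta}_\gamma$ produces $(M_1+M_2)(\eps z,\cdot)$ after Laplace transform, Lemma~\ref{derlemma} supplies the pointwise kernel bound with the factor $(1+\alpha(\eps z,v-v'))^{-1}\lesssim |1+\eps z|^{-1}$ on the support of $\eta$, and the product in $s$ becomes the $*_a$-convolution.

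There is one small but genuine slip. You write that \eqref{gaprio1} gives $|\La(f)(i\theta,v')|\leq Y_{R,\eps,\delta}(\theta)e^{-|v'|/2}$, but \eqref{gaprio1} is stated only on the line $\Re(z)=a$, not on the imaginary axis. Compact support of $f$ in $t$ does guarantee that $\La(f)$ extends analytically to $\Re(z)\geq 0$, but it does not by itself transfer the \emph{bound}. This is exactly what Lemma~\ref{Ystable} does (and why the paper cites it): one writes $\int_0^\infty e^{-is\tau}f(s,v')\,\ud s$ as a convolution of $\La(f)(a+i\cdot,v')$ with the Fourier transform of a fixed Schwartz cutoff, and checks that the rational profile $Y_{R,\eps,\delta}$ is stable under such convolution, picking up only a factor $C(A)$. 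So in your argument, replace the direct appeal to \eqref{gaprio1} on the imaginary axis by an invocation of Lemma~\ref{Ystable}; this also explains the constant $C(A)$ in the statement. The rest of your outline is fine.
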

\begin{proof}
	Is a direct consequence of elementary properties of the Laplace transform, Lemma
	\ref{Ystable} and the defining formula \eqref{curlyA} of $\A_\gamma^{\alpha,\beta}$.
\end{proof}
\begin{lemma} \label{cutofflaplace}
	Let $u\in V^n_{A,\lambda}$ for $n\geq 2$. 
	For $a\in (0,1]$ and $\delta_1\in  (0,1]$ we have:
	\begin{align}
	\|\La(u \kappa_{\delta_1})(\cdot,v)\|_{L^\infty_{\Re(z)=a}} &\leq C(a,\delta_1) \|\La(u)(\cdot,v)\|_{L^{2}_{\Re(z)=a}} \label{cutofflaplace1}\\
	\|\La(u \kappa_{\delta_1})(\cdot,v)\|_{L^2_{\Re(z)=a}} 		&\leq C(a,\delta_1) \|\La(u)(\cdot,v)\|_{L^{2}_{\Re(z)=a}}\label{cutofflaplace2}. 
	\end{align}
\end{lemma}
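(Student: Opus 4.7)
The strategy is to reduce the claim to the Fourier convolution theorem on the vertical line $\Re z = a$. Fix $v \in \Reals^3$; this is meaningful since $n \geq 2$ and the Sobolev embedding $H^n_\lambda \hookrightarrow C^0$ make $u(\cdot,v)$ a well-defined function of $t$. I would set $\phi(t) := u(t,v)\, e^{-at}\, \cf(t\geq 0)$, so that
\begin{align*}
\La(u\kappa_{\delta_1})(a+i\omega, v) = \int_{\Reals} \phi(t)\, \kappa_{\delta_1}(t)\, e^{-i\omega t}\,dt,
\end{align*}
which is, up to the normalization factor built into the paper's Fourier convention, the Fourier transform in $\omega$ of the product $\phi\,\kappa_{\delta_1}$.

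By the convolution theorem, this Fourier transform equals, up to a constant, the convolution of $\hat{\phi}$ with $\hat{\kappa}_{\delta_1}$. Since $\hat{\phi}(\omega)$ coincides with $\La(u)(a+i\omega,v)$ up to another constant, one obtains the pointwise identity
\begin{align*}
\La(u\kappa_{\delta_1})(a+i\omega,v) = c\,\bigl(\La(u)(a+i\cdot,v) * \hat{\kappa}_{\delta_1}\bigr)(\omega).
\end{align*}
Because $\kappa_{\delta_1}\in C_c^\infty(\Reals)$, its Fourier transform is Schwartz, and in particular $\hat{\kappa}_{\delta_1}$ lies in $L^1(\Reals)\cap L^2(\Reals)$ with norms bounded by a constant depending only on $\delta_1$ (obtained explicitly from the scaling $\kappa_{\delta_1}(s)=\kappa(s/\delta_1)$).

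The two claimed estimates then follow immediately from convolution bounds applied with $f = \La(u)(a+i\cdot,v)$ and $g = \hat{\kappa}_{\delta_1}$: inequality \eqref{cutofflaplace2} is Young's inequality $\|f*g\|_{L^2} \leq \|f\|_{L^2}\|g\|_{L^1}$, while \eqref{cutofflaplace1} is the Cauchy--Schwarz bound $\|f*g\|_{L^\infty} \leq \|f\|_{L^2}\|g\|_{L^2}$. There is no serious obstacle in this proof; the only step that requires attention is tracking the Fourier/Laplace normalization consistently to justify the convolution identity. In fact the constants can be made $a$-independent, as the kernel $\hat{\kappa}_{\delta_1}$ has no $a$-dependence and the parameter $a$ enters the argument only as a label for which vertical line one is on.
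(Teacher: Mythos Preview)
Your proof is correct and follows essentially the same route as the paper: both write the Laplace transform of $u\kappa_{\delta_1}$ along the line $\Re z=a$ as a convolution of $\La(u)(a+i\cdot,v)$ against the Fourier transform of the smooth cutoff $\kappa_{\delta_1}$, then invoke that this kernel is Schwartz together with Young's inequality (your Cauchy--Schwarz for the $L^\infty$ bound is the $p=q=2$ case). The paper phrases the convolution via the two-sided Laplace transform and the $*_a$ operation, but the content is identical; your additional remark that the constant is in fact $a$-independent is correct and slightly sharper than what the paper states.
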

\begin{proof}
	We start by proving \eqref{cutofflaplace1}.
	Consider the two-sided Laplace transform $\tilde{\La}$:
	\begin{align*}
	\tilde{\La}(f)(z) &= \int_{-\infty}^\infty e^{-z t} f(t) \ud{t}.
	\end{align*}
	Extending $u(t)=0$ for negative $t$, we find that for $\Re(z)=a\geq \frac12$:
	\begin{align*}
	\tilde{\La}(u \kappa_{\delta_1})	&=  \tilde{\La}(\kappa_{\delta_1}) *_a \tilde{\La}(u).		
	\end{align*} 
	Since $\tilde{\La}(\kappa_{\delta_1})$ is a Schwartz function, the claim follows from Young's inequality and 
	the assumption $n\geq 2$ (so both sides of \eqref{cutofflaplace1}, \eqref{cutofflaplace2} are continuous). The proof of \eqref{cutofflaplace2} follows similarly.
\end{proof}

Now that we can characterize the properties of the operators $\A_\gamma^{\alpha,\beta}$ in Laplace variables, we are able to prove
bounds for the Laplace transforms of the solution $u_\eps$.

\begin{lemma}  \label{lemLaest} 
	Let $n\geq2$ and $A=2a\geq \frac12$, $\delta>0$ be as in Theorem \ref{thmfreez}. For $R>0$, $\gamma \in (0,1]$, $(f_\eps,g_\eps) \in \Omega^n_{A,R,\delta,\eps}$ let $u_\eps \in V^n_{A,\lambda}$ be the solution to \eqref{fpeq}, and let $|\alpha|=m\leq n-2$. 
	Recall the family of cutoff functions $\kappa_{\delta}$ defined in \eqref{defkappa}.
	For $\delta_3, \eps \in (0,1]$,  we have:
	\begin{align}	
	|\La(\kappa_{\delta_3}D^\alpha (u_\eps -u_0))| 			&\leq  \frac{ C(A,\delta_3)}{|1+\eps z|} 	\frac{1}{|z|} (|u_\eps\kappa_{2\delta_3}|_{F^{m+2}} + Y_{\eps,\delta} *_a |u_\eps\kappa_{2\delta_3} |_{F^{m+2}})	\label{pw1}\\ 
	|\La\left( \kappa_{\delta_3}\A^{0,0}_\gamma[u_0+f_\eps](u_\eps)\right)| 			&\leq  \frac{ C(A,\delta_3)}{|1+\eps z|} 	\frac{1}{|z|} (|u_\eps \kappa_{2\delta_3}|_{F^{m+2}} + Y_{\eps,\delta} *_a |u_\eps \kappa_{2\delta_3}|_{F^{m+2}}  ) \label{pw3},
	\end{align}
	a.e. on the line $\Re(z)=a$. Again we use the shorthand $*_a$ as introduced in \eqref{aconvol}.
\end{lemma}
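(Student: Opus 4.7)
The plan is to derive both estimates from the equation \eqref{fpeqlocal} satisfied by $D^\alpha u_\eps$ together with Lemma \ref{Alemma}, after reducing the memory integrals on the right to compactly supported arguments. The key support observation is that $\kappa_{\delta_3}$ is supported in $\{t\leq 2\delta_3\}$ where $\kappa_{2\delta_3}\equiv 1$, while the memory integral defining $\A^{\alpha,\beta}_\gamma[\nu](u)(t,v)$ only sees $u(s)$ for $s\in[0,t]$. Hence, whenever multiplied by $\kappa_{\delta_3}$ we may harmlessly replace $u_\eps$ by $U_\eps:=u_\eps\kappa_{2\delta_3}$ inside $\A^{\alpha,\beta}_\gamma$, which enables Lemma \ref{Alemma} to be applied with the compactly supported argument $U_\eps$, producing pointwise-in-$z$ bounds controlled by $|U_\eps|_{F^{m+1}}$ and its convolution with $Y_{\eps,\delta}$ on the line $\Re z = a$.

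For \eqref{pw1} I apply $D^\alpha$ to \eqref{fpeq}, multiply \eqref{fpeqlocal} by $\kappa_{\delta_3}$, and use the Leibniz identity
\begin{align*}
\partial_t\bigl(\kappa_{\delta_3}D^\alpha(u_\eps-u_0)\bigr) = \kappa'_{\delta_3}D^\alpha(u_\eps-u_0) + \sum_{\beta\leq\alpha}\binom{\alpha}{\beta}\kappa_{\delta_3}\nabla\cdot\A^{\alpha,\beta}_\gamma[u_0+f_\eps](U_\eps).
\end{align*}
Taking Laplace transforms and using $(u_\eps-u_0)(0)=0$ so that no boundary term appears, the left-hand side equals $z\,\La(\kappa_{\delta_3}D^\alpha(u_\eps-u_0))$. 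Lemma \ref{Alemma} bounds each $\La(\A^{\alpha,\beta}_\gamma[u_0+f_\eps](U_\eps))(z,v)$ pointwise by $C(A)(|U_\eps|_{F^{m+1}}+Y_{\eps,\delta}*_a|U_\eps|_{F^{m+1}})/|1+\eps z|$; the outer $\nabla\cdot$ raises the derivative count to $F^{m+2}$. The time cutoffs $\kappa_{\delta_3}$ and $\kappa'_{\delta_3}$ correspond under Laplace transform to convolutions on $\Re z = a$ against rapidly decaying kernels, and a direct Young-type argument shows this convolution preserves the pointwise $1/|1+\eps z|$ bound up to a constant $C(A,\delta_3)$. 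Dividing by $z$ finally produces the advertised $1/|z|$ factor.

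For \eqref{pw3} the key observation is $\A^{0,0}_\gamma[u_0+f_\eps](U_\eps)(0,v)=0$, since the memory integral starts at $s=0$. Integration by parts in $t$ then gives
\begin{align*}
\La\bigl(\kappa_{\delta_3}\A^{0,0}_\gamma[u_0+f_\eps](U_\eps)\bigr)(z,v) = \frac{1}{z}\,\La\bigl(\partial_t(\kappa_{\delta_3}\A^{0,0}_\gamma[u_0+f_\eps](U_\eps))\bigr)(z,v),
\end{align*}
producing the $1/|z|$ factor. The time derivative splits as $\kappa'_{\delta_3}\A^{0,0}_\gamma(U_\eps)$, bounded immediately by Lemma \ref{Alemma}, plus $\kappa_{\delta_3}\partial_t\A^{0,0}_\gamma(U_\eps)$; for the latter I use the Laplace representation of Lemma \ref{Laplacerep} together with the identity $\La(\partial_\tau K)(\eps z) = \eps z\,\La(K)(\eps z) - K(0)$, which recombines the $K(0)/\eps$ boundary contribution with the $\partial_\tau K(\tau/\eps)/\eps$ memory contribution into the single bounded expression $z\,\La(K)(\eps z)\,\La(\nabla U_\eps)(z)$. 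This retains the $1/|1+\eps z|$ decay supplied by Lemma \ref{Alemma}, and the extra velocity derivative incurred by the rearrangement is absorbed into the $F^{m+2}$ norm on the right.

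The main obstacle will be precisely this last step for \eqref{pw3}: naive time differentiation of the memory integral $\A^{0,0}_\gamma$ produces a singular $1/\eps$ factor from $\partial_t K_\eps(t-s)=K'(\tau/\eps)/\eps^2$ as well as a boundary term $K(0)\nabla u(t)/\eps$, and the advertised pointwise-in-$z$ estimate of the form $1/(|1+\eps z|\,|z|)$ only emerges once these two singular contributions are shown to cancel against each other in Laplace variables. Controlling this cancellation uniformly in $z$ on $\Re z = a$ while simultaneously tracking the velocity derivatives produced by the cutoff manipulation and ensuring the pointwise convolution structure with $Y_{\eps,\delta}$ survives is the most delicate part of the argument.
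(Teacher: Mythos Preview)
For \eqref{pw1}, your approach matches the paper's: exploit the Volterra structure to replace $u_\eps$ by $U_\eps=u_\eps\kappa_{2\delta_3}$ inside $\A^{\alpha,\beta}_\gamma$, use the equation to produce the factor $1/z$, and invoke Lemma \ref{Alemma} together with Lemma \ref{cutofflaplace}. You are slightly more careful than the paper in tracking the Leibniz term $\kappa'_{\delta_3}D^\alpha(u_\eps-u_0)$, which the paper absorbs silently into the appeal to Lemma \ref{cutofflaplace}.

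For \eqref{pw3}, however, your integration-by-parts argument is circular and does not deliver the claimed $1/|z|$. You correctly write $\La(\kappa_{\delta_3}\A^{0,0})=\frac{1}{z}\La(\partial_t(\kappa_{\delta_3}\A^{0,0}))$, and then compute that the main piece $\La(\partial_t\A^{0,0})$ recombines to $z\,\La(K)(\eps z)\,\La(\nabla U_\eps)$. But this is precisely $z\,\La(\A^{0,0})$; dividing by $z$ returns you to $\La(\A^{0,0})$, which Lemma \ref{Alemma} bounds by $C|U_\eps|_{F^{m+1}}/|1+\eps z|$ with \emph{no} factor $1/|z|$. The cancellation you describe is genuine---it shows $\partial_t\A^{0,0}$ has no $1/\eps$ singularity---but it does not manufacture additional decay in $z$.

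The paper's terse ``proved in the same way'' relies on the fact that throughout Section \ref{Sec:Boundary} the second component of $\Psi_{\delta_1}$ is effectively the time-antiderivative $\int_0^t\A^{0,0}$ (cf.\ the proof of Theorem \ref{thmfreez}, where $U:=\int_0^t\A^{0,0}$, and Lemma \ref{lemmadec}, where $P_\eps,Q_\eps$ solve equations of the form $\partial_tP_\eps=\ldots$, $z\La(Q_\eps)=\ldots$). For the antiderivative the factor $1/z$ is automatic via $\La(\int_0^tG)=\La(G)/z$, after which Lemma \ref{Alemma} and Lemma \ref{cutofflaplace} close the estimate exactly as for \eqref{pw1}. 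If you insist on reading \eqref{pw3} literally for $\A^{0,0}$ itself, you must instead feed the equation back in: bound $\La(\partial_t\nabla U_\eps)$ by writing $\partial_t u_\eps=\nabla\cdot\A^{0,0}$ and applying Lemma \ref{Alemma} once more, which costs an extra spatial derivative (hence the $F^{m+2}$) and a further nested cutoff. That is a workable route, but it is not the argument your sketch describes.
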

\begin{proof}
	Integrating the equation \eqref{fpeq} we find:
	\begin{align*}
	(u_\eps -u_0)(T) = \int_0^T &\frac1\eps \nabla \cdot \left(\int_0^t K[u_0+f_\eps(s)]\left(\frac{t-s}\eps,v\right) \nabla u_\eps(s,v) \ud{s}\right) \\
	-  	&\frac1\eps \nabla \cdot \left(\int_0^t P[u_0+f_\eps(s)]\left(\frac{t-s}\eps,v\right) u_\eps(s,v)  \ud{s}\right) \ud{t}. 
	\end{align*}
	Since $\kappa_{2\delta_3}=1$ on the support of $\kappa_{\delta_3}$, the Volterra structure of the equation allows to rewrite:
	\begin{align}
	\kappa_{\delta_3} (u_\eps -u) 	= \kappa_{\delta_3} \int_0^T &\frac1\eps \nabla \cdot \left(\int_0^t K[u_0+f_\eps(s)]\left(\frac{t-s}\eps,v\right) \nabla (\kappa_{2\delta_3 }u_\eps)(s,v) \ud{s}\right) \label{cutofffpeq} \\
	-  	&\frac1\eps \nabla \cdot \left(\int_0^t P[u_0+f_\eps(s)]\left(\frac{t-s}\eps,v\right) (\kappa_{2\delta_3 }u_\eps)(s,v)  \ud{s}\right) \ud{t} \notag.  
	\end{align}
	Hence in Laplace variables we have:
	\begin{align*}
	z \La(D^\alpha (u_\eps -u_0) \kappa_{\delta_3})	
	&= \La\left(\kappa_{\delta_3} \sum_{\beta\leq \alpha} \binom{\alpha}{\beta} \nabla \cdot \left(\A_\gamma^{\alpha,\beta}[u_0+f_\eps](u_\eps \kappa_{2\delta_3})\right) \right)  .  
	\end{align*}
	Estimate \eqref{pw1} now follows from Lemma \ref{Alemma} and Lemma \ref{cutofflaplace}. 
	Estimate \eqref{pw3} is proved in the same way.
\end{proof}

\begin{lemma} [$L^\infty$ estimate in Laplace variables] \label{lemlinfty}
	Let $n\geq 2$ and $A=2a\geq \frac12$, $\delta>0$ be as in Theorem \ref{thmfreez}. For $R>0$, $\gamma, \delta_2 \in (0,1]$, $(f_\eps,g_\eps) \in \Omega^n_{A,R,\delta,\eps}$ let $u_\eps \in V^n_{A,\lambda}$ be the solution to \eqref{fpeq} with $u_0=m(v)+\delta_2 v_0(v)$, where
	$v_0\in H^n_\lambda$ satisfies:
	\begin{align*}
	0\leq v_0(v) \leq C e^{-\frac12|v|}.
	\end{align*}	
	Then for $m \in \Naturals$, $m\leq n-2$, $\eps>0$ small enough and $\delta_1 \in (0,1]$, there holds:	
	\begin{align*}
	\|\La(\nabla^m u_\eps \kappa_{\delta_1})\|_{L^\infty_{\Re(z)=a}} \leq C(A,\delta_1) e^{-\frac12 |v|} .
	\end{align*}
\end{lemma}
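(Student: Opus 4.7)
The plan is to combine a weighted Sobolev embedding in the velocity variable with the compact support in time of the cutoff $\kappa_{\delta_1}$, together with the a priori energy estimate $\|u_\eps\|_{V^n_{A,\lambda}} \leq C(A)$ supplied by Theorem \ref{thmfreez}. The point is that we do \emph{not} need to estimate Laplace transforms pointwise in $v$ via some $H^2$ argument over $\omega$; instead we first bound $\nabla^m u_\eps(t,v)$ pointwise in $v$ (with exponential decay) by a weighted Sobolev norm in $v$ alone, and only afterwards take Laplace transform in $t$.

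First, since $m+2\leq n$ and the ambient spatial dimension is $3$, the standard Sobolev embedding $H^{m+2}(\Reals^3)\hookrightarrow W^{m,\infty}(\Reals^3)$ combined with commuting the smooth (away from the origin) weight $\lambda^{1/2}=e^{|v|/2}$ through derivatives (splitting $\Reals^3$ into $|v|\leq 1$, where we use the local embedding $H^{m+2}\hookrightarrow C^m$, and $|v|\geq 1$, where $|\nabla^j \lambda^{1/2}|\leq C_j \lambda^{1/2}$) yields the pointwise estimate
\begin{align*}
|\nabla^m g(v)|\;\leq\; C\, e^{-|v|/2}\, \|g\|_{H^{m+2}_\lambda}\qquad\forall\,v\in\Reals^3.
\end{align*}
Applying this slicewise in $t$ to $g=u_\eps(t,\cdot)$ gives $|\nabla^m u_\eps(t,v)|\leq C\, e^{-|v|/2}\|u_\eps(t,\cdot)\|_{H^{m+2}_\lambda}$.

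Second, because $\kappa_{\delta_1}$ is supported in $[0,2\delta_1]$ and $|e^{-zt}|=e^{-at}\leq 1$ on the line $\Re(z)=a$,
\begin{align*}
|\La(\nabla^m u_\eps\,\kappa_{\delta_1})(z,v)|\;\leq\;\int_0^{2\delta_1}|\nabla^m u_\eps(t,v)|\,\ud t
\;\leq\; C\,e^{-|v|/2}\int_0^{2\delta_1}\|u_\eps(t,\cdot)\|_{H^{m+2}_\lambda}\,\ud t.
\end{align*}
A Cauchy--Schwarz in $t$ combined with the trivial pointwise bound $1\leq e^{\delta_1 A}\cdot e^{-At/2}\cdot e^{At/2}$ on $[0,2\delta_1]$ (or, equivalently, $e^{-At}\geq e^{-2\delta_1 A}$ there) converts this into
\begin{align*}
|\La(\nabla^m u_\eps\,\kappa_{\delta_1})(z,v)|\;\leq\; C(\delta_1)\,e^{\delta_1 A}\,e^{-|v|/2}\,\|u_\eps\|_{V^{m+2}_{A,\lambda}}.
\end{align*}

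Finally, since $m+2\leq n$ and Theorem \ref{thmfreez} provides the uniform bound $\|u_\eps\|_{V^n_{A,\lambda}}\leq C(A,\delta_1)$, we conclude $\|\La(\nabla^m u_\eps\,\kappa_{\delta_1})\|_{L^\infty_{\Re(z)=a}}\leq C(A,\delta_1)\, e^{-|v|/2}$, which is the desired claim. There is no substantive obstacle here: the estimate is essentially a bootstrap from the $V^n_{A,\lambda}$ bound to a pointwise bound in $v$, made possible by having exactly two extra derivatives ($m+2\leq n$) to spare for Sobolev embedding in $\Reals^3$; the only minor care needed is the non-smoothness of $\lambda^{1/2}$ at the origin, which is handled by the standard split into $|v|\leq 1$ and $|v|\geq 1$ described above (or by passing to the smooth equivalent weight $e^{\langle v\rangle/2}$).
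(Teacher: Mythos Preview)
Your proof is correct and uses the same three ingredients as the paper: the a priori bound $\|u_\eps\|_{V^n_{A,\lambda}}\leq C(A)$ from Theorem~\ref{thmfreez}, a weighted Sobolev embedding in $v$ trading two derivatives for pointwise exponential decay, and the compact time support of $\kappa_{\delta_1}$ to pass to $L^\infty$ in the Laplace frequency. The only difference is the order of operations: the paper first applies Plancherel (Lemma~\ref{Plancherel}) to get $\|\La(D^\alpha u_\eps)\|_{L^2_v L^2_{\Re(z)=a}}\leq C$, then Sobolev in $v$, then Lemma~\ref{cutofflaplace} to go from $L^2_\omega$ to $L^\infty_\omega$; you instead apply Sobolev in $v$ slicewise in $t$ first and then bound the Laplace integral directly by $\int_0^{2\delta_1}|\cdot|\ud t$, which bypasses Lemma~\ref{cutofflaplace} entirely. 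Your route is marginally more elementary for this particular lemma, while the paper's ordering keeps the argument in Laplace variables throughout, which fits better with the surrounding bootstrap (Lemmas~\ref{lemLaest}--\ref{decaywithosmall}) that lives on the line $\Re(z)=a$.
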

\begin{proof}
	We solve equation \eqref{fpeq} with $(f_\eps,g_\eps) \in \Omega^n_{A,R,\delta,\eps}$.
	Theorem \ref{symmkernel} shows there are $A,\delta,C(A)>0$ such that for all $R>0$ a solution $u_\eps$ to 		\eqref{fpeq} satisfies:
	\begin{align*}
	\|u_\eps\|_{V^n_{A,\lambda}} \leq C(A),
	\end{align*}
	provided $\eps>0$ is small enough.
	By Plancherel Lemma \ref{Plancherel} this implies in particular
	\begin{align*}
	\|\La(D^\alpha u_\eps)\|_{L^2_v L^2_{\Re(z)=a}} \leq C(A) \quad \text{for $|\alpha|\leq n$}.
	\end{align*} 
	With Sobolev inequality we can infer the existence of a constant $C(A)>0$ such that for every multi-index 		$\alpha$ with  $|\alpha|\leq n-2$ we have:
	\begin{align*}
	\|\La(D^\alpha u_\eps(\cdot,v))\|_{L^2_{\Re(z)=a}} \leq C(A) e^{-\frac12 |v|}.	
	\end{align*} 
	Now with Lemma \ref{cutofflaplace} we can estimate:
	\begin{align*}
	\|\La(\nabla^m u_\eps \kappa_{\delta_1})\|_{L^\infty_{\Re(z)=a}} \leq C(A,\delta_1)e^{-\frac12 |v|},	
	\end{align*}
	as claimed.
\end{proof}

We can plug the $L^\infty$ estimate for the Laplace transform back into \eqref{lemLaest} and bootstrap it to 
a pointwise estimate. 

\begin{lemma} [Linear decay in Laplace variables] \label{lemlin}
	Let $n\geq 4$ and $A=2a\geq \frac12$, $\delta>0$ be as in Theorem \ref{thmfreez}. For $R>0$, $\gamma, \delta_2 \in (0,1]$, $(f_\eps,g_\eps) \in \Omega^n_{A,R,\delta,\eps}$ let $u_\eps \in V^n_{A,\lambda}$ be the solution to \eqref{fpeq} with $u_0=m(v)+\delta_2 v_0(v)$, where
	$v_0\in H^n_\lambda$ satisfies:
	\begin{align*}
	0\leq v_0(v) \leq C e^{-\frac12|v|}.
	\end{align*}	
	Then for $m \in \Naturals$, $m\leq n-4$, $\eps>0$ small enough and $\delta_1 \in (0,1]$ there holds:
	\begin{align*}
	| \La(\nabla^{m}((u-u_\eps) \kappa_{\delta_1}) (z,v))| \leq \frac{C(A,\delta_1)e^{-\frac12 |v|}}{1+|z|} \\
	| \La(\nabla^{m}(\A^{0,0}_\gamma[u_0+f_\eps](u_\eps) \kappa_{\delta_1}) (z,v)| \leq \frac{C(A,\delta_1)e^{-\frac12 |v|}}{1+|z|}.	
	\end{align*}
\end{lemma}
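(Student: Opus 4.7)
The plan is to combine the integrated–equation bound of Lemma~\ref{lemLaest} with the pointwise Sobolev bound of Lemma~\ref{lemlinfty}, using the absolute integrability of $Y_{\eps,\delta}$ along $\Re(z)=a$ supplied by \eqref{loggain}.

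First I apply Lemma~\ref{lemLaest} with $\delta_3=\delta_1$ and any multi-index $\alpha$ with $|\alpha|=m\le n-4$. This yields, a.e. on the line $\Re(z)=a$,
\begin{align*}
|\La(\kappa_{\delta_1} D^\alpha(u_\eps-u_0))(z,v)|
&\le \frac{C(A,\delta_1)}{|1+\eps z|\,|z|}\bigl(|u_\eps \kappa_{2\delta_1}|_{F^{m+2}} + Y_{\eps,\delta}*_a |u_\eps \kappa_{2\delta_1}|_{F^{m+2}}\bigr),
\end{align*}
and the same estimate with $D^\alpha(u_\eps-u_0)$ replaced by $\A^{0,0}_\gamma[u_0+f_\eps](u_\eps)$. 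Since $m+2\le n-2$, Lemma~\ref{lemlinfty} applies to each derivative of order at most $m+2$ and gives the pointwise bound
\begin{align*}
|u_\eps \kappa_{2\delta_1}|_{F^{m+2}}(z,v)\le C(A,\delta_1)\, e^{-\frac12|v|}
\end{align*}
uniformly for $z$ on the critical line.

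Next I estimate the convolution. Because the bound above is independent of $z$, Young's inequality reduces the convolution to controlling $\|Y_{\eps,\delta}(a+i\cdot)\|_{L^1(\Reals)}$, which by \eqref{loggain} is at most $C(\delta+R\eps^{1/2})$, hence bounded uniformly once $\eps_0$ is small. Putting the two contributions together,
\begin{align*}
|\La(\kappa_{\delta_1} D^\alpha(u_\eps-u_0))(z,v)| \le \frac{C(A,\delta_1)\,e^{-\frac12|v|}}{|1+\eps z|\,|z|}.
\end{align*}

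Finally I absorb the prefactor into $(1+|z|)^{-1}$. Since $\Re(z)=a=A/2\ge 1/4$ we have $|z|\ge a$, and therefore $|z|^{-1}\le (1+a^{-1})(1+|z|)^{-1}$; moreover $|1+\eps z|\ge 1$ as $\Re(1+\eps z)=1+\eps a\ge 1$. Combining these bounds yields
\begin{align*}
|\La(\nabla^m((u_\eps-u_0)\kappa_{\delta_1}))(z,v)|\le \frac{C(A,\delta_1)\,e^{-\frac12|v|}}{1+|z|},
\end{align*}
and the identical chain of inequalities applied to \eqref{pw3} of Lemma~\ref{lemLaest} gives the analogous bound for $\La(\nabla^m(\A^{0,0}_\gamma[u_0+f_\eps](u_\eps)\kappa_{\delta_1}))$. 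The only real constraint is the derivative budget: the use of a pointwise $L^\infty_z$ Sobolev bound on $|u_\eps\kappa_{2\delta_1}|_{F^{m+2}}$ costs two velocity derivatives (Lemma~\ref{lemlinfty}) on top of the two already lost in Lemma~\ref{lemLaest}, which is the reason for the hypothesis $m\le n-4$; this is the step where an extra regularity loss would appear, but it is benign here because $n\ge 6$.
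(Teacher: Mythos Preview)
Your proof is correct and follows exactly the approach indicated by the paper's one-line proof (``Follows by combining Lemma~\ref{lemLaest} with Lemma~\ref{lemlinfty}''): you apply Lemma~\ref{lemLaest} to gain the factor $|z|^{-1}$, feed in the uniform $L^\infty_z$ bound from Lemma~\ref{lemlinfty} at the cost of two more derivatives, and control the $Y_{\eps,\delta}$-convolution via \eqref{loggain}. The derivative accounting $m+2\le n-2$ is precisely the reason for the hypothesis $m\le n-4$, as you note.
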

\begin{proof}
	Follows by combining Lemma \ref{lemLaest} with Lemma \ref{lemlinfty}.
\end{proof}

Bootstrapping the estimate in Lemma \ref{lemLaest} gives an additional quadratic decay, which is
the content of the following Lemma.

\begin{lemma}[Quadratic decay of Laplace Transforms] \label{decaywithosmall}
	Let $n\geq 4$ and $A=2a\geq \frac12$, $\delta>0$ be as in Theorem \ref{thmfreez}. For $R>0$, $\gamma, \delta_2 \in (0,1]$, $(f_\eps,g_\eps) \in \Omega^n_{A,R,\delta,\eps}$ let $u_\eps \in V^n_{A,\lambda}$ be the solution to \eqref{fpeq} with $u_0=m(v)+\delta_2 v_0(v)$, where
	$v_0\in H^n_\lambda$ satisfies:
	\begin{align*}
	0\leq v_0(v) \leq C e^{-\frac12|v|}.
	\end{align*}	
	Then for $m \in \Naturals$, $m\leq n-4$, $\eps>0$ small enough and $\delta_1 \in (0,1]$ there holds:   
	\begin{align*}
	|\La(\nabla^m (u_\eps-u_0) \kappa_{\delta_1})(z,v)| &\leq \frac{ C(A,\delta_1) e^{-\frac12 |v|}}{|1+ \eps z| (1+|z|^2)} \\
	|\La(\nabla^{m}(\A^{0,0}_\gamma[u_0+f_\eps](u_\eps) \kappa_{\delta_1}) (z,v)| &\leq \frac{ C(A,\delta_1) e^{-\frac12 |v|}}{|1+ \eps z| (1+|z|^2)}.	
	\end{align*}
\end{lemma}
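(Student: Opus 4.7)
The proof is a bootstrap from the linear decay of Lemma \ref{lemlin}, applied through the representation provided by Lemma \ref{lemLaest}. Since $u_0$ is time-independent,
\begin{align*}
\La(u_0 \kappa_{2\delta_1})(z,v) = u_0(v)\, \La(\kappa_{2\delta_1})(z),
\end{align*}
and $\La(\kappa_{2\delta_1})$ restricted to the line $\Re z = a$ is Schwartz in $\Im z$, so the $u_0$-contribution decays faster than any polynomial. Combined with Lemma \ref{lemlin} applied at derivative order $m+2$ and cutoff $\kappa_{2\delta_1}$, this gives
\begin{align*}
|u_\eps \kappa_{2\delta_1}|_{F^{m+2}}(z,v) \leq \frac{C(A,\delta_1)\, e^{-|v|/2}}{1+|z|},
\end{align*}
which is the input for the bootstrap.

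The main technical point is a convolution estimate: one shows
\begin{align*}
\left(Y_{\eps,\delta} *_a \frac{1}{1+|\cdot|}\right)(a+i\omega) \leq \frac{C(\delta + R)}{1+|\omega|}.
\end{align*}
Using the pointwise bound $Y_{\eps,\delta}(i\theta) \leq C(\delta+R)/(1+\theta^2)$, this reduces to the elementary integral estimate
\begin{align*}
\int_\Reals \frac{d\theta}{(1+\theta^2)\,(1+|\omega-\theta|)} \leq \frac{C}{1+|\omega|},
\end{align*}
which I would verify by splitting the line into $|\theta|\leq |\omega|/2$, $|\omega|/2\leq |\theta|\leq 2|\omega|$, and $|\theta|\geq 2|\omega|$ and estimating each piece. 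This step is the crux of the argument: a naive application of Young's inequality fails (since $1/(1+|z|)\notin L^1$), and one must exploit the full quadratic decay of $Y_{\eps,\delta}$ to avoid a spurious logarithmic loss.

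Substituting the first-paragraph bound into Lemma \ref{lemLaest} and applying the convolution estimate yields
\begin{align*}
|\La(\nabla^m(u_\eps-u_0)\kappa_{\delta_1})(z,v)| \leq \frac{C(A,\delta_1)\, e^{-|v|/2}}{|1+\eps z|\cdot |z| \cdot (1+|z|)}.
\end{align*}
For $|z|\geq 1$ one has $|z|(1+|z|)\geq \frac{1}{2}(1+|z|^2)$, giving the target bound directly. For $|z|\leq 1$, both $|1+\eps z|$ and $1+|z|^2$ are bounded, so the claim reduces to a uniform $L^\infty$ estimate on the Laplace transform, which follows from Lemma \ref{lemlinfty} (combined with the Schwartz decay of $\La(\kappa_{\delta_1})$ to handle the $u_0$-part). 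The same chain of inequalities, using the parallel bound for $\A^{0,0}_\gamma[u_0+f_\eps](u_\eps)\kappa_{\delta_1}$ in Lemma \ref{lemLaest}, delivers the second claimed estimate.
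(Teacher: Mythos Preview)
Your overall strategy---bootstrapping from Lemma~\ref{lemlin} through Lemma~\ref{lemLaest}, and showing that the linear decay of $|u_\eps\kappa_{2\delta_1}|_{F^{m+2}}$ is preserved under convolution with $Y_{\eps,\delta}$---is exactly the paper's approach. The gap is in the constants produced by your convolution estimate.

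You bound $Y_{\eps,\delta}(i\theta)\le C(\delta+R)/(1+\theta^2)$ and conclude
\[
\Bigl(Y_{\eps,\delta}*_a \tfrac{1}{1+|\cdot|}\Bigr)(a+i\omega)\le\frac{C(\delta+R)}{1+|\omega|}.
\]
This is correct, but the factor $(\delta+R)$ then propagates into the final bound, which you silently replace by $C(A,\delta_1)$ in the next display. The lemma as stated asserts a constant $C(A,\delta_1)$ that is \emph{independent of $R$}, and this is not cosmetic: in the proof of Theorem~\ref{thmbdry} one recovers condition~\eqref{gaprio2} by choosing $R$ large enough that $C(A,\delta_1)\le R$, which is impossible if the constant itself scales like $R$.

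The paper secures $R$-independence by treating the two summands of $Y_{\eps,\delta}$ (cf.~\eqref{Ydef}) separately. The term $\delta/(1+|z|^2)$ is handled as you do. For the term $R\eps|z|/\bigl((1+\eps|z|)(1+|z|^2)\bigr)$ one splits the convolution integral into the region $\tfrac12|z|\le|y|\le 2|z|$ and its complement, and in each piece exploits the explicit factor of $\eps$ to absorb $R$: on the complement one has $\int R\eps\bigl((1+\eps|y|)(1+|y|)\bigr)^{-1}\ud\theta \le C R\eps\log(1/\eps)$, and on the near region one picks up $R\eps\log(1+|z|)/(1+\eps|z|)$, both of which are bounded uniformly in $R$ once $\eps=\eps(R)$ is taken small enough. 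This is precisely where the hypothesis ``$\eps>0$ small enough'' does nontrivial work; your crude pointwise bound on $Y_{\eps,\delta}$ throws away the $\eps$ and with it the mechanism for controlling $R$.
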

\begin{proof}
	Follows by iterating  Lemma \ref{lemLaest} further with the estimate Lemma \ref{lemlin}. For completeness we remark that the 
	linear decay  of $|u_\eps \kappa_{\delta_1}|_{F^{m+2}}$ is stable under convolution with
	$Y_{\eps,\delta}$. To see this we estimate the convolution explicitly ($z=a+i\omega$, $y=a+i\theta$, $a\geq \frac12$):
	\begin{align*}
	Y_{\eps,\delta} *_a |u_\eps \kappa_{\delta_1}|_{F^{m+2}}  
	\leq 	&\int_\Reals \frac{C(A,\delta_1)}{1+|z-y|} \left( \frac{\delta}{1+|z|^2} + \frac{R \eps |z|}{(1+\eps |z|)(1+ |z|^2)}\right) \ud{\omega} e^{-\frac12 |v|}  \\
	\leq &  \frac{C(A,\delta_1)}{1+|z|} + \int_\Reals \frac{C(A,\delta_1)}{1+|z-y|} \frac{R \eps |z|}{(1+\eps |z|)(1+ |z|^2)} \ud{\omega} e^{-\frac12 |v|}.
	\end{align*} 
	It remains to show that the last integral decays linearly with a prefactor independent of $R>0$. This can be seen by splitting the integral
	into the regions 
	\begin{align*}
	D_d(x)	&:= \{y: \Re(y)=a, \, |y|\geq 2|x| \, \text{ or } \, |y|\leq \frac12 |x|\} \\
	D_c(x)	&:= \{y: \Re(y)=a, \, \frac12 |x| \leq |y| \leq 2 |x| \},
	\end{align*}
	when the integral can be estimated as ($C(A,\delta_1)$ might change from line to line):
	\begin{align*}
	&\int_\Reals \frac{C(A,\delta_1)}{1+|z-y|} \frac{R \eps |y|}{(1+\eps |y|)(1+ |y|^2)} \ud{\theta} \\ 
	= 		&\int_{D_d(x)} \frac{C(A,\delta_1)}{1+|z-y|} \frac{R \eps |y|}{(1+\eps |y|)(1+ |y|^2)} \ud{\theta}	 + \int_{D_c(x)} \frac{C(A,\delta_1)}{1+|z-y|} \frac{R \eps |y|}{(1+\eps |y|)(1+ |y|^2)} \ud{\theta} \\
	\leq 	& \frac{C(A,\delta_1)}{1+|z|}\int_{D_d(x)}  \frac{R \eps }{(1+\eps |y|)(1+ |y|)} \ud{\theta}	 + \int_{D_c(x)} \frac{C(A,\delta_1)}{1+|z-y|} \frac{R \eps }{(1+\eps |y|)(1+ |y|)} \ud{\theta} \\
	\leq 	& \frac{C(A,\delta_1)}{1+|z|}	 + \int_{D_c(x)} \frac{C(A,\delta_1)}{1+|z-y|} \frac{R \eps }{(1+\eps |y|)(1+ |y|)} \ud{\theta}
	\end{align*}
	with $C(A,\delta_1)$ is independent of $R>0$, provided $\eps(R)>0$ is small enough. We can bound the second integral by:
	\begin{align*}
	\int_{D_c(x)} \frac{C(A,\delta_1)}{1+|z-y|} \frac{R \eps }{(1+\eps |y|)(1+ |y|)} \ud{\theta} 
	\leq 	&\frac{R \eps }{(1+\eps |z|)(1+ |z|)} \int_{D_c(x)} \frac{C(A,\delta_1)}{1+|z-y|}  \ud{\theta}	\\
	\leq 	&\frac{C(A,\delta_1) R \eps \log(1+|z|) }{(1+\eps |z|)(1+ |z|)} \leq 	\frac{1}{(1+ |z|)},
	\end{align*}	
	for $\eps>0$ small enough.
\end{proof}
As a corollary we obtain the uniform boundedness of the sequence $u_\eps$.
\begin{lemma}[Uniform boundedness] \label{unifbound}
	Let $n\geq 4$ and $A=2a\geq \frac12$, $\delta>0$ be as in Theorem~\ref{thmfreez}. For $R>0$, $\gamma, \delta_2 \in (0,1]$, $(f_\eps,g_\eps) \in \Omega^n_{A,R,\delta,\eps}$ let $u_\eps \in V^n_{A,\lambda}$ be the solution to \eqref{fpeq} with $u_0=m(v)+\delta_2 v_0(v)$, where
	$v_0\in H^n_\lambda$ satisfies:
	\begin{align*}
	0\leq v_0(v) \leq C e^{-\frac12|v|}.
	\end{align*}	
	Then for $m \in \Naturals$, $m\leq n-4$, $\eps>0$ small enough there holds:
	\begin{align} \label{ulinftyest}
	|\nabla^m (u_\eps-u_0)(t,v)| \leq C(A) e^{-\frac12 |v|}, \quad \text{for $0\leq t \leq 1$}.
	\end{align}
\end{lemma}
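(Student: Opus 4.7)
The plan is to recover the pointwise estimate from the quadratic decay of the Laplace transform established in Lemma~\ref{decaywithosmall} via a straightforward Laplace/Fourier inversion argument. Since $\kappa_{\delta_1}(t)=1$ for $|t|\le\delta_1$, choosing $\delta_1=1$ ensures that $\nabla^m(u_\eps-u_0)(t,v)=\kappa_1(t)\nabla^m(u_\eps-u_0)(t,v)$ on the range $0\le t\le 1$ of interest. Hence it suffices to bound the latter function pointwise in $t$ and $v$.

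First, I apply Lemma~\ref{decaywithosmall} with $m\le n-4$ and $\delta_1=1$ to obtain, on the line $\Re(z)=a=A/2$, the estimate
\begin{align*}
\bigl|\La(\nabla^m(u_\eps-u_0)\kappa_1)(z,v)\bigr| \;\le\; \frac{C(A)\,e^{-\frac12|v|}}{|1+\eps z|(1+|z|^2)}.
\end{align*}
In particular, since $|1+\eps z|\ge 1$, this majorant is integrable in $\omega=\Im(z)$ uniformly in $\eps>0$, with $\int_\Reals \frac{d\omega}{1+|a+i\omega|^2}\le C$.

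Next, I invert the Laplace transform. The function $h_v(t):=\kappa_1(t)\nabla^m(u_\eps-u_0)(t,v)$ is supported in $[0,2]$ and sufficiently regular by the a priori bounds of Theorem~\ref{thmfreez}, so that $e^{-at}h_v(t)$ is an integrable function whose Fourier transform in $t$ coincides with $\La(\kappa_1\nabla^m(u_\eps-u_0))(a+i\omega,v)$. Fourier inversion then gives
\begin{align*}
h_v(t) \;=\; \frac{e^{at}}{2\pi}\int_\Reals e^{i\omega t}\, \La(\kappa_1\nabla^m(u_\eps-u_0))(a+i\omega,v)\, d\omega,
\end{align*}
and taking absolute values,
\begin{align*}
|\nabla^m(u_\eps-u_0)(t,v)| \;\le\; \frac{e^{at}}{2\pi}\int_\Reals \frac{C(A)\,e^{-\frac12|v|}}{|1+\eps(a+i\omega)|(1+|a+i\omega|^2)}\, d\omega \;\le\; C(A)\, e^{-\frac12|v|},
\end{align*}
for all $0\le t\le 1$, which is the desired estimate.

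There is no real obstacle here; the work has already been absorbed into Lemma~\ref{decaywithosmall}. The only mild care needed is in justifying the Fourier inversion identity, which follows from the fact that $\kappa_1\nabla^m(u_\eps-u_0)$ lies in $V^{n-2}_{A,\lambda}$ with $n-2\ge 2$ (so the trace in $v$ is well-defined a.e.\ and the Laplace integrand decays fast enough), together with the quadratic decay of the Laplace transform ensuring absolute convergence of the inversion integral. An analogous argument applies to $\nabla^m \A^{0,0}_\gamma[u_0+f_\eps](u_\eps)$ if needed, using the second estimate in Lemma~\ref{decaywithosmall}.
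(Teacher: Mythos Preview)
Your proposal is correct and matches the paper's approach: the paper presents this lemma as an immediate corollary of Lemma~\ref{decaywithosmall} without further details, and your Laplace inversion argument is precisely the standard way to extract the pointwise bound from the quadratic decay of the transform on the line $\Re(z)=a$. The only minor remark is that the constant in Lemma~\ref{decaywithosmall} is $C(A,\delta_1)$, which becomes $C(A,1)=C(A)$ once you fix $\delta_1=1$, as you implicitly do.
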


\subsubsection{Boundary Layer Estimate} \label{subsecboundarylayer}
To obtain smallness for the Laplace transforms, we separate the contributions of $M_1$ and $M_2$ to $u_\eps$.

\begin{lemma} [Decomposition] \label{lemmadec} 
	Let $(f_\eps,g_\eps) \in \Omega^n_{A,R,\delta,\eps}$ and $u_\eps \in V^n_{A,\lambda}$ a solution to \eqref{fpeq}. 		Then $u_\eps-u_0 = p_\eps + q_\eps$.
	Here $p_\eps = \nabla \cdot P_\eps$ is a divergence and $P_\eps$ is given by:
	\begin{equation}
	\begin{aligned}
	\partial_t P_\eps 	= 	& \left(\int_0^t \int \frac{\pi^2}{4}\frac{e^{-\frac{s|v'|}{\eps}}P_{v'}^\perp}{\eps} (u_0+f_\eps)(t-s,v-v') \eta(|v'|^2) \nabla u_\eps(t-s,v) \ud{v'} \ud{s} \right) 
	\label{pequation}\\
	-  	&  \left(\int_0^t \int \frac{\pi^2}{4}\frac{e^{-\frac{s|v'|}{\eps}}P_{v'}^\perp}{\eps} \nabla (u_0+f_\eps)(t-s,v-v') \eta(|v'|^2)  u_\eps(t-s,v) \ud{v'} \ud{s} \right)   \\ 
	P_\eps(0)&=0	.
	\end{aligned}
	\end{equation}	
	Similarly, $q_\eps= \nabla \cdot Q_\eps$, where $Q_\eps$ is given by:
	\begin{equation}
	\begin{aligned}
	z \La (Q_\eps) 	= 	& \left(\int M_2(\eps z,v') \eta(|v'|^2) \La\left((u_0+f_\eps)(s,v-v') \nabla u_\eps(s,v)\right) \ud{v'}\right) 
	\label{qequation}\\
	-  	&  \left(\int \nabla \cdot  M_2(\eps z,v')\eta(|v'|^2)  \La\left((u_0+ f_\eps)(s,v-v')) u_\eps(s,v)\right) \ud{v'}\right) . 	
	\end{aligned}
	\end{equation}
\end{lemma}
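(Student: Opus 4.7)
\emph{Plan.} My plan is to integrate the equation \eqref{fpeq} in time to obtain $u_\eps(t) - u_0 = \nabla_v \cdot U_\eps(t)$ with $U_\eps(t) := \int_0^t \A_\gamma^{0,0}[u_0+f_\eps](u_\eps)(r)\,dr$, and then split $U_\eps = P_\eps + Q_\eps$ along the decomposition $\La(K)(\eps z,\cdot) = M_1(\eps z,\cdot) + M_2(\eps z,\cdot)$ from Lemma~\ref{Laplacerep}. Taking the Laplace transform in $t$ and repeating the computation behind \eqref{contKP}, then substituting $v' \mapsto v-v'$ in the velocity integration, I will obtain
\begin{align*}
z\,\La(U_\eps)(z,v)
&= \int (M_1+M_2)(\eps z,v')\,\eta(|v'|^2)\,\La\bigl((u_0+f_\eps)(\cdot,v-v')\,{}^\gamma\nabla u_\eps(\cdot,v)\bigr)(z)\,\ud{v'} \\
&\quad - \int (M_1+M_2)(\eps z,v')\,\eta(|v'|^2)\,\La\bigl({}^\gamma\nabla(u_0+f_\eps)(\cdot,v-v')\,u_\eps(\cdot,v)\bigr)(z)\,\ud{v'}.
\end{align*}

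I then declare $Q_\eps$ to be the $M_2$-contribution, i.e.\ I define its Laplace transform as the $M_2$-part of the right-hand side above divided by $z$. An integration by parts in $v'$ in the second term is legitimate because $M_2$, $\eta$, and $u_0+f_\eps$ all decay, and turns ${}^\gamma\nabla(u_0+f_\eps)(v-v')$ into a gradient in $v'$ which is then moved onto $M_2\eta$; this produces exactly the formula \eqref{qequation}. By subtraction, $\La(P_\eps)$ coincides with the $M_1$-contribution divided by $z$. Since $M_1(\eps z, v') = \tfrac{\pi^2}{4}\,\tfrac{1}{|v'|+\eps z}\,P_{v'}^\perp$, its inverse Laplace transform in the time variable $r$ is the explicit matrix-valued function $\tfrac{\pi^2}{4\eps}\,e^{-|v'|r/\eps}\,P_{v'}^\perp$. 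Reading the product in Laplace as a convolution in time, differentiating in $t$ to remove the extra factor $1/z$, and renaming $s=t-r$ for the convolution variable gives exactly \eqref{pequation}. The initial condition $P_\eps(0)=0$ follows because $U_\eps(0)=0$ and, by the initial-value theorem applied to $z\La(Q_\eps)$, also $Q_\eps(0)=0$ (the factor $M_2(\eps z,v')$ decays as $|z|\to\infty$).

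The only real technical point, and in my view the main obstacle, is the Fubini-type exchange between the inverse Laplace transform of $M_1(\eps z,v')$, the velocity integration in $v'$, and the time convolution against $(u_0+f_\eps)\,{}^\gamma\nabla u_\eps$ (and the analogous term with the roles reversed). This is justified by combining three ingredients: the uniform $L^\infty$ decay $|u_\eps-u_0|\leq C(A)e^{-|v|/2}$ from Lemma~\ref{unifbound} together with the exponential velocity decay of $u_0$ and $f_\eps$, which renders the velocity integrands absolutely integrable uniformly in $t\in[0,1]$; the explicit exponential decay $e^{-|v'|r/\eps}$ of the $M_1$-kernel, guaranteeing absolute convergence of the $r$-convolution for every $v'\neq 0$ in the support of $\eta$; and the mollification ${}^\gamma\nabla$ with $\gamma>0$, which supplies enough velocity smoothness to justify the integration by parts used in the construction of $Q_\eps$. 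Once these exchanges are granted, the identity $u_\eps-u_0 = \nabla_v \cdot P_\eps + \nabla_v \cdot Q_\eps = p_\eps + q_\eps$ follows directly from $U_\eps = P_\eps + Q_\eps$ and the formula $u_\eps - u_0 = \nabla_v \cdot U_\eps$.
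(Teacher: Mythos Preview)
Your proposal is correct and follows essentially the same route as the paper: take the Laplace transform of \eqref{fpeq}, split the kernel along $M_1+M_2$ via Lemma~\ref{Laplacerep}, and use the explicit identity $M_1(\eps z,v')=\La\bigl(\tfrac{\pi^2}{4\eps}e^{-|v'|t/\eps}\bigr)P_{v'}^\perp$ to read off the time-domain formula \eqref{pequation} for $P_\eps$. The paper's proof is terser—it writes the $P$-term directly in the $\nabla\cdot(M_1+M_2)$ form rather than starting from the $(M_1+M_2)\,{}^\gamma\nabla(u_0+f_\eps)$ form as you do—so your explicit integration by parts for the $Q_\eps$ piece and the Fubini/initial-value justifications are additional detail the paper omits, but the underlying argument is the same.
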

\begin{proof}
	We take the Laplace transform of equation \eqref{fpeq} and use Lemma \ref{Mlemma} to obtain:
	\begin{align*}
	&z \La(u_\eps)(z,v) - u_0(v) \\
	=	&\nabla \cdot \left( \int_{\Reals^3} (M_1+M_2)(\eps z,v') \eta(|v'|^2) \La\left((u_0+f_\eps)(s,v-v') \nabla u_\eps(s,v) \right)(z) \ud{v'} \right) \\
	-	&\nabla \cdot \left( \int_{\Reals^3} \nabla \cdot(M_1+M_2)(\eps z,v') \eta(|v'|^2) \La\left((u_0+f_\eps)(s,v-v')  u_\eps(s,v) \right)(z) \ud{v'} \right).
	\end{align*}
	Now introduce the functions $p_\eps, q_\eps$ given by the splitting:
	\begin{align}
	z \La(q_\eps)(z,v) 
	=	&\nabla \cdot \left( \int_{\Reals^3} M_2(\eps z,v') \eta(|v'|^2) \La\left((u_0+f_\eps)(s,v-v') \nabla u_\eps(s,v) \right)(z) \ud{v'} \right) \label{qpart} \\
	-	&\nabla \cdot \left( \int_{\Reals^3} \nabla \cdot M_2 (\eps z,v') \eta(|v'|^2) \La\left((u_0+f_\eps)(s,v-v')  u_\eps(s,v) \right)(z) \ud{v'} \right) \notag \\
	z \La(p_\eps)(z,v) 
	=	&\nabla \cdot \left( \int_{\Reals^3} M_1(\eps z,v') \eta \La\left((u_0+f_\eps)(s,v-v') \nabla u_\eps(s,v) \right)(z) \ud{v'} \right)  \label{ppart}\\
	-	&\nabla \cdot \left( \int_{\Reals^3} \nabla \cdot M_1(\eps z,v') \eta \La\left((u_0+f_\eps)(s,v-v')  u_\eps(s,v) \right)(z) \ud{v'} \right) \notag.
	\end{align}
	Therefore $q_\eps=\nabla \cdot Q_\eps$, with $Q_\eps$ as in \eqref{qequation}. To show $p_\eps=\nabla \cdot P_\eps$ we transform the equation for $p_\eps$ back to
	the variables $(t,v)$. To do so we remark that $M_1$ is the Laplace transform of:
	\begin{align*}
	\frac{\pi^2}{4}\La\left(\frac{e^{-\frac{t |v|}{\eps}}}{\eps  }\right)(z) P_{v}^\perp &= M_1(\eps z, v).
	\end{align*}
	Therefore  $p_\eps =\nabla \cdot Q_\eps$ and $u_\eps-u_0 = q_\eps + p_\eps$ as claimed.
\end{proof}

Splitting the function $u_\eps$ into $u_\eps=p_\eps + q_\eps$ allows to estimate the contributions of
$M_1$ and $M_2$ (as in \eqref{Mlemma}) separately. The function $q_\eps$ can be estimated in a straightforward fashion.

\begin{lemma} [Estimate for $q_\eps$] \label{qlemma}
	Let $n\geq 4$ and $A=2a\geq \frac12$, $\delta>0$ be as in Theorem \ref{thmfreez}. For $R>0$, $\gamma, \delta_2 \in (0,1]$, $(f_\eps,g_\eps) \in \Omega^n_{A,R,\delta,\eps}$ let $u_\eps \in V^n_{A,\lambda}$ be the solution to \eqref{fpeq} with $u_0=m(v)+\delta_2 v_0(v)$, where
	$v_0\in H^n_\lambda$ satisfies:
	\begin{align*}
	0\leq v_0(v) \leq C e^{-\frac12|v|}.
	\end{align*}	
	Let $\nabla \cdot Q_\eps= q_\eps \in V^n_{A,\lambda}$ be given by \eqref{qequation}. 
	Then for $m \in \Naturals$, $m\leq n-4$, $\eps>0$ small enough there holds: 
	\begin{align}
	|\La(\nabla^m q_\eps \kappa_{\delta_1})(z,v)| \leq \frac{C(A,\delta_1) \eps |z|}{(1+\eps |z|)^2(1+ |z|^2)}e^{-\frac12 |v|} \label{qest} \\
	|\La(\nabla^m Q_\eps \kappa_{\delta_1})(z,v)| \leq \frac{C(A,\delta_1) \eps |z|}{(1+\eps |z|)^2(1+ |z|^2)}e^{-\frac12 |v|} \label{Qest}. 
	\end{align}
	In particular, for $0\leq t \leq 1$, $m\leq n-4$ we have
	\begin{align}
	|\partial_t \nabla^ m q_\eps| &\leq C(A) e^{-\frac12 |v|} 			&|\partial_t \nabla^m Q_\eps| &\leq C(A) e^{-\frac12 |v|}. \label{dtqest}
	\end{align}
\end{lemma}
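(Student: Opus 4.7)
The plan is to work entirely in Laplace space, exploiting the fact that $M_2(\eps z, v')$ vanishes at $z=0$: writing $M_2(\eps z, v') = \frac{\pi^2 \eps z}{4(|v'|+\eps z)^2}P_{v'}$, a spherical integration of its modulus against an exponentially decaying factor $e^{-|v'|/2}$, splitting into the regimes $\eps|z|\le 1$ and $\eps|z|\ge 1$, yields the key estimate
\[
\int_{\Reals^3} |M_2(\eps z,v')|\,\eta(|v'|^2)\, e^{-|v'|/2}\,\ud{v'} \le C\,\frac{\eps|z|}{(1+\eps|z|)^2}.
\]
This is the source of the prefactor in both \eqref{qest} and \eqref{Qest}, and the same estimate survives applying derivatives in $v$ since the cutoff $\eta$ excludes the singularity at $v'=0$.

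Next, to insert the cutoff $\kappa_{\delta_1}$, I use the Volterra-type manipulation from the proof of Lemma~\ref{lemLaest}: because $Q_\eps(0)=0$, we have $\kappa_{\delta_1}(t)Q_\eps(t,v) = \kappa_{\delta_1}(t)\int_0^t \partial_s Q_\eps\,\ud{s}$, and since $s\le t\le 2\delta_1$ on $\supp\kappa_{\delta_1}$ we may replace $u_\eps(s,\cdot)$ by $u_\eps(s,\cdot)\kappa_{2\delta_1}(s)$ in the inner integrand without changing the left-hand side. Passing to Laplace variables in \eqref{qequation} and using the $M_2$-estimate above together with the $L^\infty$ bound on $\La(\nabla^m u_\eps\kappa_{2\delta_1})$ from Lemma~\ref{lemlinfty}, the bound $|\La(f_\eps)|\le C\, Y_{R,\eps,\delta}\, e^{-|v|/2}$ from Lemma~\ref{Ystable}, and the pointwise bound on $u_0$, a first pass yields
\[
|\La(\nabla^m Q_\eps\kappa_{\delta_1})(z,v)|\le \frac{C(A,\delta_1)\,\eps}{(1+\eps|z|)^2}\,e^{-|v|/2}.
\]
Feeding the linear decay from Lemma~\ref{lemlin} back into the integrand and bootstrapping exactly as in Lemma~\ref{decaywithosmall} provides the extra factor $1/(1+|z|)$, producing the claimed estimate. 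The bound \eqref{Qest} is identical with one fewer outer derivative.

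For the pointwise estimate \eqref{dtqest}, I would invert the Laplace transform of the $M_2$-kernel directly. Since $\mathcal{L}^{-1}(w/(1+w)^2)(\tau) = (1-\tau)e^{-\tau}$, the time-domain kernel associated with $M_2(\eps z, v')$ is $\tfrac{\pi^2}{4\eps|v'|^2}(1 - \tau |v'|/\eps)e^{-\tau|v'|/\eps}P_{v'}$, whose $L^1_\tau$ norm is uniformly bounded on the support $|v'|\ge \kappa/2$ of $\eta$. Combining with the pointwise bound on $u_\eps$ and its derivatives from Lemma~\ref{unifbound}, the bound \eqref{dercond} on $\partial_t f_\eps$, and the pointwise decay of $u_0$, a direct computation yields $|\partial_t\nabla^m q_\eps| + |\partial_t\nabla^m Q_\eps| \le C(A)e^{-|v|/2}$ for $0\le t\le 1$. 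The main obstacle is the bootstrap step: one must verify that convolution of a linearly decaying Laplace transform with $Y_{R,\eps,\delta}$ preserves the linear decay with a constant independent of $R$, exactly as handled at the end of the proof of Lemma~\ref{decaywithosmall}; this is where the smallness of $\eps$ relative to $R$ built into the set $\Omega$ is essential.
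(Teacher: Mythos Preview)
The paper states Lemma~\ref{qlemma} without proof, so your proposal is filling in a gap the authors left implicit. Your strategy is sound and is the natural one: the crucial point is that $M_2(\eps z,v')=\tfrac{\pi^2}{4}\tfrac{\eps z}{(|v'|+\eps z)^2}P_{v'}$ carries the factor $\eps|z|/(1+\eps|z|)^2$ after integration against an $L^1$ weight in $v'$, and then the $1/|z|$ from integrating $\partial_t Q_\eps$, together with the linear decay of $\La(u_\eps\kappa_{2\delta_1})$ from Lemma~\ref{lemlin}, combine to give the stated bound. The bootstrap via Lemma~\ref{decaywithosmall} to handle the $f_\eps$-convolution is the right mechanism, and you correctly flag the $R$-independence as the delicate point.

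Two small imprecisions are worth noting. First, the exponential weight against which you integrate $|M_2|$ is $u_0(v-v')\le Ce^{-|v-v'|/2}$, not $e^{-|v'|/2}$; the estimate still holds uniformly in $v$ because $\tfrac{\eps|z|}{(|v'|+\eps|z|)^2}\le \tfrac{\eps|z|}{(c+\eps|z|)^2}$ on $\supp\eta$ and $\int u_0(v-v')\,\ud v'=\|u_0\|_{L^1}$, while the $e^{-|v|/2}$ in the output comes entirely from the $\La(\nabla u_\eps)$ factor via Lemma~\ref{lemlinfty}/\ref{lemlin}. Second, your inverse Laplace kernel for $M_2$ has the wrong prefactor: a direct computation gives $\La^{-1}\big(M_2(\eps\cdot,v')\big)(\tau)=\tfrac{\pi^2}{4\eps}(1-|v'|\tau/\eps)e^{-|v'|\tau/\eps}P_{v'}$, not $\tfrac{\pi^2}{4\eps|v'|^2}(\ldots)$. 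However, the $L^1_\tau$-norm of this kernel is still $C/|v'|$, hence bounded on $\supp\eta$, so your conclusion for \eqref{dtqest} is unaffected.
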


\begin{lemma}[$L^\infty$ estimate for time derivative] \label{dtest}
	Let $n\geq 4$ and $A=2a\geq \frac12$, $\delta>0$ be as in Theorem \ref{thmfreez}. For $R>0$, $\gamma, \delta_2 \in (0,1]$, $(f_\eps,g_\eps) \in \Omega^n_{A,R,\delta,\eps}$ let $u_\eps \in V^n_{A,\lambda}$ be the solution to \eqref{fpeq} with $u_0=m(v)+\delta_2 v_0(v)$, where
	$v_0\in H^n_\lambda$ satisfies:
	\begin{align*}
	0\leq v_0(v) \leq C e^{-\frac12|v|}.
	\end{align*}	
	Then for $m \in \Naturals$, $m\leq n-4$, $\eps>0$ small enough there holds: 
	\begin{align}
	|\partial_t \nabla^m u_\eps(t,v) | \leq C(A) e^{-\frac12 |v|} \quad \text{for $0\leq t \leq 1$}. \label{dtestform}
	\end{align}
\end{lemma}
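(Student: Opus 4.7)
The plan is to use the decomposition $u_\eps - u_0 = p_\eps + q_\eps$ from Lemma~\ref{lemmadec}, together with the fact that $u_0$ is time-independent, to reduce \eqref{dtestform} to separate estimates for $\partial_t \nabla^m p_\eps$ and $\partial_t \nabla^m q_\eps$. The estimate for $q_\eps$ is already furnished by \eqref{dtqest} in Lemma~\ref{qlemma}, so the only substantive task is to control $\partial_t \nabla^m p_\eps = \nabla \cdot \nabla^m \partial_t P_\eps$.

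For the $p_\eps$-contribution I would start from the explicit representation \eqref{pequation} of $\partial_t P_\eps$ and perform the boundary-layer rescaling $s = \eps \sigma$ in the time integration, absorbing the singular prefactor $1/\eps$ into the kernel and yielding
\begin{align*}
\partial_t P_\eps(t,v) = \int_0^{t/\eps} \!\!\int \tfrac{\pi^2}{4} e^{-\sigma |v'|} P_{v'}^\perp \eta(|v'|^2) \bigl[ &(u_0+f_\eps)(t-\eps\sigma, v-v') \nabla u_\eps(t-\eps\sigma, v) \\
&- \nabla(u_0+f_\eps)(t-\eps\sigma, v-v')\, u_\eps(t-\eps\sigma, v)\bigr]\,dv'\, d\sigma.
\end{align*}
Since $0 \leq t-\eps\sigma \leq t \leq 1$ throughout the integration range, all evaluations of $u_\eps$ and of $u_0+f_\eps$ lie in the regime where the uniform pointwise estimate of Lemma~\ref{unifbound} applies, i.e.\ $|\nabla^j u_\eps(s,v)| \leq C(A) e^{-|v|/2}$ for $j \leq n-4$; the analogous bound for $u_0+f_\eps$ follows from $u_0 = m + \delta_2 v_0$ being a smooth perturbation of a Maxwellian and from Sobolev embedding applied to $f_\eps \in V^n_{A,\lambda}$ together with Lemma~\ref{lemlinfty}. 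Applying $\nabla_v^{m+1}$ to the integrand and distributing derivatives by Leibniz, I would factor the $v$-dependence of $u_\eps(t-\eps\sigma,v)$ and $\nabla u_\eps(t-\eps\sigma,v)$ outside the $v'$-integral, where it directly produces the desired $e^{-|v|/2}$ decay. The $\sigma$-integral of $\sigma^k e^{-\sigma |v'|}$ equals $k!/|v'|^{k+1}$, which combined with the cutoff $\eta$ supported in $|v'| \geq \sqrt{\kappa/2}>0$ and the exponential tail of $(u_0+f_\eps)(v-v')$ yields an integrable $v'$-kernel producing a bounded function of $v$.

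The main obstacle is the derivative bookkeeping: a naive Leibniz expansion places up to $\nabla^{m+2}$ on $u_\eps$, one order beyond what Lemma~\ref{unifbound} supplies at $m = n-4$. I would resolve this by integrating by parts in $v'$ using the identity $\nabla(u_0+f_\eps)(v-v') = -\nabla_{v'}[(u_0+f_\eps)(v-v')]$ to transfer the critical derivatives from the factor $u_\eps(v)$ (or $\nabla u_\eps(v)$) onto the explicit kernel $e^{-\sigma |v'|} P_{v'}^\perp \eta(|v'|^2)$; each such move generates an extra factor $\sigma/|v'|$ (from the gradient of $e^{-\sigma|v'|}$) or $1/|v'|$ (from $P_{v'}^\perp$ and $\eta$), but since $\int_0^\infty \sigma^k e^{-\sigma|v'|}\,d\sigma = k!/|v'|^{k+1}$ and the cutoff removes the singularity at $v'=0$, the resulting $v'$-kernel remains integrable against the decaying factor $(u_0+f_\eps)(v-v')$. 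Combining the resulting bound $|\nabla^m \partial_t p_\eps(t,v)| \leq C(A) e^{-|v|/2}$ with \eqref{dtqest} closes the proof.
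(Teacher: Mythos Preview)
Your overall strategy---decompose $u_\eps-u_0=p_\eps+q_\eps$ via Lemma~\ref{lemmadec}, invoke \eqref{dtqest} for $q_\eps$, and estimate $\partial_t p_\eps=\nabla\cdot\partial_t P_\eps$ directly from \eqref{pequation} using that the kernel $\frac{1}{\eps}e^{-s|v'|/\eps}$ is bounded in $L^1_s$ together with the uniform pointwise bounds of Lemma~\ref{unifbound}---is exactly the paper's argument. Your rescaling $s=\eps\sigma$ is just a rewriting of the $L^1$-bound on the kernel.

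The gap is in your proposed integration-by-parts fix for the derivative deficit. You claim the identity $\nabla_v(u_0+f_\eps)(v-v')=-\nabla_{v'}(u_0+f_\eps)(v-v')$ lets you ``transfer the critical derivatives from the factor $u_\eps(v)$ onto the explicit kernel.'' That identity only moves derivatives off the factor $(u_0+f_\eps)(v-v')$; it cannot touch $u_\eps(v)$, which is constant in $v'$. After Leibniz the worst term is $(u_0+f_\eps)(t-\eps\sigma,v-v')\,\nabla^{m+2}u_\eps(t-\eps\sigma,v)$, and no amount of integration by parts in $v'$ reduces the order $m+2$ on $u_\eps$. (Note also that the deficit is two orders, not one: at $m=n-4$ you need $\nabla^{n-2}u_\eps$ pointwise, while Lemma~\ref{unifbound} supplies only up to $\nabla^{n-4}$.)

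The clean resolution is simply to accept the weaker range $m\leq n-6$: then $\nabla^{m+2}u_\eps$ falls within Lemma~\ref{unifbound}, and the straightforward estimate goes through with no integration by parts at all. This is harmless for the paper, since the only place Lemma~\ref{dtest} is invoked (recovering \eqref{dercond} in the proof of Theorem~\ref{thmbdry}) uses $m=0$, and $n\geq 6$ is assumed throughout. The paper's own proof glosses over this count, but your fix as written does not close it either.
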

\begin{proof}
	We use the decomposition $u_\eps = p_\eps + q_\eps$ introduced in Lemma \ref{lemmadec}. By the previous 
	Lemma \ref{qlemma} we know
	\begin{align*}
	|\partial_t \nabla^m q_\eps(t,v) | \leq C(A) e^{-\frac12 |v|} \quad \text{for $0\leq t \leq 1$}.
	\end{align*} 
	It remains to estimate $p_\eps$. The sequence $e^{-s/\eps}/\eps$ is bounded in $L^1$. Therefore the claim follows 
	by inserting the estimate \eqref{ulinftyest} into the definition \eqref{pequation}
	of $p_\eps = \nabla \cdot P_\eps$.
\end{proof}

\begin{notation}
	Let $b$ be the function given by:
	\begin{align*}
	b(t,r):= \frac{e^{-tr}}{r^2}  + \frac{t}{r} - \frac{1}{r^2}.
	\end{align*}
	For $u_0 \in H^n_{\lambda}$, define the boundary layer $B(t,v;u_0)= \nabla \cdot B_F(t,v;u_0)$ by:
	\begin{equation} \label{defBL}
	\begin{aligned}
	B_F(t,v;u_0) :=\int  \frac{\pi^2}{4}\frac{b(t,\frac{|v'|}{\eps})P_{v'}^\perp}{\eps}\eta 
	\left( u_0(v-v') \nabla u_0(v)  -  \nabla u_0(v-v')  \ u_0(v) \right) \ud{v'}.
	\end{aligned}
	\end{equation}
\end{notation} 

\begin{lemma}[Boundary Layer property] \label{blproperty}
	The function $B=\nabla \cdot B_F$, as defined in \eqref{defBL} satisfies:
	\begin{align*}
	\partial_{tt} B(t,v) &=   \nabla \cdot \left(\int \frac{\pi^2}{4} \frac{e^{-\frac{t|v'|}{\eps}}P_{v'}^\perp}{\eps} \eta(|v'|^2) \left( u_0(v-v')  \nabla 									u_0(v)
	-    \nabla u_0(v-v')  u_0(v)\right) \ud{v'} \right), \\
	B(0,v)	&=0  \quad  \partial_t B(0,v)=0.
	\end{align*}
\end{lemma}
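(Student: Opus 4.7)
The proof is essentially a direct computation, verifying the claimed identities by differentiating under the integral sign. The plan is to first analyze the scalar function $b(t,r)$ introduced in the statement, and then to transfer those properties to $B_F$ via the integral representation.

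First I would compute:
\begin{align*}
\partial_t b(t,r) &= -\frac{e^{-tr}}{r} + \frac{1}{r}, \qquad \partial_{tt} b(t,r) = e^{-tr},
\end{align*}
and verify the initial values $b(0,r) = \frac{1}{r^2} - \frac{1}{r^2} = 0$ and $\partial_t b(0,r) = -\frac{1}{r} + \frac{1}{r} = 0$. I would also note the Taylor expansion $b(t,r) = \tfrac{t^2}{2} + O(tr)$ as $r \to 0$, which shows that $b$ is in fact a smooth bounded function of $r$ on bounded time intervals (the apparent $1/r^2$ singularity cancels). This avoids having to rely on the cutoff $\eta$ to control the integrand near $v' = 0$.

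Next I would apply these identities under the integral sign defining $B_F$. Because $b$ and its $t$-derivatives are uniformly bounded on $[0,1] \times \supp \eta$, and because the velocity integrand decays like $e^{-\frac{1}{2}|v-v'|}e^{-\frac{1}{2}|v|}$ (via $u_0, \nabla u_0 \lesssim e^{-|v|/2}$), Lebesgue's dominated convergence justifies differentiating under the integral. Substituting $\partial_{tt} b(t,|v'|/\eps) = e^{-t|v'|/\eps}$ recovers the right-hand side claimed for $\partial_{tt} B$ after taking one divergence in $v$. Similarly, $b(0,\cdot) = 0$ gives $B_F(0,v) = 0$ and hence $B(0,v) = 0$, while $\partial_t b(0,\cdot) = 0$ gives $\partial_t B_F(0,v) = 0$ and hence $\partial_t B(0,v) = 0$.

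There is no real obstacle here: the only point requiring mild attention is the commutation of $\partial_t$, $\partial_{tt}$ and the spatial divergence $\nabla \cdot$ with the integral in $v'$. Both are justified by the uniform bounds above together with the $H^n_\lambda$ regularity of $u_0$ guaranteed by the hypothesis on $v_0$. The formula for $b$ was in fact engineered precisely so that $\partial_{tt} b(t,r) = e^{-tr}$ with vanishing Cauchy data at $t=0$, so the lemma is essentially the statement that $b$ is the unique such antiderivative; the proof simply records this and carries it through the integral.
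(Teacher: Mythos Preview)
The proposal is correct and follows essentially the same approach as the paper: compute $\partial_t b$, $\partial_{tt} b$, check $b(0,r)=\partial_t b(0,r)=0$, and carry these through the integral. Your write-up is in fact more careful than the paper's, which simply records the derivatives of $b$ and asserts the result without discussing the justification for differentiating under the integral.
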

\begin{proof}
	Differentiating $b$ gives:
	\begin{align*}
	\partial_t b(t,r) = \frac{1-e^{-rt}}{r}, \quad \partial_{tt} b(t,r) &= e^{-rt}.
	\end{align*}
	Therefore the second time derivative of $B$ is:
	\begin{align*}
	\partial_{tt} B(t,v) = 	& \nabla \cdot \left(\int  \frac{\pi^2}{4} \frac{e^{-\frac{|v'|}{\eps}t}P_{v'}^\perp}{\eps}\eta \left( u_0(v-v')  \nabla u_0(v) 
	-   \nabla u_0(v-v')   u_0(v)\right) \ud{v'}  									\right).	
	\end{align*}
	The initial data $B(0,v)=0$, $\partial_t B(0,v)=0$ follow by simply putting $t=0$. 
\end{proof}

\begin{lemma}[Remainder estimate]
	Let  $n\geq 4$ and $p_\eps$ solve \eqref{pequation} and $\|u_\eps\|_{V^n_{A,\lambda}} \leq C$. 
	There exists a $C_0>0$ such that for all
	$m\leq n-2$ there exists $\eps$ small enough such that: 
	\begin{align*}
	|\partial_{tt}(p_\eps-B)(t,v) | \leq C_0 e^{-\frac12 |v|},\quad \text{for $t\in[0,1]$}.
	\end{align*}
\end{lemma}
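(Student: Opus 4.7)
The plan is to compute $\partial_{tt}(P_\eps - B_F)$ directly from \eqref{pequation} and the identity for $\partial_{tt} B$ in Lemma \ref{blproperty}, and to show that the two leading $O(1/\eps)$ contributions cancel up to an error bounded uniformly in $\eps$. First I would introduce the vector-valued integrand
\[
F(\tau, v, v') := (u_0+f_\eps)(\tau, v-v')\nabla u_\eps(\tau,v) - \nabla(u_0+f_\eps)(\tau,v-v')\, u_\eps(\tau,v),
\]
and differentiate \eqref{pequation} once more in $t$ (after the substitution $\tau = t-s$ in the time convolution) using Leibniz's rule, obtaining
\[
\partial_{tt} P_\eps(t,v) = \int \tfrac{\pi^2}{4\eps} e^{-t|v'|/\eps} P_{v'}^\perp \eta(|v'|^2) F(0,v,v')\,\ud{v'} + \int_0^t\!\!\int \tfrac{\pi^2}{4\eps} e^{-(t-\tau)|v'|/\eps} P_{v'}^\perp \eta(|v'|^2)\, \partial_\tau F(\tau,v,v')\,\ud{v'}\ud{\tau}.
\]
Lemma \ref{blproperty} expresses $\partial_{tt} B_F$ as the first of these integrals with $F(0,v,v')$ replaced by the time-independent $F_0(v,v') := u_0(v-v')\nabla u_0(v) - \nabla u_0(v-v') u_0(v)$, so $\partial_{tt}(P_\eps - B_F)$ splits into a boundary contribution proportional to $F(0)-F_0 = f_\eps(0,v-v')\nabla u_0(v) - \nabla f_\eps(0,v-v') u_0(v)$ and a bulk convolution in $\partial_\tau F$.

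Next I would show that $f_\eps(0,v)\equiv 0$, which kills the boundary contribution. This should follow from the Laplace decay \eqref{gaprio2}: for $|z|$ large on $\Re z = A/2$, $|\La f_\eps(z,v)| \lesssim \eps^{-1} |z|^{-3} e^{-|v|/2}$, so $\La f_\eps(\cdot,v)$ lies in $L^1$ on the vertical line, and Fourier inversion produces a continuous extension of $f_\eps(\cdot,v) e^{-at}$ to all of $\Reals$; combined with $\supp f_\eps \subset [0,1]\times\Reals^3$, continuity forces $f_\eps(0,v) = 0$ and hence $\nabla f_\eps(0,v) = 0$. For the remaining bulk term, the $\tau$-convolution kernel satisfies
\[
\int_0^t \frac{e^{-(t-\tau)|v'|/\eps}}{\eps}\,\ud{\tau} = \frac{1-e^{-t|v'|/\eps}}{|v'|} \leq \frac{1}{|v'|} \leq \frac{C}{\sqrt{\kappa}}
\]
on $\supp \eta(|v'|^2) \subset \{|v'|^2 \geq \kappa/2\}$, which removes the $1/\eps$ singularity uniformly. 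To handle the outer divergence $\nabla_v\cdot$ I would exploit the cancellation $\nabla_v \cdot F = (u_0+f_\eps)(v-v')\Delta u_\eps - \Delta(u_0+f_\eps)(v-v') u_\eps$ (the cross terms $\nabla(u_0+f_\eps)(v-v')\cdot\nabla u_\eps$ cancel identically) and an analogous cancellation for the anisotropic piece $|v'|^{-2}v'_i v'_j \partial_{v_i}F_j$ produced by $P_{v'}^\perp$; these reduce the integrand to combinations of second $v$-derivatives of $u_\eps, f_\eps, u_0$. Combining with Lemma \ref{dtest} (which gives $|\partial_t\nabla^m u_\eps|\leq C e^{-|v|/2}$ for $m\leq n-4$) and Sobolev embedding on $u_\eps, f_\eps$ (using $n\geq 6$) to bound spatial derivatives, the integrand becomes pointwise $\lesssim e^{-|v|/2} e^{-|v-v'|/2}$, and integrating against $\eta(|v'|^2)/|v'|$ will yield the claimed bound.

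The main technical obstacle will be controlling the mixed derivatives $\partial_t\nabla^k f_\eps$ for $k=1,2$ that emerge once $\nabla_v\cdot$ and $\partial_\tau$ are combined, since the assumption \eqref{dercond} only bounds $\partial_t f_\eps$ itself pointwise. I expect to recover these by combining the Sobolev-in-$v$ regularity $\|f_\eps\|_{X^n_{A,\tilde\lambda}}\leq R$ with the Laplace identity $\La(\partial_t\nabla^k f_\eps)(z,v) = z\La(\nabla^k f_\eps)(z,v)$, valid precisely because $f_\eps(0,v)=0$; a Plancherel-type estimate (Lemma \ref{Plancherel}) then transfers the Laplace decay into pointwise bounds via Sobolev embedding in the velocity variable.
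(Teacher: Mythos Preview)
Your overall strategy matches the paper's: differentiate \eqref{pequation} once more in $t$, identify the boundary term at $\tau=0$ as $\partial_{tt}B_F$ (using $u_\eps(0)=u_0$ and your correct argument that $f_\eps(0,\cdot)\equiv 0$ from the $L^1$-integrability of $\La(f_\eps)$ on $\Re z=a$), then bound the remaining bulk convolution via the uniform $L^1_s$ bound on $\eps^{-1}e^{-s|v'|/\eps}$ together with \eqref{dercond} and Lemma~\ref{dtest}. The paper's proof is just as terse at this point --- it simply writes ``Since $|\partial_t f_\eps|\leq Ce^{-|v|/2}$ by assumption'' --- so you are right to flag the mixed-derivative issue as the real obstacle.

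The gap is that your proposed resolution for $\partial_t\nabla^k f_\eps$, $k=1,2$, does not close. Your cancellation $\nabla_v\cdot(P_{v'}^\perp F)=[P_{v'}^\perp]_{ij}\big[(u_0+f_\eps)\partial_i\partial_j u_\eps-\partial_i\partial_j(u_0+f_\eps)\,u_\eps\big]$ is correct ($P_{v'}^\perp$ is symmetric and $v$-independent), but applying $\partial_\tau$ still leaves $\partial_\tau\partial_i\partial_j f_\eps(v-v')\,u_\eps(v)$ behind. Your Laplace route gives only $L^2_\omega$ control of $\La(\nabla^k f_\eps)$ pointwise in $v$ (via Plancherel and Sobolev in $v$); multiplying by $z$ destroys integrability along the vertical line, so no $L^\infty_t$ bound on $\partial_t\nabla^k f_\eps$ follows, and \eqref{gaprio1}--\eqref{gaprio2} are stated only for $f,g$, not their $v$-derivatives. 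The fix is to integrate by parts in $v'$ rather than estimate $\nabla^k\partial_t f_\eps$ directly: every $\partial_v$ acting on $f_\eps(\tau,v-v')$ converts to $-\partial_{v'}$ and moves onto the kernel. Each $v'$-derivative of $e^{-s|v'|/\eps}$ contributes a factor $s/\eps$, but $\int_0^t(s/\eps)^k\eps^{-1}e^{-s|v'|/\eps}\ud s\leq k!/|v'|^{k+1}$ stays bounded on $\supp\eta(|v'|^2)$; the $v'$-derivatives of $P_{v'}^\perp$ and $\eta$ are harmless for the same reason. After at most two integrations by parts only $f_\eps$ and $\partial_t f_\eps$ remain, both controlled by \eqref{dercond}. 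The required $u_\eps$-bounds are $\nabla^m u_\eps$ and $\partial_t\nabla^m u_\eps$ for $m\leq2$, supplied by Lemmas~\ref{unifbound} and~\ref{dtest}, which in turn need $n\geq6$ rather than the $n\geq4$ stated in the lemma.
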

\begin{proof}
	Take the time derivative of \eqref{pequation}. We can split using Lemma \ref{blproperty}:
	\begin{align*}
	\partial_{tt} p_\eps 	= 	& \nabla \cdot \left(\int_0^t \int\frac{\pi^2}{4}  \frac{e^{-\frac{s|v'|}{\eps}}P_{v'}^\perp}{\eps } \partial_t((u_0+f_\eps)(t-s,v-v') \eta \nabla 									u_\eps(t-s,v)) \ud{v'} \ud{s} \right) \\
	-  	& \nabla \cdot \left(\int_0^t \int \frac{\pi^2}{4} \frac{e^{-\frac{s|v'|}{\eps}}P_{v'}^\perp}{\eps } \partial_t(\nabla (u_0+f_\eps)(t-s,v-v')) \eta  u_\eps(t-s,v) \ud{v'} \ud{s} \right) \\
	+	& \nabla \cdot \left(\int \frac{\pi^2}{4} \frac{e^{-\frac{t|v'|}{\eps}}P_{v'}^\perp}{\eps} \eta(|v'|^2) \left( u_0(v-v') \nabla 						u_0(v) - \nabla u_0(v-v')    u_0(v)\right) \ud{v'} \right) \\
	=  &R_1 + R_2 + \partial_{tt} B .
	\end{align*}
	Since $|\partial_t f_\eps|\leq C e^{-\frac12 |v|}$ by assumption, we obtain:
	\begin{align*}
	|\partial_{tt} (p_\eps - B)(t,v)|=|R_1(t,v) +R_2(t,v)| \leq C_0 e^{-\frac12 |v|},\quad \text{for $t\in[0,1]$},
	\end{align*}
	as claimed.
\end{proof}
\begin{lemma}[Smallness of $\La(p_\eps-B)$] \label{LaRem}
	Let $p_\eps$ solve \eqref{pequation} and $\|u_\eps\|_{V^n_{A,\lambda}} \leq C$ for some $A=2a>0$. 
	We have $p_\eps - B = \nabla \cdot (P_{\eps}-B_F)$, and there is a $C_0>0$ such that for all
	$m\leq n-2$, $\delta_1>0$ and $\eps>0$ small enough: 
	\begin{align*}
	|\La((p_\eps - B) \kappa_{\delta_1})(z,v)|+|\La((P_\eps-B_F) \kappa_{\delta_1})(z,v)| &\leq \frac{\delta_1 C_0 e^{-\frac12 |v|}}{1+|z|^2}.
	\end{align*}	
\end{lemma}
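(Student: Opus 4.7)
The plan is to reduce the claim to the pointwise estimate $|\partial_{tt}(p_\eps-B)(t,v)|\leq C_0 e^{-|v|/2}$ on $[0,1]$ from the preceding remainder lemma (together with its vector-valued counterpart for $P_\eps-B_F$), by exploiting the fact that both $p_\eps-B$ and $P_\eps-B_F$ vanish to second order at $t=0$.

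First I would verify the initial conditions. Equation \eqref{pequation} expresses $\partial_t P_\eps$ as an integral from $0$ to $t$, so $P_\eps(0,v)=0$ and $\partial_t P_\eps(0,v)=0$. By Lemma \ref{blproperty} we also have $B(0,v)=\partial_t B(0,v)=0$, and an inspection of \eqref{defBL} shows $B_F(0,v)=\partial_t B_F(0,v)=0$ since $b(0,r)=\partial_t b(0,r)=0$. Consequently, writing $h$ for either $p_\eps-B$ or $P_\eps-B_F$, we have $h(0,v)=\partial_t h(0,v)=0$; combined with the remainder estimate, Taylor expansion in $t$ yields
\begin{align*}
|h(t,v)|\leq \tfrac{C_0}{2} t^2 e^{-|v|/2}, \qquad |\partial_t h(t,v)|\leq C_0 t\, e^{-|v|/2}, \qquad t\in[0,1].
\end{align*}
For the vector version the required bound $|\partial_{tt}(P_\eps-B_F)|\leq C_0 e^{-|v|/2}$ follows by repeating the computation of the previous lemma one level up: one differentiates \eqref{pequation} twice in $t$, cancels the initial term against $\partial_{tt} B_F$ via Lemma \ref{blproperty}, and controls the remainder using $|\partial_t f_\eps|\leq R e^{-|v|/2}$ from \eqref{dercond} together with the $L^\infty$ bounds on $u_\eps,\nabla u_\eps$ from Lemma \ref{unifbound}.

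Next, I would expand $\partial_{tt}(h\kappa_{\delta_1})$ by Leibniz. Since $\kappa_{\delta_1}'$ and $\kappa_{\delta_1}''$ are supported in $[\delta_1,2\delta_1]$ with $|\kappa_{\delta_1}^{(k)}|\lesssim \delta_1^{-k}$, the Taylor bounds above give
\begin{align*}
|2\partial_t h\cdot \kappa'_{\delta_1}|\leq 2 C_0 t \cdot C\delta_1^{-1} e^{-|v|/2}\leq C' C_0 e^{-|v|/2}, \qquad
|h\cdot\kappa''_{\delta_1}|\leq \tfrac{C_0}{2} t^2 \cdot C\delta_1^{-2} e^{-|v|/2}\leq C' C_0 e^{-|v|/2},
\end{align*}
both uniformly in $\delta_1$. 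Together with $|\partial_{tt} h\cdot \kappa_{\delta_1}|\leq C_0 e^{-|v|/2}$, this yields $\|\partial_{tt}(h\kappa_{\delta_1})(\cdot,v)\|_{L^\infty_t}\leq C'' C_0 e^{-|v|/2}$.

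Finally, since $h\kappa_{\delta_1}$ is compactly supported in $[0,2\delta_1]$ and vanishes to first order at $t=0$, two integrations by parts in the Laplace integral give $\La(h\kappa_{\delta_1})(z)=\La(\partial_{tt}(h\kappa_{\delta_1}))(z)/z^2$, hence
\begin{align*}
|\La(h\kappa_{\delta_1})(z,v)|\leq \frac{1}{|z|^2}\int_0^{2\delta_1} \|\partial_{tt}(h\kappa_{\delta_1})(\cdot,v)\|_{L^\infty_t}\, \rmd t \leq \frac{C'''\delta_1 C_0 e^{-|v|/2}}{|z|^2} \quad \text{for }|z|\geq 1.
\end{align*}
For $|z|\leq 1$ I would instead use the crude pointwise bound $|h(t,v)|\leq 2 C_0\delta_1^2 e^{-|v|/2}$ on $[0,2\delta_1]$ to obtain $|\La(h\kappa_{\delta_1})(z,v)|\leq C''''\delta_1^3 C_0 e^{-|v|/2}$, which dominates $\delta_1 C_0 e^{-|v|/2}/(1+|z|^2)$ for $|z|\leq 1$ once $\delta_1\leq 1$. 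Combining the two regimes yields the stated estimate for both $p_\eps-B$ and $P_\eps-B_F$.

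The only real technical point is the vector-level second-derivative estimate in the first step; once that is in hand, the rest is the standard mechanism by which vanishing of a function and its first derivative at $t=0$ together with a uniform bound on the second derivative produce a Laplace transform with quadratic decay, and the small time window $[0,2\delta_1]$ provides the extra factor $\delta_1$.
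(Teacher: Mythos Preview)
Your proposal is correct and follows essentially the same route as the paper: use the vanishing of $(p_\eps-B)$ and $\partial_t(p_\eps-B)$ at $t=0$ together with the previous remainder bound on $\partial_{tt}(p_\eps-B)$, apply Leibniz to control $\partial_{tt}\bigl((p_\eps-B)\kappa_{\delta_1}\bigr)$, and then integrate by parts twice in the Laplace integral to produce the factor $|z|^{-2}$ while the support of $\kappa_{\delta_1}$ gives the factor $\delta_1$. The paper handles the denominator $1+|z|^2$ implicitly via $|z|\geq a>0$ on the line $\Re(z)=a$, whereas you treat the small-$|z|$ regime by a direct crude bound; either way works.
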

\begin{proof}
	By definition of $B$ the difference $p_\eps -B$ vanishes initially, as well as the time derivative:
	\begin{align*}
	(p_\eps - B)(0,v) = \partial_t (p_\eps - B)(0,v) =0. 
	\end{align*}
	Combined with the lemma above this shows:
	\begin{align*}
	|\partial_{tt} ((p_\eps - B) \kappa_{\delta_1})| \leq C_0 e^{-\frac12 |v|} (1 + \frac{t}{\delta_1} + \frac{t^2}{\delta_1^2}) \kappa_{\delta_1},\quad \text{for $t\in[0,1]$}. 
	\end{align*}
	After integrating by parts twice this allows to bound the Laplace transform by:
	\begin{align*}
	|\La((p_\eps - B) \kappa_{\delta_1})(z,v)| 	&\leq \frac{C_0 e^{-\frac12 |v|}}{|z|^2} \int_0^\infty C e^{-\frac12 |v|} (1 + \frac{t}{\delta_1} + \frac{t^2}{\delta_1^2}) \kappa_{\delta_1} \ud{t}	\\
	&\leq \frac{C_0 e^{-\frac12 |v|}}{|z|^2} \delta_1.
	\end{align*}
	The estimate for $P_\eps-B_F$ is proved similarly.
\end{proof}

\begin{lemma} [Stationarity of $m$] \label{mstation}
	Let  $\sigma^2,m_0>0$, $m(\sigma^2,M_0)(v)$ be the Maxwellian defined in \eqref{defmaxwellian}. Then 
	for all $t\geq 0$, $v\in \Reals^3$ we have:
	\begin{equation}
	B(t,v;m) = 0.
	\end{equation}
\end{lemma}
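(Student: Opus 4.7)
The plan is to show that the integrand defining $B_F(t,v;m)$ vanishes pointwise in $v'$, which immediately gives $B_F\equiv 0$ and hence $B = \nabla\cdot B_F \equiv 0$.

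First I would compute the gradient of the Maxwellian explicitly. Since $m(v)=m_0 (\sigma\sqrt{2\pi})^{-3} e^{-|v|^2/(2\sigma^2)}$ we have
\begin{align*}
\nabla m(v) = -\frac{v}{\sigma^2}\,m(v), \qquad \nabla m(v-v') = -\frac{v-v'}{\sigma^2}\,m(v-v').
\end{align*}
Substituting these into the bracket appearing in \eqref{defBL} and using the fact that $m(v-v')m(v)$ is symmetric in its two factors, I would compute
\begin{align*}
m(v-v')\nabla m(v) - \nabla m(v-v')\,m(v)
&= \Bigl(-\frac{v}{\sigma^2} + \frac{v-v'}{\sigma^2}\Bigr) m(v-v')\,m(v) \\
&= -\frac{v'}{\sigma^2}\,m(v-v')\,m(v).
\end{align*}
The crucial observation is that this vector is a scalar multiple of $v'$.

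Now I would apply the projector $P_{v'}^\perp$, which by definition annihilates any vector parallel to $v'$: $P_{v'}^\perp v' = 0$. Therefore the full integrand of $B_F(t,v;m)$ is identically zero for every $v'\neq 0$, and the single point $v'=0$ is anyway removed by the cutoff $\eta(|v'|^2)$ which vanishes near the origin. Hence $B_F(t,v;m)=0$ for every $t\geq 0$ and $v\in\mathbb R^3$, and consequently $B(t,v;m)=\nabla\cdot B_F(t,v;m)=0$.

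No obstacle is expected: the argument is a one-line cancellation, exploiting exactly the algebraic structure that the Maxwellian is a steady state of the Landau collision operator (the integrand is the analogue of the Landau flux, which vanishes on Maxwellians precisely because the relative drift is parallel to $v-v'$ and is thus killed by the orthogonal projector). This is the same mechanism that makes $m$ a steady state of the limit equation \eqref{limiteq}.
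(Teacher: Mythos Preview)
Your proof is correct and follows essentially the same route as the paper: compute $\nabla m(v)=-\tfrac{v}{\sigma^2}m(v)$, observe that the bracket $m(v-v')\nabla m(v)-\nabla m(v-v')m(v)$ is a scalar multiple of $v'$, and conclude that $P_{v'}^\perp$ annihilates it, so $B_F\equiv 0$ and hence $B\equiv 0$.
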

\begin{proof}
	The argument is identical to the one proving that $m$ is a stationary point of the Landau equation: First we observe that
	\begin{align*}
	\nabla m(v) &= - \frac{v}{\sigma^2} m(v).
	\end{align*}
	This however implies that:
	\begin{align*}
	P_{v'}^\perp \left(  m(v-v')\nabla m(v) - \nabla m(v-v')   m(v)\right) 
	=-	P_{v'}^\perp  \frac{v'}{\sigma^2}  m(v-v')m(v) = 0.  
	\end{align*}
	Inserting this into the definition of $B(t,v;m)$ in \eqref{defBL} gives the claim.
\end{proof}
We use the stationarity of the Maxwellian $m$ to obtain smallness of the boundary layer, provided
the evolution starts sufficiently close to $m$.
\begin{lemma}[Boundary layer estimate] \label{BLlemma}
	Let $u_0=m(v) + \delta_2 v_0$, for $v_0$ some fixed smooth function satisfying
	\begin{align*}
	0	\leq 	&v_0(v) \leq C e^{-\frac12 |v|}, \quad	|\nabla^i 	 v_0| \leq C e^{-\frac12 |v|} \quad \text{ for $i=0,1,2$}.
	\end{align*}
	Let $B$ be the associated Boundary Layer defined by \eqref{defBL}. Then the Laplace transforms of $B$ and $B_F$ satisfy: 
	\begin{align} \label{BLestimate}
	|\La(B \kappa_{\delta_1})(z,v)| + |\La(B_F \kappa_{\delta_1})(z,v)| \leq C(\delta_1) \frac{\delta_2 e^{-\frac12 |v|}}{1+|z|^2}. 
	\end{align}
\end{lemma}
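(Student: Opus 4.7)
\emph{Strategy.} The plan hinges on two structural observations. First, the expression $u_0(v-v')\nabla u_0(v) - \nabla u_0(v-v')u_0(v)$ appearing in \eqref{defBL} is quadratic in $u_0$, so the substitution $u_0 = m + \delta_2 v_0$ produces a splitting of both $B$ and $B_F$ into an $O(1)$ piece, an $O(\delta_2)$ cross piece and an $O(\delta_2^2)$ piece. The exact cancellation in the proof of Lemma \ref{mstation}, namely that $P_{v'}^\perp(m(v-v')\nabla m(v) - \nabla m(v-v')m(v)) = 0$, shows that the $O(1)$ contribution vanishes identically in $(t,v)$. Hence the whole of $B$ and $B_F$ carry an overall prefactor $\delta_2$, which accounts for the $\delta_2$ in the target estimate \eqref{BLestimate}. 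Second, Lemma \ref{blproperty} supplies $B(0,v) = \partial_t B(0,v) = 0$, and the same holds for $B_F$ since $b(0,r) = 0$ and $\partial_t b(0,r) = 0$ by direct inspection. These vanishing initial conditions, combined with $\kappa_{\delta_1}(0) = 1$, $\kappa_{\delta_1}'(0) = 0$, and the compact support of $\kappa_{\delta_1}$, justify integrating by parts twice in the Laplace integral to obtain
\begin{equation*}
\La(B\kappa_{\delta_1})(z,v) = \frac{1}{z^2}\,\La\bigl(\partial_{tt}(B\kappa_{\delta_1})\bigr)(z,v),
\end{equation*}
and the analogous identity for $B_F$. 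This produces the $1/|z|^2$ decay required in \eqref{BLestimate}.

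\emph{Execution.} The Leibniz expansion $\partial_{tt}(B\kappa_{\delta_1}) = (\partial_{tt}B)\kappa_{\delta_1} + 2(\partial_t B)\kappa_{\delta_1}' + B\kappa_{\delta_1}''$ reduces the matter to pointwise and $L^1_t$ estimates of $B$, $\partial_t B$, $\partial_{tt} B$ on $[0,2\delta_1]$. For $\partial_{tt} B$ we use the formula of Lemma \ref{blproperty}, and the crucial uniform-in-$\eps$ calculation
\begin{equation*}
\int_0^{2\delta_1} \frac{e^{-t|v'|/\eps}}{\eps}\,\ud t = \frac{1 - e^{-2\delta_1 |v'|/\eps}}{|v'|} \leq \frac{1}{|v'|},
\end{equation*}
which, combined with the cutoff $\eta(|v'|^2)$ excluding small $|v'|$ and the Gaussian decay of $m$ and its derivatives, gives
\begin{equation*}
\int_0^{2\delta_1} \bigl|\partial_{tt} B(t,v)\bigr|\,\ud t + \int_0^{2\delta_1} \bigl|\partial_{tt} B_F(t,v)\bigr|\,\ud t \leq C\delta_2\, e^{-|v|/2}.
\end{equation*}
The bounds on $\partial_t B$ and $B$ (contracting with the lower order terms $\kappa_{\delta_1}'$ and $\kappa_{\delta_1}''$) follow from the same reasoning using $|\partial_t b(t,r)| = (1-e^{-tr})/r \leq 1/r$ and $|b(t,r)| \leq Ct/r$, respectively, both of which yield integrable $v'$-kernels of order $1/|v'|$ after incorporating the time supports of $\kappa_{\delta_1}',\kappa_{\delta_1}''\sim 1/\delta_1$. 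Putting everything together yields $|\La(\partial_{tt}(B\kappa_{\delta_1}))(z,v)| \leq C(\delta_1)\delta_2 e^{-|v|/2}$ for $\Re z \geq 0$, and hence $|\La(B\kappa_{\delta_1})(z,v)| \leq C(\delta_1)\delta_2 e^{-|v|/2}/|z|^2$; the range of small $|z|$ is handled by the trivial bound $|\La(B\kappa_{\delta_1})(z,v)| \leq \|B\kappa_{\delta_1}(\cdot,v)\|_{L^1_t} \leq C(\delta_1)\delta_2 e^{-|v|/2}$. Taking the minimum gives the denominator $1+|z|^2$, and the argument for $B_F$ is identical.

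\emph{Main obstacle.} The delicate point is the uniformity in $\eps$: both $\partial_{tt} B$ and $\partial_{tt} B_F$ contain the prefactor $1/\eps$ which is singular as $\eps\to 0$. The resolution is to integrate in $t$ before integrating in $v'$, which converts $e^{-t|v'|/\eps}/\eps$ into the $\eps$-free kernel $1/|v'|$. The cutoff $\eta$ (which forces $|v'| \geq \sqrt{\kappa/2}$) ensures this kernel is bounded, and the Gaussian factors from $m$ ensure integrability at infinity, so that all constants depend only on $\delta_1$ (through $\kappa_{\delta_1}', \kappa_{\delta_1}''$) and on $v_0$ via the fixed bounds stated in the hypotheses, but not on $\eps$.
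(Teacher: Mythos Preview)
Your proof is correct and reaches the same conclusion, but the mechanism for the $1/(1+|z|^2)$ decay differs from the paper's. The paper computes the Laplace transform of $b$ in closed form,
\[
\La(b(\cdot,r))(z)=\frac{1}{rz^2}-\frac{1}{r(z+r)z},
\]
and inserts $r=|v'|/\eps$ directly into the $v'$-integral; the $1/z^2$ factor then comes from the explicit formula, and the $\eps$-uniformity is immediate because $\La(b(\cdot,|v'|/\eps))/\eps$ is bounded by $C/(|v'||z|^2)$ on the line $\Re z=a$. Your argument instead exploits the vanishing initial data $B(0)=\partial_t B(0)=0$ (and the analogous facts for $B_F$) to integrate by parts twice in $t$, and recovers $\eps$-uniformity through the time integral $\int_0^{2\delta_1} e^{-t|v'|/\eps}/\eps\,\ud t\le 1/|v'|$. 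This is exactly the device used in the paper for the remainder estimate (Lemma~\ref{LaRem}), so your proof is really a recycling of that argument with the cancellation of Lemma~\ref{mstation} supplying the $\delta_2$. The paper's route is slightly cleaner here because the cutoff $\kappa_{\delta_1}$ need not be differentiated and no Leibniz bookkeeping is required; your route has the advantage of not needing the explicit Laplace transform of $b$ and of making transparent why the argument for $B$ and $B_F$ is identical to that for $p_\eps-B$.
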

\begin{proof}
	Using Lemma \ref{mstation} we can simplify $B$ to:
	\begin{align*}
	B(t,v) 	= 	& \nabla \cdot \left(\int  \frac{b(t,\frac{|v'|}{\eps})P_{v'}^\perp}{\eps } (m+ \delta_2 v_0)(v-v') \eta \nabla 															(m+ \delta_2 v_0)(v) \ud{v'}  \right) \\
	-  	& \nabla \cdot \left(\int  \frac{b(t,\frac{|v'|}{\eps})P_{v'}^\perp}{\eps} \nabla (m+ \delta_2 v_0)(v-v')  \eta  (m+ \delta_2 v_0)(v) \ud{v'}  									\right)\\
	=	& \nabla \cdot \left(\int  \frac{b(t,\frac{|v'|}{\eps})P_{v'}^\perp}{\eps} \eta 
	\left[\delta_2 v_0 (v')  \nabla m(v) +(\delta_2 v_0+ m)(v-v')\delta_2  \nabla v_0(v)  \right] \ud{v'}  \right) \\
	-  	& \nabla \cdot \left(\int  \frac{b(t,\frac{|v'|}{\eps})P_{v'}^\perp}{\eps} \eta 
	\left[\delta_2 \nabla v_0 (v')   m(v) + \nabla (\delta_2 v_0+ m)(v-v')\delta_2   v_0(v)  \right] \ud{v'} 									\right).	 
	\end{align*}
	The Laplace transform of $b$ can be computed explicitly:
	\begin{align*}
	\La(b(\cdot,r))(z) &= \frac{1}{r z^2}- \frac{1}{r(z+r)z}.
	\end{align*}
	Inserting this above we obtain the estimate:
	\begin{align*}
	|\La(B \kappa_{\delta_1})(z,v)| + |\La(B_F \kappa_{\delta_1})(z,v)| \leq C(\delta_1) \frac{\delta_2 e^{-\frac12 |v|}}{1+|z|^2}, 
	\end{align*}
	which is the claim of the Lemma.
\end{proof}

We are in the position to now prove Theorem \ref{thmbdry}.

\begin{proofof}[Proof of Theorem \ref{thmbdry}]
	Let $A,\delta>0$ as in Theorem \ref{thmfreez}. Then the theorem ensures that for $R>0$, $\delta_2 \in (0,1]$ arbitrary,
	and $(f,g) \in \Omega^n_{A,R,\delta_\eps}$ the solution $u_\eps$ to \eqref{fpeq} with $u_\eps-u_0= \nabla \cdot U_\eps$ can be bounded by:
	\begin{align} \label{startaprio}
	\|u_\eps \kappa_{\delta_1}\|_{V^n_{A,\lambda}} + \|U_\eps \kappa_{\delta_1}\|_{V^{n-1}_{A,\lambda}} \leq C .
	\end{align} 
	We use that $\psi_{\delta_1}(f,g)= (\kappa_{\delta_1}(u_\eps-u_0), \kappa_{\delta_1} U_\eps)$  and decompose $u_\eps$ into three pieces:
	\begin{equation}
	\begin{aligned} \label{termbytermlap}
	(u_{\eps}-u_0) \kappa_{\delta_1} 	&= 	 (p_\eps-B) \kappa_{\delta_1} + B \kappa_{\delta_1}  + q_\eps \kappa_{\delta_1}\\
	U_\eps 	\kappa_{\delta1}							&=  (P_\eps-B_F) \kappa_{\delta1} +  B_F \kappa_{\delta1} +  Q_\eps \kappa_{\delta_1}. 
	\end{aligned} 
	\end{equation}
	Using estimate \eqref{startaprio} and Lemmas \ref{qlemma}, \ref{LaRem}, \ref{BLlemma} we can find $\delta_1,\eps_0>0$ small enough and $R>0$ large enough, such that for 
	$\delta_2,\eps \in (0,\eps_0]$ the Laplace transforms of the summands in \eqref{termbytermlap} can be estimated by:
	\begin{align*}
	|\La (u_{\eps} \kappa_{\delta_1})| + |\La (U_{\eps} \kappa_{\delta_1})|	&\leq \frac{ \delta e^{-\frac12 |v|}}{1+|z|^2} + \frac{R \eps |z| e^{-\frac12 |v|}}{(1+\eps |z|)^2(1+|z|)^2}.
	\end{align*} 
	So we recover \eqref{gaprio1}, one of the defining estimates of $\Omega^n_{A,R,\delta,\eps}$ . The upper bound \eqref{gaprio2} is the content of Lemma \ref{decaywithosmall}. The remaining  estimate \eqref{dercond} is proved in Lemma \ref{dtest}. 
\end{proofof}
\section{Existence of solutions and Markovian Limit} \label{sec:Existence}

\subsection{Existence of a solution to the non-Markovian equation}
With the a priori estimates proved in the last section, we can now 
prove Theorem~\ref{mainthm1pf}.

\begin{proofof}[Proof of Theorem \ref{mainthm1pf}]
	Without loss of generality, let $m$ be the standard Gaussian, i.e. $\sigma=m_0=1$.
	First let $\gamma>0$. We invoke Theorems \ref{thmfreez} and \ref{thmbdry} to find $A,\delta,R, \delta_1>0$ and $\eps_0>0$ 		such that for all 
	$\eps,\delta_2 \in (0,\eps_0]$ the mapping $\Psi_{\delta_1}: \Omega^n_{A,R,			\delta,\eps} \rightarrow \Omega^n_{A,R,\delta,\eps}$ is continuous	with respect to the topologies of $X^n_{A,\tilde{\lambda}}$, $X^n_{A,\lambda}$, hence also
	as a map from $X^n_{A,\tilde{\lambda}}$ to itself. By Lemma \ref{closedconvex} we know that $\Omega^n_{A,R,\delta,\eps}$ is a closed, convex, bounded and 	nonempty subset of 
	$X^n_{A,\tilde{\lambda}}$. Therefore, existence of a fixed point of $\Psi_{\delta_1}$ follows from Schauder's theorem, provided 		we can show that the mapping is compact. To see this, we use that
	Theorem \ref{freethm} gives the estimate:
	\begin{align} \label{PsiEst2}
	\|\Psi_{\delta_1} (f,g)\|_{X^{n}_{A,\lambda}} + \|\partial_t \Psi_{\delta_1} (f,g)\|_{X^{n-2}_{A,				\lambda}} \leq C(A).
	\end{align}
	Since ${}^\gamma \nabla$ is smoothing, the defining equation \eqref{fpeq} of $\Psi_{\delta_1}$ implies:
	\begin{align*}
	\|\Psi_{\delta_1} (f,g)\|_{X^{n+1}_{A,\lambda}} + \|\partial_t \Psi_{\delta_1} (f,g)\|_{X^{n+1}_{A,				\lambda}} \leq C(A,\gamma).
	\end{align*}
	This implies compactness of the mapping $\Psi_{\delta_1}$ by the Rellich type Lemma \ref{Rellich}.
	Hence for $\gamma \in (0,1]$, we have proved the existence of solutions $u_{\eps,\gamma}$ to:
	\begin{equation} \label{mollifiedfp} 
	\begin{aligned}
	\partial_t u_{\eps,\gamma} 	= 	&\frac1\eps {}^\gamma \nabla \cdot \left(\int_0^t K[u_{\eps,					\gamma}]\left(\frac{t-s}\eps,v\right) {}^\gamma \nabla u_{\eps,\gamma}(s,v) \ud{s}\right) \\
	-  	&\frac1\eps {}^\gamma \nabla \cdot \left(\int_0^t P[u_{\eps,\gamma}]\left(\frac{t-s}\eps,v\right) u_{\eps,\gamma}(s,v)  \ud{s}\right) \\ 
	u_{\eps,\gamma}(0,\cdot)	&= u_0(\cdot),
	\end{aligned}  
	\end{equation} 
	for times $0\leq t\leq \delta_1$. It remains to pass $\gamma \rightarrow 0$ to obtain
	a solution of the non-mollified equation. The uniform estimate \eqref{PsiEst2} shows that
	for $\eps>0$ there is a sequence $\gamma_j \rightarrow 0$ such that $u_{\eps,\gamma_j}\rightarrow u_\eps$ in $V^{n-3}_{A,\tilde{\lambda}}$,
	$u_{\eps,\gamma_j}\rightharpoonup u_\eps$ in $V^{n}_{A,\lambda}$ and
	$\partial_t u_{\eps,\gamma_j}\rightharpoonup \partial_t u_\eps$ in $V^{n-2}_{A,\lambda}$. 
	Hence both sides of \eqref{mollifiedfp} converge weakly in $V^{n-2}_{A,\lambda}$, and it suffices
	to identify the limit of the right-hand side. Indeed, from the convergence in  $V^{n-3}_{A,\tilde{\lambda}}$
	we conclude that pointwise a.e. along a subsequence: 
	\begin{equation} \label{mollifiedfp2}
	\begin{aligned}
	&{}^{\gamma_j} \nabla \cdot \left(\int_0^t K[u_{\eps,{\gamma_j}}]\left(\frac{t-s}\eps,v\right) {}^{\gamma_j} \nabla u_{\eps,{\gamma_j}}(s,v) - P_{\gamma_j}[u_{\eps,{\gamma_j}}]\left(\frac{t-s}\eps,v\right) u_{\eps,{\gamma_j}}(s,v)  \ud{s}\right) \\
	\rightarrow & \nabla \cdot \left(\int_0^t 		K[u_{\eps}]\left(\frac{t-s}\eps,v\right)  \nabla u_{\eps}(s,v) -P[u_{\eps}]\left(\frac{t-s}\eps,v\right) u_{\eps}(s,v)  \ud{s}\right).
	\end{aligned}  
	\end{equation}
	Estimate \eqref{apriounif} follows from \eqref{PsiEst2}, and inserting the estimate back into equation \eqref{thmepseq} proves that $u_\eps \in C^1([0,\delta_1];H^{n-2}_{\lambda})$.	
\end{proofof}

\subsection{Non-Markovian to Markovian limit} \label{seclimits}
In this section we prove the transition from non-Markovian to Markovian
dynamics on the macroscopic timescale. As $\eps\rightarrow 0$, the solutions
$u_\eps$ to the non-Markovian equations \eqref{thmepseq} converge to
solutions of the Landau equation.

\begin{proofof}[Proof of Theorem \ref{mainthm2pf}]
	For the solutions $u_\eps$ of \eqref{thmepseq} constructed in Theorem \ref{mainthm1pf} we have the a priori bound: 
	\begin{align*} 
	\|((u_\eps-u_0)\kappa_{\delta_1},U_\eps \kappa_{\delta_1})\|_{X^n_{A,\lambda}} + \|\partial_t((u_\eps-u_0)\kappa_{\delta_1},U_\eps \kappa_{\delta_1})\|_{X^{n-2}_{A,\lambda}} \leq C(A).
	\end{align*}
	Using the compactness Lemma \ref{Rellich} and the fact that $V^n_{A,\lambda}$ is a separable Hilbert space, we can find $u \in V^n_{A,\lambda}$,
	s.t. along a sequence $\eps_j\rightarrow 0$ we have $u_{\eps_j} \rightarrow u$ in $V^{n-3}_{A,\tilde{\lambda}}$, $u_{\eps_j} \rightharpoonup u \in V^n_{A,\lambda}$ and $\partial_t u_{\eps_j} \rightharpoonup \partial_t u \in V^{n-2}_{A,\lambda}$. We need to show
	that $u$ solves the equation \eqref{uequation}. Since both sides
	of the equation are well-defined and have a well-defined Laplace transform,
	it is sufficient to show that $u$ solves the equation in Laplace variables.
	To this end, we take the Laplace transform of \eqref{thmepseq}:
	\begin{equation} \label{laplacelimit}
	\begin{aligned}		
	\La(\partial_t u_{\eps_j})(z,v) 	= 	&\nabla \cdot \left(\int_{\Reals^3} (M_1+M_2)({\eps_j} z,v') \La(u_{\eps_j}(s,v-v') \nabla u_{\eps_j}(s,v))(z) \eta \ud{v'}\right) \\
	-	&\nabla \cdot \left(\int_{\Reals^3} \nabla (M_1+M_2)({\eps_j} z,v') \La(u_{\eps_j}(s,v-v')  u_{\eps_j}(s,v))(z)\eta \ud{v'}\right).					\end{aligned}		
	\end{equation}
	The left-hand side converges pointwise to $\La(\partial_t u)=z\La( u)+u_0$, up to choosing a further subsequence.	
	The right-hand side of \eqref{laplacelimit} converges pointwise along a subsequence to:
	\begin{align*}
	&\nabla \cdot \left(\int_{\Reals^3} \frac{\pi^2}{4 |v'|} \frac{1}{1+\frac{z}{|v'|}} P_{v'}^\perp  \La(u(s,v-v') \nabla u(s,v))(z)\eta \ud{v'}\right) \\
	-	&\nabla \cdot \left(\int_{\Reals^3} \frac{\pi^2}{4 |v'|} \frac{1}{1+\frac{z}{|v|}} P_{v'}^\perp  \La(\nabla u(s,v-v')  u(s,v))(z) \eta \ud{v'}\right) \\
	=	 &\La\left(\nabla \cdot \left( \K[u] \nabla u\right) - \nabla \cdot \left( \Pe[u]  u\right) \right).	
	\end{align*} 
	Therefore $u \in V^n_{A,\lambda} \cap C^1([0,\delta_1];H^{n-4}_{\lambda})$ solves equation \eqref{uequation} as claimed.
\end{proofof} 

\textbf{Acknowledgment.}
We thank Alessia Nota, Mario Pulvirenti and Chiara Saffirio for interesting discussions
and useful suggestions on the topic.

The authors acknowledge support through the CRC 1060
\textit{The mathematics of emergent effects}
at the University of Bonn that is funded through the German Science
Foundation (DFG).

\newpage
 
\bibliography{Non-Markovian_Landau_VW17.bib}

\begin{thebibliography}{10}

\bibitem{alexandre_landau_2004}
R.~Alexandre and C.~Villani.
\newblock On the {Landau} approximation in plasma physics.
\newblock {\em Ann. Inst. H. Poincaré Anal. Non Linéaire}, 21(1):61--95,
  2004.

\bibitem{balescu_statistical_1963}
R.~Balescu.
\newblock {\em Statistical mechanics of charged particles}.
\newblock Monographs in {Statistical} {Physics} and {Thermodynamics}, {Vol}. 4.
  Interscience Publishers John Wiley \& Sons, Ltd., London-New York-Sydney,
  1963.

\bibitem{balescu_equilibrium_1975}
R.~Balescu.
\newblock {\em Equilibrium and nonequilibrium statistical mechanics}.
\newblock Interscience Publishers John Wiley \& Sons, Ltd., London-New
  York-Sydney, 1975.

\bibitem{basile_diffusion_2014}
G.~Basile, A.~Nota, and M.~Pulvirenti.
\newblock A diffusion limit for a test particle in a random distribution of
  scatterers.
\newblock {\em J. Stat. Phys.}, 155(6):1087--1111, 2014.

\bibitem{bobylev_particle_2013}
A.~Bobylev, M.~Pulvirenti, and C.~Saffirio.
\newblock From particle systems to the {Landau} equation: a consistency result.
\newblock {\em Comm. Math. Phys.}, 319(3):683--702, 2013.

\bibitem{bogoliubov_problems_1962}
N.~Bogoliubov.
\newblock Problems of a dynamical theory in statistical physics.
\newblock In {\em Studies in {Statistical} {Mechanics}, {Vol}. {I}}, pages
  1--118. North-Holland, Amsterdam; Interscience, New York, 1962.

\bibitem{boldrighini_boltzmann_1983}
C.~Boldrighini, L.~A. Bunimovich, and Y.~Sinai.
\newblock On the {Boltzmann} equation for the {Lorentz} gas.
\newblock {\em J. Statist. Phys.}, 32(3):477--501, 1983.

\bibitem{desvillettes_entropy_2015}
L.~Desvillettes.
\newblock Entropy dissipation estimates for the {Landau} equation in the
  {Coulomb} case and applications.
\newblock {\em J. Funct. Anal.}, 269(5):1359--1403, 2015.

\bibitem{desvillettes_linear_1999}
L.~Desvillettes and M.~Pulvirenti.
\newblock The linear {Boltzmann} equation for long-range forces: a derivation
  from particle systems.
\newblock {\em Math. Models Methods Appl. Sci.}, 9(8):1123--1145, 1999.

\bibitem{desvillettes_rigorous_2001}
L.~Desvillettes and V.~Ricci.
\newblock A rigorous derivation of a linear kinetic equation of
  {Fokker}-{Planck} type in the limit of grazing collisions.
\newblock {\em J. Statist. Phys.}, 104(5-6):1173--1189, 2001.

\bibitem{desvillettes_spatially_2000}
L.~Desvillettes and C.~Villani.
\newblock On the spatially homogeneous {Landau} equation for hard potentials.
  {I}. {Existence}, uniqueness and smoothness.
\newblock {\em Comm. Partial Differential Equations}, 25(1-2):179--259, 2000.

\bibitem{desvillettes_spatially_2000-1}
L.~Desvillettes and C.~Villani.
\newblock On the spatially homogeneous {Landau} equation for hard potentials.
  {II}. {${H}$}-theorem and applications.
\newblock {\em Comm. Partial Differential Equations}, 25(1-2):261--298, 2000.

\bibitem{desvillettes_trend_2005}
L.~Desvillettes and C.~Villani.
\newblock On the trend to global equilibrium for spatially inhomogeneous
  kinetic systems: the {Boltzmann} equation.
\newblock {\em Invent. Math.}, 159(2):245--316, 2005.

\bibitem{diperna_cauchy_1989}
R.~DiPerna and P.-L. Lions.
\newblock On the {Cauchy} problem for {Boltzmann} equations: global existence
  and weak stability.
\newblock {\em Ann. of Math. (2)}, 130(2):321--366, 1989.

\bibitem{gallagher_newton_2013}
I.~Gallagher, L.~Saint-Raymond, and B.~Texier.
\newblock {\em From {Newton} to {Boltzmann}: hard spheres and short-range
  potentials}.
\newblock Zurich {Lectures} in {Advanced} {Mathematics}. European Mathematical
  Society (EMS), Zürich, 2013.

\bibitem{gallavotti_grad-boltzmann_1999}
G.~Gallavotti.
\newblock Grad-{Boltzmann} limit and {Lorentz}’s {Gas}.
\newblock {\em Statistical Mechanics. A short treatise}, 1999.

\bibitem{guo_landau_2002}
Y.~Guo.
\newblock The {Landau} equation in a periodic box.
\newblock {\em Comm. Math. Phys.}, 231(3):391--434, 2002.

\bibitem{john_partial_1991}
F.~John.
\newblock {\em Partial differential equations}, volume~1 of {\em Applied
  {Mathematical} {Sciences}}.
\newblock Springer-Verlag, New York, fourth edition, 1991.

\bibitem{landau_kinetische_1936}
L.~Landau.
\newblock Die kinetische {Gleichung} für den {Fall} {Coulombscher}
  {Wechselwirkung}.
\newblock {\em Phys. Zs. Sow. Union}, 10(154), 1936.

\bibitem{lanford_time_1975}
O.~Lanford.
\newblock Time evolution of large classical systems.
\newblock {\em Dynamical systems, theory and applications (Rencontres, Battelle
  Res. Inst., Seattle, Wash., 1974)}, pages 1--111. Lecture Notes in Phys.,
  Vol. 38, 1975.

\bibitem{lenard_bogoliubovs_1960}
A.~Lenard.
\newblock On {Bogoliubov}'s kinetic equation for a spatially homogeneous
  plasma.
\newblock {\em Ann. Physics}, 10:390--400, 1960.

\bibitem{lifshitz_course_1981}
E.~Lifshitz and L.~Pitaevskii.
\newblock {\em Course of {Theoretical} {Physics}}.
\newblock Pergamon Press, Oxford, 1981.

\bibitem{majda_compressible_1984}
A.~Majda.
\newblock {\em Compressible fluid flow and systems of conservation laws in
  several space variables}, volume~53 of {\em Applied {Mathematical}
  {Sciences}}.
\newblock Springer-Verlag, New York, 1984.

\bibitem{pulvirenti_validity_2014}
M.~Pulvirenti, C.~Saffirio, and S.~Simonella.
\newblock On the validity of the {Boltzmann} equation for short range
  potentials.
\newblock {\em Rev. Math. Phys.}, 26(2):64, 2014.

\bibitem{pulvirenti_boltzmann-grad_2017}
M.~Pulvirenti and S.~Simonella.
\newblock The {Boltzmann}-{Grad} limit of a hard sphere system: analysis of the
  correlation error.
\newblock {\em Invent. Math.}, 207(3):1135--1237, 2017.

\bibitem{silvestre_upper_2017}
L.~Silvestre.
\newblock Upper bounds for parabolic equations and the {Landau} equation.
\newblock {\em J. Differential Equations}, 262(3):3034--3055, 2017.

\bibitem{spohn_lorentz_1978}
H.~Spohn.
\newblock The {Lorentz} process converges to a random flight process.
\newblock {\em Comm. Math. Phys.}, 60(3):277--290, 1978.

\bibitem{spohn_kinetic_1980}
H.~Spohn.
\newblock Kinetic equations from {Hamiltonian} dynamics: {Markovian} limits.
\newblock {\em Rev. Modern Phys.}, 52(3):569--615, 1980.

\bibitem{spohn_large_2012}
H.~Spohn.
\newblock {\em Large {Scale} {Dynamics} of {Interacting} {Particles}}.
\newblock Springer Science \& Business Media, December 2012.

\bibitem{strain_exponential_2008}
R.~Strain and Y.~Guo.
\newblock Exponential decay for soft potentials near {Maxwellian}.
\newblock {\em Arch. Ration. Mech. Anal.}, 187(2):287--339, 2008.

\bibitem{toscani_sharp_1999}
G.~Toscani and C.~Villani.
\newblock Sharp entropy dissipation bounds and explicit rate of trend to
  equilibrium for the spatially homogeneous {Boltzmann} equation.
\newblock {\em Comm. Math. Phys.}, 203(3):667--706, 1999.

\bibitem{villani_new_1998}
C.~Villani.
\newblock On a new class of weak solutions to the spatially homogeneous
  {Boltzmann} and {Landau} equations.
\newblock {\em Arch. Rational Mech. Anal.}, 143(3):273--307, 1998.

\bibitem{villani_spatially_1998}
C.~Villani.
\newblock On the spatially homogeneous {Landau} equation for {Maxwellian}
  molecules.
\newblock {\em Math. Models Methods Appl. Sci.}, 8(6):957--983, 1998.

\bibitem{villani_review_2002}
C.~Villani.
\newblock {\em A review of mathematical topics in collisional kinetic theory},
  volume~1 of {\em Handbook of mathematical fluid dynamics}.
\newblock North-Holland, Amsterdam, 2002.

\bibitem{wennberg_stability_1993}
B.~Wennberg.
\newblock Stability and exponential convergence in {$L^p$} for the spatially
  homogeneous {Boltzmann} equation.
\newblock {\em Nonlinear Anal.}, 20(8):935--964, 1993.

\end{thebibliography}
\bibliographystyle{plain} 

%
%



\end{document}